
\documentclass{amsart}
\usepackage{amssymb}
\usepackage{amsthm}
\usepackage{amstext}
\usepackage{amsbsy}
\usepackage{amscd}
\usepackage{enumerate}
\usepackage{subfigure}
\usepackage[noadjust]{cite}
\usepackage[linktocpage]{hyperref}
\usepackage{graphicx}
\usepackage{mathtools}
\usepackage{pict2e}
\usepackage[usenames,dvipsnames]{color}
\usepackage[toc,page]{appendix}
\usepackage{tikz}
\usepackage{verbatim}
\usetikzlibrary{matrix}

\newtheorem{thm}{Theorem}[section]
\newtheorem{cor}[thm]{Corollary}
\newtheorem{lem}[thm]{Lemma}
\newtheorem{prop}[thm]{Proposition}

\theoremstyle{remark}
\newtheorem{rem}[thm]{Remark}
\theoremstyle{definition}
\newtheorem{defn}[thm]{Definition}

\newcommand{\norm}[1]{\left\Vert#1\right\Vert}
\newcommand{\abs}[1]{\left\vert#1\right\vert}
\newcommand{\set}[1]{\left\{#1\right\}}
\newcommand{\R}{\mathbb{R}}
\newcommand{\C}{\mathbb{C}}
\newcommand{\N}{\mathbb{N}}
\newcommand{\Z}{\mathbb{Z}}
\newcommand{\Tr}{\mathrm{Tr}}

\renewcommand{\dim}{\mathbf{dim}}
\newcommand{\hdim}{\dim_\mathrm{H}}
\newcommand{\bdim}{\dim_\mathrm{B}}
\newcommand{\loc}{\mathrm{loc}}
\newcommand{\lhdim}{\hdim^\loc}
\newcommand{\lbdim}{\bdim^\loc}

\newcommand{\eqdef}{\overset{\mathrm{def}}=}

\newcommand{\Int}{\mathrm{Int}}
\newcommand{\SL}{\mathrm{SL}}

\begin{document}

\title[Tridiagonal substitution Hamiltonians]{Tridiagonal substitution Hamiltonians}

\author[M. Mei]{May Mei}
\email{meim@denison.edu}
\address{Mathematics \& Computer Science, Denison University, Granville, OH 43023-0810}

\author[W. Yessen]{William Yessen}
\email{yessen@rice.edu}
\address{Mathematics, Rice University, 1600 Main St. MS-136, Houston, TX 77005}

\thanks{Part of the work presented herein was supported by DMS-0901627 (PI: A. Gorodetski).
\newline
\indent W.Y. was supported by the NSF grant DMS-1304287.\\\indent M.M. was supported by the Michele T. Myers PD Account through Denison University.
\newline
\newline
\indent The body of this paper contains text, mathematical formulas, statements and proofs of mathematical results, figures, plots and diagrams that significantly overlap with the authors' previous work in \cite{Yessen2011, Yessen2011a, Yessen2012a, Damanik2013a, Damanik2013b, Mei2013}; this previous work was supported by DMS-0901627 (PI: A. Gorodetski), and the authors gratefully acknowledge this support.}
\subjclass[2010]{47B36, 82B44.}

\date{\today}

\begin{abstract}

We consider a family of discrete Jacobi operators on the one-dimensional integer lattice with Laplacian and potential terms modulated by a primitive invertible two-letter substitution. We investigate the spectrum and the spectral type, the fractal structure and fractal dimensions of the spectrum, exact dimensionality of the integrated density of states, and the gap structure. We present a review of previous results, some applications, and open problems. Our investigation is based largely on the dynamics of trace maps. This work is an extension of similar results on Schr\"odinger operators, although some of the results that we obtain differ qualitatively and quantitatively from those for the Schr\"odinger operators. The nontrivialities of this extension lie in the dynamics of the associated trace map as one attempts to extend the trace map formalism from the Schr\"odinger cocycle to the Jacobi one. In fact, the Jacobi operators considered here are, in a sense, a test item, as many other models can be attacked via the same techniques, and we present an extensive discussion on this.
\end{abstract}

\maketitle

\section{Introduction}\label{sec:p-1}

\subsection{The setting}\label{sec:intro}

In this paper we are concerned with spectral properties of a class of bounded self-adjoint operators acting on the space of complex-valued square-summable sequences over the one-dimensional integer lattice, $\ell^2(\Z, \C)$. More precisely, we are interested in \textit{Jacobi operators} that are defined as follows. Given bounded real-valued sequences $a = \set{a_n}_{n\in\Z}$ and $b = \set{b_n}_{n\in\Z}$, with $a_n\neq 0$ for all $n$, the Jacobi operator $H_{a, b}: \ell^2(\Z, \C)\rightarrow \ell^2(\Z, \C)$ is defined by
\begin{align}\label{eq:gen-jacobi}
  (H_{a, b}\phi)_n
    =
  a_{n+1}\phi_{n+1} + a_n\phi_{n-1} + b_n\phi_n.
\end{align}
In what follows, we often drop $a, b$ from notation, dependence of $H$ on $a$ and $b$ being implicitly understood. We also drop $\C$ from the notation and write simply $\ell^2(\Z)$.

Jacobi operators play a central role in several branches of mathematics. For example, these operators are canonical representatives of self-adjoint operators, and are also intimately related to orthogonal polynomials on the real line (for more details, see, for example, the introduction of \cite{Simon2007} and references therein).

Jacobi operators also arise quite naturally in mathematical physics, particularly in quantum mechanics. One-dimensional Jacobi operators are energy Hamiltonians of quantum particles in one-dimensional media. In this case the \textit{discrete Laplacian} $a_{n+1}\phi_{n+1} + a_n\phi_{n-1}$ represents the kinetic energy, for example due to external forces such as a magnetic field, and $b_n\phi_n$ represents the potential energy. Indeed in case $a_n = 1$ for all $n$, the operator $H$ is known as the discrete variant of the one-dimensional \textit{Schr\"odinger operator}. For a textbook exposition on these topics, we refer the reader to \cite{Teschl1999}, Chapters 7-10 of \cite{Teschl2009}, Chapter 3 of \cite{Takhtajan2008}, and Chapters 4-5 of \cite{Hall2013}. Notice that in our case, the coefficients $\set{a_n}$ are not necessarily identically equal to 1. This more general case of a \textit{weighted} Laplacian appears in a number of applications. For example, the Harper's model \cite{Harper1955, Luttinger1951}. Also, using the Jordan-Wigner transformation together with the Lieb-Schultz-Mattis method \cite{Lieb1961}, it is possible to reduce some quantum many-body problems to the study of operators \eqref{eq:gen-jacobi}; see, for example, the recent studies \cite{Yessen2011,Yessen2012a} and references therein (we single out these papers as they focus on aperiodicity in the models as we do in the present paper, and they contain extensive introductions with relevant references; the same techniques, however, are applicable in general \cite{Lieb1961}). In particular, \cite{Yessen2012a} focuses on the study of the energy spectra of Ising models with Fibonacci disorder (the Fibonacci disorder is explained in detail in later sections) via the operator \eqref{eq:gen-jacobi}, relying heavily on the results of \cite{Yessen2011a}.

Spectral analysis of operators \eqref{eq:gen-jacobi} is also important in connection with quantum dynamics of some quite natural systems. For example, see \cite{Hamza2012} for a recent study of Heisenberg dynamics of some many-body quantum systems with random disorder, where localization is demonstrated by reducing the problem to a (block) Jacobi operator and using results from the Anderson localization theory.

In general, time evolution of a quantum particle whose time-dependent state is represented by $\Psi: \R\rightarrow\ell^2(\Z)$ is modeled by the Schr\"odinger's equation,
\begin{align}\label{eq:schro}
  i\frac{\partial}{\partial t}\Psi = H\Psi,
  \hspace{2mm} \text{with solution} \hspace{2mm}
  \Psi(t) = e^{-itH}\Psi(0).
\end{align}
It turns out that for the description of quantum dynamics of the system given in \eqref{eq:schro} detailed information about the spectrum of $H$ as well as its spectral type is required. Indeed, this is not at all surprising, since by the spectral theorem, we have
\begin{align*}
 \langle \delta_i, g(H)\delta_i\rangle = \int g(E)d\mu_i(E)
\end{align*}
for any bounded and measurable $g$, where $\mu_i$ is the spectral measure associated to $\delta_i$, where $\set{\delta_i}_{i\in\Z}$ is the canonical basis of $\ell^2(\Z)$. Further details on the general theory can be found in, for example, \cite{Oliveira2009}. A detailed study of different modes of quantum transport in one-dimensional systems is presented in \cite{Damanik2001}.

Spectral properties of certain classes of the operator \eqref{eq:gen-jacobi} (the class being determined by properties of the sequences $a_n$ and $b_n$) are fairly well understood. For example, spectral properties and quantum dynamics of $H_{a,b}$ with periodic sequences $a_n$ and $b_n$ (the \textit{periodic Jacobi operators}) are completely understood via Floquet theory (for a somewhat broad exposition, see, for example, \cite{Toda1981} or, for a more modern overview, \cite{Teschl1999}). On the other hand, \textit{random Jacobi operators} have been under heavy investigation for the past few decades and while many questions are still open, a great deal is understood (the literature in this field is voluminous; we point to the textbook expositions in \cite{Carmona1990, Cycon1987, Pastur1992} for a general overview). In this paper we are interested yet in a third class, which is somewhat intermediate between the random and the periodic cases (to be rigorously described later). This so-called \textit{aperiodic} case has attracted much attention since early 1980's, greatly motivated by mathematical and physical investigation of quasicrystals (a material that was discovered by D. Shechtman et. al. in the early 1980's and for which Shechtman received the Nobel prize in chemistry in 2012 \cite{Shechtman1984}).

In this paper we study the spectrum and spectral type of discrete \textit{substitution} Jacobi operators where the Laplacian terms and the potential terms follow a \textit{substitution sequence} that is generated by a \textit{primitive, invertible substitution} on two letters $\set{0, 1}$. As previously mentioned, such sequences are neither periodic nor random, but somewhere in between. To make this notion precise, let $P:\N\times\set{0, 1}^\Z\rightarrow \N$ denote the complexity function, that is, $P(k,\omega)$ is the number of distinct strings $w_1w_2\cdots w_k$, or \textit{words}, of length $k$ that appear in the infinite sequence $\omega = \cdots \omega_{-1}\omega_0\omega_1\cdots$. Then $\omega$ is \textit{eventually periodic} (i.e. $\set{\omega_k}$, with $k \in (-\infty, -N]\cup[N, \infty)$ for some $N\in\N$, is periodic) if and only if $P(k, \omega) = k$ for some $k$. Note that if $P(k,\omega)=k$ for some $k$ then $P(k',\omega)=k$ for all $k'>k$. In contrast, if $\omega$ is a random sequence, $P(k, \omega) = 2^k$. The sequences that that we consider have complexity $P(k, \omega) = k+1$. Thus such sequences are nonperiodic sequences of minimal complexity. It turns out that there are a few equivalent constructive ways of producing such sequences (for example, see \cite{Fogg2002} for comprehensive exposition; see also \cite{MorseHedlund1940}); we will be using two constructions: substitutions and sampling of irrational circle rotations, both of which we now describe.

Let us denote by $\mathcal{A}$ the set $\set{0, 1}$, to serve as our \textit{alphabet}, and by $\mathcal{A}^*$ the set of all finite words over $\mathcal{A}$ (i.e. finite sequences consisting of letters $0$ and $1$). A map $s: \mathcal{A}\rightarrow\mathcal{A}^*$ is called a \textit{substitution}. The substitution $s$ is extended to a morphism on the free monoid generated by $0$ and $1$, which we can also denote by $\mathcal{A}^*$, by concatenation. For any finite \textit{word} $\omega_1\cdots\omega_n\in\mathcal{A}^*$, $s(\omega_1\cdots\omega_n) = s(\omega_1)\cdots s(\omega_n)$. For $k\in\N$, by $s^k$ we mean the $k$-fold composition of $s$ with itself. In this paper we will be concerned only with those substitutions which satisfy the following two conditions.
\begin{itemize}
\setlength{\itemsep}{2pt}

\item \textit{Primitivity}: there exists $k\in\N$ such that $s^k(0)$ and $s^k(1)$ both contain $0$ and $1$.

\item \textit{Invertibility}: the substitution $s$ extends to an isomorphism on the free group $\langle 0,1\rangle$.

\end{itemize}
Henceforth all substitutions are implicitly assumed to posses those two properties.
\begin{rem}
We should remark that some primitive noninvertible substitutions have also been considered, for example in \cite{Bellissard1990,Bellissard1989,Bellissard1991}. Our techniques, however, rely on invertibility in a fundamental way and cannot be generalized to noninvertible cases. In fact, much less is currently known in noninvertible cases than in the invertible. Also, substitutions on three or more letters can be considered, see \cite{WenZhang1999} and \cite{TanWenZhang2003}. In this situation some of the geometric and dynamical systems techniques that are exploited herein can be generalized, but the resulting dynamical systems are of higher dimension; as far as we know, these resulting higher dimensional polynomial dynamical systems have not yet been given a detailed treatment (at least in the context of spectral theory).

Also, a more general case of quasi-Sturmian sequences has been considered; see \cite{DamanikLenz2003b} for further detail.
\end{rem}
It is well known that primitivity guarantees the following. If $s(0)$ begins with $0$ or if $s(1)$ begins with $1$, then there exists a fixed point $\omega\in\set{0,1}^\N$ of $s$; that is, $s(\omega) = \omega$ ($s$ is extended to an action on $\set{0,1}^\mathbb{M}$ with $\mathbb{M} = \N$ or, later in this paper, $\mathbb{M} = \Z$, by concatenation). Then for any primitive substitution $s$ (i.e. not necessarily one for which $s(0)$ begins with $0$ or $s(1)$ begins with $1$), there exists a fixed point $\omega\in\set{0,1}$ for $s^l$, with some $l\leq 2$ (indeed, if $s(0)$ begins with $1$ and $s(1)$ begins with $0$, then $s^2(1)$ begins with $1$ and $s^2(0)$ begins with $0$). In fact, a fixed point $\omega$ can be constructed as follows. Assuming that $\star\in\set{0,1}$ is such that $s^l(\star)$ begins with $\star$, consider
\begin{align*}
\star\mapsto s^l(\star) \mapsto s^{2l}(\star) \mapsto \cdots
\end{align*}
Then the word $s^{ml}(\star)$ begins with the word $s^{(m-1)l}(\star)$. Hence inductively a sequence $\omega$ in $\set{0,1}^\N$ is obtained, with $s^l(\omega) = \omega$. This sequence is not periodic, and it is known that its complexity is minimal (i.e. $P(k, \omega) = k+1$). There is a way to extend this sequence to the left that is natural in the sense that its left half exhibits the same sense of ``order" (more rigorously: the same recurrence of finite words) as its right half. One way to do so is via sampling of certain irrational circle rotations, which we describe next.

Let $\mathbb{T}$ denote the unit circle. It is known from \cite{CrispMoranPollingtonShiue1993} that for a given primitive and invertible substitution $s$, there exists $\alpha\in (0, 1)$ irrational and $\beta\in \mathbb{T}$ such that the fixed point $\omega = \omega_1\omega_2\omega_3\cdots$ of $s^l$ is given by
\begin{align}\label{eq:sampling}
\omega_n
  =
 \chi_{[1 - \alpha, 1)}(n\alpha + \beta \mod 1),\hspace{2mm}n\in\N,
\end{align}
where $\chi$ is the $\set{0, 1}$-valued characteristic function; in this case $\omega$ is called a rotation sequence. To be precise, we must also include the statement above with alternative characteristic function $\chi_{(1-\alpha, 1]}$. In fact, every two-letter non-periodic sequence with minimal complexity (or, as they are often called, \textit{Sturmian sequence}) can be realized as a rotation sequence. Thus $\omega$ is obtained by sampling the forward orbit of an irrational circle rotation. By sampling the backward orbit, we obtain the left part of $\omega$; thus sampling of the full orbit gives an extension of $\omega$ to the left.

While every primitive and invertible substitution sequence (rather, a fixed point of such a substitution) is realizable as a rotation sequence with $\alpha$ as above, not every such rotation sequence can be realized as a fixed point of a primitive invertible substitution. This isn't difficult to see, since the set of all primitive invertible two-letter substitutions is countable, while for uncountably many irrational $\alpha$, one obtains distinct rotation sequences. In fact, more is known: those rotation sequences that can be realized as a fixed point of a primitive invertible two-letter substitution form a subset of the sequences for which the coefficients of the continued fraction expansion of the corresponding rotation number $\alpha$ form an eventually periodic sequence.  For a given substitution, such $\alpha$ can be explicitly found through the methods in \cite{CrispMoranPollingtonShiue1993}. However, not every rotation sequence whose rotation number $\alpha$ has an eventually periodic continued fraction expansion is fixed by a primitive and invertible substitution, see \cite{Brown1991} for an explicit example. All our results are stated and proven for primitive and invertible substitutions. Henceforth, the term \textit{substitution sequence} will refer to a sequence of this type.

Now to each substitution sequence $v$ we associate a Jacobi operator (in fact, a family thereof) which will be the focus of the rest of the paper. Given a substitution sequence $v\in\set{0,1}^\Z$, let $\Omega_v$ denote the hull of $v$ defined by
\begin{align}\label{eq:hull}
\Omega_v
  \eqdef
 \set{\nu\in \set{0, 1}^\Z: \nu = \lim_{i\rightarrow\infty}T^{n_i}(v),\hspace{2mm}n_i\uparrow\infty},
\end{align}
where $T:\Omega_v\rightarrow\Omega_v$ is the left shift (i.e. $(T\omega)_n = \omega_{n+1}$). We will write $\Omega$ instead of $\Omega_v$ when the meaning is clear from the context. It isn't difficult to see that $\Omega$ is compact and $T$-invariant, where the topology on $\Omega$ is the induced product topology of $\set{0, 1}^\Z$ (the set $\set{0, 1}$ is taken with the discrete topology); moreover, $v\in\Omega$. It is also true that $T$ is minimal, that is, for every $\omega\in\Omega$, the forward orbit of $\omega$, $\set{T^n(\omega)}_{n\in\N}$, is dense in $\Omega$. Furthermore, $T$ is uniquely ergodic, that is, there exists a (unique) Borel probability measure on $\Omega$ with respect to which $T$ is ergodic (minimality follows from Gottschalk's theorem \cite{Gottschalk1963} and ergodicity follows from Oxtoby's ergodic theorem \cite{Oxtoby1952}).

Now let $p: \set{0, 1}\rightarrow \R$ such that $p(0)$ and $p(1)$ are nonzero, and $q:\set{0,1}\rightarrow\R$. For any $\omega$ in $\Omega_v$, define the operator $H_{\omega, (p, q)}: \ell^2(\Z)\rightarrow\ell^2(\Z)$ as follows.
\begin{align}\label{eq:qp-Jacobi}
(H_{\omega, (p, q)}\phi)_n
  =
 p(\omega_{n + 1})\phi_{n+1} + p(\omega_n)\phi_{n-1} + q(\omega_n)\phi_n.
\end{align}
Clearly $H_{\omega, (p, q)}$ is bounded and self-adjoint. Our aim in this paper is to describe the spectral properties of such operators.

\subsection{Organization of the paper}

The paper is organized as follows. In Section \ref{sec:survey} we give a detailed overview of the theory, including many recent contributions. Specifically, in Section \ref{sec:known} we discuss the known results on the Fibonacci Schr\"odinger operator. Section \ref{sec:generalization-mei} contains an overview of the extension of the results from the Fibonacci Schr\"odinger operator to the one with potential modulated by a general primitive invertible two-letter substitution. The case of the Fibonacci Jacobi Hamiltonian is discussed in Section \ref{sec:generalization-yessen}. The main new results of this paper are presented in Section \ref{sec:main}, and the proofs of the results are given in Section \ref{sec:proofs}.
\subsubsection*{A note on notation}

Many of the results and the proofs presented in this paper are quite technical and complicated notation is inevitable. Moreover, there are a few papers that we cite, the results (and proofs thereof) of which we rely on heavily; for this reason we keep our notation synchronized with that used in those papers. In other places, we have tried to follow the well-established conventions (for example, using the letter $\rho$ for the resolvent). For all of these reasons, there are a few symbols that appear in different context throughout the paper, and denote different things. For example, there is a place where $\rho$ is used to denote a representation of a free group into $\SL(2,\C)$, in another place it denotes the resolvent set of an operator, and yet in another place it denotes a certain curve in $\R^3$. In this and other similar cases, the parts where the same symbol appears but denotes different things are contextually independent and the notation is clearly (re)defined.

\section{A survey of previous results}\label{sec:survey}

\subsection{The Fibonacci Schr\"odinger case}\label{sec:known}

As has already been mentioned earlier in the introduction, investigation of substitution operators began in the early 1980's, motivated significantly by an interest in physical properties of the newly discovered quasicrystals. The pioneering mathematical work on the subject first appeared in the papers of Kohmoto et. al. and Ostlund et. al. \cite{Kohmoto1983, Ostlund1983} (the research by those two groups was conducted independently). In those papers, the two groups considered a tight-binding model given by the  Schr\"odinger Hamiltonian $H: \ell^2(\Z)\rightarrow \ell^2(\Z)$,
\begin{align}\label{eq:schrodinger}
(H\phi)_n
  =
 \phi_{n-1} + \phi_{n+1} + V_n\phi_n\hspace{2mm}\text{ where }\hspace{2mm} V_n = Vu_n,
\end{align}
with $V > 0$ and the sequence $\set{u_n}$ the fixed point of the (by now widely studied and well known) Fibonacci substitution, given by
\begin{align}\label{eq:fib-sub}
s: 0\mapsto 01,\hspace{2mm}\text{and}\hspace{2mm}s:1\mapsto 0.
\end{align}
The first strong results regarding the spectrum as well as the spectral type of the operator from \eqref{eq:schrodinger} began appearing in the late 1980's in the works of M. Casdagli and A. S\"ut\H{o} in \cite{Casdagli1986} and \cite{Suto1987}, using the method of trace maps that had been established in \cite{Kohmoto1983, Ostlund1983}. Let us describe these methods and recall some results about the Schr\"odinger operators (especially since some of these results are extended in this paper to Jacobi operators).

Let us reconsider the operator from \eqref{eq:schrodinger} in the following eigenvalue equation
\begin{align}\label{eq:eigen-schro}
 H\theta = E\theta.
\end{align}
Notice that for $E\in\R$, a vector $\phi \in \R^\Z$ solves the equation if and only if
\begin{align}\label{eq:transfer-schro}
 \begin{pmatrix}
  \theta_{n+1}\\
  \theta_n
 \end{pmatrix}
 =
 \begin{pmatrix}
  E - V_n		&		-1\\
  1			&		0
 \end{pmatrix}
 \begin{pmatrix}
  \theta_n\\
  \theta_{n-1}
 \end{pmatrix}.
\end{align}
Let us call the $2\times 2$ matrix from equation \eqref{eq:transfer-schro}, $M_n(E)$. The matrix $M_n(E)$ is called the \textit{transfer matrix at site $n$}. If we define the \textit{transfer matrix over $n$ sites}, $\hat{M}_n(E)$, as $\hat{M}_n(E)\eqdef M_n(E)M_{n-1}(E)\cdots M_1(E)$, then the following equation is clear from \eqref{eq:transfer-schro}:
\begin{align}
 \begin{pmatrix}
  \theta_{n+1}\\
  \theta_n
 \end{pmatrix}
 =
 \hat{M}_n(E)
 \begin{pmatrix}
  \theta_1\\
  \theta_0
 \end{pmatrix}.
\end{align}
It turns out that all of the information about the spectrum as a set is encoded in the traces of the transfer matrices $\set{\hat{M}_n}_{n\in\N}$. In the periodic case this is particularly easy to state: suppose that we replace the sequence $V_n$ with a sequence ${V}_n^{(k)}$ which is defined by periodically repeating the first word of length $k$ of $\set{V_n}$; that is, $V_1V_2\cdots V_k$. If $\hat{M}_n^{(k)}(E)$ denotes the corresponding transfer matrix, then $E$ is in the spectrum of the associated operator if and only if $\abs{\Tr \hat{M}_k^{(k)}(E)}\leq 2$ (this is a consequence of Floquet theory \cite{Toda1981}). Let us denote the corresponding periodic operator by $H^{(k)}$, and its spectrum by $\sigma_k$ (which consists of finitely many compact intervals, since the trace of $\hat{M}_k^{(k)}$ is, of course, a polynomial in $E$). It isn't difficult to see that $H$ is the strong limit of the sequence $\set{H^{(k)}}_{k\in\N}$. It is natural to inquire whether $\sigma_k$ approximate the spectrum of $H$, which we denote by $\sigma$, in the limit. Of course, strong convergence alone does not guarantee convergence of spectra. We do, however, have semicontinuity:
\begin{align}\label{eq:semicont-spectra}
 \sigma \subset \bigcap_{l\in\N}\overline{\bigcup_{k\geq l}\sigma_k}.
\end{align}
The semicontinuity from \eqref{eq:semicont-spectra} will play an important role in the proofs of our main results and we shall return to it later. Let us now define the set
\begin{align}\label{eq:binfty-def}
 B_\infty
  =
 \set{E\in\R: \abs{\hat{M}_n(E)}_{n\in\N}\hspace{2mm}\text{is bounded}}.
\end{align}
Applying the so-called \textit{Gordon argument} (which we shall discuss in detail in Section \ref{sec:proofs}), A. S\"ut\H{o} in \cite{Suto1987} was able to prove
\begin{thm}[A. S\"ut\H{o} 1987]\label{thm:suto}
 For the Schr\"odinger operator with the potential modulated by the Fibonacci substitution sequence, the spectrum, $\sigma$, coincides with the set $B_\infty$.
\end{thm}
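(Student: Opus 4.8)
The plan is to establish the two inclusions $B_\infty\subseteq\sigma$ and $\sigma\subseteq B_\infty$ separately, the first being soft and the second resting on the dynamics of the trace map.

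For $B_\infty\subseteq\sigma$ I would argue directly with a Weyl sequence. If $E\in B_\infty$, then by \eqref{eq:binfty-def} $\sup_n\norm{\hat{M}_n(E)}<\infty$; since every $\hat{M}_n(E)$ lies in $\SL(2,\R)$, its inverse is bounded as well, so both fundamental solutions of \eqref{eq:transfer-schro} — and hence every formal solution $\theta$ of $H\theta=E\theta$ — are bounded sequences on all of $\Z$. Fix a nonzero such $\theta$ and set $\phi^{(L)}_n=\theta_n$ for $\abs{n}\le L$ and $\phi^{(L)}_n=0$ otherwise. Because $\theta$ solves the eigenvalue equation in the interior, $(H-E)\phi^{(L)}$ is supported on the sites nearest $\pm L$ and is bounded in norm uniformly in $L$, while $\norm{\phi^{(L)}}^2=\sum_{\abs{n}\le L}\abs{\theta_n}^2$. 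If this norm is unbounded in $L$, then $\phi^{(L)}/\norm{\phi^{(L)}}$ is a Weyl sequence for $E$; if it stays bounded, then $\theta\in\ell^2(\Z)$ is a genuine eigenfunction. In either case $E\in\sigma$.

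The reverse inclusion $\sigma\subseteq B_\infty$ is where the Fibonacci structure enters. Writing $F_n\eqdef\abs{s^n(0)}$ for the corresponding Fibonacci numbers and $T_n\eqdef\hat{M}_{F_n}(E)$ for the transfer matrix over the block $s^n(0)$, the relation $s^n(0)=s^{n-1}(0)\,s^{n-2}(0)$ coming from \eqref{eq:fib-sub} yields the renormalization $T_n=T_{n-2}T_{n-1}$. With $x_n\eqdef\frac12\Tr T_n$, the $\SL(2,\R)$ trace identities then give the trace recursion $x_{n+1}=2x_nx_{n-1}-x_{n-2}$ together with the Fricke--Vieta invariant
\begin{align*}
I(E)=x_{n+1}^2+x_n^2+x_{n-1}^2-2x_{n+1}x_nx_{n-1}-1=\tfrac{V^2}{4},
\end{align*}
which is independent of $n$. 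Since $\abs{x_n(E)}\le1$ is exactly the condition $\abs{\Tr\hat{M}_{F_n}^{(F_n)}(E)}\le2$ characterizing membership in the periodic spectrum $\sigma_{F_n}$, I would prove the \emph{escape lemma}: when $I>0$, if $\abs{x_N}>1$ and $\abs{x_{N+1}}>1$ for some $N$, then $\abs{x_n}\to\infty$ monotonically. Solving the invariant as a quadratic for the missing neighbor and tracking the growing branch shows $\abs{x_{n+1}}>\abs{x_n}$ for all $n\ge N$. Consequently, if $E\notin\sigma_{F_n}$ for two consecutive $n$, the energy is trapped in a gap of every larger periodic approximant, and combining this with the semicontinuity \eqref{eq:semicont-spectra} forces $E\notin\sigma$. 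Contrapositively, $E\in\sigma$ implies that for every $n$ at least one of $\abs{x_n},\abs{x_{n+1}}$ is $\le 1$, so the trace orbit $\set{x_n(E)}$ is bounded.

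Finally I would upgrade bounded traces to bounded matrices. Unrolling the renormalization $T_n=T_{n-2}T_{n-1}$ expresses an arbitrary $\hat{M}_k(E)$ as an ordered product of finitely many blocks $T_j$ with $j$ controlled by $k$, so it suffices to bound $\norm{T_j}$ uniformly along the hierarchy. Here the positivity of $I$ is essential: it confines the orbit to the elliptic/neutral regime and rules out the parabolic runaway that bounded trace alone permits (a unipotent matrix has trace $2$ yet unbounded norm), yielding $\sup_n\norm{T_n}<\infty$ and hence $E\in B_\infty$. I expect the main obstacle to be precisely this last passage from traces to matrix norms, hand in hand with the quantitative form of the escape lemma: controlling $\norm{T_n}$ through the self-similar product structure, rather than merely $\abs{\Tr T_n}$, is the delicate point, and it is exactly where the invariant $I>0$ and the hyperbolic geometry of the trace-map dynamics must be used.
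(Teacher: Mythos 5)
Your two-inclusion strategy is essentially S\"ut\H{o}'s original argument and is sound in outline, but it is not the route this paper takes: Theorem \ref{thm:suto} is quoted here as a known result, and what the paper actually proves is its Jacobi generalization, Lemma \ref{lem:main}, via the chain $\Sigma\subseteq B\subseteq A\subseteq\Sigma$, where $A$ is the set of energies with vanishing Lyapunov exponent. The direction $\sigma\subseteq B_\infty$ is common to both treatments (periodic approximants, trace escape, and the semicontinuity \eqref{eq:semicont-spectra}; compare Lemma \ref{lem:main-1}). Where you diverge is the direction $B_\infty\subseteq\sigma$: you build Weyl sequences from bounded solutions, whereas the paper passes through $B\subseteq A$ (bounded transfer matrices force zero Lyapunov exponent) and then $A\subseteq\Sigma$ by Kotani theory together with the positivity of the Lyapunov exponent off the spectrum. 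Your route is more elementary and self-contained; the paper's buys, in addition, the zero Lebesgue measure of the spectrum, and it never needs a quantitative bound on matrix norms. (A cosmetic point: \eqref{eq:binfty-def} involves only $n\in\N$, so you should run the truncation argument on a half-line rather than claim boundedness of solutions on all of $\Z$; the one-sided Weyl sequence works verbatim.)

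Two steps need repair. First, your escape lemma is false as stated: the triple $(x_{N+1},x_N,x_{N-1})=(3/2,3/2,4)$ has $I=3/2>0$ and $\abs{x_N},\abs{x_{N+1}}>1$, yet $x_{N+2}=2x_{N+1}x_N-x_{N-1}=1/2$. The correct hypothesis is $\abs{x_{N-1}}\le 1$ (S\"ut\H{o}'s lemma); for orbits of $\gamma(E)$ from \eqref{eq:gamma-schro}, whose third coordinate equals $1$, one recovers your statement by passing to the \emph{first} index at which two consecutive traces exceed $1$ in modulus, and this reduction must be made explicit. Second, and more seriously, the upgrade from bounded traces to $\sup_n\norm{\hat M_n(E)}<\infty$ --- which you need to land in $B_\infty$ as defined in \eqref{eq:binfty-def} --- is asserted rather than proved. ``Confinement to the elliptic regime'' is not an argument here, because the orbit is a hierarchical product of varying matrices, not the iteration of a single one; this uniform norm bound along the renormalization is precisely the hard estimate in \cite{Suto1987} (alternatively it follows from hyperbolicity of the trace map on the compact part of $S_V$, cf.\ Theorem \ref{thm:schro-dynam-general} and \cite{Casdagli1986}). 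Note that the paper's detour through the Lyapunov exponent exists exactly to sidestep this: zero Lyapunov exponent follows trivially from bounded norms and already places $E$ in the spectrum, so if you wish to avoid importing S\"ut\H{o}'s norm estimate you should replace the final step of your second inclusion by the Kotani-theoretic argument of Lemmas \ref{lem:main-1} and \ref{lem:main-2}, as in \cite{Damanik2000}.
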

\begin{rem}
 This result was later substantially generalized by D. Damanik \cite{Damanik2000}; see Theorem \ref{thm:damanik} in Section \ref{sec:generalization-mei} below.
\end{rem}

In fact, A. S\"ut\H{o} proved more, by showing that $B_\infty$ is a Cantor set for all $\abs{V} \geq 4$ (recall: $V_n = Vu_n$ where $u$ is the Fibonacci substitution sequence) and that for all $V$, $\sigma$ does not contain any point spectrum. M. Casdagli, however, had investigated the set $B_\infty$ a year earlier, in 1986, for $V \geq 16$ in his paper \cite{Casdagli1986}, but was unable to prove the equality $\sigma = B_\infty$. Using \eqref{eq:semicont-spectra}, on the other hand, one can get $\sigma\subset B_\infty$ (we shall show this later), and Casdagli was aware of this. The insight of Casdagli in the aforementioned paper, however, had opened the door to an all out campaign to solve the spectral problem for the substitution (in particular, Fibonacci) Schr\"odinger operators: the relationship between $B_\infty$ and a certain dynamical system had been established in the earlier works \cite{Kohmoto1983, Ostlund1983}, but Casdagli was the first to investigate the dynamics and derive very strong results
about $B_\infty$ (again, for large $V$) regarding its fractal structure. Let us review this technique next.

Notice that due to the recursive nature of the Fibonacci substitution sequence, the matrices $\hat{M}_n(E)$ also follow the same recursion (from right to left, since we take the product of the matrices from right to left):
\begin{align*}
 \hat{M}_{F_{k+1}}(E)
  =
 \hat{M}_{F_{k - 1}}(E)\hat{M}_{F_k}(E),
\end{align*}
where $F_k$ is the $k$th Fibonacci number. On the other hand, since the matrices $\hat{M}_n(E)$ are members of the $\SL(2,\C)$ group, the trace of $\hat{M}_{F_k}(E)$ can be computed recursively as a function of the traces of the previous three matrices (indexed by $F_{k-1}, F_{k-2}$ and $F_{k-3}$). The precise form of this recursive relation is \cite{Kohmoto1983, Ostlund1983}
\begin{align}\label{eq:fib-tmap}
 (x_{k+2}, x_{k+1}, x_k)
  =
 f(x_{k+1}, x_{k}, x_{k-1}),\hspace{2mm}\text{ with }\hspace{2mm} f(x,y,z) = (2xy - z, x, y),
\end{align}
and $x_k = x_k(E) \eqdef \frac{1}{2}\abs{\Tr\hat{M}_{F_k}(E)}$. The map $f$ is called the \textit{Fibonacci trace map}. In general, there is a trace map associated to every primitive and invertible substitution (we shall return to this in later sections). It is known that an unbounded forward orbit under $f$ does not contain a bounded suborbit (a subsequence of the orbit) \cite{Casdagli1986}. Thus we have
\begin{align*}
 B_\infty
    &=
	\set{E\in\R: \frac{1}{2}\abs{\hat{M}_{F_k}(E)}_{k\in\N}\hspace{2mm}\text{is bounded}}\\
    &=
	\set{E\in\R: \set{f^n(\gamma(E))}_{n\in\N}\hspace{2mm}\text{is bounded}},
\end{align*}
where $f^n$ denotes $n$-fold composition of $f$ with itself. Here $\gamma(E)$ denotes the \textit{line of initial conditions}, parameterized by $E\in\R$:
\begin{align}\label{eq:gamma-schro}
 \gamma(E)
  =
 \left(\frac{E - V}{2}, \frac{E}{2}, 1\right),
\end{align}
the coordinates of which are obtained as traces of the initial three matrices (actually, traces of $\hat{M}_1, \hat{M}_2$ and $\hat{M}_3$ are rather complicated expressions, so to obtain $\gamma(E)$ in its simpler form, one takes $f^{-1}(x_1, x_2, x_3)$, where $f^{-1}$ is the inverse of $f$ given by $f^{-1}(x,y,z) = (z, y, 2yz - x)$).

Henceforth we shall refer to the sequence $\set{f^n(x)}$ as the \textit{forward} or \textit{positive} semi-orbit of $x$ under $f$, and denote it by $\mathcal{O}_f^+(x)$. The \textit{backward} or the \textit{negative} semi-orbit is defined similarly by replacing $n$ with $-n$, and is denoted by $\mathcal{O}_f^-(x)$. The \textit{full} orbit is defined as $\mathcal{O}_f(x)\eqdef \mathcal{O}_f^+(x)\cup \mathcal{O}_f^-(x)$.

It turns out that $f$ (and every trace map associated to a primitive and invertible substitution on two letters) preserves the so-called \textit{Fricke-Vogt invariant}, given by
\begin{align}\label{eq:fv-invariant}
 I(x,y,z) \eqdef x^2 + y^2 + z^2 - 2xyz - 1.
\end{align}
More precisely, we have $I\circ f(x,y,z) = I(x,y,z)$. It follows that the sets
\begin{align}\label{eq:inv-surfaces}
 S_V \eqdef \set{(x,y,z)\in\R : I(x,y,z) = V},\hspace{2mm}V\in\R
\end{align}
are also preserved; moreover, since $f$ is invertible, we have, for each $V$, $f(S_V) = S_V$. In this notation, $V$ plays the role of a parameter, and $S_V$ is the surface embedded in $\R^3$ that depends on $V$, that is, $\set{S_V}_{V\in\R}$ is a one-parameter family of surfaces in $\R^3$. Notice that $V$ has also been used for a coupling constant in \eqref{eq:schrodinger}. This double use of notation is not accidental. Indeed, observe that for every $E\in\R$, $I(\gamma(E)) = \frac{V^2}{4}$. It follows that the line $\gamma$ from \eqref{eq:gamma-schro} lies on the surface $S_{\frac{V^2}{4}}$ for every choice of $V\in\R$.

Since we shall rarely have to refer to the exact value of the Fricke-Vogt invariant, let us agree on the following abuse of notation: \textit{both, the coupling constant for the Schr\"odinger operator of type \eqref{eq:schrodinger} as well as the values of the Fricke-Vogt invariant will be denoted by $V$}.

In this paper we shall be concerned only with the case $V\geq 0$ (there will be a few places where we shall have to consider $S_V$ for $V < 0$, but this is somewhat secondary to our arguments and this situation is clearly identified in the text). Indeed, observe that $I(\gamma(E))=\frac{V^2}{4}$ is clearly non-negative and is zero if and only if $V = 0$, which corresponds to the \textit{free Laplacian} case,
\begin{align*}
 (H\phi)_n = \phi_{n-1} +\phi_{n+1},
\end{align*}
the spectrum of which is the interval $[-2,2]$.

It turns out that the surfaces $S_V$ for $V > 0$ are smooth and homeomorphic to the two-dimensional sphere with four points removed, while the surface $S_0$ is smooth everywhere, except for four conic singularities. See Figure \ref{fig:surfaces} for some plots.
%
%
%
%
%
%
%
\begin{figure}[t]
	\centering
		\subfigure[$V = 0.0001$]{
			\includegraphics[scale=.3]{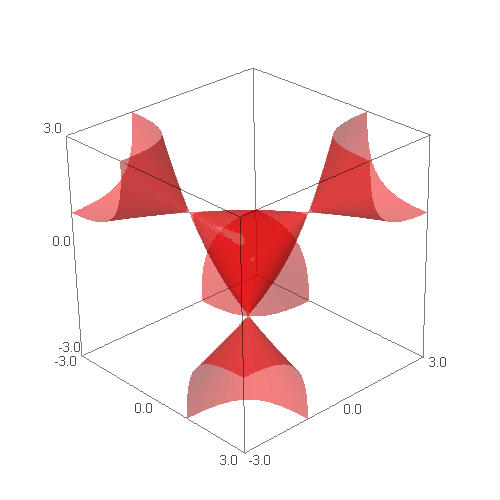}}
		\subfigure[$V = 0.01$]{
			\includegraphics[scale=.3]{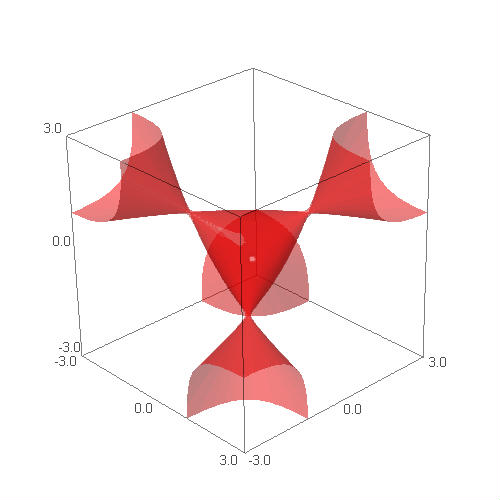}}
		\\
		\subfigure[$V = 0.05$]{
			\includegraphics[scale=.3]{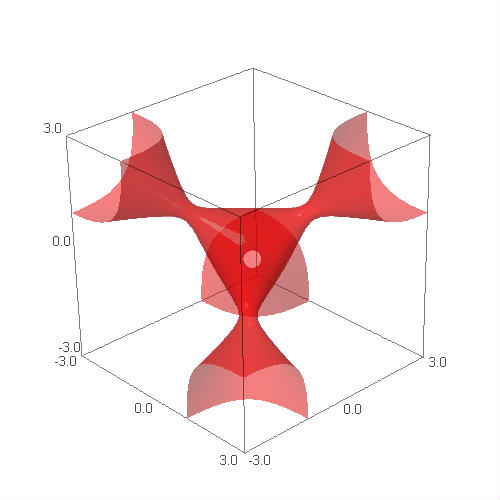}}
		\subfigure[$V = 1$]{
			\includegraphics[scale=.3]{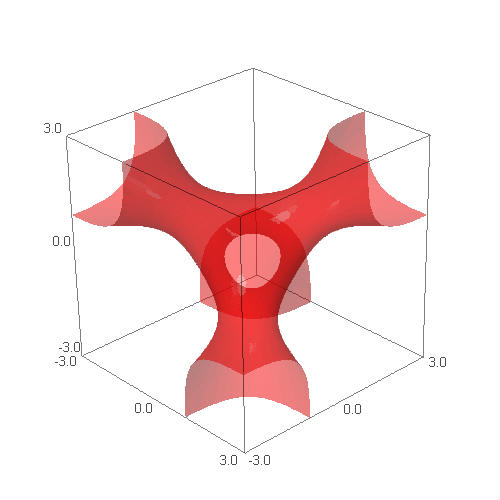}}
	\caption{Invariant surfaces $S_V$ for four values of $V$.}
	\label{fig:surfaces}
\end{figure}
%
%
%
Thus in order to understand the spectrum of the Hamiltonian in \eqref{eq:schrodinger}, it is enough to understand the structure of the set of points on $S_V$, for any $V > 0$, whose forward orbit under the action of $f$ is bounded. Before we continue on to the main results, let us introduce the following terminology.
\begin{defn}\label{defn:type-b}
For a map $f:\R^n\rightarrow \R^n$, a point $x\in\R^n$ will be called a type-\textbf{B} point if $\mathcal{O}^+_f(x)$ is bounded.
\end{defn}
We also need the notion of the \textit{dynamically defined Cantor set}, which is rigorously defined in, for example, Chapter 4 of \cite{Palis1993}. Roughly speaking, a dynamically defined Cantor set is one which is the blowup of an arbitrarily small neighborhood of any of its points under a $C^{1 + \alpha}$ map (differentiable with $\alpha$-H\"older continuous derivative). Such sets enjoy a number of nice properties, some of which are outlined below. In what follows, the Hausdorff dimension of a set $A\subset \R$ is denoted by $\hdim(A)$ and the box-counting dimension is denoted by $\bdim(A)$. For a point $a\in A$, we define the local Hausdorff dimension of $A$ at $a$ by
\begin{align*}
 \lhdim(A, a) \eqdef \lim_{\epsilon\rightarrow 0}\hdim(A\cap (a - \epsilon, a + \epsilon)),
\end{align*}
and the local box-counting dimension by
\begin{align*}
 \lbdim(A,a) \eqdef \lim_{\epsilon\rightarrow 0}\bdim(A\cap (a - \epsilon, a + \epsilon)).
\end{align*}
For a dynamically defined Cantor set $K\subset\R$, we have the following.
\it
\begin{itemize}
\setlength{\itemsep}{2pt}

 \item For every $a\in K$, $\lbdim(K, a) = \lhdim(K, a) = \bdim(K) = \hdim(K)$.

\end{itemize}
\rm

One of the consequences of M. Casdagli's contributions in \cite{Casdagli1986} is the following
\begin{thm}\label{thm:casdagli}
 For $V\geq 16$, the set of type-\textbf{B} points is formed by a smooth (so-called stable) one-dimensional lamination in $S_V$ such that any smooth and regular curve $\gamma$ in $S_V$ parameterized on a compact nonempty interval in $\R$ that is uniformly transversal to the lamination on its interior, itersects the lamination in a dynamically defined Cantor set along its interior.
\end{thm}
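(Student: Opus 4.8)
# Proof Proposal for Theorem~\ref{thm:casdagli}

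\textbf{The overall strategy.} The plan is to realize the Fibonacci trace map $f$ as a hyperbolic dynamical system on each invariant surface $S_V$ for $V$ large, and then identify the set of type-\textbf{B} points with the stable manifold of a hyperbolic invariant set. The key structural fact I would establish first is that, for $V \geq 16$, the non-wandering set of $f$ restricted to $S_V$ is a \emph{hyperbolic} (in fact, uniformly hyperbolic) set $\Lambda_V$ on which $f$ acts. Casdagli's analysis shows that outside a large compact region the dynamics is escaping: any orbit that leaves a certain bounded ``core'' region of $S_V$ escapes to infinity monotonically, so a point is type-\textbf{B} precisely when its forward orbit never leaves the core. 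Thus the set of type-\textbf{B} points equals the stable set $W^s(\Lambda_V) = \set{x \in S_V : \mathcal{O}_f^+(x) \text{ accumulates on } \Lambda_V}$.

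\textbf{Key steps, in order.} First I would fix $V \geq 16$ and carefully delimit a compact core region $R \subset S_V$ together with a return/escape dichotomy: outside $R$ one coordinate grows under iteration of $f$, forcing $\norm{f^n(x)}\to\infty$; this is the content of the statement that an unbounded forward orbit contains no bounded suborbit, specialized to the surface. Second, I would establish uniform hyperbolicity of the maximal invariant set $\Lambda_V = \bigcap_{n\in\Z} f^n(R)$ inside $R$. Here the natural approach is a \emph{cone-field argument}: construct an $f$-invariant family of stable and unstable cones on $TS_V$ over $R$ and verify uniform expansion/contraction; for $V$ large the derivative $Df$ is strongly dominated by the $2xy$ term, which makes the cone estimates tractable. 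Hyperbolicity yields, via the stable manifold theorem, a continuous $f$-invariant lamination of $W^s_{\loc}(\Lambda_V)$ by $C^{1+\alpha}$ stable curves; this is the ``smooth one-dimensional stable lamination'' named in the statement. Third, I would globalize: the full set of type-\textbf{B} points is the saturation $\bigcup_{n \geq 0} f^{-n}(W^s_{\loc}(\Lambda_V))$ of the local stable lamination, which inherits the lamination structure because $f$ is a diffeomorphism of $S_V$.

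\textbf{The transversal intersection and the Cantor set.} For the final assertion, let $\gamma$ be a smooth regular curve transversal to the lamination. The stable lamination, being the stable foliation of a hyperbolic set, has a locally product structure: near any leaf it looks like (Cantor set) $\times$ (interval), where the Cantor set is a cross-section of the lamination. A curve $\gamma$ uniformly transversal to the leaves meets each leaf in at most one point and meets the lamination in a set homeomorphic to such a cross-section. The crucial point is that this cross-section is a \emph{dynamically defined} Cantor set: the holonomy of the stable lamination conjugates the cross-section dynamics to that of a $C^{1+\alpha}$ expanding map (the quotient of $f$ by stable leaves, equivalently the action of $f$ on the unstable direction), so it satisfies the definition recalled before the theorem. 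Transversality and regularity of $\gamma$ guarantee that the holonomy map from $\gamma$ to a standard transversal is $C^{1+\alpha}$ and nonsingular, hence carries the dynamically defined Cantor structure to $\gamma \cap (\text{lamination})$; since dynamically defined Cantor sets are preserved under $C^{1+\alpha}$ diffeomorphisms, the intersection is itself dynamically defined, as claimed.

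\textbf{The main obstacle.} The hardest part will be the second step: verifying \emph{uniform} hyperbolicity of $\Lambda_V$ on the curved, noncompact surface $S_V$ rather than on a flat model. The trace map is polynomial and its derivative is unbounded, so one must arrange the cone fields and the metric (most likely an adapted Riemannian metric on $S_V$, or a clever choice of coordinates exploiting the product-like structure of $S_V$ away from its four ``ends'') so that the expansion and contraction estimates hold uniformly down to the threshold $V = 16$. Controlling the behavior near the four punctures of $S_V$, where orbits escape, and matching this escape dynamics cleanly to the cone estimates in the core, is where the delicate work lies; the sharp constant $16$ presumably emerges from optimizing exactly these estimates.
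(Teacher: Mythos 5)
This theorem is stated in the paper as a survey item attributed to Casdagli's 1986 work \cite{Casdagli1986}; the paper itself offers no proof, so there is nothing internal to compare your attempt against. Your proposal faithfully reconstructs the strategy actually used in that literature --- establish uniform hyperbolicity of the bounded nonwandering set on $S_V$ for large $V$ (Casdagli does this via an explicit Markov partition and geometric estimates rather than an abstract cone-field argument, but the two are equivalent routes), identify the type-\textbf{B} points with the stable set using the escape dichotomy, and invoke the standard local-product-structure and holonomy-regularity facts to get a dynamically defined Cantor set on transversals --- and you correctly locate the real work in the hyperbolicity estimates that produce the threshold $V \geq 16$. The only minor imprecision is the claim that a transversal curve meets each leaf in at most one point, which is neither true in general nor needed for the conclusion.
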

M. Casdagli then demonstrated that the line $\gamma$ from \eqref{eq:gamma-schro} intersects the stable lamination transversely (for all $V \geq 16$). These results have been extended to all $V$ sufficiently close to zero by D. Damanik and A. Gorodetski in \cite{Damanik2009}. As a corollary of his larger program in holomorphic dynamics, S. Cantat proved the result for all $V > 0$ in \cite{Cantat2009}. S. Cantat did not, however, prove that for all $V > 0$, the line $\gamma$ is transversal to the stable lamination; this is still open (D. Damanik and A. Gorodetski, and S. Cantat used different techniques and worked independently). We shall be referring to this result later, so for ease of reference let us state it here:

\begin{thm}\label{thm:schro-dynam-general}
 For every $V > 0$, there exists a one-dimensional smooth lamination, $\mathcal{L}_V$, in $S_V$, such that for $x\in S_V$, $\mathcal{O}_f^+(x)$ is bounded if and only if $x\in\mathcal{L}_V$. Moreover, any smooth regular curve in $S_V$, parameterized on a compact nonempty interval in $\R$, which is uniformly transversal to the lamination $\mathcal{L}_V$ on its interior intersects the lamination in a dynamically defined Cantor set.
\end{thm}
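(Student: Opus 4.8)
The plan is to deduce both assertions from a single dynamical fact: that the restriction $f|_{S_V}$ is uniformly hyperbolic on its non-wandering set, after which the lamination and the Cantor-set structure follow from the standard machinery for hyperbolic sets. First I would identify the type-\textbf{B} points on $S_V$ with the stable set of a compact $f$-invariant hyperbolic set $\Lambda_V\subset S_V$. The key reduction is an \emph{escape lemma}: there is a compact core $K_V\subset S_V$ (a large coordinate box intersected with the surface) such that any orbit leaving $K_V$ is expelled to infinity without return, so that a point $x$ with $\mathcal{O}_f^+(x)$ bounded must keep its entire forward orbit inside $K_V$. This is the $S_V$ analogue of the fact, already recorded above, that an unbounded forward orbit of $f$ contains no bounded suborbit. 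Consequently the type-\textbf{B} points are exactly those whose forward orbit never leaves $K_V$, and the maximal $f$-invariant set $\Lambda_V=\bigcap_{n\in\Z}f^n(K_V)$ is compact.

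The heart of the argument is to show that $\Lambda_V$ is uniformly hyperbolic. I would do this by constructing an $f$-invariant cone field on a neighborhood of $\Lambda_V$ in $S_V$ and verifying the usual expansion and contraction estimates, so that $f|_{\Lambda_V}$ is an Axiom \textbf{A} basic set. Granting hyperbolicity, the stable manifold theorem produces through each point of $\Lambda_V$ a stable manifold; since $f$ is a polynomial, hence real-analytic, diffeomorphism, these leaves are smooth and assemble into a one-dimensional smooth lamination $\mathcal{L}_V$. A point $x\in S_V$ is then type-\textbf{B} if and only if its forward orbit stays in $K_V$, which by hyperbolicity and the local product structure holds if and only if $x$ lies on the stable manifold of some point of $\Lambda_V$, i.e. if and only if $x\in\mathcal{L}_V$. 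This gives the first assertion.

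For the second assertion I would invoke the theory of dynamically defined Cantor sets (Chapter 4 of \cite{Palis1993}). A smooth arc $\gamma$ that is uniformly transversal to $\mathcal{L}_V$ along its interior meets each local stable leaf in at most one point, and the set $\gamma\cap\mathcal{L}_V$ inherits an expanding $C^{1+\alpha}$ Markov structure from the hyperbolic dynamics on $\Lambda_V$ (push the transversal forward by $f$ and intersect with a Markov partition of $\Lambda_V$). Hence $\gamma\cap\mathcal{L}_V$ is a dynamically defined Cantor set, as claimed.

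The main obstacle is establishing \emph{uniform} hyperbolicity of $\Lambda_V$ for \emph{all} $V>0$, which is precisely where no single elementary argument suffices. For large $V$ the cone-field estimates can be carried out directly near infinity on $S_V$, as in Theorem \ref{thm:casdagli}. For small $V$ the surface $S_V$ is a small perturbation of the singular surface $S_0$, which carries four conic singularities, and one must control the invariant set as it approaches these singularities; this requires a careful perturbative and normal-form analysis of $f$ near $S_0$, the route taken by Damanik and Gorodetski \cite{Damanik2009}. To cover the full range $V>0$ at once, I would instead pass to the holomorphic and algebraic picture: $f$ extends to an automorphism of the projective cubic surface compactifying $S_V$, has positive topological entropy, and one can apply the structure theory of such surface automorphisms to deduce hyperbolicity of the real non-wandering set, following Cantat \cite{Cantat2009}. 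Verifying that the hyperbolicity produced by these global methods is genuinely uniform on the real locus, and that it patches continuously with the elementary large-$V$ picture, is the step I expect to be the most delicate.
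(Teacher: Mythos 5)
Your outline is correct and follows essentially the same route as the paper, which states this theorem as a survey of known results rather than proving it: the escape lemma identifying type-\textbf{B} points with the stable set of a compact hyperbolic (Axiom A) locally maximal invariant set, the stable manifold theorem yielding the smooth lamination, and the standard theory of dynamically defined Cantor sets for transversal sections, with uniform hyperbolicity supplied by Casdagli for large $V$ \cite{Casdagli1986}, Damanik--Gorodetski for small $V$ \cite{Damanik2009}, and Cantat's holomorphic-dynamics argument for all $V>0$ \cite{Cantat2009}. You have also correctly isolated the genuinely hard step (uniform hyperbolicity across the full parameter range), which is exactly where the paper defers to those references.
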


Denote the spectrum of the Hamiltonian in \eqref{eq:schrodinger} by $\sigma_V$. Observe that $\sigma_V$ depends continuously on the parameter $V$ (this is easy to see since $H$ depends continuously in norm on $V$). On the other hand, as a consequence of the above results, we in fact have analyticity of the Hausdorff dimension of the spectrum \cite{Cantat2009}:

\begin{thm}\label{thm:schro-spect-haus-cont}
 The Hausdorff dimension of $\sigma_V$, $\hdim(\sigma_V)$, is an analytic function of $V$ for $V > 0$. Moreover, for all $V > 0$, $\hdim(\sigma_V)$ takes values in the range $(0, 1)$.
\end{thm}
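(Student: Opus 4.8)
The plan is to express $\hdim(\sigma_V)$ as an intrinsic dimensional invariant of a uniformly hyperbolic set and then to prove that this invariant varies analytically via thermodynamic formalism. Recall from Theorem \ref{thm:schro-dynam-general} and the identification $\sigma_V = \set{E\in\R : \mathcal{O}_f^+(\gamma(E))\ \text{is bounded}}$ that $\sigma_V$ is exactly the set of parameters $E$ for which the line of initial conditions meets the stable lamination $\mathcal{L}_W$ on the surface $S_W$, where $W = V^2/4$. When $\gamma$ is transversal to $\mathcal{L}_W$, the set $\sigma_V$ is a dynamically defined Cantor set whose Hausdorff dimension coincides with the transverse Hausdorff dimension of $\mathcal{L}_W$ — the common value (over all transversals) guaranteed by the properties of dynamically defined Cantor sets recalled before Theorem \ref{thm:casdagli}. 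Thus it suffices to (i) realize $\mathcal{L}_W$ as the stable lamination of a uniformly hyperbolic set, (ii) show its transverse dimension $d(W)$ is analytic in $W$, and (iii) transfer this to $\hdim(\sigma_V)$.

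For (i), I would first establish that the restriction of $f$ to the non-wandering set $\Lambda_W$ (the closure of the set of bounded orbits in $S_W$) is uniformly hyperbolic for every $W > 0$; this is the core dynamical input and is exactly the content of Cantat's holomorphic-dynamics analysis \cite{Cantat2009} (with the endpoint regimes $W$ large and $W$ near $0$ also covered by \cite{Casdagli1986} and \cite{Damanik2009}). The lamination $\mathcal{L}_W$ is then the union of local stable manifolds of $\Lambda_W$, and its transverse structure — for instance the intersection with an unstable manifold, or with $\gamma$ — is a $C^{1+\alpha}$ dynamically defined Cantor set in the sense of \cite{Palis1993}.

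For (ii), I would invoke Bowen's formula: the transverse dimension $d(W)$ is the unique zero $s$ of the pressure equation $P_W\!\left(-s\,\phi_W\right) = 0$, where $\phi_W$ is the (H\"older) logarithm of the unstable expansion rate of $f|_{\Lambda_W}$ and $P_W$ is the topological pressure. Because $f$ is a fixed polynomial automorphism and $S_W$ depends analytically (in fact polynomially) on $W$, structural stability provides an analytic family of conjugacies to a fixed subshift of finite type, and the unstable Jacobian cocycle depends analytically on $W$; hence $(s, W) \mapsto P_W(-s\phi_W)$ is jointly real-analytic. Since $s \mapsto P_W(-s\phi_W)$ is strictly decreasing with nonvanishing derivative at its root, the implicit function theorem yields that $d(W)$ is real-analytic, and then $\hdim(\sigma_V) = d(V^2/4)$ is analytic in $V > 0$ as a composition with the analytic map $V\mapsto V^2/4$.

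Finally, for the bounds $\hdim(\sigma_V) \in (0,1)$: positivity follows because $\Lambda_W$ is a nontrivial hyperbolic set (its bounded orbits form an uncountable, positive-entropy set), so its transverse Cantor set has positive dimension; strict inequality $<1$ holds because a $C^{1+\alpha}$ dynamically defined Cantor set on the line has genuine gaps and bounded distortion, forcing its dimension below $1$ (equivalently, $\sigma_V$ has zero Lebesgue measure with a uniform thickness bound). The step I expect to be the main obstacle is twofold: establishing uniform hyperbolicity on $\Lambda_W$ for \emph{all} $W > 0$ rather than only the perturbative regimes, and upgrading the typically merely H\"older parameter-dependence of hyperbolic continuations to the analyticity of the pressure needed for the implicit-function step. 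A secondary technical point is that transversality of $\gamma$ to $\mathcal{L}_W$ is known only for $W$ large or small, so away from these regimes one must argue that $\hdim(\sigma_V)$ still equals the intrinsic $d(W)$ — either by an analytic-continuation and density argument, or by a distortion estimate showing that the bounded-orbit set along $\gamma$ inherits the transverse dimension regardless of possible tangencies.
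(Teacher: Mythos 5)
Your proposal follows essentially the same route the paper takes (the paper presents this as a survey result of Cantat and sketches the argument in the remark immediately following the theorem): uniform hyperbolicity of the trace map on $S_V$ for all $V>0$, identification of $\sigma_V$ with a transverse section of the stable lamination modulo at most finitely many possible tangencies of $\gamma$ with $\mathcal{L}_V$, and analyticity of the transverse dimension via thermodynamic formalism (Ma\~{n}e/Pollicott). The obstacles you flag---uniform hyperbolicity beyond the perturbative regimes, upgrading H\"older parameter dependence to analyticity of the pressure, and the unresolved transversality in the intermediate coupling regime handled by a distortion argument---are precisely the ingredients the paper's remark identifies and resolves by citation to \cite{Cantat2009}, \cite{Pollicott200x}, and \cite{Yessen2011a}.
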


\begin{rem}
A few remarks are in order here. First, in order to obtain the theorem above, one needs to relate the spectrum to the lamination $\mathcal{L}_V$ from the previous theorem via the transversal intersection of the line of initial conditions, $\gamma$ from \eqref{eq:gamma-schro}, with $\mathcal{L}_V$. As we mentioned above, transversality is known for the sufficiently small and the sufficiently large values of $V$. In the intermediate regime, however, the problem is still open. On the other hand, it is known that, should transversality fail, it cannot fail at more than finitely many points, which does not introduce enough distortion to break most of the results, including the above theorem (for details, see \cite[Proof of Theorem 2.1(i)]{Yessen2011a}).

Secondly, analyticity of the Hausdorff dimension is related to analyticity of the transversal Hausdorff dimension of $\mathcal{L}_V$ (we shall discuss this later in some more detail). This follows from a theorem of R. Ma\~{n}e \cite{Mane1990}, but in lower smoothness ($C^k$, with $k\in\N$, or $C^\infty$). Ma\~{n}e's proof could be adapted to the analytic case, with some technical difficulties. On the other hand, M. Pollicott has recently provided a simpler proof of the result in \cite{Pollicott200x} (notice that Pollicott's proof is given only in the analytic category, while one still needs Ma\~{n}e's arguments in the $C^k$ or $C^\infty$ category).
\end{rem}

A relevant and challenging question is whether the Hausdorff dimension can be extended analytically to $V = 0$ from the right. This question is currently open, but \textit{continuity} from the right has been proved in \cite{Damanik2010a}:

\begin{thm}\label{thm:schro-spect-haus-cont-v0}
We have
 \begin{align*}
  \lim_{V\rightarrow 0^+}\hdim(\sigma_V) = 1.
 \end{align*}
 In fact, there exist constants $C_1, C_2 > 0$ such that
 \begin{align*}
  1 - C_1V \leq \hdim(\sigma_V)\leq 1 - C_2V
 \end{align*}
 for all $V$ sufficiently close to zero.
\end{thm}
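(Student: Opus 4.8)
The plan is to establish the asymptotic behavior $\hdim(\sigma_V) = 1 - O(V)$ by a careful analysis of the trace map dynamics on the invariant surface $S_{V^2/4}$ as $V \to 0^+$, transferring bounds on the transversal Hausdorff dimension of the stable lamination $\mathcal{L}_V$ to the spectrum via the transversal intersection with the line of initial conditions $\gamma$. First I would recall that for small $V > 0$ transversality of $\gamma$ to $\mathcal{L}_V$ is known (by the Damanik--Gorodetski results cited above), so that $\sigma_V$ is a dynamically defined Cantor set whose Hausdorff dimension equals the transversal (unstable) Hausdorff dimension of the lamination. Thus the whole problem reduces to estimating this transversal dimension as a function of $V$, and the two-sided bound $1 - C_1 V \leq \hdim(\sigma_V) \leq 1 - C_2 V$ becomes a statement about how the contraction/expansion rates of the hyperbolic dynamics degenerate as $V \to 0$.

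The key mechanism is the singular limit $V = 0$. At $V = 0$ the surface $S_0$ has four conic singularities and the dynamics is no longer uniformly hyperbolic; in fact the periodic orbit structure responsible for the Cantor set degenerates so that the dimension limits to $1$, matching the free-Laplacian spectrum $[-2,2]$. The next step is therefore to set up a careful perturbative model of the trace map near the relevant periodic orbits (in the Fibonacci case, the period-6 orbit analyzed in \cite{Damanik2009}) and to compute the leading-order dependence of the Lyapunov exponents, or equivalently the eigenvalues of the derivative $Df^n$ along these orbits, on the parameter $V$. The dimension of a dynamically defined Cantor set is controlled, via the Bowen pressure equation, by the ratio of $\log(\text{contraction})$ to $\log(\text{expansion})$; as $V \to 0$ both the stable and unstable eigenvalues approach the neutral value, and one must extract that the defect $1 - \hdim$ is comparable to $V$. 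I would quantify this by showing the relevant eigenvalue gap scales like $V$ to leading order, so that the distortion-controlled dimension estimates of Palis--Takens type \cite{Palis1993} yield linear upper and lower bounds.

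The principal obstacle, and where most of the work lies, is making the Bowen-equation asymptotics uniform and two-sided near the singular parameter. One must control the whole Cantor set, not just a single periodic orbit: the bounded-distortion estimates that let one read off the dimension from the thickness/denseness of the set must be shown to degrade in a controlled way, linearly in $V$, as the hyperbolicity weakens. This requires uniform (in $V$) bounds on the derivatives of the conjugacy that expresses $\sigma_V$ as the limit set of an iterated function system, together with quantitative control of how the diameters of the generating intervals scale. A clean route is to use the renormalization/transfer-matrix gap estimates already present in the Fibonacci literature to bound the ratios of gap sizes to interval sizes from above and below by constants times $V$, then invoke the standard relation between these ratios and the dimension. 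The upper bound (producing $C_2$) should follow from a lower bound on the minimal expansion rate, while the lower bound (producing $C_1$) needs an upper bound on the expansion, i.e.\ genuine two-sided control of the dynamics; balancing these two estimates uniformly as $V \to 0^+$ is the crux, and it is precisely here that the near-neutrality of the trace map at $V = 0$ must be exploited with care rather than through soft continuity arguments.
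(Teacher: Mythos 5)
First, a point of comparison: this theorem appears in the survey portion of the paper and is not proved there; it is quoted from \cite{Damanik2010a}, and the only indication of method the paper gives is the remark following Corollary \ref{cor:spectrum-sum} that the bounds were obtained by estimating the thickness $\tau(\sigma_V)$ from below (which yields the lower bound $1 - C_1 V \leq \hdim(\sigma_V)$ via the inequality $\hdim(K) \geq \log 2/\log\left(2 + \frac{1}{\tau(K)}\right)$) and, in the other direction, by estimating the denseness from above (which yields the upper bound $\hdim(\sigma_V)\leq 1 - C_2V$). Your final paragraph does converge on essentially this mechanism --- two-sided, linear-in-$V$ control of the ratios of gap lengths to band lengths, converted into dimension bounds --- and that is indeed how the cited proof goes; the quantitative input is that the newly opened gaps have length comparable to $V$ (compare Theorem \ref{thm:gap-size}) while the surviving bands in the periodic-approximation hierarchy remain comparable in a distortion-controlled way.

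As written, though, your proposal has a genuine gap: the crux --- uniform, two-sided, linear-in-$V$ bounds on the gap-to-bridge ratios over \emph{all} gaps and all scales of the Cantor set, not merely at the coarsest level or at a single periodic orbit --- is asserted (``one must extract that the defect $1-\hdim$ is comparable to $V$'', ``this requires uniform bounds\dots'') rather than established, and it is precisely this step that constitutes the proof. Moreover, the route you foreground in your first two paragraphs (Bowen's pressure equation together with a leading-order perturbative computation of the eigenvalues of $Df^n$ along the period-$2$ and period-$6$ orbits born from the singularities of $S_0$) is not the argument of \cite{Damanik2010a} and is substantially harder to make two-sided: near $V=0$ the hyperbolicity degenerates non-uniformly across the nonwandering set, since orbits passing close to the four conic singularities lose expansion at rates depending on how closely they approach them, so an eigenvalue-gap computation at one periodic orbit does not control the pressure function to the required $O(V)$ precision. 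If you wish to carry the argument out, the thickness/denseness route of your last paragraph is the one to develop, taking as the two quantitative inputs the gap-size asymptotics of Theorem \ref{thm:gap-size} and bounded distortion for the holonomy along the stable lamination $\mathcal{L}_V$.
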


The exact large coupling asymptotics for the Hausdorff dimension of the spectrum was given in \cite{Damanik2008}:

\begin{thm}\label{thm:degt}
We have
\begin{align*}
\lim_{V\rightarrow\infty}\hdim(\sigma_V)\cdot \log V = \log(1+\sqrt{2}).
\end{align*}
\end{thm}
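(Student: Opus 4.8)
The plan is to reduce the computation to the thermodynamic formalism of the hyperbolic trace-map dynamics and then extract the asymptotics from uniform expansion estimates on the transfer-matrix cocycle as $V\to\infty$. First I would pass to the dynamical picture. Since we only care about $V\to\infty$, we may take $V\geq 16$, so Casdagli's Theorem \ref{thm:casdagli} and Theorem \ref{thm:schro-dynam-general} apply: the set of type-\textbf{B} points on $S_V$ is the smooth stable lamination $\mathcal{L}_V$, the non-wandering set $\Omega_V$ of $f|_{S_V}$ is a locally maximal hyperbolic (horseshoe) set, and the line of initial conditions $\gamma$ from \eqref{eq:gamma-schro} is uniformly transversal to $\mathcal{L}_V$. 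Because $\sigma_V=B_\infty=\{E:\gamma(E)\in\mathcal{L}_V\}$, the spectrum is in bijection with the dynamically defined Cantor set $C_V=\gamma\cap\mathcal{L}_V$, and by the dimension equalities recorded above for such sets, $\hdim(\sigma_V)$ equals the transversal Hausdorff dimension of $\mathcal{L}_V$, which in turn equals the unstable dimension $d^u_V$ of the horseshoe $\Omega_V$.

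Next I would invoke Bowen's equation: $d^u_V$ is the unique zero of the pressure function $s\mapsto P(f|_{\Omega_V},\,-s\,\varphi^u_V)$, where $\varphi^u_V=\log\lVert Df|_{E^u}\rVert$ is the unstable Jacobian. Using the natural Markov partition of $\Omega_V$ and its coding by a subshift of finite type $\Sigma$, this is equivalent to the balance $\sum_{|w|=n}(\text{expansion over }w)^{-d^u_V}\approx 1$, the sum over admissible words $w$ of length $n$. The expansion over a word is, up to the cocycle identification, the norm of the transfer-matrix product over the corresponding Fibonacci block, so everything is reduced to estimating norms of products of the matrices $M_n(E)$.

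The core of the argument is two-sided control of the expansion in the large-$V$ regime. I would show that for $E$ near $\sigma_V$ the derivative cocycle of $f$ along $\Omega_V$ expands at a rate that is, to leading exponential order, uniform: $\log\lVert Df^n|_{E^u}\rVert = n(\log V)(1+o(1))$ uniformly over $\Omega_V$ as $V\to\infty$. Feeding this uniform per-symbol expansion of size $\approx\log V$ into the pressure equation collapses it to the purely combinatorial relation $(\#\{w:|w|=n\})\cdot V^{-n\,d^u_V}\approx 1$, so that $d^u_V\cdot\log V\to h$, where $h$ is the exponential growth rate of admissible words in $\Sigma$. Since $\hdim(\sigma_V)=d^u_V$, this yields $\hdim(\sigma_V)\cdot\log V\to h$.

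It then remains to evaluate $h=\log(1+\sqrt{2})$. This is the combinatorial heart of the matter: the band-splitting combinatorics of the periodic approximants $\sigma_{F_k}$ obey fixed rules encoded by a transition matrix whose leading eigenvalue is the silver mean $1+\sqrt{2}$, which is exactly the growth rate of the Markov coding of $\Omega_V$. I expect the main obstacle to be the uniform expansion asymptotic $\log V\,(1+o(1))$: one must show that all subleading contributions to the expansion rate, coming from the non-constancy of $\varphi^u_V$ across the infinitely many pieces and along $\gamma$, are genuinely $o(\log V)$ and do not accumulate over the $n$ steps. This demands bounded-distortion estimates for $f|_{\Omega_V}$ that remain uniform in $V$ after the natural rescaling, so that the upper and lower bounds pin $d^u_V\log V$ to the \emph{same} constant $\log(1+\sqrt{2})$ rather than merely bracketing it between two different multiples; securing this matching is the delicate step.
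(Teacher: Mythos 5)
First, a point of reference: the paper does not prove Theorem \ref{thm:degt} at all --- it is quoted in the survey section from \cite{Damanik2008}, so your proposal can only be judged against the proof given there, which runs through Raymond's band-splitting combinatorics rather than through thermodynamic formalism. Your general frame is legitimate: for $V\geq 16$ one has Casdagli's horseshoe and transversality, $\hdim(\sigma_V)$ equals the unstable dimension $d^u_V$ of $\Omega_V$, and $d^u_V$ is the zero of the pressure function. The fatal step is the uniform expansion claim $\log\norm{Df^n|_{E^u}} = n(\log V)(1+o(1))$ uniformly on $\Omega_V$. This is false, and not because of distortion constants: the expansion \emph{exponent} itself is non-constant across the horseshoe. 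On $\Omega_V$ the coordinates are of size $\sim V^{1/2}$, and in Raymond's hierarchy (which is exactly the Markov structure of $\Omega_V$ read along the line of initial conditions) a type-B band produces its type-A successor one trace-map iteration later with contraction $\asymp V^{-1}$, while a type-A or type-B band produces its type-B successors \emph{two} iterations later, again with contraction $\asymp V^{-1}$. Hence the per-iteration unstable expansion genuinely oscillates between $\approx V^{1/2}$ and $\approx V$ depending on the itinerary; the Lyapunov exponents of ergodic measures on $\Omega_V$ fill an interval of the form $[\tfrac12\log V + O(1),\, \tfrac23\log V + O(1)]$, and no orbit expands at sustained rate $(\log V)(1+o(1))$ per iteration. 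No bounded-distortion estimate, however uniform in $V$, can repair a discrepancy in the exponent.

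Second, if your uniform-expansion step were true, your argument would yield $\lim_{V\to\infty} \hdim(\sigma_V)\log V = h_{\mathrm{top}}(f|_{\Omega_V})$, and that constant is wrong: the topological entropy of the trace-map horseshoe is $\log\frac{1+\sqrt5}{2}$, not $\log(1+\sqrt2)$. Indeed the period-$F_k$ approximant $\sigma_k$ has exactly $F_k$ bands, so the number of $k$-cylinders of the coding that meet $\gamma$ grows like $\phi^k$ with $\phi$ the golden mean (equivalently, $\deg f^n$ grows like Fibonacci numbers, so the dynamical degree of $f$ is $\phi$). The silver mean $1+\sqrt2$ is the leading eigenvalue of the type-A/type-B transition matrix $\begin{pmatrix}0&1\\1&2\end{pmatrix}$, i.e.\ the growth rate per \emph{hierarchy move}, and each move spans one or two trace-map iterations --- it is not the growth rate of $n$-cylinders of the coding of $f|_{\Omega_V}$. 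Your proposal lands on the correct constant only because these two errors compensate: you pair the per-move count $(1+\sqrt2)^n$ with a per-iteration expansion $V^n$. The proof in \cite{Damanik2008} is precisely the argument that makes this pairing legitimate: it works per move, showing that each hierarchy step contracts band lengths by $\asymp V^{-1}$ with constants uniform in large $V$, while the move count grows at rate $1+\sqrt2$; the upper bound then comes from covering by bands, the lower bound from a mass-distribution argument. If you wish to keep the thermodynamic language, you must run Bowen's equation with the genuinely non-constant potential $\varphi^u_V$ --- for instance by inducing on move-time, where the expansion does become uniformly $\asymp \log V$ and the induced entropy is $\log(1+\sqrt2)$.
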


\begin{rem}
Primitive noninvertible two-letter substitutions have also been considered (see, for example, \cite{Bellissard1991}); in those cases, analogues of Theorems \ref{thm:schro-spect-haus-cont} and \ref{thm:schro-spect-haus-cont-v0} are currently open questions. For extensions of Theorem \ref{thm:degt} to general Sturmian sequences, see \cite{Liu2007}.
\end{rem}

Given that the spectrum of $H$ is a Cantor set for all $V > 0$, it is natural to ask about the gap structure of the spectrum (the disjoint open subintervals that form the compliment of the spectrum). The gap labeling theorem provides a set of canonical labels for the cumulative distribution function of the spectral measure which correspond to gaps in the spectrum. See \cite{Bellissard1992} for a summary of results regarding these labelings in a more general setting. It is of interest to know whether all possible gaps that are predicted by the gap labeling are open. D. Damanik and A. Gorodetski in \cite{Damanik2010a} proved

\begin{thm}\label{thm:gap-opening}
 For all $V > 0$ sufficiently small, all gaps allowed by the gap labeling theorem are open in $\sigma_V$.
\end{thm}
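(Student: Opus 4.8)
The plan is to realize each label permitted by the gap labeling theorem as a genuine open gap by combining the periodic approximation scheme with the escaping dynamics of the trace map $f$ from \eqref{eq:fib-tmap}. First I would record what the gap labeling theorem provides: for the Sturmian potential attached to the Fibonacci substitution, with golden rotation number $\alpha$, the integrated density of states $N(\cdot)$ is constant on each gap of $\sigma_V$ and takes its value in the countable set $\mathcal{G} = (\Z + \alpha\Z)\cap[0,1]$. Thus the allowed labels are exactly the $g\in\mathcal{G}\setminus\set{0,1}$, and the theorem reduces to exhibiting, for every such $g$ and every sufficiently small $V>0$, a nonempty open interval $J\subset\R\setminus\sigma_V$ on which $N\equiv g$.

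Next I would locate candidate gaps through the period-$F_k$ approximants. Their spectra $\sigma_k = \set{E : \abs{\Tr\hat M^{(F_k)}_{F_k}(E)}\leq 2}$ are unions of at most $F_k$ bands whose separating gaps carry the approximant IDS values $j/F_k$, $j = 1,\dots,F_k-1$. Since the continued fraction of $\alpha$ is constant and $F_{k-1}/F_k\to\alpha$, for each fixed label $g=m\alpha\bmod 1$ one can choose, for all large $k$, an index $j=j(k)$ (namely $mF_{k-1}\bmod F_k$) with $j(k)/F_k\to g$; this singles out a distinguished gap $G_{k,j(k)}$ of the $k$-th approximant that should survive to the limit. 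In terms of the half-traces $x_k(E)=\tfrac12\abs{\Tr\hat M_{F_k}(E)}$ obeying $x_{k+1}=2x_kx_{k-1}-x_{k-2}$, the approximant gap $G_{k,j(k)}$ is a maximal interval on which $x_k>1$.

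The heart of the argument is to upgrade these approximant gaps to genuine open gaps. Fixing $g$, I would use the escaping behaviour of $f$ — recall from Casdagli that an unbounded forward orbit contains no bounded suborbit, so once the half-trace sequence crosses a threshold with the correct signs the recursion drives $\abs{x_{k'}}\to\infty$ monotonically — to produce an energy $E_0$ and a level $l$ with $x_k(E_0)>1$ for all $k\geq l$, along a fixed combinatorial itinerary matching $j(k)$. This is an open condition in $E$: by continuity of the $x_k$ and the monotone escape, finitely many strict inequalities certify escape, so $E_0$ has a neighbourhood $J$ on which $x_k>1$ for all $k\geq l$. Then $J\cap\sigma_k=\emptyset$ for $k\geq l$, and the semicontinuity \eqref{eq:semicont-spectra} forces $J\cap\sigma_V=\emptyset$; as the approximant IDS on $G_{k,j(k)}$ converges to $g$, we get $N\equiv g$ on $J$. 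Equivalently, $J$ lies in the complement of $B_\infty$ from \eqref{eq:binfty-def}, i.e.\ in the unstable set of $f$ complementary to the stable lamination $\mathcal{L}_V$ of Theorem \ref{thm:schro-dynam-general}.

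The main obstacle is to guarantee that such an escape itinerary can be arranged for every label $g$ simultaneously when $V$ is small, i.e.\ that no allowed gap closes. For small $V$ the surface $S_V$ is a small perturbation of the singular surface $S_0$, on which $f$ is hyperbolic, and the line $\gamma$ of \eqref{eq:gamma-schro} meets $\mathcal{L}_V$ transversally in the weak-coupling regime (cf.\ the estimates behind Theorem \ref{thm:schro-spect-haus-cont-v0}); this hyperbolicity and transversality are what prevent higher-order cancellations from collapsing an approximant gap in the limit. The delicate, quantitative point is uniformity: the gap with label $m\alpha\bmod 1$ opens only to a width that is exponentially small in $\abs{m}$ (of order $V^{c\abs{m}}$), so one must control the expansion of $f$ transverse to $\mathcal{L}_V$ along the entire itinerary and extract a strictly positive lower bound on the width for every $m$ at once. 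Establishing these uniform distortion and expansion bounds near $S_0$, and verifying the threshold and sign conditions along every admissible itinerary, is where essentially all of the work lies.
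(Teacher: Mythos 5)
Your reduction of the problem is fine as far as it goes: identifying the allowed labels with $\set{m\alpha}$, matching them to approximant gaps via $j(k)=mF_{k-1}\bmod F_k$, and observing that an escape certificate for the trace map is an open condition in $E$, so that semicontinuity \eqref{eq:semicont-spectra} turns an open interval avoiding all $\sigma_k$, $k\geq l$, into an interval in $\R\setminus\sigma_V$. But the theorem is precisely the assertion that for \emph{every} label such an interval is nonempty, and that is the step you defer (``this is where essentially all of the work lies''). Nothing in your argument rules out that the distinguished approximant gaps $G_{k,j(k)}$ shrink to a point as $k\to\infty$ for some $m$; producing an energy $E_0$ whose orbit escapes along the prescribed itinerary is exactly what has to be proved, and the uniform expansion and distortion estimates you invoke to obtain a positive lower bound on the width, uniformly in $m$ for a single small threshold $V_0$, are not supplied. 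So the proposal is a correct framing plus an unproved core, not a proof.

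It is also not the mechanism used in the cited source, nor the one this paper uses for the Jacobi analogue (Theorem \ref{thm:mei-yessen-3}, together with the gap-size statement Theorem \ref{thm:mei-yessen-5}). There the argument is geometric rather than itinerary-by-itinerary: at $V=0$ the line of initial conditions lies on $\mathbb{S}_0$ and the spectrum is a full interval; the candidate gaps are in bijection with the countably many intersection points of that line with the strong-stable manifold $W^{ss}(P_1)$ of the fixed singularity, and the identification of these points with the labels $\set{m\alpha}$ is made once and for all. Each such point opens into a genuine gap for all small $V>0$ simultaneously because the center-stable manifold through the curve of period-two points at $P_1$ is quadratically tangent to $S_0$ (Lemma \ref{lem:qinter}) and the line of initial conditions meets the center-stable manifolds transversally in the weak-coupling regime (Proposition \ref{prop:transversality}); transversality then yields a uniform, in fact linear-in-$V$, lower bound on every gap width (cf.\ Theorem \ref{thm:gap-size}) with no need to track signs and thresholds along individual escape itineraries. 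To complete your approach you would essentially have to re-derive this transversality input in trace coordinates.
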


\begin{rem}
 The gap structure for $H$ with $V\geq 4$ was investigated by L. Raymond in \cite{Raymond1997}. The famous “Ten Martini Problem” is showing that the spectrum of the almost Mathieu operator is a Cantor set. This was done by Avila and Jitomirskaya in 2009 \cite{Avila2009}. This is generalized to the so-called “Dry Ten Martini Problem,” showing that in fact all gaps predicted by the gap labeling theorem are open. In the above theorem, D. Damanik and A. Gorodetski solved the problem for the Hamiltonian $H$ with $V > 0$ sufficiently small.
\end{rem}

The next natural question is about the behavior of individual gaps (sizes thereof) as a function of $V$. This has also been investigated by D. Damanik and A. Gorodetski \cite{Damanik2010a}:

\begin{thm}\label{thm:gap-size}
 For the spectrum $\sigma_V$ of $H$ from \eqref{eq:schrodinger}, we have the following. Given a one-parameter family of gaps $\set{U_V}_{V > 0}$ of $\sigma_V$ (continuation of a gap along the parameter $V$), the boundary points of $U_V$ depend smoothly on $V$ for all $V > 0$ sufficiently small and, denoting the length $U_V$ by $\abs{U_V}$, we have
 \begin{align*}
  \lim_{V\rightarrow 0^+}\frac{\abs{U_V}}{V}\hspace{2mm}\text{ exists and belongs to }\hspace{2mm} (0, \infty).
 \end{align*}
\end{thm}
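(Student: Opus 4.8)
The plan is to translate the statement about gaps of $\sigma_V$ into one about the trace map $f$ and its hyperbolic structure on the surfaces $S_V$, and then to analyze the latter perturbatively as $V\to 0^+$. First I would invoke Theorem~\ref{thm:schro-dynam-general} to identify the spectrum with $\gamma^{-1}(\mathcal{L}_V)$, where $\gamma$ is the line of initial conditions \eqref{eq:gamma-schro} and $\mathcal{L}_V$ is the stable lamination of the hyperbolic (non-wandering) set of $f|_{S_V}$. Under this dictionary a gap of $\sigma_V$ is a complementary interval of the dynamically defined Cantor set $\gamma^{-1}(\mathcal{L}_V)$, and its two endpoints $E^-_V<E^+_V$ are the energies at which $\gamma$ meets the two bounding leaves of $\mathcal{L}_V$; as in any horseshoe, these accessible boundary points lie on the stable manifold $W^s(p_V)$ of a periodic orbit $p_V$ of $f$. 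Fixing a one-parameter family $\set{U_V}$ then amounts to fixing the combinatorics of the gap, i.e. fixing $p_V$ together with the two branches of $W^s(p_V)$ that bound it.

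For the smoothness claim I would use persistence of hyperbolic sets: for $V>0$ the surface $S_V$ is smooth and the non-wandering set of $f|_{S_V}$ is hyperbolic and depends real-analytically on $V$, so $p_V$ and the relevant branches of $W^s(p_V)$ vary analytically with $V$. Since transversality of $\gamma$ to $\mathcal{L}_V$ is known for all sufficiently small $V$ (as already used for Theorems~\ref{thm:schro-dynam-general} and \ref{thm:gap-opening}, cf. the remark following Theorem~\ref{thm:schro-spect-haus-cont}), the two bounding branches meet the analytically $V$-dependent line $\gamma$ transversally, and the implicit function theorem produces $E^\pm_V$ as smooth functions of $V$ for all small $V>0$. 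This yields the asserted smoothness of the boundary points of $U_V$.

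It then remains to control $\abs{U_V}=E^+_V-E^-_V$. By Theorem~\ref{thm:gap-opening} the gap is open, so $\abs{U_V}>0$; the content is the rate. Here I would set up a rescaling analysis near $V=0$. As $V\to 0^+$ both endpoints converge to a single energy $E_0\in[-2,2]$ and the gap collapses, consistent with $\sigma_0=[-2,2]$, while $p_V$ and its two bounding stable branches degenerate toward a limiting configuration on the singular surface $S_0$. The bounds $\abs{U_V}\le CV$ and $\abs{U_V}\ge cV$ would follow from tracking, uniformly in $V$, the location of $p_V$, the separation of the two bounding branches measured along $\gamma$, and the local contraction rates; existence of $\lim_{V\to 0^+}\abs{U_V}/V$ would then come from isolating the first-order-in-$V$ term of this separation. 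The nonvanishing of the limit is equivalent to a non-degeneracy (first-order opening) condition, and this is precisely where the self-similar hierarchy of the Fibonacci trace map is essential: a deep gap is an $f^{-1}$-renormalized image of one of finitely many generating gaps, and one must check that renormalization preserves first-order opening so that no gap opens only at higher order in $V$.

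The main obstacle is Step three. The difficulty is that at $V=0$ the intersection $\mathcal{L}_0\cap\gamma$ is not a Cantor set but the whole interval corresponding to $\sigma_0=[-2,2]$: every gap has collapsed, and $S_0$ is singular at the four conic points that organize the dynamics. One therefore cannot simply differentiate a regular family at $V=0$; instead one must understand how the gaps are created as $V$ increases from $0$, which requires uniform control—with explicit $V$-dependence—of the trace map near the conic singularities of $S_0$, where the relevant periodic orbits and their stable manifolds accumulate and where the transversal Cantor set degenerates to a full interval. Extracting from this the exact first-order-in-$V$ behavior of $E^\pm_V$, and verifying that the first-order opening coefficient is nonzero for \emph{every} gap, is where essentially all the work lies; the dictionary of Step one and the smoothness of Step two are comparatively routine once the hyperbolic picture of Theorem~\ref{thm:schro-dynam-general} is in hand.
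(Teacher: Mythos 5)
Your first two steps do match the actual mechanism behind this theorem: under the trace-map dictionary a gap of $\sigma_V$ is a complementary interval of the transversal intersection of $\gamma$ from \eqref{eq:gamma-schro} with the stable lamination, its accessible boundary points lie on stable manifolds of distinguished periodic orbits, and smoothness of the endpoints in $V$ follows from transversality together with the implicit function theorem. The problem is your Step three, which is where the theorem actually lives: there you only describe what would have to be done, and the route you sketch (renormalizing deep gaps onto finitely many generating gaps and ``isolating the first-order-in-$V$ term'') is missing the single geometric input that determines the power of $V$ and makes the limit finite and nonzero. As written, nothing in your argument rules out an opening rate of order $\sqrt{V}$ or $V^2$; your ``non-degeneracy (first-order opening) condition'' is precisely the statement that needs to be proved, not a hypothesis one may defer.

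The missing input is the quadratic tangency of Lemma \ref{lem:qinter} (the paper does not reprove Theorem \ref{thm:gap-size}, which is quoted from \cite{Damanik2010a}, but the mechanism is laid out in the proof of its Jacobi analogue, Theorem \ref{thm:mei-yessen-5}). For small $V>0$ the four conic singularities of $S_0$ bifurcate into smooth curves of hyperbolic periodic points (period two through $P_1=(1,1,1)$, period six through the other three), transversal to the surfaces $S_W$, and the gaps are bounded by the intersections of $\gamma$ with the center-stable manifolds of these curves and their trace-map preimages. Each such center-stable manifold is tangent to the Cayley cubic $S_0$ along the strong-stable manifold $W^{ss}(P_i)\subset\mathbb{S}_0$, and the tangency is \emph{quadratic}; consequently its trace on the level surface $S_W$ consists, near the tangency locus, of two leaves whose separation is asymptotically $c\sqrt{W}$ with $c>0$. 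Since in the Schr\"odinger case $I(\gamma(E))=V^2/4$ identically, the line of initial conditions lives on $S_{V^2/4}$, so the separation of the two bounding leaves measured along $\gamma$ is asymptotically a nonzero constant times $\sqrt{V^2/4}$, i.e.\ linear in $V$; transversality and bounded distortion along preimages then give existence and positivity of $\lim_{V\to 0^+}\abs{U_V}/V$ for every gap in the hierarchy (this is \cite[Lemma 3.2]{Damanik2010} in the Fibonacci case). Without establishing this tangency and exploiting the relation $W=V^2/4$, the bounds $cV\le\abs{U_V}\le CV$ do not follow from anything you have written.
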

\begin{rem}
The limit in Theorem \ref{thm:gap-size} depends on the gap $U_V$.
\end{rem}

Besides the spectral measures, supported on the spectrum $\sigma_V$ is another natural measure: the \textit{density of states measure}. Roughly speaking, the density of states measures the amount of energy levels in a given interval that a quantum particle may occupy. This is useful in a few ways; for example, probability of occupancy of energy levels by fermions (such as electrons) can be computed in terms of the density of states. This measure is a non-atomic Borel probability measure defined by the following limiting procedure.

Let us denote by $H^{(L)}$ the restriction of the Hamiltonian $H$ to the interval $[1, L]$ (i.e. the finite lattice of nodes $1, \dots, L$ with potential $Vu_1, \dots, Vu_n$). We use the Dirichlet boundary conditions. Let us define $N(E, L)$ as the number of eigenvalues of $H^{(L)}$ that are less than or equal to $E$. Then the \textit{integrated density of states} is defined by
\begin{align}
 N(E) \eqdef \lim_{L\rightarrow\infty}\frac{1}{L}N(E, L).
\end{align}
The existence of this limit has been shown in a general setting in \cite{Lenz2005}. It has also been studied for general potentials, and analytic aperiodic potentials. See Section 5.5 in \cite{Damanik200X} for references. The density of states measure will be denoted by $dN$.

The density of sates $N$ depends on the choice of the coupling $V$. To indicate this dependence explicitly where needed, we shall write $N_V(E)$ instead of $N(E)$. The exact dimensionality of the measure $dN$ was studied by D. Damanik and A. Gorodetski in \cite{Damanik2012}, confirming the conjecture of Barry Simon that
\begin{align*}
 \inf\set{\hdim(S): N_V(S) = 1} < \hdim(\sigma_V).
\end{align*}
The exact statement of the Damanik-Gorodetski result is the following
\begin{thm}\label{thm:damanik-gorodetski-dos}
 There exists $0 < V_0 \leq \infty$ such that for all $V\in (0, V_0)$, there is $d_V\in (0, 1)$ so that the density of states measure $dN_V$ is of exact dimension $d_V$; that is, for $dN_V$-almost every $E\in\R$, we have
 \begin{align*}
  \lim_{\epsilon\downarrow 0}\frac{\log N_V(E-\epsilon, E+\epsilon)}{\log\epsilon} = d_V.
 \end{align*}
 Moreover, in $(0, V_0)$, $d_V$ is a $C^\infty$ function of $V$, and
 \begin{align*}
  \lim_{V\downarrow 0}d_V = 1.
 \end{align*}
\end{thm}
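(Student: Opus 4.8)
The plan is to transport the problem to the dynamics of the Fibonacci trace map $f$ on the invariant surface $S_V$ and to identify $dN_V$ with a canonical invariant measure of the associated horseshoe. By Theorem \ref{thm:schro-dynam-general}, for each $V>0$ the set of type-\textbf{B} points is the stable lamination $\mathcal{L}_V$, which—on the range of $V$ where $f|_{S_V}$ is uniformly hyperbolic on its trapped set, in particular for small $V$—consists of the stable manifolds of a locally maximal hyperbolic set $\Lambda_V\subset S_V$. The restriction $f|_{\Lambda_V}$ is topologically conjugate to a fixed subshift of finite type independent of $V$ by structural stability, so its topological entropy $h$ is constant in $V$. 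Since $\gamma$ is uniformly transverse to $\mathcal{L}_V$ for small $V$, the map $E\mapsto\gamma(E)$ carries $\sigma_V$ homeomorphically onto the transversal slice $\gamma(\R)\cap\mathcal{L}_V$, a dynamically defined Cantor set whose local geometry along $\gamma$ is governed by the unstable direction of $f$ on $\Lambda_V$.

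First I would show that the pushforward of $dN_V$ under this correspondence is the measure of maximal entropy $\mu_V$ of $f|_{\Lambda_V}$. This is the combinatorial heart of the argument: using the canonical periodic approximants $H^{(F_k)}$ of period the $k$-th Fibonacci number $F_k$ together with the semicontinuity \eqref{eq:semicont-spectra}, the approximating spectrum $\sigma_{F_k}$ splits into bands whose count grows like $e^{kh}$, and Dirichlet eigenvalue counting shows that $dN_V$ assigns to each such band a mass comparable (uniformly in the band) to $e^{-kh}$. Because the Fibonacci recursion $\hat M_{F_{k+1}}=\hat M_{F_{k-1}}\hat M_{F_k}$ means that one iterate of $f$ advances one level, this is precisely the Gibbs/cylinder property characterizing $\mu_V$.

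Next I would invoke exact dimensionality of hyperbolic measures for surface diffeomorphisms: $\mu_V$ is an ergodic $f$-invariant measure with nonzero Lyapunov exponents on the surface $S_V$, hence exact dimensional, and its conditional measures in the unstable direction have exact dimension $h/\lambda_V$, where $\lambda_V=\lambda^u(\mu_V)>0$ is the unstable Lyapunov exponent. Uniform transversality of $\gamma$ to the stable lamination guarantees that slicing $\mu_V$ by $\gamma$ realizes exactly this unstable conditional rather than any mixture of stable and unstable data, so $dN_V$ is exact dimensional with $d_V=h/\lambda_V$; in particular $d_V\le\hdim(\sigma_V)<1$. The smoothness of $V\mapsto d_V$ then reduces to that of $\lambda_V$: as $\Lambda_V$ persists and varies analytically with $V$ and $\mu_V$ is the equilibrium state of the zero potential, the analyticity of the pressure functional in thermodynamic formalism makes $\lambda_V$ a $C^\infty$ (indeed analytic) function of $V$ on the hyperbolic regime, while $h$ is constant, whence $d_V\in C^\infty$.

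Finally, for $\lim_{V\downarrow 0}d_V=1$, the upper bound $\limsup_{V\downarrow0}d_V\le 1$ is immediate from $d_V\le\hdim(\sigma_V)$ and Theorem \ref{thm:schro-spect-haus-cont-v0}; what remains is the matching lower bound, equivalently $\lambda_V\to h$. I expect this to be the main obstacle, since as $V\downarrow 0$ the surface $S_V$ degenerates to $S_0$ with its four conic singularities and the hyperbolicity of $f|_{\Lambda_V}$ becomes non-uniform, so controlling $\lambda_V$ demands the delicate near-singularity estimates on the trace map underlying the proof of Theorem \ref{thm:schro-spect-haus-cont-v0}, showing that the entropy-to-Lyapunov-exponent ratio tends to $1$ as the horseshoe flattens. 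The second genuine difficulty is the rigorous identification of $dN_V$ with $\mu_V$ and the verification that the one-dimensional transversal slice captures exactly the unstable conditional dimension; it is precisely the uniform transversality of $\gamma$—known only for small and large $V$—that is required here, which is why the statement is confined to an interval $(0,V_0)$ with $V_0$ possibly finite.
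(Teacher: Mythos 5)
This theorem is not proved in the paper: it is quoted from Damanik--Gorodetski \cite{Damanik2012} as part of the survey in Section \ref{sec:known}, so there is no in-paper argument to compare against. Your outline is nonetheless a faithful reconstruction of the strategy of that reference --- identifying $dN_V$ (via the band/cylinder counting for the periodic approximants) with the projection along the stable lamination of the measure of maximal entropy of the trace map on its hyperbolic set, the formula $d_V = h/\lambda^u_V$ for the unstable conditional dimension, thermodynamic formalism for the smoothness of $\lambda^u_V$ in $V$, and the asymptotics $\lambda^u_V\to h$ as $V\downarrow 0$ --- and you correctly isolate the two genuinely hard points, namely the near-singularity control of the Lyapunov exponent as $S_V$ degenerates to $S_0$ and the transversality of $\gamma$ to $\mathcal{L}_V$ needed to read the unstable conditional dimension off the line of initial conditions.
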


Recently M. Pollicott provided a proof of analyticity of $d_V$ in the parameter $V$ \cite{Pollicott200x}:

\begin{thm}\label{thm:idos-analiticity}
The dimension $d_V$ from Theorem \ref{thm:damanik-gorodetski-dos} depends analytically on $V$.
\end{thm}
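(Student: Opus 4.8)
The plan is to reduce the claimed analyticity to analyticity of a single thermodynamic quantity and then to run thermodynamic formalism in the \emph{analytic} category, following the transfer-operator approach of Pollicott. The starting point is the dynamical picture behind Theorem \ref{thm:damanik-gorodetski-dos}: for $V \in (0, V_0)$ the trace map $f$ possesses a locally maximal compact hyperbolic set $\Lambda_V \subset S_V$ whose stable lamination is $\mathcal{L}_V$ from Theorem \ref{thm:schro-dynam-general}, the spectrum is $\sigma_V = \gamma(\R) \cap W^s(\Lambda_V)$ via the transversal intersection of the line of initial conditions, and the induced dynamics on $\sigma_V$ is topologically conjugate to the golden-mean subshift of finite type. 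Under this conjugacy the density of states measure $dN_V$ is carried to the measure of maximal entropy $\mu_V$, and the exact dimension $d_V$ equals the ratio
\begin{align*}
 d_V = \frac{h}{\lambda_V}, \quad h = \log\frac{1+\sqrt{5}}{2}, \quad \lambda_V = \int \log\abs{Dg_V}\, d\mu_V,
\end{align*}
where $g_V$ is the one-dimensional expanding holonomy (return) map on a transversal induced by $f$, and $h$ is the topological entropy, which is independent of $V$. Since $\lambda_V > 0$ by uniform hyperbolicity, it suffices to prove that $\lambda_V$ is an analytic function of $V$.

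Next, I would express $\lambda_V$ through the pressure functional. Let $\mathcal{L}_{V,s}$ denote the Ruelle transfer operator associated with the expanding map $g_V$ and the geometric weight $\abs{Dg_V}^{-s}$, and let $P(V, s)$ be the logarithm of its leading eigenvalue. The measure of maximal entropy is the equilibrium state of the zero potential, so
\begin{align*}
 \lambda_V = -\frac{\partial}{\partial s} P(V, s)\Big|_{s = 0}.
\end{align*}
Thus analyticity of $d_V$ in $V$ follows once $P(V,s)$ is shown to be jointly real-analytic in $(V, s)$ near $s = 0$. To obtain this I would realize $\mathcal{L}_{V,s}$ on a Banach space of functions holomorphic on a fixed complex neighborhood of the transversal Cantor set; because $g_V$ is a real-analytic expanding map, $\mathcal{L}_{V,s}$ is then nuclear (trace class), its Fredholm determinant $d(z, s, V) = \det(I - z\mathcal{L}_{V,s})$ is entire in $z$ and jointly analytic in all three variables, and $e^{-P(V,s)}$ is its smallest zero in $z$. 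This zero is simple by the spectral gap for the leading eigenvalue, so the analytic implicit function theorem gives that $P(V,s)$, hence its $s$-derivative at $0$, is analytic in $V$.

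The two ingredients that make the analytic framework go through are the ones I would establish with most care. First, one needs the hyperbolic set $\Lambda_V$, its invariant foliations, and the induced expanding map $g_V$ to depend analytically on $V$: this follows from the fact that $f$ and the surfaces $S_V$ depend analytically (indeed polynomially) on $(E, V)$, so the hyperbolic continuation of $\Lambda_V$ and its stable and unstable manifolds vary analytically in the parameter, which in turn lets the weights $\abs{Dg_V}^{-s}$ extend holomorphically and depend analytically on $V$. Second, one must know that $\sigma_V$ is genuinely a dynamically defined Cantor set coded by the golden-mean shift, which requires transversality of $\gamma$ to $\mathcal{L}_V$; this is exactly what is available on $(0, V_0)$ and is what underlies Theorem \ref{thm:damanik-gorodetski-dos}. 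I expect the main obstacle to be the first point together with the operator-theoretic bookkeeping: verifying that the induced conformal dynamics is not merely smooth but genuinely real-analytic with a holomorphic extension on a \emph{parameter-independent} complex neighborhood, so that $\mathcal{L}_{V,s}$ is a nuclear operator depending analytically on $(V,s)$ on one fixed space. Once that is in place, the conclusion is a routine application of the analyticity of zeros of the entire dynamical determinant.
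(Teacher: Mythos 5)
The paper does not actually prove this statement: Theorem \ref{thm:idos-analiticity} appears in the survey section and is simply quoted from the cited work of Pollicott, so there is no in-paper argument to compare yours against. That said, your outline is a faithful reconstruction of the route that reference takes: reduce $d_V$ to $h/\lambda_V$ with $h$ the ($V$-independent) topological entropy and $\lambda_V$ the unstable Lyapunov exponent of the measure of maximal entropy (this identification of $dN_V$ with the maximal-entropy measure is exactly the content of the Damanik--Gorodetski work behind Theorem \ref{thm:damanik-gorodetski-dos}), recover $\lambda_V$ as $-\partial_s P(V,s)|_{s=0}$, and get joint analyticity of the pressure from nuclear transfer operators and the simplicity of the leading zero of the dynamical determinant. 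All of those reductions are correct. The one point where your write-up is more optimistic than what can be taken for granted is the claim that the induced one-dimensional return map $g_V$ on a transversal is real-analytic with a holomorphic extension to a $V$-uniform complex neighborhood: since $\Lambda_V$ is a basic set of a surface diffeomorphism rather than a repeller, building $g_V$ requires composing inverse branches with stable holonomies, and holonomies of hyperbolic surface diffeomorphisms are in general only $C^{1+\alpha}$, not analytic. This is precisely why Pollicott's argument is organized around the two-sided hyperbolic system itself --- the analytic dependence of periodic points and their unstable multipliers on the parameter, fed into the dynamical determinant --- rather than around an induced analytic expanding map. You identify this as the main obstacle, which is the right instinct, but as written it is the step that would need a genuinely different implementation to close.
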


The final result that we wish to discuss here is the set-theoretic sum of $\sigma_V$ with itself, defined as
\begin{align}\label{eq:set-sum-schro}
 \sigma^2_V \eqdef \sigma_V + \sigma_V \eqdef \set{a + b: a, b \in \sigma_V}.
\end{align}
This quantity is of interest, since it is the spectrum of the square Hamiltonian, which we denote here by $H^2$, defined as a bounded self-adjoint linear operator on $\ell^2(\Z^2)$ in the following way.
\begin{align}\label{eq:square-schro}
\begin{split}
 (H^2\phi)_{(n,m)}
 &=
 \phi_{n+1, m} + \phi_{n-1,m} + \phi_{n, m-1} + \phi_{n, m+1} + V(u_n + u_m)\phi_{n,m}.
 \end{split}
\end{align}
It follows from general principles that the spectrum of $H^2$ is precisely $\sigma^2_V$. Higher dimensional separable Hamiltonians, such as the cubic Hamiltonian, can be defined in a similar way; the spectrum of the cubic one, for example, is $\sigma^2_V + \sigma_V$. Since for $V > 0$, $\sigma_V$ is a Cantor set, it is a nontrivial problem to describe the topology of $\sigma^2_V$. There is, however, a sufficient condition which insures that $\sigma^2_V$ is an interval (which holds for $V > 0$ sufficiently small). It is also known that for all $V$ sufficiently large, $\sigma^2_V$ is a Cantor set (the intermediate case is an interesting and a challenging problem). Before we can state this sufficient condition, we need the notion of \textit{thickness} of a Cantor set, which we define next.

Let $K\subset\R$ be a Cantor set and denote by $\mathcal{I}$ its convex hull. Any connected component of $\mathcal{I}\setminus K$ is called a \textit{gap} of $K$. A \textit{presentation} of $K$ is given by an ordering $\mathcal{U} = \set{U_n}_{n\geq 1}$ of the gaps of $K$. If $u\in K$ is a boundary point of a gap $U$ of $K$, we denote by $K_u$ the connected component of $\mathcal{I}\setminus(U_1\cup U_2\cup\cdots\cup U_n)$ (with $n$ chosen so that $U_n = U$) that contains $u$, and write

\begin{align*}
\tau(K_u,\mathcal{U},u) = \frac{\abs{K_u}}{\abs{U}}.
\end{align*}

With this notation, the \textit{thickness} $\tau(K)$ of $K$ is given by

\begin{align}\label{eq:thickness}
\tau(K) = \sup_{\mathcal{U}}\inf_u\tau(K_u,\mathcal{U},u).
\end{align}

Next we define what S. Astels in \cite{Astels2000} calls \textit{normalized thickness}. For a Cantor set $K$, define the normalized thickness of $K$ by
\begin{align*}
\bar{\tau}(K) = \frac{\tau(K)}{1 + \tau(K)}
\end{align*}
(Astels originally used the letter $\gamma$ for normalized thickness, which we use here for the line of initial conditions to preserve notation from \cite{Yessen2011, Yessen2011a} for easy reference).

A specialized version of the more general \cite[Theorem 2.4]{Astels2000} gives the following remarkable result.

\begin{thm}\label{thm:thm-astels}
Given a Cantor set $K$, if $m\bar{\tau}(K)\geq 1$, $m\in\N$, then the $m$-fold arithmetic sum of $K$ with itself is equal to the $m$-fold sum of $\mathcal{I}$ with itself, where $\mathcal{I}$ is the hull of $K$.
\end{thm}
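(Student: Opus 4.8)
The plan is to prove the reverse inclusion $\mathcal{I}+\cdots+\mathcal{I}\subseteq K+\cdots+K$ ($m$ summands), the inclusion $K+\cdots+K\subseteq\mathcal{I}+\cdots+\mathcal{I}$ being immediate from $K\subseteq\mathcal{I}$. Since thickness, the convex hull, and arithmetic sums all transform equivariantly under affine maps $x\mapsto\lambda x+c$, I may normalize $\mathcal{I}=[0,1]$ at the outset. The theorem is the equal-sets specialization of the statement that, for Cantor sets $K_1,\dots,K_m$ with hulls $\mathcal{I}_1,\dots,\mathcal{I}_m$, the hypothesis $\sum_{i=1}^m\bar{\tau}(K_i)\geq 1$ forces $K_1+\cdots+K_m=\mathcal{I}_1+\cdots+\mathcal{I}_m$; putting $K_i=K$ turns $\sum_i\bar{\tau}(K_i)\geq 1$ into $m\bar{\tau}(K)\geq 1$, so it suffices to prove this more symmetric statement.

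First I would isolate why the \emph{normalized} thickness is the correct quantity. For two sets the classical Newhouse gap lemma asserts that $\tau(K_1)\tau(K_2)\geq 1$, together with a linking condition on the hulls, forces $K_1\cap K_2\neq\emptyset$; applied to $K_1$ and the reflected-translated copy $s-K_2$, which has the same thickness as $K_2$, this yields $s\in K_1+K_2$ for every $s$ in the interior of $\mathcal{I}_1+\mathcal{I}_2$, hence $K_1+K_2=\mathcal{I}_1+\mathcal{I}_2$. A short computation shows that $\bar{\tau}(K_1)+\bar{\tau}(K_2)\geq 1$ is \emph{equivalent} to $\tau(K_1)\tau(K_2)\geq 1$, so normalized thickness converts the multiplicative two-set criterion into an additive one; the point is that it is this additive form, and not the multiplicative one, that survives to $m$ summands.

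The core of the argument is an inductive gap-bridging construction carried out on all $m$ sets simultaneously. Fix presentations $\mathcal{U}_i=\set{U_{i,n}}$ of each $K_i$ realizing its thickness, and build decreasing finite unions of closed intervals (``bridges'') $\mathcal{I}_i=K_i^{(0)}\supseteq K_i^{(1)}\supseteq\cdots$ by deleting gaps one at a time, processed in order of decreasing length across \emph{all} sets at once, so that $K_i=\bigcap_n K_i^{(n)}$. The invariant to maintain is that the sumset of the current bridge-collections always equals the full target interval $\mathcal{I}_1+\cdots+\mathcal{I}_m$. The inductive step is local: deleting a gap $U$ from a bridge $B$ of some $K_i$ splits $B$ into $B^\ell,B^r$ and threatens to open a hole of width $\abs{U}$ in the sumset; by the definition of thickness the two bridges flanking $U$ in $K_i$ each have length at least $\tau(K_i)\abs{U}$, and the bridges currently flanking the relevant region in the remaining sets are long relative to $\abs{U}$ precisely because all larger gaps have already been processed. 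The additive budget $\sum_i\bar{\tau}(K_i)\geq 1$ is exactly the non-obvious bookkeeping that combines these per-set bridge-to-gap ratios so that the sumset of the flanking bridges still covers the interval that $U$ interrupted, and the invariant persists. Passing to the limit, each $t\in\mathcal{I}_1+\cdots+\mathcal{I}_m$ lies in the compact nested sumsets $K_1^{(n)}+\cdots+K_m^{(n)}$ for every $n$, and since $\bigcap_n(K_1^{(n)}+\cdots+K_m^{(n)})=K_1+\cdots+K_m$ by compactness, we conclude $t\in K_1+\cdots+K_m$.

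The main obstacle is exactly this local covering step: choosing the correct global order in which to delete gaps and verifying the quantitative inequality that the surviving bridges always bridge the newly opened gap. This is where the calibration of normalized thickness is essential — the conversion of the per-set multiplicative ratios $\abs{K_u}/\abs{U}\geq\tau(K_i)$ into a single additive length budget is genuinely nontrivial and does \emph{not} reduce to a naive sum of the $\tau(K_i)$ — and getting this bookkeeping uniformly correct over all stages of the construction is the delicate heart of the proof. Specializing $K_i=K$ then converts $m\bar{\tau}(K)\geq 1$ into the conclusion that the $m$-fold sum of $K$ equals the $m$-fold sum of $\mathcal{I}$, as claimed.
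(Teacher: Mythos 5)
The first thing to note is that the paper does not prove Theorem \ref{thm:thm-astels} at all: it is quoted as a specialization of \cite[Theorem 2.4]{Astels2000}, so your proposal has to stand on its own. It does not, and the failure is already in your opening reduction. The ``more symmetric statement'' you propose to prove --- that $\sum_{i=1}^m\bar{\tau}(K_i)\geq 1$ alone forces $K_1+\cdots+K_m=\mathcal{I}_1+\cdots+\mathcal{I}_m$ --- is false. Take $C\subset[0,1]$ the middle-thirds Cantor set, so $\tau(C)=1$ and $\bar{\tau}(C)=1/2$, and set $K_1=\epsilon C$, $K_2=C$ with $0<\epsilon<1/6$. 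Then $\bar{\tau}(K_1)+\bar{\tau}(K_2)=1$, but $1/2\notin K_1+K_2$: any $x\in K_1$ forces $1/2-x\in[1/2-\epsilon,1/2]\subset(1/3,2/3)$, which is a gap of $K_2$; yet $1/2\in\mathcal{I}_1+\mathcal{I}_2=[0,1+\epsilon]$. The same scale mismatch breaks your second paragraph: Newhouse's gap lemma (the paper's Corollary \ref{cor:newhouse}) carries the linking hypothesis that neither Cantor set falls entirely inside a gap of the other, and for $K_1$ and $s-K_2$ this is exactly what fails above --- overlap of the hulls is not enough. Astels's actual Theorem 2.4 carries a corresponding side condition controlling the largest gap of each summand against the sizes of the other summands; whatever its precise form, the example shows some such hypothesis is necessary. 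The point you missed is that in the equal-summands case this condition is automatic: every gap of $K$ is strictly shorter than $\abs{\mathcal{I}}$, and $m\geq 2$ is forced since $\bar{\tau}(K)<1$, so a copy of $K$ can never hide inside a gap of a translated, reflected copy of itself at the relevant scales. The equal-sets hypothesis is therefore doing real mathematical work; your proof never uses it, which is precisely why your generalization is false.

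Separately, even for a correct formulation, your ``core of the argument'' is a description of what must be proved rather than a proof. The local covering step --- that deleting a gap $U$ from one copy leaves bridges whose sums still cover $U$, on the budget $m\bar{\tau}(K)\geq 1$ rather than the much cruder requirement $\sum_{j\neq i}\tau(K_j)\geq 1$ that the naive estimate (sum of flanking-bridge lengths versus $\abs{U}$) would demand --- is the entire content of Astels's argument, and you explicitly defer it. Note where your sketch would first break if executed: at early stages of the induction a ``bridge'' of another summand is its whole hull, and if that hull is shorter than the gap being deleted, no thickness inequality bridges the hole; this is the counterexample above reappearing inside the induction, and it is ruled out only by the equal-summands (or Astels's gap-size) hypothesis. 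The efficient correct route here is the paper's own: invoke \cite[Theorem 2.4]{Astels2000} and verify that its side condition holds automatically when all $m$ summands equal $K$; a self-contained proof would need the bridging induction written out in full with that verification as its base case.
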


We should remark that the result of Astels is a (nontrivial) generalization of the result which is better known in the theory of hyperbolic dynamical systems as (a consequence of) S. Newhouse's \textit{gap lemma} \cite{Newhouse1968}:

\begin{thm}\label{thm:gap-lemma}
 If $K_1, K_2$ are Cantor sets and $\tau(K_1)\tau(K_2) > 1$, then either one of the Cantor sets lies in the gap of the other (or outside of the other's hull), or $K_1\cap K_2\neq \infty$.
\end{thm}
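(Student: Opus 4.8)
The plan is to argue by contradiction, reducing the trichotomy in the statement to the single implication: \emph{if the convex hulls $\mathcal{I}_1,\mathcal{I}_2$ of $K_1,K_2$ meet and neither Cantor set is contained in a gap of the other (call such a configuration \emph{linked}), then $K_1\cap K_2\neq\emptyset$} (I read the conclusion $K_1\cap K_2\neq\infty$ as $K_1\cap K_2\neq\emptyset$). Indeed, if the hulls are disjoint or one set lies in a gap of the other, we are already in one of the first two alternatives; so I assume we are linked and, for contradiction, that $K_1\cap K_2=\emptyset$.

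First I would establish the existence of a \emph{linked pair of gaps}: a gap $U$ of $K_1$ and a gap $V$ of $K_2$, each containing exactly one endpoint of the other. Picking $x\in K_1$ lying in $\mathcal{I}_2$ (such $x$ exists because the hulls meet and $K_1$ is not contained in a gap of $K_2$), disjointness forces $x$ into a gap $V$ of $K_2$; the hypothesis that $K_1\not\subseteq V$ forces an endpoint $v$ of $V$ into $\mathcal{I}_1$, and since $v\in K_2$ is not in $K_1$, it lies in a gap $U$ of $K_1$. A short case analysis on the position of $x$ relative to the endpoints of $U$ then shows $U$ and $V$ are linked.

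The crux is the following dichotomy for a linked pair, phrased via \emph{optimal presentations}: choosing presentations of $K_1,K_2$ that realize the suprema in \eqref{eq:thickness}, every bridge $K_u$ adjacent to a gap $W$ satisfies $|K_u|\geq\tau(K)|W|$. Write $U=(a_1,b_1)$, $V=(a_2,b_2)$ with $a_1<a_2<b_1<b_2$ (the reversed orientation is symmetric), let $B$ be the bridge of $K_2$ immediately left of $V$ and $B'$ the bridge of $K_1$ immediately right of $U$, so $|B|\geq\tau(K_2)|V|$ and $|B'|\geq\tau(K_1)|U|$. Either (i) $B\subseteq\overline{U}$ and $B'\subseteq\overline{V}$, whence $|B|\leq|U|$ and $|B'|\leq|V|$; combining these gives $\tau(K_2)|V|\leq|U|$ and $\tau(K_1)|U|\leq|V|$, so $\tau(K_1)\tau(K_2)\leq 1$, contradicting the hypothesis. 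Or (ii) one of the bridges protrudes past the far endpoint of the opposite gap — say $B$ extends left of $a_1$ — in which case that endpoint (here $a_1\in K_1$) lies, by disjointness, in a gap of the opposite set and produces a \emph{new} linked pair strictly farther out.

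Iterating alternative (ii) yields a sequence of linked pairs whose inner endpoints march monotonically outward; these endpoints lie alternately in $K_1$ and $K_2$, are bounded within $\mathcal{I}_1\cup\mathcal{I}_2$, and hence converge. Their common limit lies in both closed sets, so $K_1\cap K_2\neq\emptyset$, the desired contradiction. I expect the main obstacle to be the bookkeeping in alternative (ii): verifying that the newly produced gap pair is again linked and genuinely advances in a fixed direction (so that the endpoint sequence is truly monotone and cannot stall), and checking that the thickness bound survives passage to the innermost protruding gap. Once the iteration is shown to either terminate in case (i) or run forever into a convergent monotone sequence, the theorem follows; the result is classical, and we refer to \cite{Newhouse1968} and Chapter~4 of \cite{Palis1993} for the original treatment.
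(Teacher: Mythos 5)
The paper never proves this statement: it is quoted as Newhouse's classical gap lemma with a citation to \cite{Newhouse1968} (and, implicitly, Chapter 4 of \cite{Palis1993}), so the only meaningful benchmark is that classical argument, which is precisely the linked-gaps iteration you outline. Your reduction to the linked configuration, your construction of a first linked pair $(U,V)$, and your case (i) (both inner bridges trapped in the opposite gaps forces $\tau(K_1)\tau(K_2)\leq 1$) are all correct.

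The genuine gap is exactly the step you flag and then defer: monotonicity of the iteration in case (ii). This is not mere bookkeeping; without one further idea the argument fails. Concretely, start from $(U,V)$ with $U=(a_1,b_1)$, $V=(a_2,b_2)$, $a_1<a_2<b_1<b_2$, and suppose the protrusion happens on the right: $B'$ crosses $b_2$, producing a gap $U'$ of $K_1$ containing $b_2$ and the new linked pair $(V,U')$. At the next step the backward bridge, namely the bridge of $K_1$ extending left from the left endpoint of $U'$, could for an arbitrary presentation protrude past $a_2$; but the gap of $K_1$ containing $a_2$ is the original $U$, so case (ii) would then reproduce the original pair $(U,V)$ and the iteration can cycle forever, never yielding a limit point. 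The classical fix, and the missing idea, is to use the specific presentation ordering gaps by decreasing length (this is also the presentation realizing the supremum in \eqref{eq:thickness}, so your ``optimal presentation'' is available, but it must be this one): then the new gap $U'$ satisfies $\abs{U'}\leq\abs{U}$, because $U'$ lies inside a bridge from which all gaps of length at least $\abs{U}$ were deleted; consequently, at the next step, the backward bridge attached to $U'$ is formed after deleting $U$ as well, hence cannot cross $U$ and is trapped in $[b_1,\cdot\,]\subset\overline{V}$, automatically fulfilling its half of the case (i) condition. The dichotomy at every subsequent step therefore reads ``contradiction, or advance strictly in the same direction''; the gaps of each set so produced are pairwise disjoint, so their lengths are summable and tend to zero, and the two monotone endpoint sequences (one in $K_1$, one in $K_2$) converge to a common point of $K_1\cap K_2$, the desired contradiction. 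With that observation inserted, your proof becomes the standard complete one.
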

Then we have the following corollary (of which Theorem \ref{thm:thm-astels} is a generalization).
\begin{cor}\label{cor:newhouse}
 If $K_1, K_2$ are Cantor sets such that one intersects the hull of the other but does not fall entirely into a gap of the other, and $\tau(K_1)\tau(K_2) > 1$, then $K_1 + K_2$ is an interval.
\end{cor}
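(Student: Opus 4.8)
The plan is to deduce the corollary directly from the Gap Lemma (Theorem \ref{thm:gap-lemma}), using the thickness hypothesis $\tau(K_1)\tau(K_2) > 1$ to force the two Cantor sets to interlock, and then to promote ``nonempty intersection'' to ``the sum is an interval'' by a translation argument. First I would fix notation: let $\mathcal{I}_1, \mathcal{I}_2$ denote the convex hulls of $K_1, K_2$, and recall that $K_1 + K_2$ is the image of $K_1 \times K_2$ under the addition map. The key reformulation is that $s \in K_1 + K_2$ if and only if the translate $(s - K_2)$ meets $K_1$, equivalently if and only if $K_1 \cap (s - K_2) \neq \emptyset$. Observe that $s - K_2$ is again a Cantor set with the same thickness as $K_2$ (thickness is invariant under reflection and translation, as is immediate from the definition \eqref{eq:thickness}, since reflecting and translating maps gaps to gaps and preserves all the ratios $\tfrac{|K_u|}{|U|}$). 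Thus $\tau(K_1)\,\tau(s - K_2) = \tau(K_1)\,\tau(K_2) > 1$ for every $s$.

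Next I would identify the range of translates $s$ that must be covered. Writing $\mathcal{I}_1 = [a_1, b_1]$ and $\mathcal{I}_2 = [a_2, b_2]$, the convex hull of $K_1 + K_2$ is the interval $[a_1 + a_2, b_1 + b_2]$, so it suffices to show every $s$ in this interval lies in $K_1 + K_2$. For such an $s$, I claim the two Cantor sets $K_1$ and $s - K_2$ satisfy the hypotheses of the Gap Lemma and moreover neither is contained in a gap of the other nor lies outside the other's hull. For the interlinking, the hypothesis that $K_1$ meets the hull of $K_2$ but falls into no gap of $K_2$ (and symmetrically) should be transferred to the translated picture: as $s$ ranges over $[a_1+a_2,\, b_1+b_2]$, the interval $s - \mathcal{I}_2 = [s - b_2,\, s - a_2]$ sweeps across $\mathcal{I}_1$, and for $s$ in the \emph{interior} the two hulls $\mathcal{I}_1$ and $s - \mathcal{I}_2$ genuinely overlap. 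In that overlap regime neither hull contains the other's Cantor set in a single gap, so by Theorem \ref{thm:gap-lemma} we conclude $K_1 \cap (s - K_2) \neq \emptyset$, i.e. $s \in K_1 + K_2$.

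I expect the main obstacle to be the careful bookkeeping at the configuration boundary: I must verify that the ``not entirely in a gap'' condition of Theorem \ref{thm:gap-lemma} holds for \emph{every} $s$ in the closed interval $[a_1 + a_2, b_1 + b_2]$, including the endpoints and the transitional values of $s$ where $s - \mathcal{I}_2$ just begins to overlap $\mathcal{I}_1$. The endpoints $s = a_1 + a_2$ and $s = b_1 + b_2$ are trivial since they equal $\min K_1 + \min K_2$ and $\max K_1 + \max K_2$ respectively. For interior $s$ the overlap of the hulls is genuine, and I would argue that whenever two Cantor sets have overlapping hulls with neither set trapped in a single gap of the other, the Gap Lemma's alternatives collapse to the intersection alternative; the delicate case is ruling out that one set slides entirely into an end-gap of the other as the hulls first meet, which is precisely where the thickness bound $\tau(K_1)\tau(K_2) > 1$ does its work. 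Once the intersection is established for all such $s$, the interval $[a_1+a_2, b_1+b_2]$ is covered and equals $K_1 + K_2$, completing the proof. This corollary is of course the two-set specialization subsumed by Theorem \ref{thm:thm-astels}, so I would also note that an alternative route is to invoke that theorem directly with $m = 2$ and $K_1 = K_2$; the standalone argument above is what makes the mechanism transparent.
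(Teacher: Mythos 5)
Your overall skeleton is the natural one (and, since the paper states this corollary without any proof, it is essentially the only argument on offer): write $s \in K_1 + K_2$ if and only if $K_1 \cap (s - K_2) \neq \emptyset$, note that thickness is invariant under translation and reflection, and apply Theorem \ref{thm:gap-lemma} to $K_1$ and $s - K_2$ for each $s$ in $[\min K_1 + \min K_2,\, \max K_1 + \max K_2]$. Your verification that neither set lies outside the other's hull for such $s$ is correct. But the step you yourself flag as delicate --- ruling out that $s - K_2$ lies entirely inside a \emph{bounded} gap of $K_1$, or vice versa --- is a genuine gap, and it cannot be closed from the stated hypotheses: the bound $\tau(K_1)\tau(K_2) > 1$ controls only the ratio of each gap to its adjacent bridges, not the absolute size of a gap relative to the diameter of the other set, so the assertion that ``this is precisely where the thickness bound does its work'' is unsubstantiated and in fact false. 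Concretely, let $C \subset [0,1]$ be a Cantor set with $0, 1 \in C$, thickness $\tau(C) = T \geq 2$, and largest gap of length less than $1/4$; put $K_1 = 4C \cup (6 + 4C)$ and $K_2 = C$. Then $\tau(K_1) = 2$, so $\tau(K_1)\tau(K_2) \geq 4 > 1$, and the two sets are linked (indeed $0 \in K_1 \cap K_2$); yet $K_1 + K_2 \subseteq [0,5] \cup [6,11]$ omits $(5,6)$, because for $s \in (5,6)$ the translate $s - K_2$ sits entirely inside the gap $(4,6)$ of $K_1$ and the first alternative of Theorem \ref{thm:gap-lemma} occurs. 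So the per-translate linking you assert for all interior $s$ simply does not follow; the statement needs the standard additional hypothesis that the largest gap of each $K_i$ has length at most $\diam(K_j)$, $i \neq j$. This defect is inherited from the statement as printed, but your proof neither identifies nor supplies the missing ingredient.

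With that hypothesis added, your argument closes immediately: if $s - K_2$ were contained in a bounded gap $U$ of $K_1$, then $\diam(K_2) \leq \abs{U} \leq \diam(K_2)$ forces the hull of $s - K_2$ to coincide with $\overline{U}$, so the endpoints of that hull belong both to $s - K_2$ and to $K_1$ and already witness a nonempty intersection; the symmetric case is identical, and the unbounded gaps are excluded by your computation of the hulls. Theorem \ref{thm:gap-lemma} then yields $K_1 \cap (s - K_2) \neq \emptyset$ for every $s$ in the closed interval, which is the claim. Your closing suggestion to instead invoke Theorem \ref{thm:thm-astels} with $m = 2$ does not apply here, since that statement concerns the $m$-fold sum of a single Cantor set with itself; Astels' general two-set theorem carries its own gap-versus-diameter hypotheses, which is consistent with the diagnosis above.
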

D. Damanik and A. Gorodetski proved the following theorem in \cite{Damanik2010a}.

\begin{thm}\label{thm:thickness-schro}
 $\lim_{V\rightarrow 0}\tau(\sigma_V) = \infty$ and $\lim_{V \rightarrow\infty}\tau(\sigma_V) = 0. $
\end{thm}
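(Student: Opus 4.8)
The plan is to prove the two limits by different routes: the large-coupling statement is soft and follows from an elementary bound, while the small-coupling statement requires the hyperbolic renormalization picture for the trace map.

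I would start by isolating a general upper bound for thickness in terms of a single gap. Let $K\subset\R$ be a Cantor set with convex hull $\mathcal I$, let $G$ be any gap of $K$, and let $L,R$ denote the two components of $\mathcal I\setminus G$. I claim that
\[
\tau(K)\le \frac{\min\set{\abs L,\abs R}}{\abs G}.
\]
To see this, fix an arbitrary presentation $\mathcal U=\set{U_n}$; the gap $G$ appears as $U_N$ for some $N$, and for its left boundary point $u^-$ the bridge $K_{u^-}$ is contained in the closure of $L$, since $G=U_N$ has been deleted. Hence $\tau(K_{u^-},\mathcal U,u^-)=\abs{K_{u^-}}/\abs G\le\abs L/\abs G$, and using the right boundary point gives $\abs R/\abs G$; therefore $\inf_u\tau(K_u,\mathcal U,u)\le\min\set{\abs L,\abs R}/\abs G$ for every $\mathcal U$, and taking the supremum yields the claim.

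For $V\to\infty$ I would apply this with the central gap. Writing $H=\Delta+D_V$ with $\Delta$ the free Laplacian ($\norm\Delta=2$) and $D_V$ the diagonal operator with entries in $\set{0,V}$, the self-adjoint perturbation bound gives $\sigma_V\subseteq\set{E:\dist(E,\set{0,V})\le 2}=[-2,2]\cup[V-2,V+2]$, which is disconnected for $V>4$. Thus $(2,V-2)$ sits inside a gap $G$ with $\abs G\ge V-4$, while the two sides $L,R$ of $\mathcal I\setminus G$ satisfy $\abs L,\abs R\le 4$. The bound above then gives $\tau(\sigma_V)\le 4/(V-4)\to 0$.

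For $V\to 0^+$ the goal is a lower bound, so I must exhibit a good presentation and control every ratio, which forces quantitative distortion estimates. The plan is to realize $\sigma_V$ as a dynamically defined Cantor set, namely the transversal intersection of the line of initial conditions $\gamma$ with the stable lamination $\mathcal L_V$ on $S_V$ (Theorem \ref{thm:schro-dynam-general}; transversality in the small-coupling regime is due to Damanik and Gorodetski \cite{Damanik2009}). This produces a finite Markov family of analytic inverse branches with a bounded-distortion constant $D_V$, for which thickness is comparable, up to $D_V$, to the generating ratio $t_0(V)$ of first-generation bridge-length to gap-length. Using the gap-length asymptotics of Theorem \ref{thm:gap-size} together with Hausdorff convergence $\sigma_V\to[-2,2]$, the generating gaps have length $O(V)$ while the generating bridges stay bounded below, so $t_0(V)\gtrsim V^{-1}$, and $\tau(\sigma_V)\ge D_V^{-1}t_0(V)$. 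The main obstacle is to bound $D_V$ as $V\to 0$: precisely here the hyperbolicity of $f$ on $S_V$ degenerates (Lyapunov exponents vanish, consistently with $\hdim(\sigma_V)\to 1$ from Theorem \ref{thm:schro-spect-haus-cont-v0}) because $S_V$ limits onto the singular surface $S_0$. Controlling $D_V$ — showing it grows slower than $V^{-1}$ — is the crux, and would be carried out through the normally hyperbolic perturbative analysis near $S_0$; I note that $\hdim(\sigma_V)\to1$ by itself does not suffice, since thickness and Hausdorff dimension are not monotonically related.
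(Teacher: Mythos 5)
This statement appears in the survey portion of the paper (Section \ref{sec:known}) and is quoted there from Damanik--Gorodetski \cite{Damanik2010a} without proof, so there is no in-paper argument to compare against; I can only assess your proposal on its own terms.

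Your treatment of the large-coupling limit is complete and correct, and it is pleasantly elementary: the single-gap upper bound $\tau(K)\le\min\set{\abs L,\abs R}/\abs G$ follows exactly as you argue from the definition \eqref{eq:thickness}, and the perturbation bound $\sigma_V\subseteq[-2,2]\cup[V-2,V+2]$ is standard. The only point you should make explicit is that $(2,V-2)$ really does lie in a \emph{gap}, i.e.\ that $\sigma_V$ meets both $[-2,2]$ and $[V-2,V+2]$; this follows from the same perturbation inequality applied in the other direction, $\sigma(D_V)=\set{0,V}\subseteq\set{E:\dist(E,\sigma_V)\le 2}$. This route avoids the trace map entirely and is arguably simpler than the dynamical arguments used in the literature for this half of the statement.

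The small-coupling half, however, is not a proof but a program, and the step you defer is precisely where all the difficulty of the theorem lives. Knowing that each first-generation gap has length $O(V)$ (Theorem \ref{thm:gap-size}) while first-generation bridges stay bounded below gives control of finitely many ratios only; to bound $\tau(\sigma_V)$ from below via \eqref{eq:thickness} you must control the bridge-to-gap ratio at \emph{every} boundary point of \emph{every} gap, uniformly over all generations, and the passage from the first generation to all generations is exactly the bounded-distortion estimate $D_V$ that you leave unproved. As you correctly note, this cannot be finessed: the hyperbolicity of $f|_{S_V}$ degenerates as $V\to 0$ (the surfaces collapse onto the singular Cayley cubic $S_0$), the relevant constants are a priori $V$-dependent, and the dimension asymptotics of Theorem \ref{thm:schro-spect-haus-cont-v0} give no lower bound on thickness. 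In \cite{Damanik2010a} the uniform distortion and bridge/gap estimates near the singularities are the substance of the proof (carried out through the normally hyperbolic analysis near $S_0$ and the quadratic tangency of the center-stable manifolds with $S_0$, cf.\ Lemma \ref{lem:qinter} of the present paper). Until that estimate is supplied, the claim $\lim_{V\to 0}\tau(\sigma_V)=\infty$ remains unestablished in your write-up.
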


It is also known that if a Cantor set has Hausdorff dimension sufficiently small, then its sum with itself is a Cantor set (see \cite[Chapter 4]{Palis1993}). Thus the following is an immediate corollary of Theorems \ref{thm:thickness-schro} and \ref{thm:degt}.

\begin{cor}\label{cor:spectrum-sum}
There exist $V_0, V_1 > 0$ such that the spectrum of the square Fibonacci Hamiltonian is an interval for $V\in (0, V_0)$, and a Cantor set for $V\in(V_1, \infty)$.
\end{cor}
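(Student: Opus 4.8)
The plan is to treat the two coupling regimes separately, in each case starting from the identification $\sigma^2_V = \sigma_V + \sigma_V$ of the spectrum of the square Hamiltonian with the set-theoretic sum of the Fibonacci spectrum with itself (recorded in the text), and using throughout that for every $V>0$ the set $\sigma_V$ is a dynamically defined Cantor set, so that its Hausdorff and box-counting dimensions agree.

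For the small-coupling regime the aim is to show that $\sigma_V + \sigma_V$ is an interval. By Theorem \ref{thm:thickness-schro} we have $\tau(\sigma_V)\to\infty$ as $V\to 0^+$, so there is $V_0>0$ with $\tau(\sigma_V)>1$ whenever $V\in(0,V_0)$; equivalently $2\bar{\tau}(\sigma_V)\geq 1$ on this range. Applying Theorem \ref{thm:thm-astels} with $K=\sigma_V$ and $m=2$ then gives that $\sigma_V+\sigma_V$ equals $\mathcal{I}+\mathcal{I}$, where $\mathcal{I}$ is the convex hull of $\sigma_V$; since $\mathcal{I}$ is an interval, so is its two-fold sum. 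Alternatively one may invoke Corollary \ref{cor:newhouse} with $K_1=K_2=\sigma_V$, noting that $\tau(\sigma_V)^2>1$ and that the hypothesis that one set meets the hull of the other without lying inside a gap is automatic when the two sets coincide.

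For the large-coupling regime the aim is to show that $\sigma_V+\sigma_V$ is a Cantor set. By Theorem \ref{thm:degt} we have $\hdim(\sigma_V)\cdot\log V\to\log(1+\sqrt{2})$, whence $\hdim(\sigma_V)\to 0$ as $V\to\infty$; in particular there is $V_1>0$ with $\hdim(\sigma_V)<\tfrac{1}{2}$ for all $V>V_1$. Because $\sigma_V$ is dynamically defined, $\hdim(\sigma_V)=\bdim(\sigma_V)$, and the Lipschitz addition map $(x,y)\mapsto x+y$ together with the standard product inequality $\hdim(A\times B)\leq \hdim(A)+\bdim(B)$ yields $\hdim(\sigma_V+\sigma_V)\leq \hdim(\sigma_V)+\bdim(\sigma_V)=2\hdim(\sigma_V)<1$. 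Hence $\sigma_V+\sigma_V$ has zero Lebesgue measure and empty interior; being compact (a sum of compact sets) and perfect (the sum of a perfect set with a nonempty set has no isolated points), it is a Cantor set. This is exactly the mechanism recorded in \cite[Chapter 4]{Palis1993}.

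Since the two regimes are controlled by independent estimates, I do not expect a genuine obstacle: the statement is essentially a bookkeeping combination of Theorems \ref{thm:thickness-schro}, \ref{thm:degt}, and \ref{thm:thm-astels}. The only points requiring care are the verification that the hypotheses of the sum theorems remain valid for the degenerate choice $K_1=K_2=\sigma_V$, and the dimension estimate $\hdim(\sigma_V+\sigma_V)\leq 2\hdim(\sigma_V)$, which relies crucially on the coincidence of Hausdorff and box-counting dimension for dynamically defined Cantor sets.
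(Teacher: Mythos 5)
Your proposal is correct and follows essentially the same route as the paper, which deduces the corollary directly from Theorem \ref{thm:thickness-schro} (thickness $\to\infty$, hence the sum is an interval via the Astels/Newhouse mechanism) and Theorem \ref{thm:degt} (Hausdorff dimension $\to 0$, hence the sum is a Cantor set, citing \cite[Chapter 4]{Palis1993}). You merely supply the routine details the paper leaves implicit, and they check out.
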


\begin{rem}
As we have already mentioned above, it is currently unknown what happens in the intermediate coupling regime. Work in this direction is in progress and some results have already appeared in \cite{Damanik2013X}.
\end{rem}

Let us mention a relationship between the Hausdorff dimension and the thickness of a Cantor set (for details, see Chapter 4 in \cite{Palis1993}):
\begin{align*}
 \hdim(K) \geq \frac{\log 2}{\log \left(2 + \frac{1}{\tau(K)}\right)}.
\end{align*}
In fact, Theorem \ref{thm:schro-spect-haus-cont-v0} was proved in \cite{Damanik2010a} by proving estimates on $\tau(\sigma_V)$, and, in the other direction, by giving estimates on the \textit{denseness} (which bounds the Hausdorff dimension in a way analogous to that of thickness, from the other side, and which is defined analogously to thickness by replacing $\sup$ and $\inf$ in \eqref{eq:thickness} by $\inf$ and $\sup$, respectively). For further details and properties of different quantitative characteristics of (dynamically defined, in particular) Cantor sets, see Chapter 4 in \cite{Palis1993}.

The program that has been carried out for the operator \eqref{eq:schrodinger} by D. Damanik and A. Gorodetski and colleagues is extensive. The proofs of all of the above results rely heavily on dynamics of the Fibonacci trace map. The most recent results, as well as some open problems, are collected in the survey \cite{Damanik2012}, with complementing numerical analysis.

Let us now discuss some recent extensions of the above results to the more general class of potentials generated by primitive invertible substitutions on two letters and to the tridiagonal Fibonacci Hamiltonian.

\subsection{The primitive invertible Schr\"odinger case}\label{sec:generalization-mei}

The purpose of this section is to present recent extensions of some of the results from the previous section to the general case of potentials modulated by primitive invertible two-letter substitutions. The details are presented in \cite{Mei2013}. Let us begin by fixing notation.

For any primitive invertible two-letter substitution sequence $v$ and $\omega\in\Omega_v$, the hull of $v$ from \eqref{eq:hull}, we shall denote by $H_{\omega, V}$, $V > 0$, the operator \eqref{eq:schrodinger} with the potential given by $V_n = V\omega_n$.

The first fundamental result regarding the spectra of operators $H_{\omega, V}$ is the following theorem, due to D. Damanik \cite{Damanik2000}.

\begin{thm}\label{thm:damanik}
 Given any primitive invertible two-letter substitution sequence $v$ and $V > 0$, for any $\omega\in\Omega_v$, the spectrum of $H_{\omega, V}$ coincides with that of $H_{v, V}$. The spectrum is a Cantor set of zero Lebesgue measure and of purely singular continuous type.
\end{thm}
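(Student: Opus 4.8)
The plan is to prove Theorem~\ref{thm:damanik} in three logically separate parts: (i) constancy of the spectrum as a set across the hull $\Omega_v$, (ii) that this common spectrum is a zero-measure Cantor set, and (iii) that the spectral type is purely singular continuous. For part (i), I would invoke the minimality and unique ergodicity of the shift $(\Omega_v, T)$ established in the introduction together with the strong continuity of $\omega \mapsto H_{\omega, V}$ in the product topology. A standard argument shows that the spectrum $\sigma(H_{\omega,V})$ is $T$-invariant and upper semicontinuous in $\omega$; combined with minimality (every orbit is dense), this forces $\sigma(H_{\omega, V})$ to be independent of $\omega$, so it equals $\sigma(H_{v,V})$. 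This part is essentially soft functional analysis and should follow the classical ergodic-theoretic template for strongly continuous families over a minimal base.

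For part (ii), the strategy is to set up the trace-map formalism for the general primitive invertible substitution exactly as was done for Fibonacci in Section~\ref{sec:known}. Each such substitution carries an associated trace map $f$ preserving the Fricke--Vogt invariant $I$ from \eqref{eq:fv-invariant}, and the set $B_\infty$ of energies with bounded transfer-matrix orbits is characterized dynamically as the set of type-\textbf{B} points on the surface $S_V$. First I would verify, via the Gordon argument (deferred to Section~\ref{sec:proofs}) or the Damanik--Lenz combinatorial machinery for Sturmian/substitution potentials, the identity $\sigma = B_\infty$ generalizing Theorem~\ref{thm:suto}. Then, appealing to Theorem~\ref{thm:schro-dynam-general} (which holds for all $V>0$), the bounded-orbit set on $S_V$ is a smooth stable lamination $\mathcal{L}_V$, and the line of initial conditions $\gamma(E)$ meets it transversally (for small and large $V$, with the finitely-many-points caveat noted in the remark after Theorem~\ref{thm:schro-spect-haus-cont}). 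Transversal intersection of a regular curve with $\mathcal{L}_V$ yields a dynamically defined Cantor set, which by the listed properties has Hausdorff dimension strictly between $0$ and $1$; dimension strictly below $1$ gives zero Lebesgue measure, and Cantor structure gives the topological conclusion.

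For part (iii), purely singular continuous spectrum, I would establish two exclusions. Zero Lebesgue measure of $\sigma$ already rules out any absolutely continuous part, since the absolutely continuous spectrum must be supported on a set of positive Lebesgue measure. Absence of point spectrum (no eigenvalues) is the more delicate exclusion: here I would run the Gordon-type argument, which exploits the strong repetitive (local-periodicity) structure of primitive invertible substitution sequences to show that no $\ell^2$ solution of the eigenvalue equation can exist, uniformly over the hull. Aperiodicity plus minimality ensures there are no isolated eigenvalues either. Combining no absolutely continuous and no point spectrum forces the spectrum to be purely singular continuous.

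The main obstacle will be part (ii), specifically transporting the Fibonacci-specific trace-map dynamics to a general primitive invertible two-letter substitution. Unlike the single clean recursion \eqref{eq:fib-tmap}, a general invertible substitution produces a trace map that is a composition of the elementary generators of the relevant subgroup of automorphisms, and one must verify that its bounded-orbit set still organizes into a hyperbolic stable lamination to which Theorem~\ref{thm:schro-dynam-general} applies. Establishing uniform hyperbolicity (or the requisite transversality of $\gamma$ to the lamination) in this generality, and controlling the finitely many possible points of non-transversality, is where the real work lies; the soft spectral-constancy and the Gordon no-eigenvalue arguments are comparatively routine once the combinatorial repetition properties of invertible substitutions are in hand.
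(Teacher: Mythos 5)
Your three-part decomposition and your treatment of parts (i) and (iii) match what the paper does for the Jacobi generalization of this result (Lemmas \ref{lem:indep-omega} and \ref{lem:sc-spectrum}): constancy over the hull from minimality plus strong convergence and semicontinuity of spectra, absence of point spectrum from a Gordon-type argument, and absence of absolutely continuous spectrum from zero Lebesgue measure. The gap is in part (ii). You propose to obtain zero Lebesgue measure by exhibiting the spectrum as a dynamically defined Cantor set of Hausdorff dimension strictly less than one, via Theorem \ref{thm:schro-dynam-general} and transversality of the line of initial conditions to the stable lamination $\mathcal{L}_V$. That route does not close for all $V>0$: transversality is known only for $V$ sufficiently small or sufficiently large (see the remark following Theorem \ref{thm:schro-spect-haus-cont}; the intermediate regime is open), while Theorem \ref{thm:damanik} asserts the conclusion for every $V>0$. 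At a possible tangency the inference ``local dimension $<1$, hence measure zero'' is precisely the step you can no longer make uniformly, so your argument for zero measure --- and with it your exclusion of absolutely continuous spectrum --- is not available on the whole parameter range.

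The missing ingredient is Kotani theory. The actual argument (in \cite{Damanik2000}, and reproduced in this paper as Lemma \ref{lem:main} together with Lemma \ref{lem:zero-meas-cantor} for the Jacobi case) is the chain of inclusions $\Sigma \subseteq B \subseteq A \subseteq \Sigma$, where $B$ is the set of energies with bounded transfer-matrix (trace-map) orbit and $A$ is the set of energies with vanishing Lyapunov exponent. The inclusion $\Sigma\subseteq B$ comes from semicontinuity of the spectra of periodic approximants (the escape-region argument of Lemma \ref{lem:main-1}), $B\subseteq A$ from Oseledec and Floquet considerations (Lemma \ref{lem:main-2}), and $A\subseteq\Sigma$ together with $\abs{A}=0$ from Kotani theory \cite{Kotani1989} for aperiodic ergodic potentials taking finitely many values. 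This yields $\Sigma=A$ of zero Lebesgue measure with no hyperbolicity or transversality input whatsoever; the Cantor structure then follows from compactness, nowhere-density (a consequence of zero measure), and the absence of isolated points (an isolated point would be an eigenvalue, excluded by the Gordon argument). Note also that you need this same Kotani input even to justify your starting identity $\sigma=B_\infty$ in the general substitution setting: the Gordon argument gives absence of eigenvalues, not the inclusion $B_\infty\subseteq\sigma$.
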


More generally, a large class of Schr\"odinger operators is known to have purely singular continuous spectrum, see \cite{Lenz2002} and \cite{LiuTanWenWu2002}. As we have already mentioned above, we may associate a trace map to every primitive invertible substitution which recursively expresses the traces of products of two matrices that follow (in the product) the same order as the letters of the sequence. These maps are necessarily polynomial, are invertible, and preserve the Fricke-Vogt invariant (and hence the surfaces $S_V$) from equation \eqref{eq:fv-invariant} (and equation \eqref{eq:inv-surfaces}). We give more details in Section \ref{sec:proofs}. Recall from above that the theorem of A. S\"ut\H{o}, Theorem \ref{thm:suto}, related the spectrum of $H_{u,V}$ to the set of type-\textbf{B} points for the Fibonacci trace map $f$. It turns out that a similar relation exists for the general primitive invertible case, as proved in \cite{Damanik2000}.

Indeed, for any primitive invertible two-letter substitution $v$, there exists a line $\gamma_v: \R\rightarrow \R^3$ and a polynomial diffeomorphism $f_v: \R^3\rightarrow\R^3$ such that $E$ is in the spectrum of $H_{v, V}$ if and only if $\mathcal{O}_{f_v}^+(\gamma(E))$ is bounded (more details are given, and the line $\gamma_v(E)$ is constructed, in fact in the more general case of Jacobi operators, in Section \ref{sec:proofs}). In fact, $\gamma_v$ is equal to $\gamma$ from \eqref{eq:gamma-schro} for primitive invertible $v$. The following extension of Theorem \ref{thm:schro-dynam-general} was carried out in \cite{Mei2013}.

\begin{thm}\label{thm:mei1}
 For any primitive invertible two-letter substitution sequence $v$ and every $V > 0$, there exists a one-dimensional smooth lamination, $\mathcal{L}_{v,V}$, in $S_V$, such that for $x\in S_V$, $\mathcal{O}_{f_v}^+(x)$ is bounded if and only if $x\in\mathcal{L}_{v,V}$. Moreover, any smooth regular curve in $S_V$, parameterized on an open interval in $\R$, that is transversal to the lamination $\mathcal{L}_V$ intersects the lamination in a dynamically defined Cantor set.
\end{thm}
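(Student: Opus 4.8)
The plan is to realize $f_v$ as the action, on the $\SL(2,\C)$-character variety of the free group $F_2=\langle 0,1\rangle$, of the outer automorphism induced by $v$, and then to read off the lamination from the known dynamics of hyperbolic polynomial automorphisms of the Fricke--Vogt surfaces $S_V$. Concretely, identifying a representation $\rho\colon F_2\to\SL(2,\C)$ with the triple $(x,y,z)=(\tfrac12\Tr\rho(0),\tfrac12\Tr\rho(1),\tfrac12\Tr\rho(01))$ exhibits $\R^3$ as a real slice of the character variety, with the invariant $I$ from \eqref{eq:fv-invariant} equal, up to an affine change, to $\tfrac12(\Tr\rho([0,1])-2)$. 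Since invertibility makes $v$ an automorphism of $F_2$, precomposition $\rho\mapsto\rho\circ v$ descends to a polynomial diffeomorphism of $\R^3$; because the trace of the commutator is an $\mathrm{Aut}(F_2)$-invariant, this diffeomorphism preserves $I$ and hence restricts to a polynomial automorphism of each $S_V$. A first step is therefore to verify that this automorphism is exactly the trace map $f_v$, so that the dynamical question about $\mathcal{O}^+_{f_v}$ becomes a question about a single element of the $\mathrm{Out}(F_2)\cong\mathrm{GL}(2,\Z)$-action on $S_V$.

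The crux is to show that $f_v$ is a \emph{hyperbolic} (loxodromic, positive-entropy) automorphism for every primitive $v$, and here primitivity enters decisively. The abelianization $A_v\in\mathrm{GL}(2,\Z)$ of $v$ is a primitive nonnegative integer matrix, so by Perron--Frobenius it has an eigenvalue $\lambda>1$; its determinant being $\pm1$ forces the other eigenvalue to have modulus $1/\lambda<1$, so $A_v$ has real eigenvalues off the unit circle and is a hyperbolic element of $\mathrm{GL}(2,\Z)$ (e.g. the Fibonacci case is $A_v=\left(\begin{smallmatrix}1&1\\1&0\end{smallmatrix}\right)$, with $\lambda$ the golden ratio). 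The induced automorphism of $S_V$ then has dynamical degree $\lambda^2>1$, placing $f_v$ in the class underlying Theorem \ref{thm:schro-dynam-general}. I would invoke this classification to produce, for each $V>0$, a compact $f_v$-invariant set $\Lambda_{v,V}\subset S_V$ carrying a hyperbolic (Axiom~A) splitting, on which $f_v$ acts as a locally maximal, topologically mixing hyperbolic set.

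With $\Lambda_{v,V}$ in hand, the lamination $\mathcal{L}_{v,V}$ is its stable set $W^s(\Lambda_{v,V})$, foliated by the $1$-dimensional stable manifolds of points of $\Lambda_{v,V}$; hyperbolicity makes these leaves uniformly $C^{1+\alpha}$ (in fact analytic) curves varying continuously, i.e.\ a smooth one-dimensional lamination. To prove the stated equivalence I must establish the dichotomy that a forward orbit either escapes to infinity through a neighborhood of the four ends of $S_V\cong S^2\setminus\{4\ \text{points}\}$, or stays in a fixed compact core; this is the generalization of the escape/trapping analysis that produces $B_\infty$ in the Fibonacci case, and it identifies the type-\textbf{B} points of Definition \ref{defn:type-b} exactly with $W^s(\Lambda_{v,V})$. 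The ``moreover'' statement is then standard hyperbolic theory: a smooth regular curve transversal to a stable lamination of a hyperbolic set meets it in the stable set of a $C^{1+\alpha}$ horseshoe, which along the curve is a dynamically defined Cantor set in the sense of \cite[Ch.~4]{Palis1993}.

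The main obstacle is precisely the hyperbolicity step for \emph{all} $V>0$ simultaneously: as with Cantat's theorem for the Fibonacci map, one cannot expect the perturbative arguments valid only for small or large $V$ to suffice, and must rule out degenerations (tangencies of the trapping cones, or bounded non-hyperbolic orbits) for every positive value of the invariant and every primitive invertible $v$. I would attempt this either by checking that the hypotheses of Cantat's classification — the automorphism being loxodromic with the appropriate attracting behavior at the four ends of $S_V$ — hold uniformly over the finitely generated family of primitive invertible trace maps, or, failing a black-box application, by constructing an explicit invariant cone field adapted to the hyperbolic matrix $A_v$ and propagating Casdagli-type estimates through a factorization of $v$ into elementary invertible substitutions. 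The remaining delicate point, the analogue of Casdagli's fact that an unbounded orbit contains no bounded suborbit, is to show $B_\infty=W^s(\Lambda_{v,V})$ with no surplus bounded orbits; this too must be re-derived for the general $f_v$ rather than imported from the Fibonacci setting.
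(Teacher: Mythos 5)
Your proposal is correct and follows essentially the same route as the paper, which does not reprove Theorem \ref{thm:mei1} here but cites it to \cite{Mei2013} and summarizes precisely the machinery you assemble in Section \ref{sec:t-map-dynamics}: hyperbolicity of the abelianization $A_v$ (Perron--Frobenius combined with $\det A_v=\pm1$), Cantat's uniform Axiom~A theorem \cite{Cantat2009} on every $S_V$ with $V>0$, the escape dichotomy of \cite{Roberts1996} identifying the type-\textbf{B} points with the stable lamination of the bounded-orbit (nonwandering) set, and standard hyperbolic theory together with the Lipschitz holonomy of \cite{Hirsch1968} for the dynamically defined Cantor sets on transversal curves. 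The only slip is inessential: the dynamical degree of $f_v$ is the spectral radius $\lambda$ of $A_v$ rather than $\lambda^2$ (for the Fibonacci trace map the iterate degrees grow as $2,3,5,8,\dots$), but all your argument needs is that it exceeds $1$.
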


Theorem \ref{thm:mei1} allowed us to obtain the following series of results as extensions from the Fibonacci case to the general primitive invertible case.

\begin{thm}\label{thm:mei2}
 For any primitive invertible two-letter substitution $v$, Theorem \ref{thm:schro-spect-haus-cont-v0} holds with $H$ replaced by $H_{v,V}$.
\end{thm}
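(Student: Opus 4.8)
The plan is to follow the strategy of D. Damanik and A. Gorodetski from \cite{Damanik2010a}, replacing the Fibonacci dynamical input of Theorem \ref{thm:schro-dynam-general} by its general-substitution refinement, Theorem \ref{thm:mei1}. Write $\sigma_{v,V}$ for the spectrum of $H_{v,V}$. Recall that Theorem \ref{thm:schro-spect-haus-cont-v0} consists of two linear bounds, $1 - C_1 V \leq \hdim(\sigma_{v,V}) \leq 1 - C_2 V$, from which $\lim_{V\to 0^+}\hdim(\sigma_{v,V}) = 1$ follows at once; I would obtain the lower bound from an estimate on the thickness $\tau(\sigma_{v,V})$ and the upper bound from an estimate on the denseness, each blowing up at rate $1/V$. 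The starting point is that, for $V > 0$ sufficiently small, the line of initial conditions $\gamma$ from \eqref{eq:gamma-schro} is transversal to the lamination $\mathcal{L}_{v,V}$ (the small-coupling transversality carried out in \cite{Mei2013}), so that Theorem \ref{thm:mei1} identifies $\sigma_{v,V}$ with the transversal intersection of $\gamma$ with $\mathcal{L}_{v,V}$ and exhibits it as a dynamically defined Cantor set. In particular the thickness--dimension estimate $\hdim(K) \geq \frac{\log 2}{\log(2 + 1/\tau(K))}$ recalled above, together with its denseness analogue, becomes available.

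For the lower bound I would first prove the general-substitution version of Theorem \ref{thm:thickness-schro} in quantitative form: $\tau(\sigma_{v,V}) \geq c_1/V$ for all small $V > 0$. As $V \downarrow 0$ the invariant surface $S_{V^2/4}$ on which $\gamma$ lives degenerates onto $S_0$, which is smooth except at its four conic singularities, and the hyperbolic set of $f_v$ responsible for $\mathcal{L}_{v,V}$ accumulates on the periodic orbits seated at (or near) these singular points. Combining this local picture with the band--gap structure of the periodic approximants $\sigma_k$ and the semicontinuity \eqref{eq:semicont-spectra}, one finds that the generating gaps of $\sigma_{v,V}$ have relative length of order $V$ while the adjacent bands remain of order one; this gives $\tau(\sigma_{v,V}) \gtrsim 1/V$, and substituting into the thickness--dimension bound yields $\hdim(\sigma_{v,V}) \geq 1 - C_1 V$.

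The upper bound is obtained symmetrically. The same local analysis shows that gaps of relative size bounded below by a multiple of $V$ persist at every scale of the Cantor set, which is precisely a lower bound of order $1/V$ on the denseness of $\sigma_{v,V}$; since denseness controls the Hausdorff dimension from above in the manner complementary to thickness, this forces $\hdim(\sigma_{v,V}) \leq 1 - C_2 V$. Intuitively, uniformly across all scales the Cantor set removes a definite proportion (of order $V$) of each generating interval, so it cannot attain full dimension, and the deficit is linear in $V$.

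The main obstacle is making the $V$-dependence of the hyperbolicity constants of $f_v$ explicit while controlling it uniformly over the infinitely many primitive invertible substitutions. Unlike the explicit Fibonacci map $f(x,y,z) = (2xy - z, x, y)$, the map $f_v$ is only presented as a composition of elementary trace maps dictated by the combinatorics of $v$, so the clean linearization used in \cite{Damanik2010a} near the conic points of $S_0$ is not directly at hand. The decisive structural fact, already underlying Theorem \ref{thm:mei1}, is that every $f_v$ shares the same hyperbolic architecture on $S_V$ and the same singular periodic orbits on $S_0$; the work is therefore to track how the composition defining $f_v$ distorts the contraction and expansion rates near those orbits as a function of the period length associated to $v$. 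Once these rates are shown to differ from their limiting values by an amount linear in $V$, the thickness and denseness estimates, and with them the theorem, follow exactly as in the Fibonacci case.
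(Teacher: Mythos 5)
Your proposal follows essentially the route the paper itself indicates: Theorem \ref{thm:mei2} is quoted from \cite{Mei2013}, where it is obtained exactly as you describe, by combining the dynamical input of Theorem \ref{thm:mei1} with small-coupling transversality of the line $\gamma$ from \eqref{eq:gamma-schro} to the lamination $\mathcal{L}_{v,V}$, and then rerunning the thickness and denseness estimates of \cite{Damanik2010a} on the resulting dynamically defined Cantor set. The only caveat is that the ``main obstacle'' you identify, namely uniformity of the hyperbolicity estimates over all primitive invertible substitutions, is not actually required: the theorem fixes $v$ and allows the constants $C_1, C_2$ to depend on it, so only the $V$-dependence for that single trace map $f_v$ needs to be tracked.
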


\begin{thm}\label{thm:mei3}
 For any primitive invertible two-letter substitution $v$, Theorem \ref{thm:gap-opening} holds with $H$ replaced by $H_{v, V}$.
\end{thm}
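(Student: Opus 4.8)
The plan is to adapt the weak-coupling argument of Damanik and Gorodetski for Theorem \ref{thm:gap-opening} to the general trace map $f_v$, using the lamination produced by Theorem \ref{thm:mei1} in place of the Fibonacci lamination. First I would pin down the labeling side, which is purely combinatorial and independent of $V$. By the gap labeling theorem for minimal subshifts (see \cite{Bellissard1992}), the integrated density of states $N_{v,V}$ takes, on each gap of $\sigma_{v,V}$, a value in the countable set $(\Z\alpha + \Z)\cap[0,1]$, where $\alpha$ is the frequency of the letter $1$ in $v$, equivalently the rotation number attached to $v$ through the sampling representation \eqref{eq:sampling}. Because this label set does not depend on $V$, the theorem reduces to the following assertion: for all sufficiently small $V>0$ and every admissible label $t$, the level set $N_{v,V}^{-1}(t)$ is a nondegenerate interval, i.e.\ a genuine open gap.

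The dynamical input is the identification of gaps with the periodic orbit structure of $f_v$. By Theorem \ref{thm:mei1}, $E\in\sigma_{v,V}$ precisely when $\gamma(E)\in\mathcal{L}_{v,V}$, so the spectral gaps are the maximal $E$-intervals along which $\gamma(E)$ escapes under iteration of $f_v$, and the gap edges are the energies where $\gamma$ meets the two leaves of $\mathcal{L}_{v,V}$ bounding a given gap in the transversal Cantor structure. Each such leaf is the stable manifold of a periodic point of $f_v|_{S_V}$, so I would (i) associate to every admissible label $t$ the periodic orbit whose stable manifolds bound the corresponding gap, and (ii) show the two bounding leaves are distinct and that $\gamma$, being transversal to $\mathcal{L}_{v,V}$ for small $V$ (the transversality already needed for Theorem \ref{thm:mei2}), crosses them at distinct parameters. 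Hyperbolicity of these orbits on $S_V$ for small $V>0$, together with their smooth persistence as $V\to0^+$ away from the conic singularities of $S_0$, makes the two gap edges depend smoothly on $V$ and separate at a definite rate, yielding openness along with the expected weak-coupling asymptotics.

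The crux, and the place where the argument genuinely departs from the Fibonacci case, is the enumeration in step (i) together with its completeness: one must exhibit, for every label in $(\Z\alpha+\Z)\cap[0,1]$, a hyperbolic periodic orbit of $f_v$ whose stable manifolds bound a nondegenerate gap, so that no allowed gap remains closed. For the Fibonacci map the relevant periodic orbits are organized by the continued fraction $[1;1,1,\dots]$ of the golden rotation number, and this transparent self-similarity is what let Damanik and Gorodetski index all gaps at once. For a general primitive invertible substitution, $f_v$ is a longer, substitution-dependent word in the generators of the trace map group, and $\alpha$ has only an eventually periodic continued fraction expansion; consequently the periodic orbits governing the gap edges are produced by a more intricate renormalization that reflects this eventual periodicity. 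I expect the main difficulty to be setting up that renormalization carefully enough to guarantee exactly one hyperbolic periodic orbit per label with uniformly transversal stable manifolds — and the eventual periodicity of the continued fraction is precisely what should make this bookkeeping finite, and hence tractable.
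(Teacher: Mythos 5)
Note first that the paper does not actually prove Theorem \ref{thm:mei3}: it is quoted from \cite{Mei2013}, and the closest thing to a proof in this paper is the argument for its Jacobi analogue, Theorem \ref{thm:mei-yessen-3}, which reveals the intended mechanism. Your labeling half is fine (the label set $\set{\set{m\alpha_v}:m\in\Z}$ is $V$-independent, so the task is to show every label is realized by a nondegenerate gap for small $V$). The problem is in your dynamical half, where you misidentify where the gaps come from and consequently set yourself an enumeration problem that is both unnecessary and much harder than what the actual proof requires. In the weak-coupling argument of \cite[Theorem 1.5]{Damanik2010a} and its extension in \cite{Mei2013}, \emph{all} gaps open at the countably many points where the line of initial conditions, sitting in $S_0$ at $V=0$, crosses the strong-stable manifolds $W^{ss}(P_i)$ of the four singularities of the Cayley cubic. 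The only periodic orbits involved are the finitely many that bifurcate from those singularities (the period-two curve through $(1,1,1)$ and the period-six curves through the other three); the rate at which the two gap edges separate is governed not by generic hyperbolic persistence ``away from the conic singularities,'' as you write, but precisely by the quadratic tangency of the corresponding center-stable manifolds with $S_0$ \emph{at} the singularities --- this is Lemma \ref{lem:qinter}, the key technical ingredient the paper isolates, and it is absent from your proposal. Completeness of the labeling then comes for free from the semiconjugacy $F$ with the torus automorphism $\mathcal{A}_s$: the intersection points of $\gamma|_{\mathbb{S}_0}$ with $\bigcup_i W^{ss}(P_i)$ lift to the intersections of a line with the stable foliation of $\mathcal{A}_s$, and the values of the free integrated density of states at these points exhaust $\set{\set{m\alpha_v}:m\in\Z}$. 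No renormalization keyed to the continued fraction expansion of $\alpha_v$ is needed.

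Two further inaccuracies are worth flagging. First, your claim that each leaf of $\mathcal{L}_{v,V}$ bounding a gap ``is the stable manifold of a periodic point of $f_v|_{S_V}$'' is false in general: for an Axiom A basic set the periodic points are merely dense in the nonwandering set, and a generic leaf of the stable lamination is the stable manifold of a non-periodic point. It happens that the gaps which open from $V=0$ are bounded by the center-stable manifolds of the specific bifurcated orbits above, but that is a consequence of the singularity analysis, not a general fact you can invoke. Second, the crux you identify --- ``exactly one hyperbolic periodic orbit per label, produced by a renormalization reflecting the eventual periodicity of the continued fraction'' --- is exactly the step you leave unproved (``I expect the main difficulty to be \dots''), so as written the proposal is a plan with its central step open, and the plan points away from the argument that actually closes it.
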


\begin{thm}\label{thm:mei4}
 For any primitive invertible two-letter substitution $v$, Theorem \ref{thm:gap-size} holds with $H$ replaced by $H_{v,V}$.
\end{thm}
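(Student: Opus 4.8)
The plan is to transplant the Damanik--Gorodetski proof of Theorem~\ref{thm:gap-size} from the Fibonacci trace map to the general trace map $f_v$, using Theorem~\ref{thm:mei1} to supply the geometric input that in the Fibonacci case came from Theorem~\ref{thm:schro-dynam-general}. First I would record the dictionary between the spectrum and the trace-map dynamics: by the extended S\"ut\H{o}--Damanik correspondence, $E\in\sigma_V$ iff $\gamma(E)\in\mathcal{L}_{v,V}$, so $\sigma_V=\gamma^{-1}(\mathcal{L}_{v,V})$ and the gaps of $\sigma_V$ are exactly the complementary intervals of the dynamically defined Cantor set cut out of the line $\gamma=\gamma_v$ by the stable lamination. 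The two boundary points $E^-(V)<E^+(V)$ of a one-parameter family $\{U_V\}$ of gaps are then the energies at which $\gamma$ meets the two boundary leaves of $\mathcal{L}_{v,V}$ that bound the corresponding gap of the Cantor set.

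Next I would identify these boundary leaves with stable manifolds of hyperbolic periodic orbits of $f_v$. This is a structural feature of dynamically defined Cantor sets: the endpoints of a gap have bounded forward orbits that are eventually periodic, so each endpoint lies on $W^s(\mathcal{O}_V)$ for a periodic orbit $\mathcal{O}_V$ of $f_v$ whose combinatorics label the gap (this is where Theorem~\ref{thm:mei3} enters, guaranteeing the gap is genuinely open and persists). For small $V>0$ the invariant hyperbolic set of $f_v$ on $S_V$ furnished by the proof of Theorem~\ref{thm:mei1} varies smoothly with $V$, so by the stable manifold theorem with parameters $\mathcal{O}_V$ and $W^s(\mathcal{O}_V)$ depend smoothly on $V$. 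Combining this with the transversality of $\gamma$ to $\mathcal{L}_{v,V}$ (known for small $V$) and the implicit function theorem shows that $E^\pm(V)$ are smooth functions of $V$ on some interval $(0,V_0)$, which is the smoothness half of the statement.

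The asymptotic $\lim_{V\to 0^+}\abs{U_V}/V\in(0,\infty)$ is the heart of the matter and would be obtained by a rescaling as $V\downarrow 0$. As $V$ decreases, $\sigma_V$ collapses to the free spectrum $[-2,2]$ and each family $\{U_V\}$ shrinks to a single point $E_0\in[-2,2]$, with the two bounding leaves meeting $\gamma$ coalescing. I would analyse the two limiting periodic orbits on the degenerate surface $S_0$ (smooth except for the four conic singularities $(\pm1,\pm1,\pm1)$, two of which are the images $\gamma(\pm2)$ of the band edges) and introduce coordinates in which both the transverse separation of their stable manifolds and the energy window $E-E_0$ are rescaled by a factor proportional to $V$. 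Because $\gamma(E)=((E-V)/2,E/2,1)$ depends linearly on $V$ and the persistent stable manifolds move at a nonzero finite rate in $V$, the rescaled picture converges to a nondegenerate limit in which the gap has a fixed positive width; undoing the rescaling yields $\abs{U_V}=c_U V+o(V)$ with $c_U\in(0,\infty)$ depending on the gap, the same linear scaling as in the Fibonacci case.

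The main obstacle is exactly this rescaling step for a general $f_v$. In the Fibonacci case the iterate of the trace map governing a given gap, together with the local normal form near the collapsing orbit, is explicit and the scaling constant can be read off directly; for an arbitrary primitive invertible substitution, $f_v$ is a different and typically longer polynomial diffeomorphism, so one must show that its local dynamics near the relevant orbit on $S_0$ is still hyperbolic with a linearization producing a strictly positive, finite limit, i.e. that the pertinent derivative in the rescaled coordinates is neither zero nor infinite. The mitigating fact is that every such $f_v$ preserves the Fricke--Vogt invariant and fixes the same four singular points of $S_0$, so the local models are parallel to the Fibonacci one; the genuine work is to confirm persistence of the labelling periodic orbits down to $V=0$ with a linear rate of separation of their stable manifolds, whose non-degeneracy is guaranteed by the uniform transversality of $\gamma$ to $\mathcal{L}_{v,V}$ supplied by Theorem~\ref{thm:mei1}.
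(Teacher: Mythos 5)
Your skeleton---the dictionary $\sigma_V=\gamma^{-1}(\mathcal{L}_{v,V})$, identification of gap edges with intersections of $\gamma$ with stable manifolds of distinguished periodic orbits, smoothness via smooth dependence of the hyperbolic structure plus transversality, and a rescaling as $V\downarrow 0$---is the same route taken in \cite{Mei2013} (to which this paper defers for Theorem \ref{thm:mei4}) and reproduced here for the Jacobi analogue, Theorem \ref{thm:mei-yessen-5}. The smoothness half of your argument is essentially fine. The genuine gap is in the asymptotics: you assert that the relevant stable manifolds ``move at a nonzero finite rate in $V$'' and deduce $\abs{U_V}=c_UV+o(V)$, but that linear rate is precisely the point needing proof, and the justification you offer (linearity of $\gamma(E)=((E-V)/2,E/2,1)$ in $V$ together with uniform transversality of $\gamma$ to $\mathcal{L}_{v,V}$) does not yield it. Transversality only converts the separation of the two boundary leaves into a comparable length in the $E$ variable; it controls nothing about how fast those two leaves separate as $V\downarrow 0$.

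The actual mechanism is this. Each one-parameter family of gaps collapses, as $V\downarrow 0$, onto a point of $W^{ss}(P_i)\cap\mathbb{S}_0$ on the $V=0$ line of initial conditions, and the two boundary leaves are the two local components of $W^{cs}\cap S_{V^2/4}$, where $W^{cs}$ is the center-stable manifold through the curve of periodic points continuing the singularity $P_i$. The key input (Lemma \ref{lem:qinter} here, and its analogue in \cite{Mei2013}) is that $W^{cs}$ is tangent to $S_0$ \emph{quadratically} along $W^{ss}(P_i)$: in a transversal plane, with the Fricke--Vogt invariant $I$ as height coordinate, $W^{cs}$ is a graph $\set{I=cx^2}$ with $c>0$, and since $\gamma$ lies on the level set $I=V^2/4$, the two leaves sit at $x=\pm V/(2\sqrt{c})$. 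Thus the leaf separation is of order $\sqrt{V^2/4}\asymp V$---a quadratic tangency composed with the quadratic dependence of the invariant on the coupling---and only then does transversality of $\gamma$ convert this into $\abs{U_V}\asymp V$; the existence of the limit itself (not merely two-sided bounds) requires the finer computation of \cite[Lemma 3.2]{Damanik2010}, extended to general primitive invertible substitutions in \cite[Lemma 6.2]{Mei2013}. Without the quadratic-tangency input, your rescaled picture has no reason to converge to a nondegenerate limit at scale $V$ rather than, say, $\sqrt{V}$ or $V^2$, so this step cannot be dismissed as ``local models parallel to the Fibonacci one.''
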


\begin{thm}\label{thm:mei5}
 For any primitive invertible two-letter substitution $v$, Theorem \ref{thm:damanik-gorodetski-dos} holds with $H$ replaced by $H_{v,V}$.
\end{thm}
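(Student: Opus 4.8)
The plan is to run the Damanik--Gorodetski argument from \cite{Damanik2012} in the setting of the general trace map $f_v$, using Theorem \ref{thm:mei1} in place of Theorem \ref{thm:schro-dynam-general}. The starting point is that, for every $V>0$, the lamination $\mathcal{L}_{v,V}$ furnished by Theorem \ref{thm:mei1} is the stable lamination of a compact, locally maximal, hyperbolic invariant set $\Lambda_{v,V}\subset S_V$ of $f_v$, and that $f_v|_{\Lambda_{v,V}}$ is topologically conjugate to a transitive subshift of finite type whose combinatorics is determined by $v$. First I would record that, for $V>0$ sufficiently small, the line of initial conditions $\gamma$ from \eqref{eq:gamma-schro} is uniformly transversal to $\mathcal{L}_{v,V}$ (this transversality for small $V$ is exactly what underlies Theorems \ref{thm:mei1}--\ref{thm:mei4}), so that $\sigma_{v,V}$ is the dynamically defined Cantor set $\gamma\cap\mathcal{L}_{v,V}$, and that $\Lambda_{v,V}$, its local stable/unstable holonomies, and the induced conformal expansion on the transversal all depend analytically on $V$ by structural stability and the implicit function theorem.

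The crux is to identify the density of states measure $dN_{v,V}$ with a dynamically natural measure. Following \cite{Damanik2012}, I would show that $N_{v,V}$ is, up to the coding, the pushforward under the stable holonomy $E\mapsto\gamma(E)\mapsto\Lambda_{v,V}$ of the measure of maximal entropy $\mu_{v,V}$ of $f_v|_{\Lambda_{v,V}}$. Concretely, one analyzes the integrated density of states through the periodic approximants: the number of spectral bands of the $k$th approximant is governed by the substitution incidence data, and the values of $N_{v,V}$ at the band edges (the gap-labeling values) match the masses $\mu_{v,V}$ assigns to the cylinder sets of the subshift, so that in the limit $dN_{v,V}$ is the transported measure of maximal entropy. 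This is where the extension from Fibonacci is genuinely nontrivial: the explicit golden-mean renormalization bookkeeping must be replaced by the general combinatorics attached to $v$ through the eventually periodic continued fraction expansion of \cite{CrispMoranPollingtonShiue1993}, and one must verify that the eigenvalue counting in the approximants remains synchronized with the entropy weights of the associated subshift of finite type. I expect this measure-identification step to be the main obstacle.

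Granting the identification, exact dimensionality follows from the theory of dimensions of ergodic measures invariant under $C^{1+\alpha}$ conformal hyperbolic dynamics. On the transversal the projected action is conformal, so the Barreira--Pesin--Schmeling exact-dimensionality theorem applies and yields that for $\mu_{v,V}$-almost every point the pointwise dimension equals the constant $d_V = h_{\mathrm{top}}(f_v|_{\Lambda_{v,V}})/\lambda_V$, where $\lambda_V$ is the transversal Lyapunov exponent of $\mu_{v,V}$. Since the stable holonomies are H\"older (indeed smooth), this local dimension is preserved when transported along $\gamma$, so $dN_{v,V}$ is of exact dimension $d_V$, with $d_V\in(0,1)$ because $0<\lambda_V$ and $d_V\leq\hdim(\sigma_{v,V})<1$.

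Finally, for the regularity and boundary behavior: the topological entropy $h_{\mathrm{top}}(f_v|_{\Lambda_{v,V}})$ is a conjugacy invariant of the subshift and hence independent of $V$, while $\lambda_V$ depends analytically on $V$ via the analytic dependence of $\Lambda_{v,V}$ and of $\mu_{v,V}$ (Ruelle's thermodynamic formalism, or the approach of \cite{Pollicott200x} used for Theorem \ref{thm:idos-analiticity}); thus $d_V = h_{\mathrm{top}}/\lambda_V$ is $C^\infty$, in fact analytic, on $(0,V_0)$. For the limit, as $V\to 0^+$ the hyperbolic set degenerates toward the free-Laplacian configuration and $\lambda_V\to h_{\mathrm{top}}$, forcing $d_V\to 1$; this is consistent with, and can be cross-checked against, Theorem \ref{thm:mei2}, which gives $\hdim(\sigma_{v,V})\to 1$. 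The dimension-theoretic and thermodynamic inputs are by now standard once the hyperbolic and transversal structure of Theorem \ref{thm:mei1} is in hand, so the entire argument reduces to the combinatorial measure-identification of the second paragraph.
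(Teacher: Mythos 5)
Your proposal follows essentially the same route the paper intends: Theorem \ref{thm:mei5} is stated here as a survey item with no proof in the text, deferring to \cite{Mei2013}, where exactly this strategy is carried out --- substitute the hyperbolicity, lamination, and transversality package of Theorem \ref{thm:mei1} for Theorem \ref{thm:schro-dynam-general} and rerun the Damanik--Gorodetski argument identifying $dN_V$ with the holonomy pushforward of the measure of maximal entropy, giving $d_V = h_{\mathrm{top}}/\lambda_V$. Your outline, including the measure identification through the periodic approximants and the conclusion $\lambda_V \to h_{\mathrm{top}}$ as $V \to 0^+$, matches that argument.
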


(We chose to state the four theorems above separately for the ease of reference later in this paper.)

\subsection{The Fibonacci Jacobi case}\label{sec:generalization-yessen}

In this section we are concerned with the spectrum of the Hamiltonian \eqref{eq:gen-jacobi} where
\begin{align*}
 a_n = p(u_n)\hspace{2mm}\text{ and }\hspace{2mm}b_n = q(u_n),
\end{align*}
with $p:\set{0,1}\rightarrow\R_{\neq 0}$ and $q:\set{0,1}\rightarrow\R$, and $\set{u_n}$ is the Fibonacci substitution sequence. Independently of the choice of the sequences $\set{a_n}$ and $\set{b_n}$, the operator is clearly self-adjoint, and bounded if the sequences $\set{a_n}$ and $\set{b_n}$ are bounded. Since the proofs of our main results in Section \ref{sec:proofs} will rely heavily on the transfer matrix and the trace map formalism, let us present the formalism here, before we review some of the recent results. Throughout this section, $H$ will denote the operator \eqref{eq:gen-jacobi}.

Observe that $\theta \in \R^\Z$ solves the equation
\begin{align}\label{eq:eigenvalue-jacobi}
 H\theta = E\theta
\end{align}
if and only if
\begin{align}\label{eq:jacobi-transfer}
 \begin{pmatrix}
  \theta_{n+1}\\
  \theta_{n}
 \end{pmatrix}
 =
 M_n(E)
 \begin{pmatrix}
  \theta_{n}\\
  \theta_{n-1}
 \end{pmatrix},
\end{align}
where the matrix $M_n$ is now redefined as
\begin{align}\label{eq:j-transfer-one-site}
 M_n(E) \eqdef \frac{1}{a_{n+1}}
 \begin{pmatrix}
  E - b_{n}	&	-a_{n}\\
  a_{n+1}		&	0	\\
 \end{pmatrix}.
\end{align}
We of course assume that $a_n\neq 0$; hence the assumption that $p$ maps $\set{0,1}$ into $\R_{\neq 0}$. From this we get
\begin{align*}
 \begin{pmatrix}
  \theta_{n+1}\\
  \theta_n
 \end{pmatrix}
 =
 M_n(E)M_{n-1}(E)\cdots M_1(E)
 \begin{pmatrix}
  \theta_1\\
  \theta_0
 \end{pmatrix}.
\end{align*}
As in the Schr\"odinger case, Floquet theory also applies to the Jacobi case. We do not elaborate on this much further here (for details, see Section 3 in \cite{Yessen2011a}). Notice, however, that unlike in the Schr\"odinger case, the matrices $M_n(E)$ are not necessarily unimodular. This can be remedied as follows. Define
\begin{align*}
 T_n(E) \eqdef \frac{1}{a_{n+1}}
 \begin{pmatrix}
  E - b_n	&	-1\\
  a_{n+1}^2		&	0
 \end{pmatrix}.
\end{align*}
Notice that $T_n(E)$ is unimodular and equation \eqref{eq:jacobi-transfer} holds if and only if
\begin{align*}
 \Theta_n = T_n(E)\Theta_{n-1}
\end{align*}
where $\Theta_n = (\theta_n, a_n\theta_{n-1})$. Now define $\hat{T}_n(E) \eqdef T_n(E)T_{n-1}(E)\cdots T_1(E)$. Since $\hat{T}_n(E)$ is unimodular, and the sequences $\set{a_n}$ and $\set{b_n}$ are modulated by the Fibonacci substitution, it is easy to see that, much like in the Schr\"odinger case, the matrices $\hat{T}_n(E)$ follow the same recursion from right to left:
\begin{align*}
 \hat{T}_{F_{k+1}}(E) = \hat{T}_{F_{k-1}}(E)\hat{T}_{F_k}(E).
\end{align*}
At this point the Fibonacci trace map can be applied. Again, with $x_k = x_k(E) \eqdef \frac{1}{2}\Tr \hat{T}_k(E)$, we have
\begin{align*}
 (x_{k+2}, x_{k+1}, x_k) = f(x_{k+1}, x_k, x_{k-1}),
\end{align*}
with $f$ as in \eqref{eq:fib-tmap}.

Notice that after shifting and scaling the operator appropriately, we may assume that $p(0) = 1$ and $q(0) = 0$. In \cite{Yessen2011a} we introduced the following shorthand notation, which we also adopt here: with $p(1) = \mathfrak{p}$ and $q(1) = \mathfrak{q}$, let $H_{(\mathfrak{p},\mathfrak{q})}$ denote the operator \eqref{eq:gen-jacobi} with $a_n = p(u_n)$ and $b_n = q(u_n)$ ($u_n$ being the Fibonacci substitution sequence) and the additional assumption that $p(0) = 1$ and $q(0) = 0$.

From \cite{Yessen2011a}, we have the following fundamental result, which is in line with that for the Schr\"odinger operator (see Theorem \ref{thm:damanik} in Section \ref{sec:generalization-mei} above).

\begin{thm}\label{thm:fib-jacobi-sc-spectr}
 For $\omega\in\Omega_u$, the hull of the Fibonacci substitution sequence $u$, let $H_{\omega, (\mathfrak{p}, \mathfrak{q})}$ be defined as $H_{(\mathfrak{p},\mathfrak{q})}$ with $\omega$ in place of $u$. Then for all $(\mathfrak{p},\mathfrak{q})\neq (1,0)$, the spectrum of $H_{\omega, (\mathfrak{p},\mathfrak{q})}$ is independent of the choice of $\omega$, is a zero measure Cantor set, and is of purely singular continuous type.
\end{thm}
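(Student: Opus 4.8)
The plan is to run the trace-map program that underlies Theorems \ref{thm:suto} and \ref{thm:damanik}, but now for the unimodular cocycle $\set{\hat T_n(E)}$ built from the matrices $T_n(E)$ of Section \ref{sec:generalization-yessen}, reducing the whole statement to the dynamics of the Fibonacci trace map $f$ on the invariant surfaces $S_V$. I would separate the argument into three essentially independent pieces: (i) independence of the spectrum on $\omega\in\Omega_u$; (ii) identification of the spectrum with a bounded-orbit set of $f$, and the Cantor/zero-measure structure; (iii) the spectral type.

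For (i) I would use minimality of $(\Omega_u,T)$. Given $\omega,\omega'\in\Omega_u$, minimality produces $n_i\uparrow\infty$ with $T^{n_i}\omega\to\omega'$; since $H_{T^{n}\omega,(\mathfrak p,\mathfrak q)}$ is unitarily equivalent to $H_{\omega,(\mathfrak p,\mathfrak q)}$ via the shift on $\ell^2(\Z)$, the spectra coincide, and pointwise convergence of the (uniformly bounded) coefficients gives strong resolvent convergence $H_{T^{n_i}\omega}\to H_{\omega'}$. Strong resolvent convergence forces $\sigma(H_{\omega'})\subseteq\sigma(H_\omega)$, and exchanging the roles of $\omega$ and $\omega'$ gives equality; hence it suffices to treat the canonical two-sided sequence generated by $u$.

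The heart of the matter is (ii), and here lies the genuinely new difficulty relative to the Schr\"odinger case. Because the $M_n(E)$ are not unimodular, I would first pass to the unimodular $T_n(E)$ and the variables $\Theta_n=(\theta_n,a_n\theta_{n-1})$, as in Section \ref{sec:generalization-yessen}, and establish the recursion $\hat T_{F_{k+1}}=\hat T_{F_{k-1}}\hat T_{F_k}$ at the level of these matrices. The subtle point — and the step I expect to be the main obstacle — is that, unlike for Schr\"odinger, the transfer matrix across a Fibonacci block depends on the boundary letter that follows it through the normalizing factor $1/a_{n+1}$, so the block matrices are not intrinsic to the block; one must arrange the normalization so that the recursion closes and the Fricke--Vogt invariant is genuinely conserved. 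Granting this, set $x_k=\tfrac12\Tr\hat T_{F_k}$, obtain a line of initial conditions $\gamma_{(\mathfrak p,\mathfrak q)}(E)$, and compute $I(\gamma_{(\mathfrak p,\mathfrak q)}(E))$, showing it equals a constant $V=V(\mathfrak p,\mathfrak q)$ independent of $E$, with $V=0$ exactly when $(\mathfrak p,\mathfrak q)=(1,0)$ (the free-Laplacian case). In contrast to the Schr\"odinger situation, $V$ need not be nonnegative, so $\gamma$ may lie on a surface $S_V$ with $V<0$; I would invoke Theorem \ref{thm:schro-dynam-general} when $V>0$ and its $V<0$ analogue otherwise, together with the fact that an unbounded $f$-orbit has no bounded suborbit, to get
\begin{align*}
\sigma = \set{E : \mathcal{O}_f^+(\gamma_{(\mathfrak p,\mathfrak q)}(E))\text{ is bounded}}.
\end{align*}
Here the inclusion $\sigma\subseteq\set{E:\mathcal{O}_f^+(\gamma_{(\mathfrak p,\mathfrak q)}(E))\text{ bounded}}$ follows from the semicontinuity \eqref{eq:semicont-spectra} of the periodic approximants together with the description $\sigma_k=\set{E:\abs{x_k(E)}\le 1}$ and the no-bounded-suborbit property, while the reverse inclusion follows because for $E$ in the resolvent set the transfer matrices grow exponentially, so a bounded orbit forces $E\in\sigma$. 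Verifying that $\gamma_{(\mathfrak p,\mathfrak q)}$ is transversal to the stable lamination $\mathcal{L}_{V}$ — or, failing that, transversal off a finite set, which by the remark following Theorem \ref{thm:schro-spect-haus-cont} is harmless — then identifies the bounded-orbit set as a dynamically defined Cantor set, hence a Cantor set of zero Lebesgue measure.

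For (iii), absence of absolutely continuous spectrum is immediate from (ii): a measure absolutely continuous with respect to Lebesgue and supported on the zero-measure set $\sigma$ must vanish, so $\sigma_{ac}=\emptyset$. Absence of point spectrum I would obtain from a Gordon-type argument adapted to the Jacobi setting: the two-sided Fibonacci coefficient sequence $\set{(a_n,b_n)}$ satisfies the Gordon condition (arbitrarily long approximate periods around the origin), and on $\sigma$ the transfer matrices $\hat T_n(E)$ are bounded by (ii), so the Gordon lemma of Section \ref{sec:proofs} excludes $\ell^2$ eigenfunctions for every $E$; linear repetitivity of the hull makes this uniform over all $\omega\in\Omega_u$. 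Since the operators are bounded and self-adjoint with no eigenvalues and no absolutely continuous part, the spectrum is purely singular continuous. Combining the three pieces yields that $\sigma$ is an $\omega$-independent zero-measure Cantor set of purely singular continuous type, with the closing of the trace-map recursion and the invariant computation for the non-unimodular cocycle being the decisive technical hurdle.
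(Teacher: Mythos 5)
Your parts (i) and (iii) run parallel to what the paper does (minimality plus shift--unitary equivalence and strong convergence for $\omega$-independence; absence of point spectrum by a Gordon-type argument and absence of a.c.\ spectrum from zero measure), but part (ii) --- which you rightly call the heart of the matter --- rests on a computation that is false in exactly the cases that make this theorem new. The Fricke--Vogt invariant along the curve of initial conditions is \emph{not} constant in $E$: as the paper records in Section \ref{sec:generalization-yessen} and in \eqref{eq:Ici},
\begin{align*}
I\circ l_{(\mathfrak{p},\mathfrak{q})}(E) = \frac{\mathfrak{q}(\mathfrak{p}^2 - 1)E + \mathfrak{q}^2 + (\mathfrak{p}^2 - 1)^2}{4\mathfrak{p}^2},
\end{align*}
which is affine in $E$ with nonzero slope $\mathfrak{q}(\mathfrak{p}^2-1)/(4\mathfrak{p}^2)$ (equation \eqref{eq:fv-de}) whenever $\mathfrak{p}\neq 1$ \emph{and} $\mathfrak{q}\neq 0$; it is constant only in the Schr\"odinger case $\mathfrak{p}=1$ and the off-diagonal case $\mathfrak{q}=0$. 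Consequently there is no single value $V(\mathfrak{p},\mathfrak{q})$: the curve of initial conditions crosses \emph{every} surface $S_V$, $V\in\R$, exactly once, so Theorem \ref{thm:schro-dynam-general} (a statement about the stable lamination of one fixed surface) cannot be invoked at all. This $E$-dependence is precisely the obstruction the paper emphasizes, and it is resolved in \cite{Yessen2011, Yessen2011a} by replacing the one-dimensional lamination $\mathcal{L}_V$ of a single surface with the family of center-stable manifolds inside the three-manifold $\mathcal{M}=\bigcup_{V>0}S_V$. Your fallback for negative values is also unavailable: there is no ``$V<0$ analogue'' of Theorem \ref{thm:schro-dynam-general}, since for $V\in(-1,0)$ the surface $S_V$ contains an invariant topological sphere on which \emph{every} orbit is bounded, so bounded orbits there do not detect the resolvent set --- if the curve met that component, the spectrum would contain an interval, and the paper can exclude this only because it establishes the Cantor structure by independent means.

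Those independent means are the second point of divergence: the paper's proof of the zero-measure Cantor claim (given for the general substitution version, Theorem \ref{thm:mei-yessen-1}, of which your statement is the Fibonacci special case quoted from \cite{Yessen2011a}) is spectral-theoretic, not dynamical. Lemma \ref{lem:main} sandwiches $\Sigma_{v,(\mathfrak{p},\mathfrak{q})}\subseteq B_{v,(\mathfrak{p},\mathfrak{q})}\subseteq A_{v,(\mathfrak{p},\mathfrak{q})}\subseteq\Sigma_{v,(\mathfrak{p},\mathfrak{q})}$, where $A_{v,(\mathfrak{p},\mathfrak{q})}$ is the set of energies with vanishing Lyapunov exponent; Kotani theory in its Jacobi version (\cite{Kotani1989, Remling2011}) gives that $A_{v,(\mathfrak{p},\mathfrak{q})}$ has zero Lebesgue measure, and the Cantor structure then follows from compactness, nowhere density, and the absence of isolated points (an isolated point would be an eigenvalue, excluded by the singular-continuity lemma). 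Your dynamical route to zero measure would, even after passing to the correct geometric objects, additionally require transversality of the curve to the center-stable manifolds, which in this generality is only known up to finitely many possible tangencies. Finally, the step you single out as the decisive hurdle --- closing the trace-map recursion for the non-unimodular cocycle --- is actually the easy part: the normalization $T_n(E)$ with $\Theta_n=(\theta_n,a_n\theta_{n-1})$ is unimodular, and periodicity of the approximants gives $p_{k,\abs{s(0)}}=p_{k,0}$, so the recursion and \eqref{eq:k-spect} close exactly as in the Schr\"odinger case. The genuine hurdle is the $E$-dependence of the invariant, which your proposal assumes away.
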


\begin{rem}\label{rem:beckus}
Let us remark that the more general case of Jacobi operators with Sturmian disorder has recently been studied in \cite{Beckus2013}. Spectral and ergodic theory were used there. The finer structure of the fractal spectrum (such as fractal dimensions) were not investigated there. Let us also remark that \cite{Beckus2013} came to our attention after the first draft of the present work had been completed and submitted for publication; thus the results of the present paper and \cite{Beckus2013} were obtained independently.
\end{rem}

Following the technique that has been applied for the Schr\"odinger operator, in \cite{Yessen2011a} we proved

\begin{prop}\label{prop:jacobi-tmap}
 The point $E\in\R$ is in the spectrum of $H_{(\mathfrak{p},\mathfrak{q})}$ if and only if $\mathcal{O}_f^+(\gamma(E))$ is bounded, where
 \begin{align}\label{eq:gamma-fib-jacobi}
  \gamma(E) = \left(\frac{E - \mathfrak{q}}{2}, \frac{E}{2\mathfrak{p}}, \frac{1 + \mathfrak{p}^2}{2\mathfrak{p}}\right).
 \end{align}
\end{prop}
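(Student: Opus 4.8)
The plan is to prove this as the Jacobi counterpart of S\"ut\H{o}'s Theorem \ref{thm:suto}, carried out through the unimodular transfer matrices $\hat T_n(E)$ and the half-traces $x_k(E) \eqdef \frac{1}{2}\Tr\hat T_{F_k}(E)$ set up above. Since the trace-map orbit is, by construction, the sequence of states $f^k(\gamma(E)) = (x_{k+1}, x_k, x_{k-1})$, boundedness of $\mathcal{O}_f^+(\gamma(E))$ is equivalent to boundedness of $\set{x_k(E)}_k$. Thus the proposition splits into two tasks: the equivalence of ``$E\in\sigma(H_{(\mathfrak{p},\mathfrak{q})})$'' with ``$\set{x_k(E)}$ bounded,'' and the explicit identification of the base point $\gamma(E)$.

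For the implication $E\in\sigma\Rightarrow\set{x_k}$ bounded I would argue by contraposition, exactly as in the scalar case. If $\set{x_k(E)}$ is unbounded then, by the result of Casdagli quoted above (an unbounded forward orbit of $f$ contains no bounded suborbit), the orbit escapes to infinity, so $\abs{\Tr\hat T_{F_k}(E)} > 2$ for all large $k$; moreover escape is an open condition, so it persists on a neighborhood $(E-\delta,E+\delta)$ uniformly for $k\geq l$. By Floquet theory the spectrum $\sigma_{F_k}$ of the $F_k$-periodic approximant is exactly $\set{E' : \abs{\Tr\hat T_{F_k}(E')}\leq 2}$, so this neighborhood is disjoint from $\sigma_{F_k}$ for all $k\geq l$. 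Since $H^{(F_k)}\to H_{(\mathfrak{p},\mathfrak{q})}$ strongly, the semicontinuity \eqref{eq:semicont-spectra} holds along this subsequence, giving $E\notin\overline{\bigcup_{k\geq l}\sigma_{F_k}}$ and hence $E\notin\sigma$.

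For the converse I would first upgrade boundedness of the half-traces to boundedness of the cocycle. This is the Jacobi version of S\"ut\H{o}'s lemma: using $\hat T_{F_{k+1}} = \hat T_{F_{k-1}}\hat T_{F_k}$, unimodularity, and the Cayley--Hamilton identity $\hat T_{F_k}^{-1} = 2x_k I - \hat T_{F_k}$, one shows inductively that bounded $\set{x_k}$ forces $\sup_k\norm{\hat T_{F_k}} < \infty$, and then $\sup_n\norm{\hat T_n} < \infty$ through the Fibonacci factorization of a general prefix. Once the cocycle is bounded, $E$ cannot lie in the resolvent set: an energy in $\rho(H_{(\mathfrak{p},\mathfrak{q})})$ forces an exponential dichotomy for the transfer matrices (equivalently, exponential decay of the Green function), contradicting boundedness; alternatively one runs the Gordon argument on the three-block repetitions of the Fibonacci sequence, as detailed in Section \ref{sec:proofs}. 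I expect this matrix-boundedness step, together with making the dichotomy/Gordon argument go through for the \emph{weighted} Laplacian, to be the main obstacle: the per-site matrices $M_n$ of \eqref{eq:j-transfer-one-site} are not unimodular, so one must carry the diagonal gauge $\Theta_n = (\theta_n, a_n\theta_{n-1})$ and the resulting factors of $\mathfrak{p}$ through every estimate, and verify that these bounded gauge factors transfer boundedness between the position-dependent prefix matrices and the abstract products of two fixed generators tracked by the trace map.

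It remains to identify $\gamma(E)$, a direct computation. Reading \eqref{eq:j-transfer-one-site} along the Fibonacci prefix $0100\cdots$ under the normalization $p(0)=1$, $q(0)=0$, the two generating transfer matrices are $A = \begin{pmatrix} E-\mathfrak{q} & -1 \\ 1 & 0\end{pmatrix}$ and $B = \frac{1}{\mathfrak{p}}\begin{pmatrix} E & -1 \\ \mathfrak{p}^2 & 0\end{pmatrix}$, with $\frac{1}{2}\Tr A = \frac{E-\mathfrak{q}}{2}$ and $\frac{1}{2}\Tr B = \frac{E}{2\mathfrak{p}}$, the first two coordinates of $\gamma(E)$. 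For the third coordinate I would use the $\SL(2,\R)$ identity $\Tr(A^{-1}B) = \Tr A\,\Tr B - \Tr(AB)$ together with $\frac{1}{2}\Tr(AB) = \frac{1}{2}\Tr\hat T_2$, which yields $\frac{1}{2}\Tr(A^{-1}B) = \frac{1+\mathfrak{p}^2}{2\mathfrak{p}}$. Setting $\gamma(E) = \big(\frac{1}{2}\Tr A, \frac{1}{2}\Tr B, \frac{1}{2}\Tr(A^{-1}B)\big)$ one checks that $f(\gamma(E)) = \big(\frac{1}{2}\Tr(AB), \frac{1}{2}\Tr A, \frac{1}{2}\Tr B\big)$, so the forward orbit reproduces the half-traces of the Fibonacci words in $A$ and $B$, and boundedness of $\mathcal{O}_f^+(\gamma(E))$ is indeed equivalent to boundedness of $\set{x_k}$. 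Finally I would note, in contrast with the Schr\"odinger case where $\gamma$ lies on a fixed surface $S_{V^2/4}$, that a short computation gives $I(\gamma(E))$ nonconstant in $E$ once $\mathfrak{q}\neq 0$ and $\mathfrak{p}\neq 1$; this does not affect the present statement but flags the extra care needed for the later dimension-theoretic arguments.
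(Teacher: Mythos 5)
Your proposal is correct in substance, but it follows S\"ut\H{o}'s original two-sided argument rather than the route this paper actually takes. The paper never proves Proposition \ref{prop:jacobi-tmap} directly --- it is quoted from \cite{Yessen2011a} --- and its in-house analogue is the general Lemma \ref{lem:main} together with \eqref{eq:dyn-to-real}, which is proved by the Damanik--Kotani chain $\Sigma_{v,(\mathfrak{p},\mathfrak{q})}\subseteq B_{v,(\mathfrak{p},\mathfrak{q})}\subseteq A_{v,(\mathfrak{p},\mathfrak{q})}\subseteq \Sigma_{v,(\mathfrak{p},\mathfrak{q})}$: the first inclusion via periodic approximants and semicontinuity (your first direction is essentially identical to Lemma \ref{lem:main-1}), the second via Oseledec/Floquet considerations, and the closing inclusion via Kotani theory for ergodic Jacobi matrices (positivity of the Lyapunov exponent off the spectrum plus zero measure of the set where it vanishes). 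You instead close the loop with the S\"ut\H{o} boundedness lemma (bounded half-traces force bounded cocycle norms) followed by the exponential-dichotomy/Gordon argument. Both are sound; yours is more self-contained and yields the quantitative cocycle bound that the Gordon argument needs elsewhere (Lemma \ref{lem:sc-spectrum}), while the paper's route generalizes painlessly to arbitrary primitive invertible substitutions and delivers zero Lebesgue measure of the spectrum as a byproduct. Your identification of $\gamma(E)$ via $\bigl(\tfrac12\Tr A,\tfrac12\Tr B,\tfrac12\Tr(A^{-1}B)\bigr)$ checks out and is consistent with the paper's curve \eqref{eq:ci}, which is a forward image of your $\gamma$ under the Fibonacci trace map. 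One caution: the gauge issue you flag at the end is real and slightly worse than you state --- because $T_n$ in \eqref{eq:j-transfer-one-site} depends on the pair $(u_n,u_{n+1})$, there are \emph{three} distinct single-site matrices occurring along the Fibonacci word, not two, and the identification of $\Tr\hat T_{F_k}$ with the trace of the abstract word in two generators holds only after the boundary factors telescope over full substitution blocks; this verification (done in \cite{Yessen2011a}) is the one step your plan names but does not execute.
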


Now that we have a criterion in terms of the dynamics of $f$ for points in the spectrum, one would imagine that the same techniques of dynamical systems that were applied in the Schr\"odinger case could be applied here to extend the previous results from the Schr\"odinger to the Jacobi case. This is in principle true, and this is the route that we followed in \cite{Yessen2011a}; however, the previous techniques do not generalize \textit{verbatim} and present some technical difficulties.

Indeed, observe that
\begin{align*}
 I(\gamma(E)) = \frac{E\mathfrak{q}(1-\mathfrak{p}^2) + \mathfrak{q}^2\mathfrak{p}^2 + (\mathfrak{p}^2 - 1)^2}{4\mathfrak{p}^2}.
\end{align*}
We thus have
\it
\begin{itemize}
\setlength{\itemsep}{2pt}

 \item $I(\gamma(E))$ is $E$-dependent when $\mathfrak{p}\neq 1$ and $\mathfrak{q}\neq 0$ (in this case, in fact, the line $\gamma$ intersects the surface $S_V$ from \eqref{eq:inv-surfaces}, for $V\in\R$, in at most one point).

 \item Moreover, there exist values of $E$ for which $I(\gamma(E)) < 0$.
\end{itemize}
\rm
Both points above stand in contrast with the Schr\"odinger situation, where we have
\it
\begin{itemize}
\setlength{\itemsep}{2pt}

 \item $I(\gamma(E))$ is $E$-independent.

 \item $I(\gamma(E)) \geq 0$ for all $E$ and is equal to zero if and only if $V = 0$,
\end{itemize}
\rm
with $\gamma(E)$ as in \eqref{eq:gamma-schro} and $V$ as in \eqref{eq:schrodinger}. This in turn allows one to use Theorem \ref{thm:schro-dynam-general} to investigate fractal properties of the spectrum (at least in the case that $\gamma$ is transversal to the lamination $\mathcal{L}$ from Theorem \ref{thm:schro-dynam-general}). Thus in order to extend the results to the Jacobi case, we had to combat two technical difficulties (now with $\gamma$ from \eqref{eq:gamma-fib-jacobi}):
\it
\begin{itemize}
\setlength{\itemsep}{2pt}

 \item For the portion of $\gamma$ that lies in $\cup_{V > 0}S_V$, investigate the intersection with $\cup_{V > 0}\mathcal{L}_V$. For this purpose, describe the geometry of $\cup_{V > 0}\mathcal{L}_V$ (if any).

 \item Investigate the intersection of $\gamma$ with $\cup_{V < 0}S_V$ and any type-\textbf{B} points in this region.
\end{itemize}
\rm
The second point is actually easy as a consequence of some results from \cite{Roberts1996} (which we do not discuss in detail here, but postpone until Section \ref{sec:proofs}): the portion of $\gamma$ that lies in $\left(\cup_{V < 0}S_V\right)$ does not contain any type-\textbf{B} points.

The first point was attacked in our previous work \cite{Yessen2011}, where we showed that
\begin{align}\label{eq:2d-lamination}
 \mathcal{L}\eqdef\bigcup_{V > 0}\mathcal{L}_V
\end{align}
forms a two-dimensional lamination with smooth leaves in
\begin{align}\label{eq:3d-manifold}
 \mathcal{M}\eqdef \bigcup_{V > 0}S_V,
\end{align}
and in \cite{Yessen2011a} we showed that generally $\gamma$ intersects $\mathcal{L}$ transversely everywhere except (possibly) for finitely many points, and transversely wherever $(\mathfrak{p}, \mathfrak{q})$ is sufficiently close to $(1,0)$ (this is the condition analogous to $V$ being sufficiently close to zero in the Schr\"odinger case). As a consequence, we were able to prove the following series of theorems in \cite{Yessen2011a}.

In all of what follows, denote by $\Sigma_{(\mathfrak{p},\mathfrak{q})}$ the spectrum of $H_{(\mathfrak{p},\mathfrak{q})}$.

\begin{thm}\label{thm:yessen1}
 For all $(\mathfrak{p},\mathfrak{q})\neq (1,0)$, $\Sigma_{(\mathfrak{p},\mathfrak{q})}$ is a multifractal; more precisely, the following holds.
 \begin{enumerate}[(1)]
 \setlength{\itemsep}{2pt}

  \item $\lhdim(\Sigma_{(\mathfrak{p},\mathfrak{q})}, a)$, as a function of $a\in \Sigma_{(\mathfrak{p},\mathfrak{q})}$, is continuous; it is constant if $\mathfrak{p} = 1$ or $\mathfrak{q} = 0$, and non-constant otherwise.

  \item There exists a nonempty set $\mathfrak{N}\subset\R^2$ of Lebesgue measure zero, such that the following holds.
  \begin{enumerate}[(a)]
  \setlength{\itemsep}{2pt}

   \item For all $(\mathfrak{p},\mathfrak{q})\notin\mathfrak{N}$, we have $0 < \lhdim(\Sigma_{(\mathfrak{p},\mathfrak{q})}, a) < 1$ for all $a\in \Sigma_{(\mathfrak{p},\mathfrak{q})}$; hence we have $0 < \hdim(\Sigma_{(\mathfrak{p},\mathfrak{q})}) < 1$.

   \item For $(\mathfrak{p},\mathfrak{q})\in\mathfrak{N}$, $0 < \lhdim(\Sigma_{(\mathfrak{p},\mathfrak{q})}, a) < 1$ for all $a\in \Sigma_{(\mathfrak{p},\mathfrak{q})}$ away from the lower and upper boundary points of the spectrum, and $\hdim(\Sigma_{(\mathfrak{p},\mathfrak{q})}) = 1$. In fact, the dimension accumulates at one of the endpoints of the spectrum.
  \end{enumerate}

  \item $\lim_{(\mathfrak{p},\mathfrak{q})\rightarrow(1,0)}\hdim(\Sigma_{(\mathfrak{p},\mathfrak{q})}) = 1$. In fact, the Hausdorff dimension of the spectrum is a continuous function of the parameters.

  \item $\hdim(\Sigma_{(\mathfrak{p},0)})$ and $\hdim(\Sigma_{(1,\mathfrak{q})})$ depend analytically on $\mathfrak{p}$ and $\mathfrak{q}$, respectively.
 \end{enumerate}
\end{thm}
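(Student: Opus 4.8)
The plan is to reduce every assertion to the geometry of the line of initial conditions $\gamma$ from \eqref{eq:gamma-fib-jacobi} relative to the two-dimensional lamination $\mathcal{L} = \bigcup_{V>0}\mathcal{L}_V$ sitting inside $\mathcal{M} = \bigcup_{V>0}S_V$. By Proposition \ref{prop:jacobi-tmap}, $\Sigma_{(\mathfrak{p},\mathfrak{q})}$ is exactly the set of $E$ for which $\gamma(E)$ is a type-\textbf{B} point, i.e. lies on $\mathcal{L}$, and by the result of \cite{Roberts1996} no type-\textbf{B} points occur where $I(\gamma(E)) < 0$; hence $\Sigma_{(\mathfrak{p},\mathfrak{q})} \subset \set{E : I(\gamma(E)) \geq 0}$. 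The central object is the \emph{transversal dimension function} $h(V)$, the common Hausdorff dimension of the dynamically defined Cantor set that any curve transversal to $\mathcal{L}_V$ cuts out of it inside $S_V$. In the Schr\"odinger case this Cantor set is the spectrum $\sigma_{2\sqrt{V}}$, so Theorems \ref{thm:schro-spect-haus-cont}, \ref{thm:schro-spect-haus-cont-v0} and \ref{thm:degt} give that $h$ is analytic and non-constant on $(0,\infty)$, takes values in $(0,1)$, and satisfies $\lim_{V\to 0^+}h(V) = 1$ and $\lim_{V\to\infty}h(V) = 0$.

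The key formula I would establish is
\begin{align}\label{eq:key-loc-dim}
 \lhdim(\Sigma_{(\mathfrak{p},\mathfrak{q})}, E) = h\big(I(\gamma(E))\big).
\end{align}
To prove it, I would note that near a point $\gamma(E_{\ast})$ the direction of increasing $V$ is tangent to the leaves of $\mathcal{L}$, so the transversal Cantor structure of the two-dimensional lamination $\mathcal{L}$ in $\mathcal{M}$ at the leaf through $\gamma(E_{\ast})$ coincides with that of $\mathcal{L}_{V_{\ast}}$ inside $S_{V_{\ast}}$, where $V_{\ast} = I(\gamma(E_{\ast}))$. Since $\gamma$ is transversal to $\mathcal{L}$ away from finitely many points (established in \cite{Yessen2011a}) and the intersection of a transversal with a dynamically defined lamination is a dynamically defined Cantor set whose dimension is the transversal dimension, \eqref{eq:key-loc-dim} holds off the finitely many tangencies and extends to them by continuity. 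Because $I\circ\gamma$ is affine in $E$ with slope $\frac{\mathfrak{q}(1-\mathfrak{p}^2)}{4\mathfrak{p}^2}$ and $h$ is continuous, the right-hand side is continuous in $E$, giving the continuity in part (1); and since a non-constant real-analytic function is non-constant on every subinterval, $E\mapsto h(I(\gamma(E)))$ is constant precisely when the slope vanishes, that is when $\mathfrak{q}=0$ or $\mathfrak{p}=\pm1$, yielding the constancy dichotomy.

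For part (2), write $\hdim(\Sigma_{(\mathfrak{p},\mathfrak{q})}) = \sup_{E\in\Sigma_{(\mathfrak{p},\mathfrak{q})}} h(I(\gamma(E)))$ and define $\mathfrak{N}$ to be the set of parameters for which $\inf_{E\in\Sigma_{(\mathfrak{p},\mathfrak{q})}} I(\gamma(E)) = 0$, i.e. for which the spectrum accumulates at the point where $\gamma$ meets the singular surface $S_0$. Off $\mathfrak{N}$, $I\circ\gamma$ is bounded below by a positive constant on the compact set $\Sigma_{(\mathfrak{p},\mathfrak{q})}$, so \eqref{eq:key-loc-dim} together with $h\big((0,\infty)\big)\subset(0,1)$ forces $0 < \lhdim < 1$ everywhere and hence $0 < \hdim < 1$. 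On $\mathfrak{N}$, the affine map $I\circ\gamma$ is monotone, so it vanishes only at the spectral endpoint nearest its root; there $h\to 1$ by Theorem \ref{thm:schro-spect-haus-cont-v0}, whence $\hdim = 1$ with the dimension accumulating at that single endpoint, while $\lhdim\in(0,1)$ on compact subsets away from it. Showing that $\mathfrak{N}$ is nonempty and of Lebesgue measure zero is the step I expect to be the main obstacle: nonemptiness follows by exhibiting parameters whose extreme spectral endpoint coincides with the $S_0$-crossing, while the measure-zero claim requires identifying $\mathfrak{N}$ inside the zero set of a nontrivial real-analytic function of $(\mathfrak{p},\mathfrak{q})$, using analytic dependence of the spectral endpoints (equivalently, of the boundedness region of the trace map) on the parameters.

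Finally, parts (3) and (4) follow from the same formula. As $(\mathfrak{p},\mathfrak{q})\to(1,0)$ both the slope and the constant term of $I\circ\gamma$ tend to $0$, so $\sup_{E\in\Sigma}I(\gamma(E))\to 0$ and thus $\hdim\to 1$ by Theorem \ref{thm:schro-spect-haus-cont-v0}; continuity of $\hdim$ in $(\mathfrak{p},\mathfrak{q})$ comes from continuous dependence of $\gamma$ and of the lamination (structural stability of the underlying hyperbolic set) together with continuity of $h$, of $I$, and of the spectral endpoints. For part (4), when $\mathfrak{q}=0$ the invariant is the constant $I(\gamma(E)) = \frac{(\mathfrak{p}^2-1)^2}{4\mathfrak{p}^2}$, so the whole spectrum lies in a single surface and $\hdim(\Sigma_{(\mathfrak{p},0)}) = h\big(\frac{(\mathfrak{p}^2-1)^2}{4\mathfrak{p}^2}\big)$; similarly $\hdim(\Sigma_{(1,\mathfrak{q})}) = h\big(\frac{\mathfrak{q}^2}{4}\big)$. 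Analyticity of these two maps then follows from analyticity of $h$ on $(0,\infty)$ composed with the real-analytic parameter-to-$V$ maps.
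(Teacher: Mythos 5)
Your reduction is the same one the paper uses (via \cite{Yessen2011a} and \cite[Section 6]{Cantat2009}, cf.\ equation \eqref{eq:loc-dim} in the proof of Theorem \ref{thm:mei-yessen-4}): the identity $\lhdim(\Sigma_{(\mathfrak{p},\mathfrak{q})},E)=\tfrac12\hdim(\Omega_{V})$ with $V=I(\gamma(E))$ --- your $h(V)$ --- combined with the affineness and explicit slope of $I\circ\gamma$ and the known behaviour of the transversal dimension coming from the Schr\"odinger theory. Granting that formula (whose justification does require the holonomy and finitely-many-tangencies bookkeeping you allude to, carried out in \cite{Yessen2011a}), parts (1), (3) and (4) come out essentially as you describe. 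One point to tighten in (1): non-constancy of the analytic function $h$ on an interval must be upgraded to non-constancy on the Cantor set $I(\gamma(\Sigma_{(\mathfrak{p},\mathfrak{q})}))$; this follows from the identity theorem because that set is perfect, but "non-constant on every subinterval" is not by itself the right statement.

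The genuine gap is the one you flag yourself: that $\mathfrak{N}$ is nonempty and of Lebesgue measure zero, together with the claim that on $\mathfrak{N}$ the dimension accumulates only at an \emph{endpoint} of the spectrum. Your proposed route --- placing $\mathfrak{N}$ inside the zero set of a nontrivial real-analytic function of the spectral endpoints --- is not obviously workable: analytic dependence of the extrema of the spectrum on $(\mathfrak{p},\mathfrak{q})$ is not among the available facts, and even granting it, showing the resulting function is not identically zero is essentially the statement to be proved. The mechanism that actually works is geometric and uses material already in the paper: membership in $\mathfrak{N}$ means the unique crossing point $\gamma(E_0)$ with $I(\gamma(E_0))=0$ is a type-\textbf{B} point of $S_0$. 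Since $\tfrac{d}{dE}I(\gamma(E))\neq 0$, the line crosses $S_0$ transversally; if $\gamma(E_0)$ lay in the interior of the compact sphere $\mathbb{S}_0$, then $\gamma$ would pierce the invariant (hence entirely type-\textbf{B}) inner spherical components of $S_V$ for a whole interval of $V<0$, producing an interval in the spectrum and contradicting the Cantor structure. So $\gamma(E_0)$ must land on the bounded-orbit locus of $S_0\setminus\Int(\mathbb{S}_0)$, a one-dimensional (Lebesgue-null in $S_0$) union of stable sets attached to the singularities; one then checks that the parameter map $(\mathfrak{p},\mathfrak{q})\mapsto\gamma_{(\mathfrak{p},\mathfrak{q})}(E_0(\mathfrak{p},\mathfrak{q}))\in S_0$ is nondegenerate enough that the preimage of this null set is null, while hitting it for some parameter gives nonemptiness. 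This same observation (that $E_0$ can only enter the spectrum as an extremum) is what pins the accumulation of the dimension to an endpoint in 2(b); without it, that part of the statement is not established either.
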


\begin{rem}
 We remark that the case of $\mathfrak{q} = 0$, the so called \textit{off-diagonal model}, has been studied in \cite[Appendix A]{Damanik2010a} as well as \cite{Dahl2010}. The arguments relying on the dynamics of $f$ are essentially the same in the off-diagonal case as in the Schr\"odinger case. Notice, however, that (2)(a) and (4) above are extensions of the previous results, since the previous results relied on transversality conditions (hence were proved only for $\mathfrak{p}$ sufficiently close to one and for $\mathfrak{q}$ sufficiently small or sufficiently large).
\end{rem}

To aid with visualization of the next result, refer to Figure \ref{fig:regions}.

\begin{thm}\label{thm:yessen2}
 The following statements hold.
 \begin{enumerate}[(1)]
 \setlength{\itemsep}{2pt}

  \item There exists $\epsilon > 0$ such that for all $(\mathfrak{p}, \mathfrak{q})$ satisfying $\norm{(1,0) - (\mathfrak{p},\mathfrak{q})} < \epsilon$, the box-counting dimension of $\Sigma_{(\mathfrak{p},\mathfrak{q})}$ exists and coincides with the Hausdorff dimension.

  \item There exists $\Delta > 0$, such that for all $\abs{\mathfrak{p}}\geq \Delta$, there exists $\delta_{\mathfrak{p}} > 0$, so that for all $\abs{\mathfrak{q}} < \delta_{\mathfrak{p}}$, the box-counting dimension of $\Sigma_{(\mathfrak{p},\mathfrak{q})}$ exists and coincides with the Hausdorff dimension.

  \item There exists $\Delta > 0$, such that for all $\abs{\mathfrak{q}}\geq \Delta$, there exists $\delta_{\mathfrak{q}} > 0$, so that for all $\abs{\mathfrak{p}-1} < \delta_{\mathfrak{q}}$, the box-counting dimension of $\Sigma_{(\mathfrak{p},\mathfrak{q})}$ exists and coincides with the Hausdorff dimension.
 \end{enumerate}
\end{thm}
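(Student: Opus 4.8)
The plan is to establish the box-counting/Hausdorff dimension coincidence by exploiting the fact that whenever the line of initial conditions $\gamma$ from \eqref{eq:gamma-fib-jacobi} meets the lamination $\mathcal{L}$ transversally, the resulting intersection is a \emph{dynamically defined} Cantor set, and for such sets the local box dimension and local Hausdorff dimension agree pointwise and globally (as recorded in the bulleted property of dynamically defined Cantor sets in Section~\ref{sec:known}). The spectrum $\Sigma_{(\mathfrak{p},\mathfrak{q})}$ is the preimage under $\gamma$ of its intersection with $\mathcal{L}$ (by Proposition~\ref{prop:jacobi-tmap}), so transversality plus smoothness of $\gamma$ transports the dynamically-defined structure from the lamination down to the energy axis, preserving both dimensions and their equality. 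The three parameter regimes in the statement are precisely the three regimes in which transversality of $\gamma$ to $\mathcal{L}$ is known (or can be arranged) at \emph{every} point of intersection, which is what upgrades pointwise coincidence to a genuine global box-counting statement.

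First I would fix the setup: by the reduction noted before Theorem~\ref{thm:fib-jacobi-sc-spectr} we may assume $p(0)=1$, $q(0)=0$, and by the observation following Proposition~\ref{prop:jacobi-tmap} only the portion of $\gamma$ lying in $\mathcal{M}=\bigcup_{V>0}S_V$ contributes type-\textbf{B} points (the $V<0$ portion contains none, by the cited consequence of \cite{Roberts1996}). On $\mathcal{M}$, the key structural input is that $\mathcal{L}=\bigcup_{V>0}\mathcal{L}_V$ is a \emph{two}-dimensional smooth lamination inside the three-manifold $\mathcal{M}$, established in \cite{Yessen2011}. For each of the three regimes I would invoke the corresponding transversality result from \cite{Yessen2011a}: in regime (1), transversality holds for all $(\mathfrak{p},\mathfrak{q})$ sufficiently near $(1,0)$; in regimes (2) and (3), one works in the off-diagonal--like or large-$\mathfrak{q}$ coordinate directions where $\gamma$ again meets $\mathcal{L}$ transversally once the complementary parameter is taken small. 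The point is that transversality is known to fail, if at all, at only finitely many points, and by shrinking the parameter neighborhoods ($\epsilon$, $\delta_{\mathfrak{p}}$, $\delta_{\mathfrak{q}}$) I can arrange uniform transversality along the whole relevant arc of $\gamma$.

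The mechanism for equality of dimensions is the following. Uniform transversality of the smooth regular curve $\gamma$ to the smooth lamination $\mathcal{L}$ means that, locally, $\gamma$ can be taken as a transversal to the stable foliation, and the return/renormalization dynamics of $f$ restricted to such a transversal realizes $\gamma^{-1}(\gamma(\R)\cap\mathcal{L})$ as a $C^{1+\alpha}$ dynamically defined Cantor set (this is exactly the content of Theorem~\ref{thm:mei1}/Theorem~\ref{thm:schro-dynam-general} adapted to the Jacobi line). For a dynamically defined Cantor set $K$, the bulleted property gives $\bdim(K)=\hdim(K)$ globally and $\lbdim(K,a)=\lhdim(K,a)$ at every point. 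Since $\gamma$ is a smooth diffeomorphism onto its image with nonvanishing derivative, it is bi-Lipschitz on compact subintervals and hence preserves both box and Hausdorff dimension; therefore $\Sigma_{(\mathfrak{p},\mathfrak{q})}$ inherits the coincidence $\bdim(\Sigma_{(\mathfrak{p},\mathfrak{q})})=\hdim(\Sigma_{(\mathfrak{p},\mathfrak{q})})$. A minor technical point to handle is that box dimension is not countably stable: I must cover $\Sigma_{(\mathfrak{p},\mathfrak{q})}$ by \emph{finitely} many transversal arcs (compactness of the spectrum makes this possible) and note that box dimension of a finite union is the maximum of the pieces, so no loss occurs.

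\emph{The hard part} will be the uniformity of transversality and the finitely-many-exceptional-points issue. Pointwise transversality at all but finitely many points would only yield coincidence of \emph{local} dimensions away from those points, which is not enough for the \emph{global} box dimension claim, because a single point of tangency could in principle distort the covering counts. The way through this is the argument indicated in the remark after Theorem~\ref{thm:schro-spect-haus-cont} and in \cite[Proof of Theorem 2.1(i)]{Yessen2011a}: finitely many tangencies introduce only bounded distortion and do not affect the box-counting asymptotics, so that the global box dimension is still governed by the dynamically defined structure on the transversal pieces. Consequently the genuinely new work is to verify that in each of the three stated parameter regimes the relevant arc of $\gamma$ can be covered by finitely many arcs along each of which transversality is uniform after suitably shrinking the parameter neighborhood, which is where the explicit geometry of $\mathcal{L}$ from \cite{Yessen2011} and the transversality estimates from \cite{Yessen2011a} must be combined.
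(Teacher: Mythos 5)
You have correctly identified that the parameter restrictions in the three regimes come from where transversality of the curve of initial conditions to the (center-)stable objects can be established, and that transversality is the engine of the proof; this matches the paper's route (see Remark \ref{rem:regions} and the proof of Theorem \ref{thm:mei-yessen-8}, which reduces everything to \cite[Theorem 2.5]{Yessen2011a} once transversality is secured). However, there is a genuine gap in your mechanism for the equality $\bdim=\hdim$. You assert that the transversal intersection of $\gamma$ with $\mathcal{L}$ is a dynamically defined Cantor set and then invoke the standard property $\bdim(K)=\hdim(K)$ for such sets. This is false in the regime the theorem is actually about: when $\mathfrak{p}\neq 1$ and $\mathfrak{q}\neq 0$, the function $E\mapsto I(\gamma(E))$ is strictly monotone, so $\gamma$ meets each invariant surface $S_V$ in at most one point and is nowhere contained in any single $S_V$. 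Consequently Theorem \ref{thm:schro-dynam-general} (and its analogue Theorem \ref{thm:mei1}), which concern curves lying \emph{inside} one surface $S_V$ transversal to the one-dimensional lamination $\mathcal{L}_V$, do not apply to $\gamma$. What one intersects instead is the two-dimensional union of center-stable manifolds in the three-manifold $\mathcal{M}=\bigcup_{V>0}S_V$, and the paper explicitly records (last bullet of the list of properties of $\mathcal{M}$ in Section \ref{sec:t-map-dynamics}) that this intersection is in general \emph{not} dynamically defined: the local Hausdorff dimension varies from point to point, as Theorem \ref{thm:yessen1}(1) states outright ($\lhdim(\Sigma_{(\mathfrak{p},\mathfrak{q})},a)$ is non-constant when $\mathfrak{p}\neq 1$ and $\mathfrak{q}\neq 0$). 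A dynamically defined Cantor set has constant local dimension, so your identification cannot hold; covering the spectrum by finitely many arcs does not repair it, since the local dimension varies on every subinterval.

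The assertion that the box-counting dimension nevertheless exists and equals the Hausdorff dimension is precisely the nontrivial content of the theorem, and it requires the separate argument of \cite[Theorem 2.5]{Yessen2011a}: one uses transversality to compare the intersection locally, via Lipschitz holonomies, with the dynamically defined Cantor sets on the individual surfaces $S_V$, and then exploits continuity of the transversal dimension in $V$ together with uniform distortion estimates to show that the covering counts at every scale are governed by $\max_{\alpha}\lhdim(\Sigma_{(\mathfrak{p},\mathfrak{q})},\alpha)$. Your proposal treats this step as automatic, which is where it fails. A secondary issue: you propose to tolerate finitely many tangencies via a bounded-distortion argument, but that device is only known to be harmless for the countably stable Hausdorff dimension (as in the remark after Theorem \ref{thm:schro-spect-haus-cont}); the box-counting argument needs transversality at \emph{every} intersection point, which is exactly why the theorem carries the parameter restrictions rather than being stated for all $(\mathfrak{p},\mathfrak{q})$.
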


\begin{rem}\label{rem:regions}
 The reason for the restrictions in the previous theorem, is that in the arguments for the existence of the box-counting dimension and its equality with the Hausdorff dimension we employ the transversality condition, which we can prove only in some special cases (but we conjecture that it holds in general).
\end{rem}
\begin{figure}[t]
 \includegraphics[scale=.6]{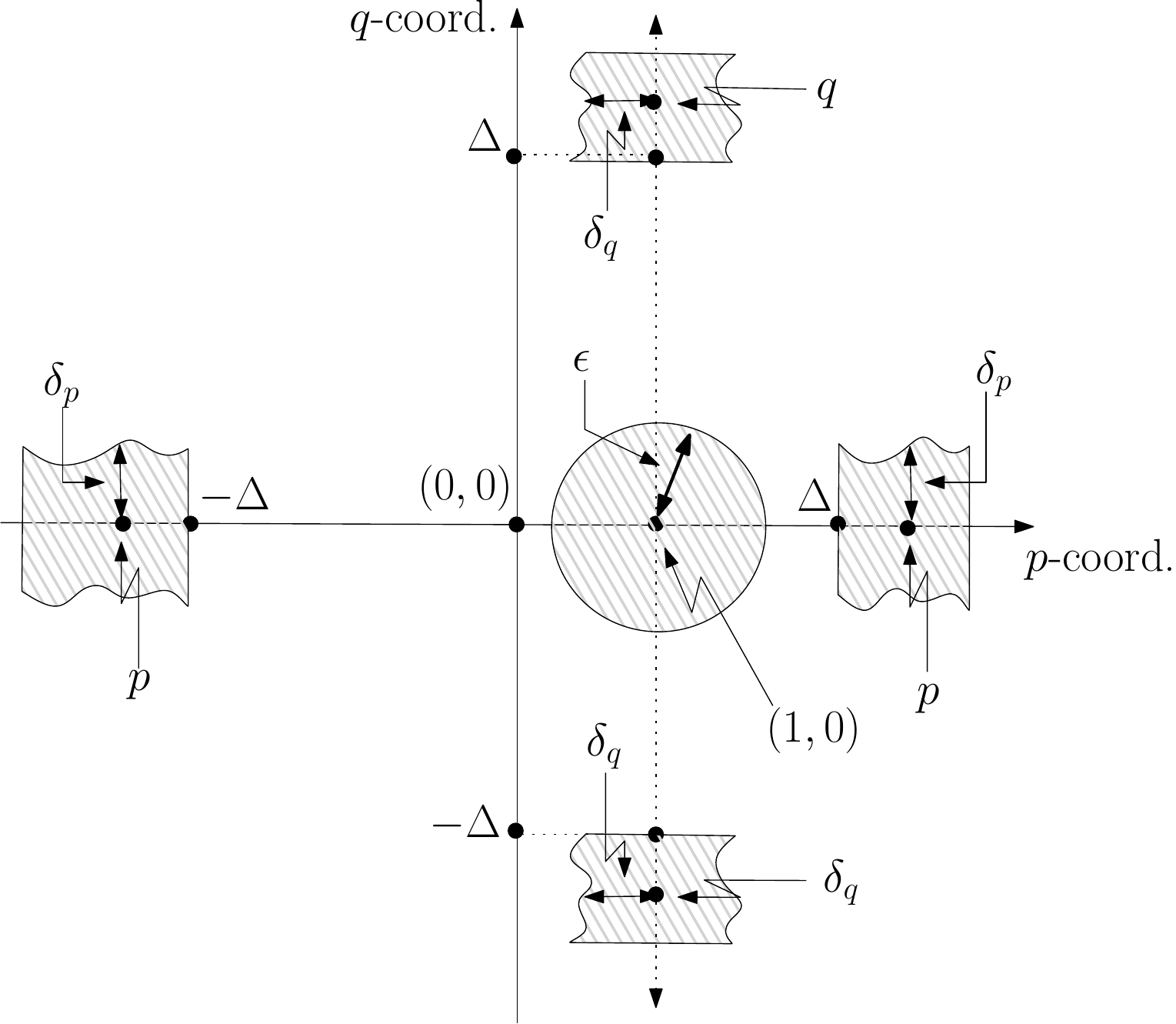}
 \caption{}
\end{figure}\label{fig:regions}
Just as in the Schr\"odinger case, denote the density of states by $N$ and the corresponding measure by $dN$ (for properties and examples see, for instance, \cite[Chapter 5]{Teschl1999}).

\begin{thm}\label{thm:yessen3}
 For all $(\mathfrak{p},\mathfrak{q})\in\R^2$, there exists $\mathfrak{D}_{(\mathfrak{p},\mathfrak{q})}\subset\R$ of full $dN$ measure, such that for all $E\in \mathfrak{D}_{(\mathfrak{p},\mathfrak{q})}$, we have
 \begin{align*}
  \lim_{\epsilon\downarrow 0}\frac{\log N(E - \epsilon, E+\epsilon)}{\log\epsilon} = d_{(\mathfrak{p},\mathfrak{q})}(E)\in\R,
 \end{align*}
 $d_{(\mathfrak{p},\mathfrak{q})}(E) > 0$. Moreover, if $(\mathfrak{p},\mathfrak{q})\neq(1,0)$, then
 \begin{align*}
  d_{(\mathfrak{p},\mathfrak{q})}(E) < \lhdim(\Sigma_{(\mathfrak{p},\mathfrak{q})}, E).
 \end{align*}
 Also,
 \begin{align*}
  \lim_{(\mathfrak{p},\mathfrak{q})\rightarrow(1,0)}\sup_{E\in\mathfrak{D}_{(\mathfrak{p},\mathfrak{q})}}\set{d_{(\mathfrak{p},\mathfrak{q})}(E)} = \lim_{(\mathfrak{p},\mathfrak{q})\rightarrow(1,0)}\inf_{E\in\mathfrak{D}_{(\mathfrak{p},\mathfrak{q})}}\set{d_{(\mathfrak{p},\mathfrak{q})}(E)} = 1.
 \end{align*}
\end{thm}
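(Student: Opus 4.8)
The plan is to transport everything to the trace map. By Proposition \ref{prop:jacobi-tmap}, $\Sigma_{(\mathfrak{p},\mathfrak{q})}=\set{E:\gamma(E)\text{ is type-}\textbf{B}}$, and since $f$ preserves each surface $S_V$ with $V=I(\gamma(E))$, the entire forward orbit of $\gamma(E)$ stays on the single surface $S_{V(E)}$. On each $S_V$ ($V>0$) the type-\textbf{B} points form the stable lamination $\mathcal{L}_V$ of a locally maximal hyperbolic set $\Lambda_V\subset S_V$ (one stable, one unstable direction), by Theorem \ref{thm:mei1}/\ref{thm:schro-dynam-general}; the union $\Lambda\eqdef\bigcup_{V>0}\Lambda_V\subset\mathcal{M}$ additionally carries a neutral center direction transverse to the surfaces, and $\mathcal{L}=\bigcup_V\mathcal{L}_V$ is its two-dimensional center-stable lamination. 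Since $dN$ is independent of the hull element and exists by \cite{Lenz2005}, I would work directly with $H_{(\mathfrak{p},\mathfrak{q})}$. The central first step is to identify $dN$, transported to $\Lambda$ via the coding $E\mapsto\gamma(E)\mapsto$ (base point of its stable leaf in $\Lambda_{V(E)}$), with an $f$-invariant measure $\hat\mu$ whose conditional measures on each $S_V$ are the measures of maximal entropy $\mu^V_{\max}$ of $f|_{\Lambda_V}$. This follows from the periodic-approximation argument of \cite{Damanik2012}, extended to general substitutions in \cite{Mei2013}: the bands of the period-$F_k$ Jacobi approximant are in bijection with the level-$k$ cylinders of the subshift coding $\Lambda_V$, each carries equal $dN$-weight $1/F_k$, so the limit equidistributes and equals $\mu^V_{\max}$ fiberwise. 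The transfer matrices $T_n$ obey the same Fibonacci recursion, so the combinatorics is unchanged; the only new feature is that $V(E)=I(\gamma(E))$ varies affinely with $E$, so the disintegration of $\hat\mu$ over $V$ is governed by the affine image of $dN$.

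Granting this correspondence, exact local dimensionality follows from the thermodynamic formalism on a one-dimensional transversal. Away from the at-most-finitely-many nontransversal energies (which carry zero $dN$ measure, by the transversality established in \cite{Yessen2011a}), $\gamma\cap\mathcal{L}_V$ is a dynamically defined Cantor set with a Markov partition, and by bounded distortion a ball $B(\gamma(E),\epsilon)$ is comparable to a cylinder $C_n$ with $\log\epsilon\asymp\log\abs{C_n}\asymp-\sum_{j<n}\log\abs{Df^u}(f^j\gamma(E))$. Then
\begin{align*}
\frac{\log N(E-\epsilon,E+\epsilon)}{\log\epsilon}\approx\frac{\log\mu^{V(E)}_{\max}(C_n)}{\log\abs{C_n}}\xrightarrow[n\to\infty]{}\frac{h_{\mathrm{top}}}{\lambda(E)},
\end{align*}
where the numerator contributes $-h_{\mathrm{top}}$ per symbol by the Shannon--McMillan--Breiman theorem for $\mu_{\max}$ and $h_{\mathrm{top}}$ is independent of $V$ (all $\Lambda_V$ are conjugate to the same subshift), while $\lambda(E)$ is the unstable Lyapunov exponent along the orbit of $\gamma(E)$. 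Using the ergodic decomposition of $\hat\mu$, whose ergodic components are the $\mu^V_{\max}$ (since $f$ is the identity on the center direction), for $\hat\mu$-a.e., hence $dN$-a.e., $E$ the Birkhoff averages converge and $\lambda(E)=\lambda^{V(E)}_{\max}$, giving
\begin{align*}
d_{(\mathfrak{p},\mathfrak{q})}(E)=\frac{h_{\mathrm{top}}}{\lambda^{V(E)}_{\max}}>0,
\end{align*}
a positive function of $E$; this yields the first assertion with $\mathfrak{D}_{(\mathfrak{p},\mathfrak{q})}$ the full-measure set of such generic energies.

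For the strict inequality I would compare $\mu_{\max}$ with the dimension-maximizing measure. By the variational principle for the dimension of a one-dimensional repeller, $\lhdim(\Sigma_{(\mathfrak{p},\mathfrak{q})},E)=s(V(E))$, where $s(V)$ is the unique zero of Bowen's pressure equation $P(-s\log\abs{Df^u})=0$ on $\Lambda_V$ and $s(V)=\sup_\mu h_\mu/\lambda_\mu$ is attained by the equilibrium state of $-s\log\abs{Df^u}$ (continuity of $s$ in $V$ matches the local dimension identity of Theorem \ref{thm:yessen1}). Since $h_{\mathrm{top}}/\lambda^V_{\max}\leq s(V)$ always, with equality only if $\log\abs{Df^u}$ is cohomologous to a constant on $\Lambda_V$ --- which fails precisely when the spectrum is not affinely self-similar, i.e. for all $(\mathfrak{p},\mathfrak{q})\neq(1,0)$ and $V>0$ --- we obtain $d_{(\mathfrak{p},\mathfrak{q})}(E)<\lhdim(\Sigma_{(\mathfrak{p},\mathfrak{q})},E)$. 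For the limit, the explicit formula for $I(\gamma(E))$ shows $V(E)\to 0$ uniformly over the uniformly bounded spectrum as $(\mathfrak{p},\mathfrak{q})\to(1,0)$; then the uniform estimates $h_{\mathrm{top}}/\lambda^V_{\max}\to 1$ as $V\to 0^+$ --- exactly the content of the density of states results Theorem \ref{thm:damanik-gorodetski-dos}/\ref{thm:mei5} together with the dimension estimates of Theorem \ref{thm:schro-spect-haus-cont-v0}/\ref{thm:mei2}, which squeeze $d_V$ between $h_{\mathrm{top}}/\lambda^V_{\max}$ and $s(V)\to 1$ --- force both $\sup_E d_{(\mathfrak{p},\mathfrak{q})}(E)$ and $\inf_E d_{(\mathfrak{p},\mathfrak{q})}(E)$ to $1$.

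The hard part will be making the correspondence $dN\leftrightarrow\hat\mu$ rigorous across the whole family of surfaces simultaneously and controlling the thermodynamic formalism uniformly in $V$ near $0$: because $\gamma$ is a genuine transversal to the two-dimensional lamination rather than lying in a single leaf, the Markov coding and the bounded-distortion constants must be taken along $\mathcal{L}$ and shown to depend continuously, with uniform constants, on $V$, and the exponents $\lambda^V_{\max}$, $s(V)$ must be controlled near $V=0$. Establishing this uniformity, excising the finitely many nontransversal energies without affecting the full-measure conclusion, and proving the strictness of Bowen's inequality for every $(\mathfrak{p},\mathfrak{q})\neq(1,0)$ is where the real work lies.
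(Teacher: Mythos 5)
First, a point of reference: this paper never proves Theorem \ref{thm:yessen3} itself --- it is quoted as a result of \cite{Yessen2011a}, and the paper's own generalization, Theorem \ref{thm:mei-yessen-9}, is dispatched in one line by invoking the proof of \cite[Theorem 2.6]{Yessen2011a} after the coordinate change via $f^{-2}$. So your proposal must be measured against that cited proof, which is the Damanik--Gorodetski strategy of \cite{Damanik2012} adapted to the Jacobi geometry. Your skeleton does reproduce its main ingredients: the dynamical characterization of the spectrum via Proposition \ref{prop:jacobi-tmap}, the formula $d = h_{\mathrm{top}}/\lambda$ coming from comparing Parry (maximal-entropy) weights of cylinders with their lengths, strictness via Bowen's equation and the cohomological criterion, and the limits via $I(\gamma(E))\rightarrow 0$ uniformly as $(\mathfrak{p},\mathfrak{q})\rightarrow (1,0)$.

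However, your central first step is not just hard to make rigorous --- it is false as stated, and precisely in the regime that makes the Jacobi problem different from the Schr\"odinger one. When $\mathfrak{q}(\mathfrak{p}^2-1)\neq 0$, the invariant $I(\gamma(E))$ is affine and nonconstant in $E$ (see the computation in Section \ref{sec:generalization-yessen}), so the spectrum meets each surface $S_V$ in \emph{at most one point}. Consequently, any transport $\hat\mu$ of $dN$ to $\Lambda = \bigcup_{V>0}\Lambda_V$ is carried by a graph over the $V$-interval: its disintegration over $V$ consists of Dirac masses, so it cannot have the non-atomic measures $\mu_{\max}^V$ as conditionals. Indeed, any measure with conditionals $\mu_{\max}^V$ assigns such a graph zero mass (Fubini plus non-atomicity), so the two measures are mutually singular; moreover the graph is not $f$-invariant, so $\hat\mu$ is not even an invariant measure. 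This also collapses your ergodic-decomposition argument, which was the only mechanism you offered for the $dN$-a.e.\ convergence of the Birkhoff averages defining $\lambda(E)$. The missing idea --- and the actual mechanism in \cite{Yessen2011a} --- is localization rather than global disintegration: for each $E_0$ in the spectrum, the holonomy along the center-stable lamination, which is $C^1$ by \cite[Theorem 6.4]{Hirsch1968} (as recorded in Section \ref{sec:t-map-dynamics}), projects a neighborhood of $\gamma(E_0)$ on the curve of initial conditions into the \emph{single} surface $S_{V(E_0)}$, carrying the nearby spectrum onto a transversal section of $\mathcal{L}_{V(E_0)}$ and carrying $dN$ onto a measure with uniformly comparable cylinder weights; since local dimension is a local quantity invariant under $C^1$ changes of variable, the one-surface analysis of \cite{Damanik2012} then applies with $V = V(E_0)$ frozen, and only afterwards does one let $E_0$ vary. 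Separately, note that your strictness claim for \emph{all} $(\mathfrak{p},\mathfrak{q})\neq(1,0)$ needs the non-cohomology of $\log\abs{Df^u}$ to a constant on $\Lambda_V$ for every $V>0$, which you assert but do not prove; this is genuinely part of the content of the cited proofs and not a formality.
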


\section{New results: the primitive invertible Jacobi case.}\label{sec:main}

The purpose of this section is to unify the results of Sections \ref{sec:generalization-mei} and \ref{sec:generalization-yessen}, as well as to extend those results further. Let us begin by fixing the notation.

For a primitive invertible substitution sequence $v$ of two letters $\set{0,1}$, and $\omega\in\Omega_v$, we shall denote by $H_{\omega, (\mathfrak{p},\mathfrak{q})}$ the operator on $\ell^2(\Z)$ defined by
\begin{align}\label{eq:full-jacobi}
 (H_{\omega, (\mathfrak{p}, \mathfrak{q})}\phi)_n = p(\omega_{n+1})\phi_{n+1} + p(\omega_n)\phi_{n-1} + q(\omega_n)\phi_n,
\end{align}
where $(p(0), p(1)) = (1,\mathfrak{p})$, with $\mathfrak{p}\in\R_{\neq 0}$, and $(q(0),q(1)) = (0, \mathfrak{q})$ for any $\mathfrak{q}\in\R$. Notice that these restrictions do not affect the generality of our results, since for a general choice of $(p(0), p(1))\in\R^2$, $p(i)\neq 0$, and $(q(0),q(1))\in\R^2$, the operator in \eqref{eq:full-jacobi} is equivalent to the one with the above restrictions via an appropriate shift and scaling.

We shall denote the spectrum of the operator \eqref{eq:full-jacobi} by $\Sigma_{\omega, (\mathfrak{p},\mathfrak{q})}$, and when there is no risk of ambiguity, we shall drop $\omega$ from the notation and write simply $\Sigma_{(\mathfrak{p}, \mathfrak{q})}$.

We are interested in the topology of $\Sigma_{\omega, (\mathfrak{p},\mathfrak{q})}$ as well as the spectral type. The first fundamental result in this direction, which is an extension of the Schr\"odinger case from \cite{Damanik2000}, is the following

\begin{thm}\label{thm:mei-yessen-1}
For every primitive invertible substitution sequence $v$ and $(\mathfrak{p},\mathfrak{q})\neq (1,0)$, there exists $\Sigma_{v, (\mathfrak{p},\mathfrak{q})}\subset\R$ such that for every $\omega\in\Omega_v$, $\Sigma_{\omega, (\mathfrak{p},\mathfrak{q})} = \Sigma_{v, (\mathfrak{p},\mathfrak{q})}$; moreover, $\Sigma_{v, (\mathfrak{p},\mathfrak{q})}$ is a Cantor set of zero Lebesgue measure. The spectral measures are purely singular continuous.
\end{thm}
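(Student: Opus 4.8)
The plan is to prove Theorem \ref{thm:mei-yessen-1} in three logically separate pieces: (i) constancy of the spectrum across the hull $\Omega_v$, (ii) the Cantor/zero-measure structure, and (iii) absence of point spectrum. For (i), I would first invoke the standard strong-approximation and minimality machinery available for primitive substitution systems. Since $(\Omega_v, T)$ is minimal and the map $\omega\mapsto H_{\omega,(\mathfrak{p},\mathfrak{q})}$ is strongly continuous and covariant ($H_{T\omega}=U H_\omega U^{-1}$ for the shift $U$), the spectrum $\Sigma_{\omega,(\mathfrak{p},\mathfrak{q})}$ is constant on each orbit and, by density of orbits plus upper semicontinuity of the spectrum, is constant on all of $\Omega_v$. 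This is exactly the argument used for the Schr\"odinger case in Theorem \ref{thm:damanik}, and it transfers \emph{verbatim} to the Jacobi setting because the covariance structure is identical; the only care needed is that $p(1)=\mathfrak{p}\neq 0$ so that the operators remain genuine Jacobi operators.

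For (ii), the strategy is to reduce the set description of the spectrum to the dynamics of the associated trace map $f_v$, as set up in Section \ref{sec:proofs} and exploited throughout the survey. Concretely, I would establish (or quote from the forthcoming Section \ref{sec:proofs}) the analogue of Proposition \ref{prop:jacobi-tmap}: there is a line of initial conditions $\gamma_v(E)$ and the polynomial diffeomorphism $f_v$ preserving the Fricke--Vogt surfaces $S_V$ such that $E\in\Sigma_{v,(\mathfrak{p},\mathfrak{q})}$ if and only if $\mathcal{O}^+_{f_v}(\gamma_v(E))$ is bounded, i.e. $\gamma_v(E)$ is a type-\textbf{B} point. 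Combining Theorem \ref{thm:mei1} (the lamination $\mathcal{L}_{v,V}$ captures exactly the type-\textbf{B} points on $S_V$ and cuts transversal curves in dynamically defined Cantor sets) with the geometric analysis of $\bigcup_{V>0}\mathcal{L}_V$ inside $\mathcal{M}$ from \eqref{eq:2d-lamination}--\eqref{eq:3d-manifold}, one sees that the spectrum is the intersection of $\gamma_v$ with the union of laminations, together with the observation (via \cite{Roberts1996}) that the $V<0$ portion of $\gamma_v$ contributes no type-\textbf{B} points. The Cantor-set conclusion then follows from transversality of $\gamma_v$ to $\mathcal{L}$ away from at most finitely many points, and zero Lebesgue measure follows because each transversal slice has Hausdorff dimension strictly below $1$ (the dynamically defined Cantor structure forces $\hdim<1$ on each fiber), so the spectrum has measure zero.

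For (iii), purely singular continuous type, I would argue in two halves. Absence of absolutely continuous spectrum is automatic once $\Sigma_{v,(\mathfrak{p},\mathfrak{q})}$ has zero Lebesgue measure, since the absolutely continuous spectrum is always contained in the essential closure of the set where transfer matrices grow subexponentially, and in any case $\sigma_{ac}\subseteq\overline{\Sigma}^{\,\mathrm{ess}}$ has measure zero. Absence of point spectrum is the more delicate half: here I would run the \emph{Gordon argument} in its Jacobi formulation, using the unimodular renormalized transfer matrices $\hat{T}_n(E)$ introduced in Section \ref{sec:generalization-yessen}. The substitution's primitivity and the Sturmian (minimal-complexity) recurrence structure produce, for every $\omega\in\Omega_v$, arbitrarily long three-fold (or $k$-fold) repetitions of prefixes in the sequence, which via the Cayley--Hamilton relation for $\mathrm{SL}(2,\R)$ matrices prevents any solution from being square-summable at both ends simultaneously. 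This yields absence of eigenvalues for every $E$ and hence no point spectrum.

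The main obstacle I anticipate is not the soft part (constancy) but making the Gordon argument work uniformly in the \emph{general} primitive invertible case rather than just for Fibonacci. In the Fibonacci situation the repetition structure is transparent from the $F_{k+1}=F_{k-1}+F_k$ recursion; for an arbitrary primitive invertible substitution one must extract the right combinatorial repetitions from the general renormalization $\hat{T}$ and control the norms of the relevant products uniformly in $E$ over the (bounded) spectrum. Invertibility is precisely what guarantees the trace-map formalism and the needed isomorphism on $\langle 0,1\rangle$, so I would lean on it heavily; the technical heart is verifying that the Gordon-type estimate $\|\hat{T}_{q}(E)\|$-comparisons survive when $\mathfrak{p}\neq 1$ makes the one-site matrices non-unimodular, which is exactly why the renormalization $\Theta_n=(\theta_n,a_n\theta_{n-1})$ and the unimodular $T_n$ were introduced. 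I expect the cleanest route is to combine the general-substitution Gordon machinery of \cite{Damanik2000} with the Jacobi trace-map setup of \cite{Yessen2011a}, checking that neither the repetition combinatorics nor the transfer-matrix bounds degrade in the combined regime.
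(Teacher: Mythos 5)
Your part (i) is essentially the paper's Lemma \ref{lem:indep-omega} (minimality of $(\Omega_v,T)$, unitary equivalence along the shift orbit, and semicontinuity of spectra under strong convergence), and your part (iii) matches the paper's Lemma \ref{lem:sc-spectrum}: empty a.c.\ spectrum from zero Lebesgue measure, and empty point spectrum from the combinatorics of Proposition 5.1 of \cite{DamanikLenz2003b} combined with a Gordon-type argument adapted to the Jacobi setting via the unimodular matrices $\hat{T}_n$. The genuine divergence --- and the gap --- is in part (ii). The paper does \emph{not} derive zero Lebesgue measure from the fractal geometry of the laminations; it derives it from Kotani theory. By \cite{Hof1993} the Lyapunov exponent exists for every $E$, the set $A_{v,(\mathfrak{p},\mathfrak{q})}$ where it vanishes has zero Lebesgue measure by Kotani theory (\cite{Kotani1989}, extended to Jacobi matrices in \cite{Remling2011}) and is contained in the spectrum, and Lemma \ref{lem:main} yields the sandwich $\Sigma_{v,(\mathfrak{p},\mathfrak{q})}\subseteq B_{v,(\mathfrak{p},\mathfrak{q})}\subseteq A_{v,(\mathfrak{p},\mathfrak{q})}\subseteq \Sigma_{v,(\mathfrak{p},\mathfrak{q})}$; hence the spectrum equals $A_{v,(\mathfrak{p},\mathfrak{q})}$, has measure zero, is therefore nowhere dense, and has no isolated points because those would be eigenvalues.

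Your proposed route for (ii) is circular as written. To identify the spectrum with the intersection of $l_{(\mathfrak{p},\mathfrak{q})}$ with the center-stable manifolds over $V>0$, you must first discard the type-\textbf{B} points the curve might pick up on $\bigcup_{V<0}S_V$. The result you cite from \cite{Roberts1996} only shows that the four disc components of $S_V$, $V<0$, escape to infinity; the spherical component is invariant under the trace map and consists entirely of bounded orbits, so a priori it could contribute to the spectrum. The paper excludes this precisely by noting that such an intersection would persist for an interval of $V$ and hence put an interval into the spectrum, contradicting the Cantor structure --- i.e.\ the very fact you are trying to establish, which the paper obtains independently through Kotani theory. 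In addition, the step ``each transversal slice is a dynamically defined Cantor set with $\hdim<1$'' presupposes transversality of $l_{(\mathfrak{p},\mathfrak{q})}$ to the center-stable manifolds, which at this stage is known only away from finitely many points and in restricted parameter regimes, and it does not account for the possible spectral point lying on $S_0$ itself. Replacing your part (ii) with the Kotani-theoretic argument closes the proof.
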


\begin{rem}
Remark \ref{rem:beckus} is also in order here.
\end{rem}

Next we give a criterion for points in the spectrum in terms of dynamics of the associated trace map. Given a substitution sequence $v$, let us consider the following periodic approximation for $v$. Given that $v$ is the substitution sequence generated by, say, a substitution $s$ (and let us assume without loss of generality that $s(0)$ begins with $0$), define $v^{(k)}$ to be the infinite periodic sequence of period $|s^k(0)|$, the length of the word $s^k(0)$, constructed by repeating the word $s^k(0)$. Let us call the corresponding operator, with $v^{(k)}$ in place of $v$, $H^{(k)}_{v, (\mathfrak{p}, \mathfrak{q})}$. Notice that as $k\rightarrow\infty$, we have $H^{(k)}_{v,(\mathfrak{p},\mathfrak{q})}\rightarrow H_{v, (\mathfrak{p},\mathfrak{q})}$ strongly. It is natural to ask how the spectra, $\sigma_k$, of $H^{(k)}_{v,(\mathfrak{p}, \mathfrak{q})}$ approximate the spectrum $\sigma$ of $H_{v, (\mathfrak{p},\mathfrak{q})}$. In general, of course, strong convergence does not guarantee convergence of spectra; in our case, however, we have the following
\begin{thm}\label{thm:mei-yessen-2a}
With the above setup, $\sigma_k\rightarrow\sigma$ in the Hausdorff metric as $k\rightarrow\infty$.
\end{thm}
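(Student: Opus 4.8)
The plan is to prove Kuratowski convergence $\liminf_k\sigma_k=\limsup_k\sigma_k=\sigma$. Since $p,q$ are bounded, the operators $H^{(k)}_{v,(\mathfrak p,\mathfrak q)}$ and $H_{v,(\mathfrak p,\mathfrak q)}$ are uniformly bounded, so all spectra lie in a fixed compact interval; on compact subsets of a fixed compact set, Kuratowski convergence of the $\sigma_k$ is equivalent to convergence in the Hausdorff metric, so it suffices to establish the two Kuratowski inclusions. Throughout I write $x_k(E)=\tfrac12\Tr\hat T_{|s^k(0)|}(E)$ for the half-trace of the transfer matrix over one period of $v^{(k)}$. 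By Floquet theory for the periodic operator $H^{(k)}_{v,(\mathfrak p,\mathfrak q)}$ one has $E\in\sigma_k$ iff $\abs{x_k(E)}\le 1$, while by the trace-map criterion (the analogue of Proposition~\ref{prop:jacobi-tmap} established in Section~\ref{sec:proofs}) one has $E\in\sigma$ iff $\mathcal{O}^+_{f_v}(\gamma(E))$ is bounded; the coordinates produced along the orbit $f_v^{\,k}(\gamma(E))$ are exactly the half-traces $x_j(E)$ over the relevant substitution levels.

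First I would establish the outer inclusion $\limsup_k\sigma_k\subseteq\sigma$. Suppose $E\notin\sigma$, so that $\mathcal{O}^+_{f_v}(\gamma(E))$ is unbounded. By the escape lemma for the trace map (the S\"ut\H o--Casdagli dichotomy, valid for every primitive invertible trace map: once two consecutive half-traces exceed $1$ in modulus the orbit escapes monotonically to infinity and never returns to the region where some half-trace is $\le 1$), there is an index $K$ with $\abs{x_K(E)}>1$ and $\abs{x_{K+1}(E)}>1$. Since each $x_j$ is a polynomial in $E$, this configuration persists on a neighborhood $U$ of $E$: for all $E'\in U$ we have $\abs{x_K(E')}>1$ and $\abs{x_{K+1}(E')}>1$, whence the orbit of $\gamma(E')$ also escapes and $\abs{x_k(E')}>1$, i.e. $E'\notin\sigma_k$, for every $k\ge K+1$. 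Thus $U$ is eventually disjoint from the $\sigma_k$, so no point outside $\sigma$ can be a subsequential limit of points of $\sigma_k$, giving $\limsup_k\sigma_k\subseteq\sigma$. Combined with the semicontinuity inclusion $\sigma\subseteq\limsup_k\sigma_k$ coming from \eqref{eq:semicont-spectra}, this yields $\limsup_k\sigma_k=\sigma$, which already forces $\sigma_k\subseteq\sigma+\epsilon$ for all large $k$.

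It remains to prove the inner inclusion $\sigma\subseteq\liminf_k\sigma_k$, and this is the main obstacle. The soft input is $\sigma\subseteq\sigma_k\cup\sigma_{k+1}$, the contrapositive of the escape lemma (if both $\abs{x_k(E)},\abs{x_{k+1}(E)}>1$ then $E\notin\sigma$). Hence if $E\in\sigma$ but $E\notin\sigma_k$, then $E$ lies in a gap $G_k$ of $\sigma_k$ that still meets $\sigma$ (a \emph{transient} gap, as opposed to a genuine gap of $\sigma$), whose endpoints lie in $\partial\sigma_k\subseteq\sigma_k$. To conclude $d(E,\sigma_k)\to 0$ uniformly in $E\in\sigma$ it suffices to show that the lengths of transient gaps of $\sigma_k$ tend to $0$ as $k\to\infty$. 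This is where the hyperbolic structure enters: by Theorem~\ref{thm:mei1} and its Jacobi counterpart, the non-escaping set of $f_v$ on $S_V$ is a smooth lamination with uniformly expanding transverse dynamics on the relevant trapping region, so that along $\gamma$ the preimages under the iterated trace map of the escape region --- which is precisely what carves out the bands and transient gaps at level $k$ --- have diameters decaying geometrically in $k$. A uniform bound $\abs{G_k}\le C\lambda^{-k}$ with $\lambda>1$ then gives $\sup_{E\in\sigma}d(E,\sigma_k)\to 0$, i.e. $\sigma\subseteq\liminf_k\sigma_k$.

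The delicate point in the last step is securing the expansion estimate \emph{uniformly} over the whole spectrum and for all levels at once: transversality of the line of initial conditions $\gamma$ to the lamination may fail at finitely many energies (exactly the phenomenon flagged for the Jacobi model in Sections~\ref{sec:generalization-yessen} and~\ref{sec:main}), and near such energies the naive transverse-expansion argument degenerates. I would treat these exceptional points separately, using that the tangential contact is isolated and of finite order, so the induced distortion affects only finitely many bands at each level and does not obstruct the geometric decay of transient-gap lengths, while away from them the hyperbolicity furnished by the dynamically defined Cantor set structure delivers the required uniform bound directly. Assembling the two Kuratowski inclusions then yields $\sigma_k\to\sigma$ in the Hausdorff metric.
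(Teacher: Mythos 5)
Your reduction to two Kuratowski inclusions is fine, and your outer inclusion is essentially the paper's first step (there it is carried out via the fact, from \cite{Roberts1996}, that every point of the curve of initial conditions that is not type-\textbf{B} escapes to infinity in every coordinate, together with closedness of the set of type-\textbf{B} points). The genuine problems are in your inner inclusion. First, both of your dynamical inputs rest on the dichotomy ``$\abs{x_k(E)}>1$ and $\abs{x_{k+1}(E)}>1$ forces escape,'' hence $\sigma\subseteq\sigma_k\cup\sigma_{k+1}$. That is S\"ut\H{o}'s lemma for the \emph{Fibonacci} trace map \cite{Suto1987}; for a general primitive invertible substitution the trace map is a composition of the elementary maps $t_a=U^a\circ P$, and no two-consecutive-level escape criterion is proved in this paper or in its references. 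The escape theory available in this generality (\cite{Roberts1996}, and Lemma \ref{lem:main-1} following \cite{Damanik2000}) is phrased through the region $\mathcal{R}=\set{(x,y,z):\abs{x},\abs{y},\abs{z}>1}$ and eventual-escape sets, not through pairs of consecutive first coordinates, so your ``soft input'' is an unproved claim exactly in the generality the theorem requires.

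Second, and more fundamentally, your key quantitative step --- the uniform bound $\abs{G_k}\le C\lambda^{-k}$ on transient gaps --- is justified by citing the uniformly expanding transverse dynamics of the stable lamination on an invariant surface (Theorem \ref{thm:mei1}). But that theorem concerns curves lying \emph{inside} a single surface $S_V$, whereas for the Jacobi model the curve of initial conditions lies on no invariant surface: by \eqref{eq:Ici} and \eqref{eq:fv-de} it crosses every $S_V$ exactly once, including $S_0$, and the spectrum can contain the energy where $I\circ l_{(\mathfrak{p},\mathfrak{q})}=0$ (an endpoint of the spectrum, cf.\ Theorem \ref{thm:mei-yessen-7}(2)(b)). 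Near that energy the relevant invariant objects are the center-stable manifolds in $\mathcal{M}=\bigcup_{V>0}S_V$, which are quadratically tangent to $S_0$ (Lemma \ref{lem:qinter}), and no single transverse expansion rate $\lambda>1$ valid over the whole spectrum can be quoted off the shelf; this degeneration comes from the $S_0$-crossing, not from the finitely many possible tangencies of $l_{(\mathfrak{p},\mathfrak{q})}$ to leaves, so your patch for exceptional energies does not address it. Making your gap-length strategy rigorous would essentially mean redoing the partially hyperbolic estimates of \cite{Yessen2011,Yessen2011a} with uniform constants. The paper's own proof avoids any rate estimate: after reducing to $B^{(k)}_{v,(\mathfrak{p},\mathfrak{q})}\rightarrow B_{v,(\mathfrak{p},\mathfrak{q})}$ via Lemma \ref{lem:main}, the inner approximation follows from two purely qualitative facts --- periodic points never lie in $\mathcal{R}$ (\cite[Corollary 4.1]{Roberts1996}), and the stable manifolds of the eight periodic orbits obtained by continuing the singularities (the period-two and period-six curves) form a \emph{dense} sublamination of the stable lamination \cite{Mei2013}, valid for all $V>0$ --- so that by density and compactness $\sigma_n$ meets every $\epsilon$-ball centered on $\sigma$ for all large $n$, with no decay rate needed.
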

\begin{rem}
The previous theorem can be proved without using the trace map formalism, but only spectral analytic methods. Indeed, this has been done for the almost Mathieu operator in \cite[Theorem 7.3]{Avron1990}, and also in a more general setting in \cite[Section 4]{Choi1990}. Thus our result is only a special case that should be viewed as a compliment of our trace map methods. Let us also remark that this method of proving continuity of spectra was discovered in \cite{Yessen2012a} while investigating the zero set of the partition function in the thermodynamic limit of the one-dimensional Fibonacci Ising model, where, at least at the first glance, spectral methods were not applicable. Interestingly, a duality between the 1D classical Ising models and a class of unitary operators, the so-called CMV matrices, was established later in \cite{Damanik2013b}, which now allows for applications of spectral methods.
\end{rem}
Let us return to the density of states. To indicate dependence on the primitive invertible substitution sequence $v$ and on the parameters $(\mathfrak{p},\mathfrak{q})$, let us write $N_{v, (\mathfrak{p},\mathfrak{q})}$. Observe that by construction, the density of states is nondecreasing and is constant on the gaps of the spectrum. Since the spectrum is a Cantor set, it has countably many gaps, and these gaps are ``detected'' by the density of states as regions on which $N_{v, (\mathfrak{p},\mathfrak{q})}$ is constant. Thus the gaps may be labeled by the values of $N_{v, (\mathfrak{p},\mathfrak{q})}$ that remain constant on a sufficiently small neighborhood. In \cite{Bellissard1992}, a general gap labeling scheme was derived for Schr\"odinger operators whose potential is given by a rotation sequence. In our case, it can be written down as follows. Suppose $v$ is a primitive invertible substitution sequence, and $\alpha_v\in (0,1)$ is an irrational number such that $v$ can be obtained by sampling an orbit under the irrational circle rotation by $\alpha_v$ as in \eqref{eq:sampling}.
Then, letting $\set{x}$ denote $x\mod 1$, we have
\begin{align}
 \set{N_{v, (\mathfrak{p},\mathfrak{q})}(E): E\in\R\setminus\Sigma_{v,(\mathfrak{p},\mathfrak{q})}}\subseteq \set{\set{m\alpha_v}: m\in\Z}.
\end{align}
\begin{thm}\label{thm:mei-yessen-3}
 For every primitive invertible substitution sequence $v$ there exists $\delta > 0$ such that for all $(\mathfrak{p},\mathfrak{q})$ with $0<\norm{(\mathfrak{p},\mathfrak{q}) - (1,0)} < \delta$, all gaps allowed by the gap labeling theorem are open in $\Sigma_{v, (\mathfrak{p},\mathfrak{q})}$; that is, we have
\begin{align*}
\set{N_{v, (\mathfrak{p},\mathfrak{q})}(E): E\in\R\setminus\Sigma_{v,(\mathfrak{p},\mathfrak{q})}}= \set{\set{m\alpha_v}: m\in\Z}.
\end{align*}
\end{thm}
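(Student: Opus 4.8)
The plan is to prove that all gaps allowed by the gap labeling theorem are open by combining the trace map dynamics with the thickness machinery and the structure of the lamination $\mathcal{L}$ established in the previous sections. The overarching strategy mirrors the Schrödinger approach of Damanik--Gorodetski (Theorem \ref{thm:gap-opening}) and its substitution extension (Theorem \ref{thm:mei3}): the gap labeling inclusion $\set{N_{v,(\mathfrak{p},\mathfrak{q})}(E)}\subseteq\set{\set{m\alpha_v}}$ is already known, so the task is to show that for each $m\in\Z$ the candidate value $\set{m\alpha_v}$ is actually \emph{attained} as the density of states value on a genuine (nondegenerate) gap of $\Sigma_{v,(\mathfrak{p},\mathfrak{q})}$, for $(\mathfrak{p},\mathfrak{q})$ sufficiently close to $(1,0)$. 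The natural route is a continuity/perturbation argument anchored at the free-Laplacian-like base point $(\mathfrak{p},\mathfrak{q})=(1,0)$, where the model degenerates to the Fibonacci or periodic situation with known gap structure.

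First I would set up the periodic approximants $H^{(k)}_{v,(\mathfrak{p},\mathfrak{q})}$ with spectra $\sigma_k$, and invoke Theorem \ref{thm:mei-yessen-2a} so that $\sigma_k\to\Sigma_{v,(\mathfrak{p},\mathfrak{q})}$ in the Hausdorff metric. Each $\sigma_k$ is a finite union of bands separated by gaps, and the Floquet/transfer-matrix analysis identifies which gaps are open at finite level; the density of states of the approximant takes the predicted rational values $\set{m\alpha_v^{(k)}}$ at these gaps, where $\alpha_v^{(k)}$ is the rational approximant to $\alpha_v$ coming from the continued fraction expansion. The key analytic input is a uniform \emph{lower bound on gap sizes}: I would show that for a fixed label $m$, the corresponding finite-level gap has length bounded below uniformly in $k$, so that it persists in the Hausdorff limit and does not collapse to a point. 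This is where the smallness of $\norm{(\mathfrak{p},\mathfrak{q})-(1,0)}$ enters: near the base point the trace map orbit of the line of initial conditions $\gamma(E)$ stays in a controlled region of the lamination $\mathcal{L}$, and transversality of $\gamma$ to $\mathcal{L}$ (proved in the $(\mathfrak{p},\mathfrak{q})$-near-$(1,0)$ regime in the cited work) gives quantitative control on how the stable leaves separate energies, hence on gap widths.

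The second main ingredient is to rule out that two distinct labels coalesce or that a predicted gap is a \emph{phantom} (closed) gap. For this I would use the thickness estimates: by the analogue of Theorem \ref{thm:thickness-schro}, as $(\mathfrak{p},\mathfrak{q})\to(1,0)$ the thickness $\tau(\Sigma_{(\mathfrak{p},\mathfrak{q})})\to\infty$, which forces the Cantor set to be ``fat'' and its gaps to be isolated with well-separated endpoints depending smoothly (via the transversal intersection of $\gamma$ with the smooth lamination $\mathcal{L}$) on the parameters. Combining the smooth dependence of gap boundary points on $(\mathfrak{p},\mathfrak{q})$ with the fact that at finitely many earlier levels all labeling values up to a given $|m|$ already appear as open gaps, a continuity argument then shows these gaps remain open on a whole neighborhood of $(1,0)$; a diagonal/compactness argument over the labels $m$, together with the density of the rational labels $\set{m\alpha_v^{(k)}}$ approaching $\set{m\alpha_v}$, upgrades this to: \emph{every} label is realized. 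Thus the inclusion in the gap labeling becomes an equality, giving the claimed $\set{N_{v,(\mathfrak{p},\mathfrak{q})}(E)}=\set{\set{m\alpha_v}}$.

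I expect the main obstacle to be establishing the uniform-in-$k$ lower bound on the size of each labeled gap while simultaneously controlling \emph{all} gaps uniformly enough to pass to the Hausdorff limit without losing any predicted gap. Concretely, the difficulty is quantitative: one must track how a single finite-level gap, identified by its label $m$, evolves as $k\to\infty$ and show it converges to a nondegenerate limiting gap rather than shrinking to a point or merging with a neighbor. This requires the transversality of $\gamma$ to $\mathcal{L}$ to be genuinely \emph{uniform} over the relevant portion of the energy axis and over the parameter neighborhood, which is exactly the place where the Jacobi case is more delicate than the Schrödinger case (the line of initial conditions \eqref{eq:gamma-fib-jacobi} no longer lies on a single surface $S_V$, and $I(\gamma(E))$ is $E$-dependent, so the argument must be carried out leaf-by-leaf across the foliated region $\mathcal{M}=\bigcup_{V>0}S_V$). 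I anticipate that the restriction to $(\mathfrak{p},\mathfrak{q})$ near $(1,0)$ is precisely what makes this uniform transversality available, and that the bulk of the proof is a careful bookkeeping argument converting the qualitative lamination picture into the needed quantitative gap-persistence estimates.
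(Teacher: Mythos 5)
Your proposal takes a genuinely different route from the paper, and as written it has gaps that I do not see how to close without importing the paper's actual mechanism.

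The paper's proof is short and dynamical: it first proves Lemma \ref{lem:qinter} (the center-stable manifold containing the curve of period-two periodic points through the singularity $P_1=(1,1,1)$ is \emph{quadratically} tangent to the Cayley cubic $S_0$) and Proposition \ref{prop:transversality} (for $0<\norm{(\mathfrak{p},\mathfrak{q})-(1,0)}<\delta$ the curve of initial conditions $l_{(\mathfrak{p},\mathfrak{q})}$ meets the center-stable manifolds transversally), and then invokes the argument of \cite[Theorem 1.5]{Damanik2010a} verbatim. The point of that argument is an explicit identification of \emph{where} gaps open: they open precisely at the countably many intersection points of $l_{v,(1,0)}$ with the strong-stable manifold $W^{ss}(P_1)$ in $\mathbb{S}_0$, these intersection points are in bijection with the labels $\set{m\alpha_v}$, and the quadratic tangency plus transversality force the two boundary points of each such gap to separate (at linear rate) as soon as $(\mathfrak{p},\mathfrak{q})\neq(1,0)$. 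That bijection is what converts the gap-labeling inclusion into an equality.

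Your proposal replaces this with periodic approximants, thickness, and a continuity argument, and each leg has a problem. First, the step ``at finitely many earlier levels all labeling values up to a given $|m|$ already appear as open gaps'' is exactly the statement to be proved, transplanted to the approximant level; Floquet theory tells you a period-$q$ operator has at most $q-1$ gaps, not that the one carrying a given label is nondegenerate, and a gap that is open at every finite level can still close in the Hausdorff limit. Second, the thickness input points the wrong way: $\tau(\Sigma_{(\mathfrak{p},\mathfrak{q})})\to\infty$ as $(\mathfrak{p},\mathfrak{q})\to(1,0)$ says the gaps are \emph{small} relative to the neighboring bands; it yields no lower bound on any gap length and cannot ``force'' a predicted gap to be open or its endpoints to be separated. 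Third, the ``uniform-in-$k$ lower bound on the size of each labeled gap'' that you correctly identify as the crux is asserted but not produced: transversality of $l_{(\mathfrak{p},\mathfrak{q})}$ to the lamination controls the Cantor structure of the intersection but does not tell you which leaves bound which gap, nor that the two boundary leaves of a labeled gap are actually distinct. That information lives in the local geometry near the singularities of $S_0$ --- the curves of period-two and period-six points bifurcating from $P_1,\dots,P_4$, their center-stable manifolds, and the quadratic tangency of Lemma \ref{lem:qinter} --- none of which appears in your outline. Supplying it would essentially reproduce the paper's proof.
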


\begin{rem}
We conjecture that the weak-coupling restriction (that is, the $\delta$ in the theorem above) can be removed.
\end{rem}

\begin{thm}\label{thm:mei-yessen-4}
 The function $\hdim(\Sigma_{v, (\mathfrak{p},\mathfrak{q})})$ is continuous in $(\mathfrak{p},\mathfrak{q})$; moreover, $\hdim(\Sigma_{v, (\mathfrak{p}, 0)})$ and $\hdim(\Sigma_{v, (1,\mathfrak{q})})$ are analytic in $\mathfrak{p}$ and $\mathfrak{q}$, respectively, for $\mathfrak{p}\neq 1$ and $\mathfrak{q}\neq 0$. Moreover, for all $(\mathfrak{p}, \mathfrak{q})$ sufficiently close to $(0,1)$, there exist constants $C_1, C_2 > 0$ such that
 \begin{align}\label{eq:y1}
  1 - C_1\norm{(\mathfrak{p}, \mathfrak{q}) - (1, 0)} \leq \hdim(\Sigma_{v, (\mathfrak{p},\mathfrak{q})}) \leq 1.
 \end{align}
\end{thm}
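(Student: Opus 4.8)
The plan is to reduce every assertion to the transversal Hausdorff dimension of the fixed lamination $\mathcal{L} = \bigcup_{V>0}\mathcal{L}_{v,V}$ from \eqref{eq:2d-lamination}, exploiting that this lamination and the trace map $f_v$ generating it depend only on $v$, whereas the dependence on $(\mathfrak{p},\mathfrak{q})$ enters solely through the line of initial conditions $\gamma_{(\mathfrak{p},\mathfrak{q})}$. By Theorem \ref{thm:mei1} and the spectral criterion (the analogue of Proposition \ref{prop:jacobi-tmap}), $\Sigma_{v,(\mathfrak{p},\mathfrak{q})}$ is the set of $E$ for which $\gamma_{(\mathfrak{p},\mathfrak{q})}(E)$ is a type-\textbf{B} point, and wherever $\gamma$ crosses a surface $S_V$ transversally to $\mathcal{L}_{v,V}$ the intersection is a dynamically defined Cantor set whose dimension equals the transversal dimension $D_v(V)$ of $\mathcal{L}_{v,V}$ in $S_V$. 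Since $\gamma_{(\mathfrak{p},\mathfrak{q})}$ meets $\mathcal{L}$ transversally off a finite set (as for the Fibonacci case, carried over via Theorem \ref{thm:mei1}), the global dimension is
\[
\hdim(\Sigma_{v,(\mathfrak{p},\mathfrak{q})}) = \sup_{E}\, D_v\bigl(I(\gamma_{(\mathfrak{p},\mathfrak{q})}(E))\bigr),
\]
the supremum taken over $E$ in the spectral range with $I(\gamma(E))>0$; the finitely many tangencies do not affect this, by the distortion argument of \cite[Proof of Theorem 2.1(i)]{Yessen2011a}. The first task is therefore to record that $D_v(\cdot)$ is well defined on $(0,\infty)$ and is continuous, indeed real-analytic, in $V$: this follows by realizing $\mathcal{L}_{v,V}$ as the stable lamination of an analytic-in-$V$ family of horseshoes for $f_v$ restricted to $S_V$, and invoking regularity of the Bowen dimension via \cite{Mane1990, Pollicott200x} (as in Theorems \ref{thm:schro-spect-haus-cont} and \ref{thm:idos-analiticity}).

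For the analyticity on the two slices I would compute the Fricke--Vogt invariant directly. With $\gamma$ the Jacobi line of initial conditions, one has $I(\gamma(E)) = \frac{(\mathfrak{p}^2-1)^2}{4\mathfrak{p}^2}$ when $\mathfrak{q}=0$ and $I(\gamma(E)) = \frac{\mathfrak{q}^2}{4}$ when $\mathfrak{p}=1$; in both cases the invariant is \emph{independent of} $E$. Consequently the whole spectral part of $\gamma$ lies on a single surface $S_{V(\mathfrak{p})}$ (resp. $S_{V(\mathfrak{q})}$), so $\Sigma_{v,(\mathfrak{p},0)}$ and $\Sigma_{v,(1,\mathfrak{q})}$ are single dynamically defined Cantor sets of dimension $D_v(V(\mathfrak{p}))$ and $D_v(V(\mathfrak{q}))$. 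Analyticity of $\hdim$ in $\mathfrak{p}$ (for $\mathfrak{p}\neq 1$) and in $\mathfrak{q}$ (for $\mathfrak{q}\neq 0$) then follows by composing the analytic dependence of $D_v$ on $V$ with the analytic, nonconstant dependence of $V$ on the parameter.

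For continuity on the full two-parameter family I would use that $\gamma_{(\mathfrak{p},\mathfrak{q})}$ depends analytically on $(\mathfrak{p},\mathfrak{q})$ and that the displayed formula expresses $\hdim$ as a supremum of the continuous composite $E \mapsto D_v(I(\gamma_{(\mathfrak{p},\mathfrak{q})}(E)))$ over a parameter-continuous range of $E$. Continuity of $D_v$ in $V$, together with the fact that away from the finitely many tangencies the local dimension is read off transversally and is stable under $C^1$ perturbation of the curve, yields joint continuity of $\hdim(\Sigma_{v,(\mathfrak{p},\mathfrak{q})})$; the tangencies are again absorbed by the distortion estimate. I expect this step, rather than the slice analyticity, to be the main obstacle, precisely because when $\mathfrak{p}\neq 1$ and $\mathfrak{q}\neq 0$ the invariant becomes $E$-dependent, the curve sweeps across a whole interval of surfaces $S_V$, and one must simultaneously control the motion of the transversal intersection and the emergence or disappearance of tangencies as the parameters vary.

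Finally, for the weak-coupling bound the upper estimate $\hdim(\Sigma_{v,(\mathfrak{p},\mathfrak{q})})\le 1$ is immediate since $\Sigma_{v,(\mathfrak{p},\mathfrak{q})}\subset\R$. For the lower estimate, writing $\mathfrak{p}=1+\epsilon_1$ and $\mathfrak{q}=\epsilon_2$, one checks from the formula for $I(\gamma(E))$ that $I(\gamma_{(\mathfrak{p},\mathfrak{q})}(E)) = O(\norm{(\mathfrak{p},\mathfrak{q})-(1,0)}^{2})$ uniformly over the (uniformly bounded) spectral range of $E$. Since $D_v$ is nonincreasing in $V$ and satisfies $D_v(V)\ge 1 - CV$ for small $V>0$ — the extension of Theorem \ref{thm:schro-spect-haus-cont-v0} to the primitive invertible case, proved through thickness and denseness estimates as in Theorem \ref{thm:mei2} — evaluating the supremum at the smallest positive value $V_{\min}$ of the invariant along the transversal part of $\gamma$ gives
\[
\hdim(\Sigma_{v,(\mathfrak{p},\mathfrak{q})}) \ge D_v(V_{\min}) \ge 1 - C V_{\min} \ge 1 - C_1\norm{(\mathfrak{p},\mathfrak{q})-(1,0)},
\]
which is \eqref{eq:y1}. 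The minimizing point can be taken transversal since tangencies are finite in number; no upper bound of the form $1 - C_2\norm{(\mathfrak{p},\mathfrak{q})-(1,0)}$ holds here, in contrast to the Schr\"odinger case, because along the moving curve the local dimension may reach $1$ (cf. Theorem \ref{thm:yessen1}(2)(b)).
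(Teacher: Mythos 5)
Your proposal follows essentially the same route as the paper's proof: continuity and the slice analyticity are reduced, via the transversal (horseshoe) dimension of the lamination and its Ma\~{n}e--Pollicott regularity in $V$, to the arguments of \cite[Theorem 2.3 (iii), (iv)]{Yessen2011a}, and the weak-coupling bound \eqref{eq:y1} is obtained exactly as in the paper, from the identity $\lhdim(\Sigma_{v,(\mathfrak{p},\mathfrak{q})},\alpha)=\tfrac12\hdim(\Omega_V)$ at $V=I(l_{(\mathfrak{p},\mathfrak{q})}(\alpha))$, the Damanik--Gorodetski estimate $1-C_1V\le\tfrac12\hdim(\Omega_V)$ for small $V>0$, and the smallness of the Fricke--Vogt invariant along the spectral part of the curve for $(\mathfrak{p},\mathfrak{q})$ near $(1,0)$. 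One caveat: your clause ``since $D_v$ is nonincreasing in $V$'' asserts what is in fact an open conjecture (cf.\ the remark following the theorem on monotonicity of the dimension in the coupling), but it is also superfluous---the supremum defining $\hdim(\Sigma_{v,(\mathfrak{p},\mathfrak{q})})$ dominates $D_v(V)$ at \emph{any} transversally attained value of the invariant, all of which are $O(\norm{(\mathfrak{p},\mathfrak{q})-(1,0)})$ here---so deleting that clause leaves your argument intact.
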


\begin{rem}
 We conjecture analyticity in $(\mathfrak{p},\mathfrak{q})$ (away from $(1,0)$), rather than only continuity. Another conjecture states that the Hausdorff dimension of the spectrum of the Schr\"odinger operator with Fibonacci potential is a monotone function of the coupling constant. To see how our analyticity conjecture follows from the monotonicity conjecture, see \cite[Section 4]{Yessen2011a}. Notice that $\mathfrak{p} = 1$ corresponds to the Schr\"odinger case, and $\mathfrak{q} = 0$ to the off-diagonal case studied in \cite[Appendix A]{Damanik2010a} and \cite{Dahl2010}. Also notice the difference in the bound on the right from that in Theorems \ref{thm:schro-spect-haus-cont-v0} and \ref{thm:mei2}. Indeed, in the theorem above the bound on the right is sharp (see Theorem \ref{thm:mei-yessen-7} below).
\end{rem}

\begin{thm}\label{thm:mei-yessen-5}
 Let $\alpha(t) = (\mathfrak{p}(t), \mathfrak{q}(t)): [0, 1]\rightarrow \R^2$ be a regular curve, such that for all $t$, $\mathfrak{p}(t)\neq 0$, $\alpha(0) = (1,0)$, and for all $t\in (0, 1]$, $\alpha(t)\neq (1,0)$. Then there exists $\delta > 0$ such that if for all $t\in [0, 1]$, $\norm{\alpha(t) - \alpha(0)} \leq \delta$, then the following holds.

 Fix a gap of $\Sigma_{v, \alpha(1)}$ and call it $U_{\alpha(1)}$. Continue this gap along $\alpha$, obtaining a family of gaps $U_{\alpha(t)}$, $t\in (0, 1]$. Then the boundary points of $U_{\alpha(t)}$ as a function of $t$ are of the same smoothness class as $\alpha(t)$, and
 \begin{align*}
  \lim_{t\rightarrow 0}\frac{\abs{U_{\alpha(t)}}}{\norm{\alpha(t) - (1,0)}}\hspace{2mm}\text{ exists and belongs to}\hspace{2mm} (0, \infty),
 \end{align*}
 where $\norm{\cdot}$ is the standard norm on $\R^2$.
\end{thm}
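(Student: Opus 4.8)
The plan is to reduce the statement to an implicit-function-theorem analysis of the intersection of the parametrized line of initial conditions with the two boundary leaves of the two-dimensional lamination $\mathcal{L}$ from \eqref{eq:2d-lamination}, and then to extract the leading-order asymptotics from the geometry of $\mathcal{L}$ near the singular fiber $S_0$. First I would invoke the trace-map criterion for the Jacobi operator (the general version of Proposition \ref{prop:jacobi-tmap} proved in Section \ref{sec:proofs}): $E\in\Sigma_{v,\alpha(t)}$ if and only if $\gamma_{\alpha(t)}(E)\in\mathcal{L}$, where $\gamma_{\alpha(t)}$ is the line of initial conditions. For $\norm{\alpha(t)-(1,0)}\leq\delta$ the map $E\mapsto\gamma_{\alpha(t)}(E)$ is transversal to $\mathcal{L}$ (the transversality near $(1,0)$ from \cite{Yessen2011a}), so $\Sigma_{v,\alpha(t)}$ is dynamically defined and its gaps correspond, through gap labeling (Theorem \ref{thm:mei-yessen-3}), to the complementary leaves along $\gamma$. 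Fixing $U_{\alpha(1)}$ and reading off its label $m$, I let $P_m$ be the associated periodic orbit of the three-dimensional trace map $f_v$; its one-dimensional stable manifolds inside the fibers $S_V$ assemble over $V>0$ into the two boundary leaves $W^{\pm}$ of the gap.

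For the smoothness claim I would argue as follows. Since $f_v$ is a polynomial automorphism that is uniformly hyperbolic on the trapped region of each $S_V$, $V>0$ (Theorem \ref{thm:mei1}), the periodic points $P_m(V)$ and the leaves $W^{\pm}$ depend analytically on $V$ and persist up to $S_0$. The edges $E^{\pm}(t)$ of $U_{\alpha(t)}$ solve $G^{\pm}(E,t)=0$, where $G^{\pm}$ is a local defining function of $W^{\pm}$ precomposed with $(E,t)\mapsto\gamma_{\alpha(t)}(E)$. Transversality gives $\partial_E G^{\pm}\neq 0$ at the solution, so the implicit function theorem produces $E^{\pm}(t)$ of the same smoothness class as $t\mapsto\alpha(t)$, the leaves themselves being analytic; this is the first conclusion of the theorem.

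The core of the proof, and the main obstacle, is the leading-order asymptotics of $\abs{U_{\alpha(t)}}=E^{+}(t)-E^{-}(t)$ as $t\downarrow 0$. At $t=0$ one has $\alpha(0)=(1,0)$, the spectrum is the single interval $[-2,2]$, the two boundary leaves coincide on the singular fiber, and the gap closes. The rate is governed by the Fricke-Vogt value along $\gamma$ from \eqref{eq:fv-invariant}: expanding $I(\gamma_{\alpha(t)}(E))$ at $(1,0)$ (the expansion being as in the formula preceding Theorem \ref{thm:yessen1}, and of the same quadratic type in general) gives $I(\gamma_{\alpha(t)}(E))=C(E)\,t^2+o(t^2)$ with $C(E)=\tfrac14\bigl(4a^2-2abE+b^2\bigr)$, where $(a,b)=\alpha'(0)\neq(0,0)$; thus the surface parameter through which the gap opens scales like $\norm{\alpha(t)-(1,0)}^2$. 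I would then localize near the conic singularity of $S_0$ through which $P_m$ passes. There $I$ has a Morse normal form of signature $(2,1)$ (for instance at $(1,1,1)$ the Hessian of $I$ has eigenvalues $-1,2,2$), so the fibers $S_V$ are one-sheeted hyperboloids whose throat has size $\asymp\sqrt V$; consequently the separation of the two boundary leaves $W^{\pm}$ transverse to the fibers is $\asymp\sqrt V$. Composing with the transversal crossing of $\gamma_{\alpha(t)}$ through both the fibration $\{S_V\}$ and the leaves $W^{\pm}$ converts this into an $E$-separation $\asymp\sqrt V\asymp\norm{\alpha(t)-(1,0)}$, and bookkeeping the two linearizations yields $\lim_{t\to 0}\abs{U_{\alpha(t)}}/\norm{\alpha(t)-(1,0)}=c_m\in(0,\infty)$.

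The hard part is exactly this last bookkeeping, and it is where the Jacobi case genuinely departs from the Schr\"odinger one. In Theorems \ref{thm:gap-size} and \ref{thm:mei4} the line $\gamma$ lies in a single fiber, so only the in-fiber throat geometry enters; here $\gamma_{(\mathfrak{p},\mathfrak{q})}$ meets each $S_V$ in a single point and therefore crosses the fibration transversally, so the $E$-motion moves simultaneously along and across the fibers and these contributions must be disentangled to see that the limiting ratio is finite and nonzero rather than degenerate. Two points require the most care: verifying $C(E^0_m)>0$ at the collapse location $E^0_m\in[-2,2]$, so that the gap opens into the $V>0$ region (the portion of $\gamma$ in $\cup_{V<0}S_V$ carries no type-\textbf{B} points, by the consequence of \cite{Roberts1996} noted above), and checking that transversality to $\mathcal{L}$ remains uniform down to $S_0$; the weak-coupling restriction $\delta$ is precisely what secures both.
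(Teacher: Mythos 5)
Your skeleton matches the paper's: smoothness of the gap edges via transversal intersection of the (smooth) curve of initial conditions with the (analytic) center-stable manifolds, and the rate $\abs{U_{\alpha(t)}}\asymp\norm{\alpha(t)-(1,0)}$ traced back to $I\circ l_{\alpha(t)}\asymp t^2$ together with a $\sqrt{V}$-separation of the two gap-boundary leaves. The smoothness half is fine. The asymptotic half, however, has a genuine gap, and it sits exactly where you yourself flag ``the last bookkeeping'': you never carry it out, and the geometric input you do supply is localized at the wrong point. A gap of $\Sigma_{v,(1,0)}=[-2,2]$ opens at a point $p$ where $l_{v,(1,0)}$ meets the strong-stable manifold $W^{ss}(P_1)$ inside $\mathbb{S}_0$; such $p$ is a \emph{generic} point of $W^{ss}(P_1)$, not the conic singularity itself, and near $p$ the surfaces $S_V$ are smooth and mutually parallel --- there is no hyperboloid throat there. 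The $\sqrt{V}$ scaling of the leaf separation at $p$ comes instead from the fact that the center-stable manifold through the curve of period-two points is \emph{quadratically tangent to $S_0$ along $W^{ss}(P_1)$} (the paper isolates this as Lemma \ref{lem:qinter} and proves it by differentiating $I$ twice along the curve of periodic points and checking the tangent direction lies outside the null cone of the resulting quadratic form); your Hessian computation of $I$ at $(1,1,1)$ is a plausibility argument for this but does not establish it at $p$, and transporting it from the singularity to $p$ is precisely the content of the cited \cite[Lemma 3.2]{Damanik2010} and \cite[Lemma 6.2]{Mei2013}. Moreover those lemmas are what give the \emph{existence of the limit}; your argument, even granting the quadratic tangency, only yields two-sided bounds $\abs{U_{\alpha(t)}}\asymp\norm{\alpha(t)-(1,0)}$, whereas the theorem asserts convergence of the ratio.

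The second unexecuted step is the disentangling of the motion along and across the fibers, which you correctly identify as the new Jacobi-specific difficulty but then leave as an assertion. The paper's device is concrete: project a neighborhood of $p$ onto a plane $\Gamma$ normal to $\mathbb{S}_0$, rectify the traces $\Gamma\cap S_V$, and observe from \eqref{eq:fv-de} that the projected segment of $l_{\alpha(t)}$ between the two gap edges is a graph whose slope $\abs{\mathfrak{u}'_t}\to 0$ as $t\to 0$ (because $\partial_E(I\circ l_{(\mathfrak{p},\mathfrak{q})})=\mathfrak{q}(\mathfrak{p}^2-1)/4\mathfrak{p}^2=O(t^2)$). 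Note this means the crossing of the fibration $\{S_V\}$ is \emph{not} uniformly transversal as $t\to 0$ --- the curve becomes asymptotically tangent to the fibers --- which is what lets the $E$-length of the gap be squeezed between $C_1(t)d_l(t)$ and $C_2(t)d_r(t)$ with $C_i(t)\to 1$, reducing the Jacobi computation to the single-fiber one. Your appeal to ``the transversal crossing of $\gamma_{\alpha(t)}$ through the fibration'' points in the opposite direction and would not produce this reduction. To repair the proof you would need to (i) prove or invoke the quadratic tangency at the actual gap-opening point, (ii) quote or reprove the asymptotic $d(V)=c\sqrt{V}\,(1+o(1))$ of the two branches of the center-stable manifold inside $S_V$ near $p$, and (iii) perform the rectified-projection comparison showing the $E$-gap length tracks that distance up to factors tending to $1$.
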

\begin{rem}
The limit in the theorem above may depend on the gap, but does not depend on the choice of $\alpha(t)$.
\end{rem}

\begin{thm}\label{thm:mei-yessen-7}
 For any primitive invertible substitution sequence $v$ and all $(\mathfrak{p},\mathfrak{q})\neq (1,0)$, $\Sigma_{v, (\mathfrak{p},\mathfrak{q})}$ is a multifractal; more precisely, the following statements hold.
 \begin{enumerate}[(1)]
 \setlength{\itemsep}{2pt}

  \item $\lhdim(\Sigma_{v, (\mathfrak{p},\mathfrak{q})}, a)$, as a function of $a\in \Sigma_{v,(\mathfrak{p},\mathfrak{q})}$, is continuous; it is constant if $\mathfrak{p} = 1$ or $\mathfrak{q} = 1$, and non-constant otherwise.

  \item There exists a nonempty set $\mathfrak{N}\subset\R^2$ of Lebesgue measure zero, such that the following holds.
  \begin{enumerate}[(a)]
  \setlength{\itemsep}{2pt}

   \item For all $(\mathfrak{p},\mathfrak{q})\notin\mathfrak{N}$, we have $0 < \lhdim(\Sigma_{v, (\mathfrak{p},\mathfrak{q})}, a) < 1$ for all $a\in \Sigma_{v, (\mathfrak{p},\mathfrak{q})}$; hence we have $0 < \hdim(\Sigma_{v, (\mathfrak{p},\mathfrak{q})}) < 1$.

   \item For $(\mathfrak{p},\mathfrak{q})\in\mathfrak{N}$, $0 < \lhdim(\Sigma_{v, (\mathfrak{p},\mathfrak{q})}, a) < 1$ for all $a\in \Sigma_{v, (\mathfrak{p},\mathfrak{q})}$ away from the lower and upper boundary points of the spectrum, and $\hdim(\Sigma_{v, (\mathfrak{p},\mathfrak{q})}) = 1$. In fact, the dimension accumulates at one of the endpoints of the spectrum.
  \end{enumerate}
 \end{enumerate}
\end{thm}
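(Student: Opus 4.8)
The plan is to reduce the spectral problem to the dynamics of the trace map $f_v$ associated with $v$, and then to transport the two-dimensional lamination machinery developed for the Fibonacci Jacobi operator in \cite{Yessen2011, Yessen2011a} to the general primitive invertible setting by means of Theorem \ref{thm:mei1}. First I would establish the analogue of Proposition \ref{prop:jacobi-tmap}: there is a line of initial conditions $\gamma = \gamma_{v,(\mathfrak{p},\mathfrak{q})}$ and a polynomial diffeomorphism $f_v$ preserving the Fricke--Vogt invariant \eqref{eq:fv-invariant} such that $E\in\Sigma_{v,(\mathfrak{p},\mathfrak{q})}$ if and only if $\gamma(E)$ is a type-\textbf{B} point of $f_v$ (the line and the map are constructed in Section \ref{sec:proofs}). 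As in the Fibonacci case, $I(\gamma(E))$ is energy-dependent precisely when $\mathfrak{p}\neq 1$ and $\mathfrak{q}\neq 0$, so the curve $E\mapsto\gamma(E)$ sweeps across a one-parameter family of invariant surfaces $S_V$.

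Next I would assemble the leaves $\mathcal{L}_{v,V}$ from Theorem \ref{thm:mei1} into a two-dimensional lamination $\mathcal{L}=\bigcup_{V>0}\mathcal{L}_{v,V}$ with smooth leaves inside $\mathcal{M}=\bigcup_{V>0}S_V$, as in \eqref{eq:2d-lamination} and \eqref{eq:3d-manifold} but now for an arbitrary primitive invertible $v$; the smoothness and regularity of $\mathcal{L}$ follow from the partial hyperbolicity of $f_v$ on its nonwandering set, exactly as for the Fibonacci trace map in \cite{Yessen2011}. I would dispose of the part of $\gamma$ lying in $\bigcup_{V<0}S_V$ using \cite{Roberts1996}, which excludes type-\textbf{B} points there, and show, following \cite{Yessen2011a}, that $\gamma$ meets $\mathcal{L}$ transversally at all but finitely many energies. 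By Theorem \ref{thm:mei1}, near each transversal energy $a$ the spectrum is locally a dynamically defined Cantor set, so $\lhdim(\Sigma_{v,(\mathfrak{p},\mathfrak{q})}, a)$ equals the transversal Hausdorff dimension $d_t(V)$ of the leaf $\mathcal{L}_{v,V}$ through $\gamma(a)$, where $V=I(\gamma(a))$.

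The multifractal statements then follow from the behavior of $d_t(V)$. Since $V\mapsto d_t(V)$ is continuous, indeed analytic for $V>0$ (cf.\ Theorem \ref{thm:mei-yessen-4} and the Ma\~{n}e--Pollicott results discussed after Theorem \ref{thm:schro-spect-haus-cont}), and $a\mapsto I(\gamma(a))$ is continuous, $\lhdim(\Sigma_{v,(\mathfrak{p},\mathfrak{q})}, a)$ is continuous in $a$; the finitely many non-transversal energies introduce no essential distortion, as in the remark following Theorem \ref{thm:schro-spect-haus-cont}. In the degenerate cases $\mathfrak{p}=1$ (Schr\"odinger) and $\mathfrak{q}=0$ (off-diagonal) the value $I(\gamma(\cdot))$ is constant and so is $\lhdim$; otherwise $I(\gamma(\cdot))$ is a non-constant affine function of the energy, and since the analytic function $d_t$ has discrete level sets it cannot be constant on the uncountable set $I(\gamma(\Sigma_{v,(\mathfrak{p},\mathfrak{q})}))$, whence $\lhdim$ is non-constant, giving part (1). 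For part (2)(a), $d_t(V)\in(0,1)$ for every $V>0$ by Theorem \ref{thm:schro-spect-haus-cont} and its extension Theorem \ref{thm:mei2}, so $0<\lhdim<1$ along $\Sigma_{v,(\mathfrak{p},\mathfrak{q})}$ whenever the range of $I(\gamma(\cdot))$ is bounded away from $0$. For part (2)(b), the exceptional set $\mathfrak{N}$ consists of those $(\mathfrak{p},\mathfrak{q})$ for which $\gamma$ crosses the singular surface $S_0$ at an endpoint of $\Sigma_{v,(\mathfrak{p},\mathfrak{q})}$; there $d_t(V)\to 1$ as $V\to 0^+$ (Theorems \ref{thm:schro-spect-haus-cont-v0} and \ref{thm:mei2}), forcing $\hdim(\Sigma_{v,(\mathfrak{p},\mathfrak{q})})=1$ while the local dimension stays below $1$ in the interior.

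The hard part will be two-fold. First, one must upgrade the smooth two-dimensional lamination and the transversality of the line of initial conditions from the Fibonacci-specific analysis of \cite{Yessen2011, Yessen2011a} to an arbitrary primitive invertible substitution, which requires controlling the hyperbolic structure of $f_v$ on $\mathcal{M}$ uniformly in $V$ on top of the single-surface lamination of Theorem \ref{thm:mei1}. Second, the endpoint analysis near the singular surface $S_0$ --- deciding for which parameters the line crosses $S_0$ at a spectral endpoint --- must be carried out with enough quantitative control on $d_t(V)$ as $V\to 0^+$ to show both that $\mathfrak{N}$ is nonempty and that it has Lebesgue measure zero, the latter because crossing $S_0$ at an endpoint amounts to a single scalar constraint on $(\mathfrak{p},\mathfrak{q})$.
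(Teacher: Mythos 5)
Your overall route --- trace-map formalism, the curve of initial conditions sweeping across the surfaces $S_V$, center-stable manifolds in $\mathcal{M}$, local dimension at $a$ governed by the transversal dimension at $V = I(l(a))$, and the exceptional set $\mathfrak{N}$ arising when the curve crosses $S_0$ at a spectral endpoint --- is the same as the paper's, which itself funnels everything through the Fibonacci analysis of \cite{Yessen2011, Yessen2011a}. The genuine gap is at the one step that constitutes the new mathematical content of the paper's proof. You assert that the two-dimensional lamination and the transversality statements carry over ``exactly as for the Fibonacci trace map,'' and you describe the difficulty as ``controlling the hyperbolic structure of $f_v$ on $\mathcal{M}$ uniformly in $V$.'' That is not where the obstruction lies, and partial hyperbolicity alone will not close it. What the proof of \cite[Theorem 2.3]{Yessen2011a} actually uses, and what is \emph{not} automatic for a general primitive invertible substitution, is a concrete transversality at the singularity: the strong-stable manifold $W^{ss}(P_1)$ of the fixed singularity $P_1=(1,1,1)$ in $\mathbb{S}_0$ must be transversal to the plane $\set{z=1}$. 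This plane matters because, after the change of coordinates by $f^{-2}$ (the inverse Fibonacci trace map applied twice), the curve of initial conditions becomes the line $\left(\frac{E-\mathfrak{q}}{2}, \frac{E}{2\mathfrak{p}}, \frac{1+\mathfrak{p}^2}{2\mathfrak{p}}\right)$, which lies in a horizontal plane close to $\set{z=1}$; without the transversality of $W^{ss}(P_1)$ to that plane, the endpoint analysis in your part (2)(b) and the identification of the local dimension near the crossing of $S_0$ cannot be imported from the Fibonacci case.

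The missing idea, which the paper supplies, is substitution-independent and goes through the torus: write the trace map of a general primitive invertible substitution as a composition $t_{a_n}\circ\cdots\circ t_{a_1}$ of the elementary maps $t_a = U^a\circ P$ (this uses the realization of the substitution as a circle rotation whose rotation number has eventually periodic continued fraction expansion); observe that on $\mathbb{S}_0$ this composition is semi-conjugate, via the branched cover $F$ of \eqref{eq:factor}, to the toral Anosov automorphism given by the product $M_{a_1}\cdots M_{a_n}$ with $M_a = \left(\begin{smallmatrix} a & 1\\ 1 & 0 \end{smallmatrix}\right)$; check that the stable eigenvector $(u,v)$ of this product has $v\neq 0$; and then compute $DF$ at the points $(\epsilon u,\epsilon)$ to verify that the pushed-forward stable direction stays uniformly transversal to $\set{z=1}$ as $\epsilon\rightarrow 0$ (equation \eqref{eq:lim}). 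Only after this is established, and after the $f^{-2}$ coordinate change, can one apply \cite[Theorem 2.3]{Yessen2011a} ``without modification,'' which is how the paper concludes. A smaller, fixable slip in your part (1): the assertion that ``the analytic function $d_t$ has discrete level sets'' presupposes that $d_t$ is non-constant, which must be sourced separately --- for instance from the bound $d_t(V)\leq 1 - C_2 V$ together with $d_t(V)\rightarrow 1$ as $V\rightarrow 0^+$ provided by Theorem \ref{thm:mei2} --- before the discreteness of level sets, and hence non-constancy of $\lhdim$ along the spectrum, can be invoked.
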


\begin{rem}
Let us remark that in 2(b) of Theorem \ref{thm:mei-yessen-7}, it can be determined whether the dimension accumulates at the lower or the upper endpoint of the spectrum according to whether $\mathfrak{q}(\mathfrak{p}^2-1)$ is positive or negative, respectively. Indeed, this follows since the dimension must accumulate at the point $E$ in the spectrum where $I\circ l(E) = 0$, with $l$ being the curve of initial conditions from \eqref{eq:ci}. On the other hand, the function $I\circ l: \R\rightarrow\R$ is monotone, with the sign of the derivative being given by the sign of $\mathfrak{q}(\mathfrak{p}^2-1)$; see equation \eqref{eq:fv-de} below.
\end{rem}

For the following theorem, see Figure \ref{fig:regions} to aid with visualization. Also see Remark \ref{rem:regions} following Theorem \ref{thm:yessen2} above.

\begin{thm}\label{thm:mei-yessen-8}
 The following statements hold for every primitive invertible substitution sequence $v$.
 \begin{enumerate}[(1)]
 \setlength{\itemsep}{2pt}

  \item There exists $\epsilon > 0$ such that for all $(\mathfrak{p}, \mathfrak{q})$ satisfying $\norm{(1,0) - (\mathfrak{p},\mathfrak{q})} < \epsilon$, the box-counting dimension of $\Sigma_{v, (\mathfrak{p},\mathfrak{q})}$ exists and coincides with the Hausdorff dimension.

  \item There exists $\Delta > 0$, such that for all $\abs{\mathfrak{p}}\geq \Delta$, there exists $\delta_{\mathfrak{p}} > 0$, so that for all $\abs{\mathfrak{q}} < \delta_{\mathfrak{p}}$, the box-counting dimension of $\Sigma_{v, (\mathfrak{p},\mathfrak{q})}$ exists and coincides with the Hausdorff dimension.

  \item There exists $\Delta > 0$, such that for all $\abs{\mathfrak{q}}\geq \Delta$, there exists $\delta_{\mathfrak{q}} > 0$, so that for all $\abs{\mathfrak{p}-1} < \delta_{\mathfrak{q}}$, the box-counting dimension of $\Sigma_{v, (\mathfrak{p},\mathfrak{q})}$ exists and coincides with the Hausdorff dimension.
 \end{enumerate}
\end{thm}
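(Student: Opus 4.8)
The plan is to reduce the existence and coincidence of the box-counting and Hausdorff dimensions to a transversality statement, and then verify transversality in the three special parameter regimes. Recall from Theorem \ref{thm:mei1} that for each $V > 0$ the type-\textbf{B} points on $S_V$ form a dynamically defined Cantor set when cut by a regular curve transversal to the lamination $\mathcal{L}_{v,V}$. The key structural fact I would invoke is that, for a \emph{single} invariant surface $S_V$, if the line of initial conditions $l(E)$ (the curve $\gamma$ from \eqref{eq:ci}, parameterized by energy) crosses $S_V$ transversally to $\mathcal{L}_{v,V}$, then the spectrum restricted near that energy is a $C^{1+\alpha}$-image of the transversal Cantor set, and for dynamically defined Cantor sets the box-counting and Hausdorff dimensions coincide (and equal the local dimensions, as recorded in the bulleted property following the definition of dynamically defined Cantor set). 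The subtlety, absent in the pure Schr\"odinger case, is that here $I\circ l(E)$ is $E$-dependent, so as $E$ ranges over the spectrum the curve $l$ threads through a whole one-parameter family of surfaces $S_V$ rather than staying on one. Hence I first want to organize the spectrum as the image under $l$ of its intersection with the two-dimensional lamination $\mathcal{L} = \bigcup_{V>0}\mathcal{L}_V$ sitting inside the three-manifold $\mathcal{M} = \bigcup_{V>0}S_V$, exactly as in \eqref{eq:2d-lamination}--\eqref{eq:3d-manifold}.

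Next I would establish the equality $\bdim = \hdim$ by a compactness-plus-local-analysis argument. The local dimension function $\lhdim(\Sigma_{v,(\mathfrak{p},\mathfrak{q})}, a)$ is continuous on the (compact) spectrum by Theorem \ref{thm:mei-yessen-7}(1). At each point $a$ where $l$ meets $\mathcal{L}$ transversally, the spectrum locally agrees with a dynamically defined Cantor set, so locally $\lbdim = \lhdim$. Covering the spectrum by finitely many such transversal charts (using compactness) and using that the box dimension of a finite union is the maximum of the box dimensions of the pieces, I would conclude that $\bdim(\Sigma_{v,(\mathfrak{p},\mathfrak{q})})$ exists and equals $\sup_a \lhdim(\Sigma_{v,(\mathfrak{p},\mathfrak{q})}, a) = \hdim(\Sigma_{v,(\mathfrak{p},\mathfrak{q})})$. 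The continuity of the local dimension ensures the sup is attained and that the local box dimensions assemble coherently into a global box dimension; this is where I would lean on the multifractal structure already proved in Theorem \ref{thm:mei-yessen-7}.

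Everything therefore hinges on transversality of $l$ to $\mathcal{L}$ everywhere along the relevant portion of the spectrum, and this is exactly what restricts the three cases. For (1), when $(\mathfrak{p},\mathfrak{q})$ is within $\epsilon$ of $(1,0)$, I would use the perturbative transversality established in \cite{Yessen2011a}: near the Schr\"odinger/free point the curve $l$ is a small perturbation of the Schr\"odinger line $\gamma$ from \eqref{eq:gamma-schro}, whose transversality to the lamination is known (Theorem \ref{thm:schro-dynam-general} together with the small-$V$ transversality of Damanik--Gorodetski), and transversality is an open condition, so it persists under small perturbation of the parameters. For (2) and (3), the coupling is large in one coordinate while the other is controlled; here I would invoke the large-coupling hyperbolicity of the trace map, where the stable lamination is well-controlled and the curve of initial conditions, computed explicitly from \eqref{eq:gamma-fib-jacobi}-type formulas, meets it transversally for $\abs{\mathfrak{p}}$ (resp.\ $\abs{\mathfrak{q}}$) large and the complementary parameter sufficiently small, again using openness of transversality to pass from the limiting regime to a neighborhood $\delta_\mathfrak{p}$ (resp.\ $\delta_\mathfrak{q}$).

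The main obstacle is precisely the transversality verification, and it is the reason the theorem is stated only in these three regimes rather than for all $(\mathfrak{p},\mathfrak{q})\neq(1,0)$. Away from the weak-coupling neighborhood and the two large-coupling strips, transversality of $l$ to $\mathcal{L}$ is not known (indeed, as Remark \ref{rem:regions} notes, it is only conjectured in general, and even in the Schr\"odinger case transversality for all $V>0$ is open). The technical difficulty is that $l$ now genuinely crosses infinitely many surfaces $S_V$, so one cannot simply transport a single-surface transversality statement; one must control the angle between $l$ and the \emph{two-dimensional} leaves of $\mathcal{L}\subset\mathcal{M}$ uniformly along the spectrum, and I expect the bulk of the work to consist in making the perturbative and large-coupling estimates on this angle quantitative and uniform.
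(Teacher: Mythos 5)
Your overall skeleton---reduce everything to transversality of the curve of initial conditions with the center-stable manifolds, verify transversality perturbatively near $(1,0)$ and in the two large-coupling strips, then conclude equality of box-counting and Hausdorff dimensions---is exactly the reduction the paper makes: its proof consists of invoking the proof of \cite[Theorem 2.5]{Yessen2011a}, whose only model-dependent ingredient is precisely that transversality in the three stated regimes. However, your dimension-theoretic core contains a genuine gap. You claim that ``at each point $a$ where $l$ meets $\mathcal{L}$ transversally, the spectrum locally agrees with a dynamically defined Cantor set, so locally $\lbdim = \lhdim$.'' This is false at \emph{every} point of the spectrum when $\mathfrak{p}\neq 1$ and $\mathfrak{q}\neq 0$. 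By equation \eqref{eq:fv-de}, $I\circ l_{(\mathfrak{p},\mathfrak{q})}$ has nonzero derivative in $E$, so every energy window in the spectrum, no matter how small, is spread across a nondegenerate interval of surfaces $S_V$; by \eqref{eq:loc-dim} the local Hausdorff dimension equals $\frac{1}{2}\hdim(\Omega_V)$, which is a non-constant analytic function of $V$ and hence non-constant on every such interval. A dynamically defined Cantor set has \emph{constant} local dimension (this is the bulleted property following Definition \ref{defn:type-b}), so no neighborhood of any spectral point is one. This is exactly the multifractality you yourself cite from Theorem \ref{thm:mei-yessen-7}, and it is flagged explicitly in the paper's list of properties of $\mathcal{M}$: the transversal intersection with the center-stable manifolds is ``in general \emph{not} dynamically defined.''

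Once that local claim fails, your compactness-plus-finite-union argument has nothing to stand on: the finite-stability of box dimension only helps if each piece is already known to have an existing box dimension equal to its Hausdorff dimension, and that is precisely the statement in question. The assertion that, despite the continuously varying fiber dimension, the box-counting dimension of the intersection still exists and coincides with the Hausdorff dimension is the real content of \cite[Theorem 2.5]{Yessen2011a}; its proof requires squeezing the counting function between those of dynamically defined Cantor sets on nearby surfaces $S_V$, using continuity of $V\mapsto\hdim(\Omega_V)$ together with uniform (Lipschitz holonomy) distortion control across the family of laminations---not merely openness of transversality plus local constancy. Your transversality discussion for regimes (1)--(3) is sound and matches what the paper requires; the missing piece is this uniform multi-surface estimate, which cannot be replaced by the single-surface dynamically-defined-Cantor-set property of Theorem \ref{thm:mei1}.
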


\begin{thm}\label{thm:mei-yessen-9}
 For every primitive invertible substitution sequence $v$ and $(\mathfrak{p},\mathfrak{q})\in\R^2$, there exists $\mathfrak{D}_{v, (\mathfrak{p},\mathfrak{q})}\subset\R$ of full $dN$ measure, such that for all $E\in \mathfrak{D}_{v, (\mathfrak{p},\mathfrak{q})}$, we have
 \begin{align*}
  \lim_{\epsilon\downarrow 0}\frac{\log N(E - \epsilon, E+\epsilon)}{\log\epsilon} = d_{v, (\mathfrak{p},\mathfrak{q})}(E)\in\R,
 \end{align*}
 $d_{v, (\mathfrak{p},\mathfrak{q})}(E) > 0$. Moreover, if $(\mathfrak{p},\mathfrak{q})\neq(1,0)$, then
 \begin{align*}
  d_{v,(\mathfrak{p},\mathfrak{q})}(E) < \lhdim(\Sigma_{v,(\mathfrak{p},\mathfrak{q})}, E).
 \end{align*}
 Also,
 \begin{align*}
  \lim_{(\mathfrak{p},\mathfrak{q})\rightarrow(1,0)}\sup_{E\in\mathfrak{D}_{v,(\mathfrak{p},\mathfrak{q})}}\set{d_{v,(\mathfrak{p},\mathfrak{q})}(E)} = \lim_{(\mathfrak{p},\mathfrak{q})\rightarrow(1,0)}\inf_{E\in\mathfrak{D}_{v,(\mathfrak{p},\mathfrak{q})}}\set{d_{v,(\mathfrak{p},\mathfrak{q})}(E)} = 1.
 \end{align*}
\end{thm}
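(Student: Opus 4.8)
The plan is to transport the density of states measure onto the trace map dynamics and to extract its local scaling exponent from the thermodynamic formalism, thereby reducing the statement to a fiberwise application of the arguments already used for the two special cases it subsumes: the Fibonacci Jacobi operator (Theorem~\ref{thm:yessen3}) and the primitive invertible Schr\"odinger operator (Theorem~\ref{thm:mei5}). First I would fix the polynomial trace map $f_v$ attached to $v$ together with the curve of initial conditions $l(E)$, and recall (as in Proposition~\ref{prop:jacobi-tmap} and Theorem~\ref{thm:mei-yessen-1}) that $E\in\Sigma_{v,(\mathfrak{p},\mathfrak{q})}$ precisely when $\mathcal{O}^+_{f_v}(l(E))$ is bounded, i.e. when $l(E)$ lies on the stable lamination. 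Since $I(l(E))$ is genuinely $E$-dependent for $(\mathfrak{p},\mathfrak{q})\neq(1,0)$, the curve $l$ is not confined to a single surface; instead it moves transversally through the two-dimensional lamination $\mathcal{L}$ of \eqref{eq:2d-lamination} inside $\mathcal{M}$ of \eqref{eq:3d-manifold}, meeting the fibre $\mathcal{L}_{V(E)}$ over the value $V(E)=I(l(E))$. The transversality of $l$ to $\mathcal{L}$, proved for Fibonacci in \cite{Yessen2011,Yessen2011a} and available for the general substitution lamination through Theorem~\ref{thm:mei1}, holds everywhere except at finitely many energies and everywhere once $(\mathfrak{p},\mathfrak{q})$ is close to $(1,0)$; the exceptional energies form a $dN$-null set and may be discarded.

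Next I would identify $dN_{v,(\mathfrak{p},\mathfrak{q})}$, pulled back through $l$ to the slice $\mathcal{L}\cap l$, with the measure of maximal entropy of the horseshoe $\Omega_{V(E)}$ of $f_v|_{S_{V(E)}}$; this is the step that carries over from \cite{Yessen2011a}, the only substantive change being the replacement of the Fibonacci recursion for the transfer matrices $\hat{T}_n$ by the general substitution recursion and of $f$ by $f_v$. Writing $h$ for the topological entropy of $f_v|_{\Omega_V}$ (constant in $V>0$ by structural stability of the horseshoes) and $\lambda(E)$ for the Lyapunov exponent of the maximal measure along the unstable direction on $S_{V(E)}$, the exact-dimensionality theory of hyperbolic equilibrium measures on one-dimensional, hence conformal, repellers yields for $dN$-almost every $E$ the existence of the local exponent together with the formula $d_{v,(\mathfrak{p},\mathfrak{q})}(E)=h/\lambda(E)$. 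Finiteness and positivity of $d$ follow from $0<h<\infty$ and $0<\lambda(E)<\infty$ (uniform hyperbolicity on a compact region). In contrast to the Schr\"odinger case, $V(E)$ varies along $\Sigma_{v,(\mathfrak{p},\mathfrak{q})}$, so $\lambda(E)$ is genuinely $E$-dependent and the measure need not be exactly dimensional; this is why the statement records $d$ as a function of $E$.

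For the strict inequality I would compare, on each surface $S_{V(E)}$, the maximal-entropy measure against the measure of full dimension. The local dimension $\lhdim(\Sigma_{v,(\mathfrak{p},\mathfrak{q})},E)$ equals the transversal dimension of $\mathcal{L}_{V(E)}$ at $l(E)$, which by Bowen's equation is the root $s(E)$ of $P_{V(E)}(-s\,\varphi_{V(E)})=0$, where $\varphi_{V(E)}$ is the geometric potential given by the logarithm of the unstable derivative of $f_v$, realised by the equilibrium state of $-s(E)\varphi_{V(E)}$. The density of states corresponds instead to the maximal-entropy measure, whose dimension is $h/\lambda(E)$; the two measures coincide only if $\varphi_{V(E)}$ is cohomologous to a constant on $\Omega_{V(E)}$, which fails exactly when the coupling is nontrivial, that is for $(\mathfrak{p},\mathfrak{q})\neq(1,0)$, because then the contraction rates of $f_v$ are position-dependent. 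By strict convexity of the pressure and uniqueness of equilibrium states, distinct measures give strictly different exponents, and the variational characterisation of $s(E)$ forces $d_{v,(\mathfrak{p},\mathfrak{q})}(E)=h/\lambda(E)<s(E)=\lhdim(\Sigma_{v,(\mathfrak{p},\mathfrak{q})},E)$ for $dN$-almost every $E$.

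Finally, the weak-coupling asymptotics follow by continuity: as $(\mathfrak{p},\mathfrak{q})\rightarrow(1,0)$ the operator converges in norm to the free Laplacian, whose spectrum $[-2,2]$ and arcsine density of states are one-dimensional, and since the horseshoes, the coding, and the induced measures depend continuously on the parameters --- exactly as in the proof of $\lim_{V\downarrow0}d_V=1$ in Theorem~\ref{thm:damanik-gorodetski-dos} --- both $\sup_E d_{v,(\mathfrak{p},\mathfrak{q})}(E)$ and $\inf_E d_{v,(\mathfrak{p},\mathfrak{q})}(E)$ tend to $1$, the uniformity in $E$ coming from the uniform contraction estimates valid near $(1,0)$. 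I expect the main obstacle to be the fibered character of the problem: because $l$ sweeps out a one-parameter family of surfaces $S_{V(E)}$ rather than sitting inside one, the entropy and Lyapunov data must be organised as functions of $V(E)$ and shown to vary measurably, indeed continuously, so that the single-surface thermodynamic formalism can be applied fiberwise while preserving the almost-everywhere existence of the local exponent. Controlling this variation --- together with the general-substitution geometry of the two-dimensional lamination $\mathcal{L}$ and the finitely many non-transversal energies --- is precisely where neither \cite{Mei2013} nor \cite{Yessen2011a} suffices alone and their methods must be merged.
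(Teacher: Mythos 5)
Your proposal is correct and takes essentially the same route as the paper, whose entire proof of this theorem is to invoke \cite[Theorem 2.6]{Yessen2011a} (the Fibonacci Jacobi case) after the coordinate change via $f^{-2}$; your reconstruction --- identifying $dN$ with the projected measure of maximal entropy, the formula $d = h/\lambda$, the comparison with the equilibrium state of the geometric potential for the strict inequality, and the fibered organisation of the entropy and Lyapunov data over $V(E)$ --- is precisely the content being imported from that reference. The one step you state more loosely than it can be proved is the non-cohomology of the geometric potential to a constant: in the cited works this is verified by exhibiting periodic orbits with distinct normalized multipliers, not deduced from the mere position-dependence of the contraction rates.
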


Notice that so far we have been working under the assumption that $\mathfrak{p}\neq 0$. It is interesting to see what happens in the case $\mathfrak{p}=0$, and how such quantities as the Hausdorff dimension change in the limiting process $\mathfrak{p}\rightarrow 0$. In fact, when $\mathfrak{p}=0$ it is easy to see that the Hamiltonian decouples into an infinite direct sum of finite-dimensional operators, and the spectrum consists of finitely many points (eigenvalues of infinite multiplicity). In this case the Hausdorff dimension of the spectrum is zero. We have the following theorem, which we prove in the next section.

\begin{thm}\label{thm:p-zero}
For any fixed $\mathfrak{q}$ and any substitution sequence $v$, we have
\begin{align*}
\lim_{\mathfrak{p}\rightarrow 0}\hdim(\Sigma_{v, (\mathfrak{p},\mathfrak{q})})=0.
\end{align*}
\end{thm}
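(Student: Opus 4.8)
The plan is to control the Hausdorff dimension of the spectrum via the dynamics of the trace map, using the criterion that $E\in\Sigma_{v,(\mathfrak{p},\mathfrak{q})}$ if and only if the forward orbit of the point $l(E)$ on the line of initial conditions is bounded under the associated trace map $f_v$. As $\mathfrak{p}\to 0$, the weight $p(1)=\mathfrak{p}$ degenerates, so the one-site transfer matrices become nearly singular; the first step is to write out the initial conditions $l(E)$ (the analogue of $\gamma$ from \eqref{eq:gamma-fib-jacobi}) explicitly in terms of $\mathfrak{p}$ and $\mathfrak{q}$, and to track how the Fricke-Vogt invariant $I(l(E))$ behaves as $\mathfrak{p}\to 0$. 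I expect the invariant to blow up like $\mathfrak{p}^{-2}$, pushing the relevant portion of the line onto surfaces $S_V$ with $V\to+\infty$.

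\textbf{Reduction to large-parameter asymptotics.} The conceptual heart of the argument is that the limit $\mathfrak{p}\to 0$ should mimic the large-coupling regime of the Schr\"odinger/Jacobi case, where the Hausdorff dimension tends to zero. Concretely, I would first establish an a priori bound: the spectrum is contained in a fixed compact interval (uniform in $\mathfrak{p}$ near $0$ for fixed $\mathfrak{q}$, after noting that $\|H_{\omega,(\mathfrak{p},\mathfrak{q})}\|$ stays bounded as $\mathfrak{p}\to 0$). Then I would use the periodic approximation and the semicontinuity/Hausdorff-convergence of spectra (Theorem \ref{thm:mei-yessen-2a}) together with a quantitative covering estimate: cover $\Sigma_{v,(\mathfrak{p},\mathfrak{q})}$ by the bands of a periodic approximant $\sigma_k$ and estimate band widths. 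The key quantitative input is a hyperbolicity/expansion estimate for $f_v$ on the relevant part of $S_V$ for large $V$, giving exponential shrinking of the bands and hence a dimension bound of the form $\hdim(\Sigma_{v,(\mathfrak{p},\mathfrak{q})}) \le C/\log(1/\mathfrak{p})$, analogous to Theorem \ref{thm:degt}. Letting $\mathfrak{p}\to 0$ then forces the dimension to $0$.

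\textbf{Main obstacle.} The principal difficulty is that the line of initial conditions $l(E)$ does \emph{not} lie on a single surface $S_V$ when $\mathfrak{p}\neq 1$ and $\mathfrak{q}\neq 0$; indeed the excerpt notes $I(l(E))$ is $E$-dependent, so as $E$ ranges over the spectrum the relevant points sit on a one-parameter range of surfaces $S_{I(l(E))}$. I must therefore show that this entire range of invariant values tends to $+\infty$ uniformly (or at least that the part with bounded $V$ shrinks to measure/dimension zero) as $\mathfrak{p}\to 0$, and that the transversality and expansion estimates underlying Theorems \ref{thm:mei1} and \ref{thm:mei-yessen-7} degrade in a controlled, quantitative way. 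I would handle this by partitioning the spectrum into the portion meeting surfaces $S_V$ with $V$ bounded and the complementary portion; the former is squeezed to a negligible set because $I\circ l$ is monotone (cf. the remark after Theorem \ref{thm:mei-yessen-7}) with the relevant preimage interval shrinking as $\mathfrak{p}\to 0$, while on the latter the large-$V$ hyperbolic estimates apply uniformly across the substitution $v$.

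\textbf{Assembling the bound.} Finally I would combine the two pieces: since $\hdim$ of a finite union is the max of the dimensions, and both the bounded-$V$ contribution and the large-$V$ contribution carry dimension bounds tending to $0$, we conclude $\lim_{\mathfrak{p}\to 0}\hdim(\Sigma_{v,(\mathfrak{p},\mathfrak{q})})=0$. The cleanest route for the large-$V$ estimate is to invoke the uniform hyperbolicity of $f_v$ away from the singular set for large values of the invariant (which is where the dynamical work of \cite{Yessen2011,Yessen2011a,Mei2013} is quantitative), yielding distortion-controlled Cantor sets whose thickness tends to $0$ and hence, by the relation $\hdim(K)\le \log 2/\log(2+1/\tau(K))$ read in the shrinking-thickness direction, dimension tending to $0$.
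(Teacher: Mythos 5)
Your proposal follows essentially the same route as the paper: the key observation in both is that by \eqref{eq:Ici} the Fricke--Vogt invariant along the curve of initial conditions blows up like $\mathfrak{p}^{-2}$, so for small $\mathfrak{p}$ the spectrum is carried onto surfaces $S_V$ with $V\rightarrow\infty$, and the large-coupling asymptotics of Theorem \ref{thm:degt} then force the dimension to zero. The only simplification the paper makes relative to your plan is that the ``bounded-$V$'' portion you propose to partition off and squeeze is shown to be empty outright for $\mathfrak{p}$ small: any $E$ with $I\circ l_{(\mathfrak{p},\mathfrak{q})}(E)\leq V$ is forced outside $\left[-\norm{H_{(\mathfrak{p},\mathfrak{q})}},\norm{H_{(\mathfrak{p},\mathfrak{q})}}\right]$, so no separate thickness or band-width estimates are needed.
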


Thus we have continuity of the Hausdorff dimension. An interesting question is whether we also have continuity in the Hausdorff metric (that is, whether as $\mathfrak{p}\rightarrow 0$ the spectrum continuously collapses onto a set of finitely many points). The answer is \textit{yes} and follows from simple spectral-theoretic arguments (indeed, $H_{v, (\mathfrak{p}, \mathfrak{q})}$ converges to $H_{v, (0, \mathfrak{q})}$ in norm as $\mathfrak{p}\rightarrow 0$). Let us record this result as

\begin{thm}\label{thm:p-zero2}
For any substitution sequence $v$ and any fixed $\mathfrak{q}$, the spectrum of $H_{v, (\mathfrak{p},\mathfrak{q})}$ converges in Hausdorff metric to the spectrum of $H_{v, (0, \mathfrak{q})}$.
\end{thm}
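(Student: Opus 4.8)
The plan is to bypass the trace-map machinery entirely and argue purely at the level of operators, exactly as the parenthetical remark suggests: first establish norm convergence $H_{v,(\mathfrak{p},\mathfrak{q})}\to H_{v,(0,\mathfrak{q})}$ as $\mathfrak{p}\to 0$, and then invoke the elementary fact that the spectra of bounded self-adjoint operators depend Lipschitz-continuously, in Hausdorff distance, on the operator norm.

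First I would compute the difference operator $D_{\mathfrak{p}}\eqdef H_{v,(\mathfrak{p},\mathfrak{q})} - H_{v,(0,\mathfrak{q})}$. Since $q$ (hence the diagonal) is unchanged as $\mathfrak{p}$ varies, the diagonal of $D_{\mathfrak{p}}$ vanishes; the off-diagonal entry linking sites $n$ and $n+1$ equals $p(\omega_{n+1})$ in each operator, so their difference is $\mathfrak{p}$ when $\omega_{n+1}=1$ and $0$ when $\omega_{n+1}=0$. Thus $D_{\mathfrak{p}} = \mathfrak{p}\,S$, where $S$ is the self-adjoint tridiagonal operator on $\ell^2(\Z)$ with zero diagonal and off-diagonal entries in $\set{0,1}$. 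Writing $S = S_+ + S_+^*$ with $S_+$ the associated weighted forward shift (weights in $\set{0,1}$), one has $\norm{S_+}\leq 1$, hence $\norm{S}\leq 2$ and $\norm{D_{\mathfrak{p}}}\leq 2\abs{\mathfrak{p}}$. In particular $\norm{H_{v,(\mathfrak{p},\mathfrak{q})} - H_{v,(0,\mathfrak{q})}}\to 0$ as $\mathfrak{p}\to 0$.

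Second I would apply the standard perturbation estimate: for bounded self-adjoint $A,B$ and any $\lambda\in\sigma(A)$ one has $\dist(\lambda,\sigma(B))\leq\norm{A-B}$. This follows because if $\dist(\lambda,\sigma(B)) > \norm{A-B}$ then, by self-adjointness, $\norm{(B-\lambda)^{-1}} = \dist(\lambda,\sigma(B))^{-1} < \norm{A-B}^{-1}$, so a Neumann-series argument shows $A-\lambda = (B-\lambda)\bigl(I + (B-\lambda)^{-1}(A-B)\bigr)$ is invertible, contradicting $\lambda\in\sigma(A)$. Exchanging the roles of $A$ and $B$ yields the two-sided bound on the Hausdorff distance $d_{\mathrm{H}}(\sigma(A),\sigma(B))\leq\norm{A-B}$. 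Applying this with $A=H_{v,(\mathfrak{p},\mathfrak{q})}$ and $B=H_{v,(0,\mathfrak{q})}$ gives $d_{\mathrm{H}}(\Sigma_{v,(\mathfrak{p},\mathfrak{q})},\sigma(H_{v,(0,\mathfrak{q})}))\leq 2\abs{\mathfrak{p}}\to 0$, which is the claim.

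There is essentially no serious obstacle here, which is precisely why this is flagged as a simple spectral-theoretic argument; the only points needing a moment of care are the crude but sufficient bound $\norm{D_{\mathfrak{p}}}\leq 2\abs{\mathfrak{p}}$ on the tridiagonal difference operator and the self-adjointness required for the identity $\norm{(B-\lambda)^{-1}}=\dist(\lambda,\sigma(B))^{-1}$. Both are immediate in our setting, since all operators in question are bounded and self-adjoint on $\ell^2(\Z)$.
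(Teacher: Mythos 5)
Your argument is correct and is exactly the route the paper takes: the paper's entire justification is the parenthetical remark that $H_{v,(\mathfrak{p},\mathfrak{q})}\to H_{v,(0,\mathfrak{q})}$ in norm as $\mathfrak{p}\to 0$, combined implicitly with the standard bound $d_{\mathrm{H}}(\sigma(A),\sigma(B))\leq\norm{A-B}$ for bounded self-adjoint operators. You have merely supplied the details (the estimate $\norm{D_{\mathfrak{p}}}\leq 2\abs{\mathfrak{p}}$ and the Neumann-series proof of the perturbation bound) that the paper leaves implicit, and both are correct.
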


\section{Proofs of new results.}\label{sec:proofs}

Before we begin proving the results from the previous section, we need to set up the machinery of trace maps, as our proofs will rely in an essential way on the dynamical properties of these maps. Moreover, a certain amount of preparation in hyperbolic and partially hyperbolic dynamics, including notation and terminology, is needed. Rather than satisfying these prerequisites here by adding an extra section, we chose to maintain the focus on the task at hand; instead, we invite the reader to follow along \cite[Appendix B]{Yessen2011}. Below we shall use the results from (partially) hyperbolic dynamics freely, pointing the reader to the appropriate sections, definitions and theorems in the aforementioned paper, as necessary (the interested readers will find a comprehensive overview of the theory, together with the proofs of the main results, in, for example, \cite{Katok1995,Hasselblatt2002,Hasselblatt2002b,Hasselblatt2006,Pesin2004} and, for the dimension theory in dynamical systems, \cite{Pesin1997}).

\subsection{Trace map dynamics}\label{sec:t-map-dynamics}

In what follows, it is convenient to view a given primitive and invertible substitution $s$ (acting on two letters) as a morphism on the free group $\Gamma \eqdef \langle 0, 1 \rangle$. Also, let us assume without loss of generality that the word $s(0)$ begins with $0$ (otherwise, we can assume that either $s(1)$ begins with $1$ or replace $s$ with $s^2$). As has been mentioned above, to each such $s$ there corresponds a map $T_s\in \mathbb{Z}[x,y,z]$ with the following properties.

\it
\begin{itemize}
\setlength{\itemsep}{2pt}

\item Given a representation $\rho: \Gamma \rightarrow \SL(2, \C)$, the trace of $\rho(s^n(0))$, $\Tr\rho(s^n(0))$, is given by
\begin{align*}
\frac{1}{2}\Tr\rho(s^n(0)) = \pi_1\circ T_s^n\left(\frac{1}{2}\Tr \rho (s(0)), \frac{1}{2}\Tr \rho (s(1)), \frac{1}{2}\Tr \rho (s(01))\right),
\end{align*}
where $\pi_1$ denotes the projection onto the first coordinate.

\rm

In this case $T_s^n$ is not quite iteration of $T_s$. As we shall see later, $T_s$ is composed of two polynomial maps, let us call them $T_a$ and $T_b$: $T_s = T_a\circ T_b$. In this case, $T_s^n$ denotes $T_a^n\circ T_b$, where $T_a^n$ is the $n$-fold composition of $T_a$ with itself.

\it

\item $T_s$ is invertible.

\item The map $T_s$ (and, in fact, its components $T_a$ and $T_b$, as above) preserves the Fricke-Vogt invariant $I$ from equation \eqref{eq:fv-invariant}, and hence also the surfaces $S_V$ from \eqref{eq:inv-surfaces}.

\end{itemize}
\rm

Since we shall be concerned with dynamics of the trace maps $T_s$ when restricted to the invariant surfaces, it is convenient to introduce the following notation, which we fix for the remainder of the paper.
\begin{align*}
T_{s, V} \eqdef T_s|_{S_V}\hspace{2mm}\text{ and }\hspace{2mm} \mathbb{S}_0 \eqdef S_0\cap [-1, 1]^3.
\end{align*}
In much of the theory, the map $T_{s,V}$, with $V$ positive but sufficiently close to zero, is investigated as a perturbation of the map $T_{s, 0}$. It is well know that the map $T_{s, 0}$ has the following properties.
\it
\begin{itemize}
\setlength{\itemsep}{2pt}

\item $T_{s, 0}$ fixes the Morse-type singularity $P_1=(1,1,1)$ of $S_0$ and permutes the other three (namely, $P_2=(1, -1, -1), P_3=(-1, 1, -1)$ and $P_3=(-1, -1, 1)$) in a three-cycle.

\item Restricted to the ``inner'' part of the Cayley cubic $S_0$ (that is, the topological sphere with the four Morse-type singularities, $\mathbb{S}_0$), $T_{s, 0}$ is of pseudo-Anosov type (i.e. there exist two transverse measured foliations, invariant under $T_{s,0}$, one contracting and the other expanding under $T_{s, 0}$, called the \rm stable \it and the \rm unstable \it foliations, respectively; see the seminal works \cite{Thurston1988, Penner1988}, and for textbook expositions see, for example, \cite{Fathi1979} and \cite[Part 3]{Farb2012}).

\item The map $T_{s, 0}$ is a factor of the Anosov automorphism $\mathcal{A}_s$ on the two-dimensional torus, $\mathbb{T}^2$; that is, the following diagram commutes,
\begin{center}
\begin{tikzpicture}
\matrix(m)[matrix of math nodes,row sep=3em,column sep=4em,minimum width=2em]
{
	\mathbb{T}^2 & \mathbb{T}^2\\
	\mathbb{S}_0 & \mathbb{S}_0\\
};
\path[-stealth]
	(m-1-1) edge node [left] {$F$} (m-2-1)
	(m-1-1) edge node [above] {$\mathcal{A}_s$} (m-1-2)
	(m-1-2) edge node [right] {$F$} (m-2-2)
	(m-2-1) edge node [below] {$T_{s, 0}$} (m-2-2);
\end{tikzpicture}
\end{center}
where
\begin{align}\label{eq:factor}
F(\theta, \phi) = (\cos2\pi(\theta + \phi),\cos2\pi\theta, \cos2\pi\phi),
\end{align}
and
\begin{align*}
\mathcal{A}_s =
\begin{pmatrix}
\#_a s(0) & \#_b s(0)\\
\#_a s(1) & \#_b s(1)
\end{pmatrix}
\end{align*}
with $\#_{\star_1} s(\star_2)$ denoting the number of letters $\star_1$ in the word $s(\star_2)$.

\item There are four smooth (in fact, analytic) one-dimensional manifolds $W_{i = 1, \dots, 4}^s$ injectively immersed in $\mathbb{S}_0\setminus\set{P_i},$ where $i = 1, \dots, 4$, such that for every $x$ in $W_i^s$, $T^n_{s, 0}(x)$ converges to $P_i$ exponentially fast as $n\rightarrow\infty$. These manifolds, $W_i^s$, are each dense in the stable pseudo-Anosov foliation of $\mathbb{S}_0$. There also exist four manifolds $W_i^u$ which exhibit the same properties as $W_i^s$ with $T_{s, 0}$ replaced by $T_{s, 0}^{-1}$.

\item The periodic points of $T_{s,0}$ form a dense subset of $\mathbb{S}_0$.

\item As a consequence of $T_{s, 0}$ being pseudo-Anosov, every periodic point that is not one of the singularity points is of saddle type with one contracting and one expanding tangent directions (i.e. the eigenvalues of the differential do not lie on the unit circle). Consequently, each such point can be smoothly (in fact, analytically) continued to the surfaces $S_V$ for $V\in (-\delta, \delta)$, $\delta > 0$, but $\delta$ depends on the point.
\end{itemize}
\rm
\begin{rem}
Notice that the map $F$ above is not a conjugacy in the sense of \cite[Appendix B.1.1]{Yessen2011}, but a \textit{semi-conjugacy}, since $F$ is not invertible; in fact, $F$ is a branched double cover of the sphere, with four ramification points. At those ramification points (the four singularities), $F$ is certainly not smooth; nontheless, $F$ is ``sufficiently conformal'' near the singularities to push the stable and unstable foliations of $\mathbb{T}^2$ from $\mathbb{T}^2$ to $\mathbb{S}_0$ to produce pseudo-Anosov behavior on $\mathbb{S}_0$. For the formal statement, see \cite[Lemma 3.1]{Damanik2009}).
\end{rem}

It turns out that the dynamics on the surfaces $S_V$ for $V > 0$ is more complicated. Let us describe (some of) the known results that we shall use shortly.
\it
\begin{itemize}
\setlength{\itemsep}{2pt}
\item $T_{s, V}$, for all $V > 0$, satisfies S. Smale's Axiom A (see Smale's original work \cite{Smale1967} for definitions, properties and examples). This result is due to M. Casdagli for all $V > 0$ sufficiently large \cite{Casdagli1986}, D. Damanik and A. Gorodetski for all $V > 0$ sufficiently small \cite{Damanik2009}, and S. Cantat for all $V > 0$ uniformly \cite{Cantat2009}.

\item The nonwandering set of $T_{s, V}$ coincides with the set of those points whose full orbit is bounded (see \cite[Section I.1]{Smale1967} for terminology; this result is implicit in \cite{Casdagli1986} and proved explicitly in \cite[Section 2]{Damanik2013a} for the case of $s$ being the Fibonacci substitution. With little effort, the result can be verified for other primitive invertible two-letter substitutions--see also \cite{Cantat2009}).

\item As a consequence of the above two points, there exists a one-dimensional lamination in $S_V$, with smooth (in fact, analytic) leaves, such that a point of $S_V$ has a bounded forward orbit if and only if it belongs to a leaf of this lamination. This lamination is called the \textrm{stable} lamination. There also exists the analogous unstable lamination with the same properties, with \textrm {forward orbit} replaced by \textrm{backward orbit}. We shall denote these laminations by
\begin{align*}
\mathcal{L}_V^s \hspace{2mm} \text{ and } \hspace{2mm} \mathcal{L}_V^u,
\end{align*}
respectively.

\item The laminations $\mathcal{L}_V^{s,u}$ are invariant under $T_V^{\pm}$. This in particular implies that every point in $S_V$ either has a bounded forward orbit, or escapes to infinity in every one of its three coordinates (i.e. an unbounded orbit does not contain bounded subsequences) \cite{Roberts1996}.

\item The laminations $\mathcal{L}_V^{s,u}$, when viewed as sets, are closed.

\item Any smooth, regular, compact curve of $S_V$ that intersects the lamination $\mathcal{L}_V^\bullet$, $\bullet\in \set{s, u}$, transversally in its interior, intersects it in a dynamically defined Cantor set along its interior (see the paragraph following Definition \ref{defn:type-b} above for terminology).

\item The holonomy map along the lamination $\mathcal{L}_V^\bullet$, $\bullet\in \set{s, u}$ between any two smooth curves that intersect $\mathcal{L}_V^\bullet$ uniformly transversally is Lipschitz continuous (in fact, it can be extended to a $C^1$ map) as a consequence of \cite[Theorem 6.4]{Hirsch1968}.
\end{itemize}
\rm
Notice that from above we get complete characterization of the forward-stable set on each $S_V$, $V > 0$ (i.e. the set of those points that do not escape to infinity in forward time; recall that we called these points type-\textbf{B} above). We shall find that while investigating the spectrum of the operator of type \eqref{eq:qp-Jacobi}, rather than \eqref{eq:schrodinger}, we must consider the action of $T$ on $S_V$ collectively for all $V > 0$ (i.e. the action of $T$ on the smooth three-manifold $\bigcup_{V > 0}S_V$). In particular, we shall need a (geometric) classification of stable points in $\bigcup_{V > 0}S_V$. Before we continue, let us set and fix for the remainder of the paper the following notation.
\begin{align*}
\mathcal{M} \eqdef \bigcup_{V > 0}S_V.
\end{align*}
We have the following (for details, refer to \cite{Yessen2011}).
\it
\begin{itemize}
\setlength{\itemsep}{2pt}
\item $T$ restricted to $\mathcal{M}$ is partially hyperbolic on the nonwandering set of $\mathcal{M}$ (\cite{Smale1967} for terminology). Given any leaf $L_{V_0}$ of the lamination $\mathcal{L}_{V_0}^{s}$ (respectively, $\mathcal{L}_{V_0}^u$) passing through a point $x_{V_0}$ in the set of bounded orbits on $S_{V_0}$, $\bigcup_{V > 0}L_V$ is a smooth (in fact analytic) injectively immersed two-dimensional submanifold of $\mathcal{M}$; here $L_V$ denotes the leaf of the stable (respectively, unstable) lamination passing through $h_V(x_{V_0})\in S_V$, where $h_V$ is the homeomorphism forming the topological conjugacy from the set of bounded orbits on $S_{V_0}$ to that on $S_V$ (see \cite[Appendix B]{Yessen2012a} for details). The resulting two-dimensional manifold is called the \textnormal{center-stable} (respectively, \textnormal{center-unstable}) manifold.

\item The union of all the center-stable manifolds (and the center-unstable ones) forms a closed set in $\mathcal{M}$.

\item The forward-stable (respectively, backward-stable) points of $\mathcal{M}$ are precisely those that lie on the center-stable (respectively, center unstable) manifolds.

\item Any smooth, compact, regular curve in $\mathcal{M}$ that intersects the center-stable (or center-unstable) manifolds transversally in its interior, intersects these manifolds in a Cantor set (in general, this Cantor set is not dynamically defined -- the local Hausdorff dimension may depend on the point of localization; the box-counting dimension, however, still exists and coincides with the Hausdorff).

\end{itemize}
\rm
Let us now describe the formalism that allows one to pass from the spectral-analytic problem to a problem in dynamical systems. We have already described this for the Schr\"{o}dinger operator in Section \ref{sec:known}. Let us do the same for the Jacobi operators.

\subsection{The trace map formalism for the substitution Jacobi operators}
For a given substitution sequence $v$ generated by a primitive invertible two-letter substitution $s$ (again, without loss of generality we assume that $s(0)$ starts with the letter $0$), and $(\mathfrak{p}, \mathfrak{q})$ as above, let $H_{v, (\mathfrak{p},\mathfrak{q})}^{(k)}$ denote the $k$th periodic approximation of $H_{v, (\mathfrak{p},\mathfrak{q})}$ and denote its spectrum by $\sigma_k$ as in Theorem \ref{thm:mei-yessen-2a}.

Now let $\psi(E)$ and $\phi(E)$ be two solutions of the eigenvalue equation
\begin{align*}
H_{v, (\mathfrak{p},\mathfrak{q})}^{(k)}\theta = E\theta
\end{align*}
with $\phi_0(E) = \psi_{-1}(E) = 1$ and $\phi_{-1}(E) = \psi_0(E) = 1$. By the Floquet theory \cite{Toda1981,Teschl1999}, we have
\begin{align}\label{eq:eigenvalue}
\sigma_k = \set{E\in\R: \frac{1}{2}\abs{\phi_{\abs{s^k(0)}}(E) + \psi_{\abs{s^k(0)}}} \leq 1}
\end{align}
(recall that $H_{v, (\mathfrak{p},\mathfrak{q})}^{(k)}$ is periodic of period $\abs{s^k(0)}$).

With $p, q: \set{0,1}\rightarrow\R$ as defined just prior to the statement of Theorem \ref{thm:fib-jacobi-sc-spectr} and $v^{(k)}$ as in Theorem \ref{thm:mei-yessen-2a}, let us write $p_{k,n}$ for $p(v^{(k)}_n)$, and similarly for $q$. Define the two matrices
\begin{align}\label{eq:transfer-mat}
\begin{split}
M^{(n)}_{v,(\mathfrak{p},\mathfrak{q})}(E) &\eqdef \frac{1}{p_{k,n+1}}
\begin{pmatrix}
	E - q_{k,n}		&	-p_{k, n}\\
	p_{k,n+1}			&	0\\
\end{pmatrix}\\[4pt]
\hspace{2mm}
T^{(n)}_{v,(\mathfrak{p},\mathfrak{q})}(E) &\eqdef \frac{1}{p_{k,n+1}}
\begin{pmatrix}
	E - q_{k,n}		&	-1\\
	p_{k,n+1}^2 		&	0\\
\end{pmatrix}
\end{split}
\end{align}
and let $\Theta_n = (\theta_n, p_{k,n}\theta_{n-1})^{\mathrm{T}}$. A simple computation shows that $\theta$ satisfies the equation \eqref{eq:eigenvalue} if and only if
\begin{align}\label{eq:eigen-transfer}
\begin{pmatrix}
	\theta_n\\
	\theta_{n-1}\\
\end{pmatrix}
= M^{(n)}_{v,(\mathfrak{p},\mathfrak{q})}
\begin{pmatrix}
	\theta_{n-1}\\
	\theta_{n-2}\\
\end{pmatrix}
\iff
\Theta_n = T^{(n)}_{v,(\mathfrak{p},\mathfrak{q})}\Theta_{n-1}\hspace{2mm}\text{ for all }\hspace{2mm}n\in\Z.
\end{align}
Let us define
\begin{align}\label{eq:long-transfer}
\hat{T}^{(k)}_{v, (\mathfrak{p},\mathfrak{q})}(E) \eqdef T^{(\abs{s(0)})}_{v,(\mathfrak{p},\mathfrak{q})}(E)\cdots T^{(1)}_{v,(\mathfrak{p},\mathfrak{q})}(E).
\end{align}
From \eqref{eq:eigen-transfer} we have $\Theta_{\abs{s(0)}} = \hat{T}^{(k)}_{v, (\mathfrak{p},\mathfrak{q})}\Theta_0$; hence using $\phi$ and $\psi$ in place of $\theta$ we get $\phi_{\abs{s(0)}}= [\hat{T}^{(k)}_{v, (\mathfrak{p},\mathfrak{q})}]_{11}$ and $p_{k, \abs{s(0)}}\psi_{\abs{s(0)-1}}= p_{k, 0}[\hat{T}^{(k)}_{v, (\mathfrak{p},\mathfrak{q})}]_{22}$. Since $v^{(k)}$ is $\abs{s(0)}$-periodic, $p_{k, \abs{s(0)}} = p_{k, 0}$, so we have
\begin{align*}
\frac{1}{2}\abs{\phi_{\abs{s(0)}} + \psi_{\abs{s(0)}-1}} = \frac{1}{2}\abs{\Tr\hat{T}^{(k)}_{v, (\mathfrak{p},\mathfrak{q})}(E)}.
\end{align*}
Thus the spectrum of the $k$th periodic approximation is given by
\begin{align}\label{eq:k-spect}
\sigma_k = \set{E\in\R: \frac{1}{2}\abs{\Tr\hat{T}^{(k)}_{v, (\mathfrak{p},\mathfrak{q})}(E)}\leq 1}.
\end{align}
Now let us define the following set, to which in the remainder of the paper we shall refer as \textit{the dynamical spectrum}:
\begin{align}\label{eq:dyn-spect}
B_{v, (\mathfrak{p},\mathfrak{q})} \eqdef\set{E\in\R: \abs{\hat{T}^{(k)}_{v, (\mathfrak{p},\mathfrak{q})}(E)}\hspace{2mm}\text{is bounded}}.
\end{align}
One of the main ingredients in our proofs will be the following lemma.
\begin{lem}\label{lem:main}
Given a substitution sequence $v$ generated by a primitive invertible two-letter substitution $s$, we have
\begin{align}\label{eq:dyn-real-spect-equality}
\sigma(H_{v, (\mathfrak{p},\mathfrak{q})}) = B_{v,(\mathfrak{p},\mathfrak{q})}.
\end{align}
\end{lem}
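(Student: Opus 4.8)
The plan is to prove the two inclusions $\sigma(H_{v,(\mathfrak p,\mathfrak q)})\subseteq B_{v,(\mathfrak p,\mathfrak q)}$ and $B_{v,(\mathfrak p,\mathfrak q)}\subseteq\sigma(H_{v,(\mathfrak p,\mathfrak q)})$ separately, using throughout the dictionary between transfer matrices and the trace map. First I would record two preliminary facts. (i) Since the first $\abs{s^k(0)}$ symbols of $v$ and of its periodic approximant $v^{(k)}$ coincide, the monodromy $\hat T^{(k)}_{v,(\mathfrak p,\mathfrak q)}(E)$ from \eqref{eq:long-transfer} is \emph{also} the genuine transfer matrix of $H_{v,(\mathfrak p,\mathfrak q)}$ over the first $\abs{s^k(0)}$ sites. (ii) Via the recursion satisfied by the $\hat T^{(k)}$ and the fact that $T_s$ preserves the Fricke--Vogt invariant, boundedness of the matrix norms $\set{\abs{\hat T^{(k)}_{v,(\mathfrak p,\mathfrak q)}(E)}}_k$ from \eqref{eq:dyn-spect} is equivalent to boundedness of the trace map orbit $\mathcal O^+_{T_s}(\ell(E))$, where $\ell(E)$ is the curve of initial conditions (built exactly as $\gamma(E)$ of \eqref{eq:gamma-fib-jacobi} in the Fibonacci case). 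This equivalence is part of the Jacobi trace map formalism, together with the result of \cite{Roberts1996} that an unbounded orbit of $T_s$ has no bounded subsequence. Thus $E\in B_{v,(\mathfrak p,\mathfrak q)}$ if and only if $\mathcal O^+_{T_s}(\ell(E))$ is bounded.

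For $\sigma\subseteq B$ I would argue the contrapositive on the resolvent set. Let $E_0\notin B_{v,(\mathfrak p,\mathfrak q)}$, so the orbit of $\ell(E_0)$ is unbounded and hence, by \cite{Roberts1996,Casdagli1986}, escapes to infinity in all three coordinates; in particular $\abs{x_k(E_0)}\to\infty$, where $x_k=\tfrac12\Tr\hat T^{(k)}$. The escape criterion is an open and locally uniform condition, so there are an index $K$ and a neighborhood $U\ni E_0$ with $\abs{x_k(E)}>1$ for every $E\in U$ and every $k\geq K$. By \eqref{eq:k-spect} this gives $U\cap\sigma_k=\emptyset$ for all $k\geq K$, whence (as $U$ is open) $U\cap\overline{\bigcup_{k\geq K}\sigma_k}=\emptyset$. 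The semicontinuity \eqref{eq:semicont-spectra}, valid here because the periodic approximants converge strongly (as noted just before Theorem \ref{thm:mei-yessen-2a}), yields $\sigma\subseteq\overline{\bigcup_{k\geq K}\sigma_k}$, so $E_0\notin\sigma$.

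For $B\subseteq\sigma$ I would turn boundedness at the substitution times into a polynomially bounded solution and invoke Schnol's theorem. If $E\in B$, then by (ii) the matrices $\hat T^{(k)}(E)$ are uniformly bounded; expressing an arbitrary $n$ through the hierarchical block structure of the substitution writes $\hat T_n(E)$ as a product of $O(\log n)$ of these uniformly bounded blocks, so $\abs{\hat T_n(E)}$ grows at most polynomially, and by linear recurrence of $v$ the same holds in the negative direction along $\Z$. Hence every solution of $H_{v,(\mathfrak p,\mathfrak q)}\theta=E\theta$ is polynomially bounded, and a nonzero such solution is a polynomially bounded generalized eigenfunction; Schnol's theorem (see the general theory in \cite{Oliveira2009}) then forces $E\in\sigma$. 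Alternatively one can run the contrapositive via Johnson's theorem: for $E$ in the (hull-independent) resolvent set the $\SL(2,\R)$ cocycle over the minimal base $(\Omega_v,T)$ is uniformly hyperbolic, so $\abs{\hat T_n(E)}$ grows exponentially and $E\notin B$.

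The main obstacle I expect lies in the first inclusion, namely upgrading ``$\ell(E_0)$ escapes'' to the locally uniform statement that $\abs{x_k(E)}>1$ for all $E$ in a neighborhood and all large $k$; this is precisely where the escape dynamics of $T_s$ on $\mathcal M=\bigcup_{V>0}S_V$ (openness and uniformity of the escaping region, absence of bounded subsequences) must be used quantitatively rather than pointwise. The polynomial-growth bookkeeping in the second inclusion is routine once the substitution combinatorics are in place, and the Schnol step is standard.
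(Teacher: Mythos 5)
Your proposal is correct, and its first half is essentially the paper's own argument: the inclusion $\sigma(H_{v,(\mathfrak p,\mathfrak q)})\subseteq B_{v,(\mathfrak p,\mathfrak q)}$ is proved in Lemma~\ref{lem:main-1} exactly as you describe, by combining locally uniform escape of unbounded trace-map orbits (the model-independent part of Theorem~2.1 of \cite{Damanik2000}, resting on \cite{Roberts1996}) with the identification of the escape sets as resolvent sets of periodic approximants and the semicontinuity \eqref{eq:semicont-spectra}; the point you flag as the main obstacle --- upgrading pointwise escape to a locally uniform statement --- is precisely what the paper imports from \cite{Damanik2000} rather than reproving. Where you genuinely diverge is the reverse inclusion. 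The paper interpolates the set $A_{v,(\mathfrak p,\mathfrak q)}$ of energies with vanishing Lyapunov exponent and closes the chain $\Sigma\subseteq B\subseteq A\subseteq\Sigma$: the step $B\subseteq A$ is the Oseledec/Floquet argument of \cite{Damanik2000} (Lemma~\ref{lem:main-2}), and $A\subseteq\Sigma$, together with $A$ having zero Lebesgue measure (which the paper reuses later for the Cantor structure of the spectrum), comes from Kotani theory. Your primary route --- uniform bounds at substitution times, an $O(\log n)$ hierarchical block decomposition giving polynomial bounds on all $\hat T_n$ in both directions, then the converse of Schnol's theorem --- is a different and valid way to get $B\subseteq\Sigma$ directly; it is essentially S\"ut\H{o}'s original argument, avoids Kotani theory entirely, but requires you to carry out the two-sided block bookkeeping that the Lyapunov-exponent route hides inside Oseledec's theorem. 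Your alternative via Johnson's theorem is, modulo packaging, the paper's $A\subseteq\Sigma$ step. Two small points to make explicit: first, your deduction that $E_0\notin B_{v,(\mathfrak p,\mathfrak q)}$ forces the trace-map orbit of $\ell(E_0)$ to be unbounded uses the nontrivial converse that bounded traces imply bounded matrix norms (via the Fricke--Vogt invariant); you rightly treat this as part of the formalism, but it is a real ingredient and should be cited. Second, the last factor of $\hat T^{(k)}_{v,(\mathfrak p,\mathfrak q)}$ involves the symbol at site $\abs{s^k(0)}+1$, which may differ between $v$ and $v^{(k)}$; the discrepancy is a single uniformly bounded invertible factor and therefore harmless for boundedness and growth rates, but your preliminary fact (i) holds only up to such a factor.
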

Let us postpone the proof of this lemma until the next section.

Notice that the matrix ${T}_{v, (\mathfrak{p},\mathfrak{q})}^{(k)}(E)$ is an element of $\SL(2, \R)$. This gives us a representation $\rho:\Gamma\rightarrow \SL(2,\R)$, with
\begin{align*}
\rho(0) = \begin{pmatrix}
E	&	-1\\
1	&	0\\
\end{pmatrix}
\hspace{2mm}\text{ and }\hspace{2mm}
\rho(1) = \frac{1}{\mathfrak{p}}\begin{pmatrix}
E - \mathfrak{q}	&	-1\\
\mathfrak{p}^2		&	0
\end{pmatrix}.
\end{align*}
Let us set
\begin{align*}
x_1= \frac{1}{2}\Tr\rho(0);\hspace{4mm}
x_2= \frac{1}{2}\Tr\rho(1);\hspace{4mm}
x_3= \frac{1}{2}\Tr\rho(01).
\end{align*}
Let us also define \textit{the curve of initial conditions} $l_{v, (\mathfrak{p},\mathfrak{q})}:\R\rightarrow\R^3$ by
\begin{align}\label{eq:line}
l_{v, (\mathfrak{p},\mathfrak{q})}(E) = (x_3(E), x_2(E), x_1(E)).
\end{align}
Now if $T_s$ is the trace map that is associated to $s$, from Lemma \ref{lem:main} we get the following characterization of the spectrum.
\begin{align}\label{eq:dyn-to-real}
\sigma(H_{v,(\mathfrak{p},\mathfrak{q})}) = \set{E\in\R: \mathcal{O}_{T_s}^+(l_{v,(\mathfrak{p},\mathfrak{q})}(E))\hspace{2mm} \text{is bounded}}.
\end{align}
The explicit expression for the curve of initial conditions is
\begin{align}\label{eq:ci}
l_{v,(\mathfrak{p},\mathfrak{q})} = \left(\frac{E^2 - \mathfrak{q}E - \mathfrak{p}^2 - 1}{2\mathfrak{p}}, \frac{E - \mathfrak{q}}{2\mathfrak{p}}, \frac{E}{2} \right)
\end{align}
Computation of the Fricke-Vogt invariant $I$ along the curve $l_{v, (\mathfrak{p},\mathfrak{q})}$ gives
\begin{align}\label{eq:Ici}
I\circ l_{v,(\mathfrak{p},\mathfrak{q})}(E) = \frac{\mathfrak{q}(\mathfrak{p}^2 - 1)E + \mathfrak{q}^2 + (\mathfrak{p}^2 - 1)^2}{4\mathfrak{p}^2}.
\end{align}
Notice that the value of $I$ depends on $E$ in general, and is independent of $E$ if either $\mathfrak{q} = 0$ (the so-called off-diagonal case that has been mentioned in the introduction, and has been studied in \cite{Damanik2010a} and in \cite{Dahl2010}), or if $\mathfrak{p} = 1$ (the Schr\"odinger case that has been described in Section \ref{sec:known} above in the case of the Fibonacci substitution $s$, and has been extended (at least in the small coupling regime, i.e. for $\mathfrak{q}$ sufficiently close to zero) to general two-letter substitutions in \cite{Mei2013}). The case of $s$ being the Fibonacci substitution with $\mathfrak{q}\neq 0$ \textit{and} $\mathfrak{p}\neq 1$ has been investigated in \cite{Yessen2011a}.
\it
\begin{itemize}
\setlength{\itemsep}{2pt}
\item From now on we assume that $\mathfrak{p}\neq 1$ and $\mathfrak{q}\neq 0$.
\end{itemize}
\rm

Notice that $l_{v,(\mathfrak{p},\mathfrak{q})}$ is completely independent of $v$. Henceforth we drop $v$ from the notation and write simply $l_{(\mathfrak{p},\mathfrak{q})}$. If we differentiate $I\circ l_{(\mathfrak{p},\mathfrak{q})}$, we get
\begin{align}\label{eq:fv-de}
\frac{\partial (I\circ l_{(\mathfrak{p},\mathfrak{q})})}{\partial E} = \frac{\mathfrak{q}(\mathfrak{p}^2 - 1)}{4\mathfrak{p}^2}.
\end{align}
This quantity is clearly nonzero and it shows that $I\circ l_{(\mathfrak{p},\mathfrak{q})}$ is either increasing or decreasing with $E$, depending on whether $\mathfrak{q}(\mathfrak{p}^2 - 1)$ is positive or negative. Moreover, since the expression $I(l_{(\mathfrak{p},\mathfrak{q})})$ is linear in $E$, the equation $I(l_{(\mathfrak{p},\mathfrak{q})}) = V$ can easily be solved for $E$, with any $V\in\R$. This shows that the line $l_{(\mathfrak{p},\mathfrak{q})}$ passes through every surface $S_V$, $V\in\R$, and intersects each surface precisely once. This is exactly the complication that prevents verbatim generalization of the previous techniques (recall that in the Schr\"odinger case, as well as the off-diagonal case, the corresponding curve of initial conditions always lies on the single surface).

From the facts stated in the previous section together with Lemma \ref{lem:main}, it is evident that the spectrum $\sigma(H_{v, (\mathfrak{p},\mathfrak{q})})$ consists of those values $E\in\R$ for which $\mathcal{O}_{T_s}^+(l_{(\mathfrak{p},\mathfrak{q})})$ is bounded. We recall that in $\mathcal{M}$, we have complete characterization of such orbits: these are precisely the center-stable manifolds. However, $l_{(\mathfrak{p},\mathfrak{q})}$ intersects $\mathcal{M}^c$ (i.e. the surfaces $S_V$ with $V < 0$). What can be said about bounded orbits there?

In this case, the corresponding surface $S_V$ consists of five smooth connected components: one a topological sphere (which degenerates to a point at the origin at $V = -1$ and disappears for $V < -1$) contained in the interior of the unit cube centered at the origin, and the other four are topological discs. Since the associated trace map $T_s$ is invertible (and hence a homeomorphism) and leaves the surfaces $S_V$ invariant, it is easy to see that the spherical component is invariant. On the other hand, it is known that every point on the discs escapes to infinity \cite{Roberts1996}. Since we shall only be concerned with type-\textbf{B} points, we need to concentrate only on the spherical component. On the other hand, should $l_{(\mathfrak{p},\mathfrak{q})}$ intersect the spherical component of $S_{V_0}$ for some $V_0 < 0$, then $l_{(\mathfrak{p},\mathfrak{q})}$ must also intersect the spherical components of $S_V$ for $V\in(V_0 - \delta, V_0 + \delta)$ for $\delta > 0$ sufficiently small. This would produce an interval in the spectrum of $H_{v, (\mathfrak{p},\mathfrak{q})}$. However, as we shall soon see, the spectrum of $H_{v, (\mathfrak{p},\mathfrak{q})}$ is a Cantor set. In summary, the part of $l_{(\mathfrak{p},\mathfrak{q})}$ that lies in $\bigcup_{V < 0}S_V$ does not contain any type-\textbf{B} points.

The discussion above can be summarized as follows.
\it
\begin{itemize}
\setlength{\itemsep}{2pt}
\item Given a substitution sequence $v$ generated by a primitive invertible two-letter substitution $s$, the spectrum of $H_{v, (\mathfrak{p},\mathfrak{q})}$ consists of those energies $E\in\R$ for which $l_{(\mathfrak{p},\mathfrak{q})}(E)$ lies on a center-stable manifold inside $\mathcal{M}$ (actually, we shall see that there may be one $E\in\sigma(H_{v, (\mathfrak{p},\mathfrak{q})})$ lying in $S_0$, and in this case $E$ marks one of the extrema of the spectrum).

\end{itemize}
\rm

\subsection{Proofs}
With the tools from the previous section, we are now ready to prove the results of Section \ref{sec:main}

\subsubsection{Proof of Lemma \ref{lem:main}}\label{sec:lem:main}
This lemma generalizes Theorem 2.1 in \cite{Damanik2000}, we also refer the reader to \cite{Roberts1996}. By \cite{Hof1993} the Lyapunov exponent
$$\lim_{k \to \infty} \frac{1}{k} \log\norm{\hat{T}^{(k)}_{v, (\mathfrak{p},\mathfrak{q})}(E)}$$
exists for all energies $E$ (in fact uniformly for all $\omega\in\Omega_v$). Let
\begin{align}
A_{v,(\mathfrak{p},\mathfrak{q})} \eqdef \set{E \in \R: \lim_{k \to \infty} \frac{1}{k} \log\norm{\hat{T}^{(k)}_{v, (\mathfrak{p},\mathfrak{q})}(E)}=0}.
\end{align}
By Kotani theory, $A_{v,(\mathfrak{p},\mathfrak{q})}$ has zero Lebesgue measure and is a subset of $\Sigma_{v, (\mathfrak{p},\mathfrak{q})}$. See \cite{Kotani1989}, see \cite{Remling2011} for the extension to the Jacobi case, and see also \cite{Yessen2011a} for a discussion in the particular case of the Fibonacci Jacobi operator. The bulk of the proof of Theorem 2.1 in \cite{Damanik2000} is model-independent, i.e. the proof concerns the trace map without regard to the particular operator being studied. Here, we will show that $\Sigma_{v, (\mathfrak{p},\mathfrak{q})} \subseteq B_{v,(\mathfrak{p},\mathfrak{q})}$ in Lemma \ref{lem:main-1} and that $B_{v,(\mathfrak{p},\mathfrak{q})}\subseteq A_{v,(\mathfrak{p},\mathfrak{q})}$ in Lemma \ref{lem:main-2}. To keep the notation synchronized with that from Theorem 2.1 of \cite{Damanik2000}, let us introduce the following sets.
\begin{align*}
Esc_{v, (\mathfrak{p},\mathfrak{q})}^{(k)}(0)&\eqdef\set{E\in\R: \abs{x_k(E)}>1}\\
Esc_{v, (\mathfrak{p},\mathfrak{q})}^{(k)}(1)&\eqdef\set{E\in\R: \abs{y_k(E)}>1}\\
Esc_{v, (\mathfrak{p},\mathfrak{q})}^{(k)}(01)&\eqdef\set{E\in\R: \abs{z_k(E)}>1},
\end{align*}
where the $x_k = \pi_1\circ f^k(x_0,y_0,z_0)$, $y_k=\pi_2\circ f^k(x_0,y_0,z_0)$, and $z_k = \pi_3\circ f^k(x_0,y_0,z_0)$ with $\pi_i$ denoting projection onto the $i$th coordinate, and $(x_0, y_0, z_0)$ is the initial point $l_{v,(\mathfrak{p},\mathfrak{q})}(E)$ from \eqref{eq:ci}.
\begin{lem}\label{lem:main-1}
Let $v$ be a primitive invertible substitution sequence  and let $(\mathfrak{p},\mathfrak{q})\neq (1,0)$.
\begin{align*}\bigcup_{k \in \N} \Int\left(\bigcap_{m \geq k} Esc_{v, (\mathfrak{p},\mathfrak{q})}^{(k)}(0)\right) &\cup \bigcup_{k \in \N} \Int\left(\bigcap_{m \geq k} Esc_{v, (\mathfrak{p},\mathfrak{q})}^{(k)}(1)\right) \\
&\cup \bigcup_{k \in \N} \Int\left(\bigcap_{m \geq k} Esc_{v, (\mathfrak{p},\mathfrak{q})}^{(k)}(01)\right) \subseteq (\Sigma_{v, (\mathfrak{p},\mathfrak{q})})^c.
\end{align*}
\end{lem}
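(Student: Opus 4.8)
The plan is to treat each of the three sets on the left-hand side separately and show that each is contained in $(\Sigma_{v,(\mathfrak{p},\mathfrak{q})})^c$ by comparing $H_{v,(\mathfrak{p},\mathfrak{q})}$ with an appropriate sequence of periodic operators and invoking semicontinuity of the spectrum. Fix $\star\in\set{0,1,01}$ and set $w_m\eqdef s^m(\star)$. First I would check that each $w_m$ is a factor of $v$: since $s(0)$ begins with $0$, the fixed point $v$ begins with $0$, and by primitivity $v$ also contains the letter $1$; the first occurrence of $1$ is then preceded by $0$, so $01$ is a factor of $v$. Because $v=s(v)$, the substitution $s$ carries factors of $v$ to factors of $v$, so $s^m(01)=s^m(0)s^m(1)$ is a factor of $v$, and hence so are its prefix $s^m(0)$ and suffix $s^m(1)$. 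Let $H^{(m)}_\star$ denote the periodic Jacobi operator obtained by periodizing the coefficient word $w_m$, with spectrum $\sigma^\star_m$. By Floquet theory together with the trace-map computation recorded before the lemma, the three coordinates $x_m,y_m,z_m$ of the trace-map orbit are exactly the renormalized half-traces over $s^m(0),s^m(1),s^m(01)$, so that $\sigma^{(0)}_m=\set{E:\abs{x_m(E)}\le 1}$, $\sigma^{(1)}_m=\set{E:\abs{y_m(E)}\le 1}$, $\sigma^{(01)}_m=\set{E:\abs{z_m(E)}\le 1}$, and consequently $(\sigma^\star_m)^c=Esc^{(m)}_{v,(\mathfrak{p},\mathfrak{q})}(\star)$.

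The second step is to pass to the limit $m\to\infty$. Since $\abs{w_m}\to\infty$ and each $w_m$ occurs in $v$, the periodized coefficient sequences agree, on central blocks whose length tends to infinity, with suitable shifts of $v$. Using compactness of the hull $\Omega_v$ I would extract a subsequence $m_j$ along which these shifts converge to some $\omega\in\Omega_v$, and conclude that (after the corresponding harmless shift, which does not affect the spectrum) $H^{(m_j)}_\star\to H_{\omega,(\mathfrak{p},\mathfrak{q})}$ in the strong operator topology: for any finitely supported vector, the coefficients of the two operators eventually coincide on its support. By Theorem \ref{thm:mei-yessen-1}, $\sigma(H_{\omega,(\mathfrak{p},\mathfrak{q})})=\Sigma_{v,(\mathfrak{p},\mathfrak{q})}$, so the limiting spectrum is the desired one irrespective of which $\omega$ the subsequence selects.

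With strong convergence in hand I would invoke the standard semicontinuity of spectra for uniformly bounded self-adjoint operators: each point of the limiting spectrum is a limit of points of the approximating spectra, so, exactly as in \eqref{eq:semicont-spectra},
\begin{align*}
\Sigma_{v,(\mathfrak{p},\mathfrak{q})}=\sigma(H_{\omega,(\mathfrak{p},\mathfrak{q})})\subseteq\bigcap_{l}\overline{\bigcup_{m\ge l}\sigma^\star_m},
\end{align*}
where passing from the subsequence $\set{m_j}$ to the full sequence is justified because $\bigcup_{j\ge l}\sigma^\star_{m_j}\subseteq\bigcup_{m\ge m_l}\sigma^\star_m$ and $\set{m_l}$ is cofinal in $\N$. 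Taking complements turns this into
\begin{align*}
\bigcup_{l}\Int\left(\bigcap_{m\ge l}(\sigma^\star_m)^c\right)=\bigcup_{l}\Int\left(\bigcap_{m\ge l}Esc^{(m)}_{v,(\mathfrak{p},\mathfrak{q})}(\star)\right)\subseteq(\Sigma_{v,(\mathfrak{p},\mathfrak{q})})^c,
\end{align*}
and taking the union over $\star\in\set{0,1,01}$ yields the lemma.

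The main obstacle I anticipate is the second step: establishing genuine strong convergence of the periodic models to an operator in the hull, and in particular controlling the $m$-dependence of the shift and of the limiting element $\omega$. Unlike the case $\star=0$, the words $s^m(1)$ and $s^m(01)$ do not occupy a fixed location in $v$, so one cannot simply read off agreement on a fixed initial block; it is the compactness-plus-subsequence argument, combined with the constancy of the spectrum over $\Omega_v$ from Theorem \ref{thm:mei-yessen-1}, that makes the argument go through. A secondary technical point is to make the Floquet identification of $\sigma^\star_m$ with the level sets of $x_m,y_m,z_m$ fully precise, since the Jacobi transfer matrices require the unimodular renormalization $T_n$ described just before the lemma.
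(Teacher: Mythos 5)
Your proof is correct and follows the same overall strategy as the paper's: periodize $s^m(\star)$ for $\star\in\set{0,1,01}$, identify the resolvent set of each periodic operator with the corresponding escape set via Floquet theory and the unimodular renormalization, and then combine strong convergence with semicontinuity of spectra as in \eqref{eq:semicont-spectra}. The one genuine difference is in how strong convergence is obtained. The paper uses primitivity in the opposite direction: for every $n$ there is $n_k$ such that the central block $v_{-n}\cdots v_n$ occurs as a subword of $s^{n_k}(\star)$ for all three choices of $\star$, so the (suitably shifted) periodic operators agree with $H_{v,(\mathfrak{p},\mathfrak{q})}$ on growing central blocks and converge strongly to $H_{v,(\mathfrak{p},\mathfrak{q})}$ itself, with no detour through the hull. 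You instead use that $s^m(\star)$ occurs in $v$, extract a limit $\omega\in\Omega_v$ by compactness, and must then invoke the $\omega$-independence of the spectrum. That step is legitimate, but it creates an ordering issue: you should cite Lemma \ref{lem:indep-omega} (whose proof uses only minimality, unitary equivalence under shifts, and semicontinuity, and is independent of the present lemma) rather than Theorem \ref{thm:mei-yessen-1} as a whole, since the zero-measure and Cantor-set assertions of that theorem are themselves derived from Lemma \ref{lem:main} and hence, indirectly, from the statement you are proving. With that citation repaired your argument is complete; your treatment of the passage from the subsequence $\set{m_j}$ to the full sequence and of the complementation step is in fact more explicit than the paper's, and your approach buys a self-contained verification that the relevant words are factors of $v$, at the cost of the extra hull-constancy input.
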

\begin{proof}
This is the direct analogue of Lemma 4.2 from \cite{Damanik2000}. We have previously introduced the notation $H_{v, (\mathfrak{p},\mathfrak{q})}^{(k)}$ for the $k$th periodic approximation of $H_{v, (\mathfrak{p},\mathfrak{q})}$. Let us introduce the notation $H_{\star, (\mathfrak{p},\mathfrak{q})}^{(k)}$ for the periodic approximation, using the sequence whose period is $\abs{s^k(\star)}$, constructed by gluing the words $s^k(\star)$ into an infinite sequence, in place of $v$; here $\star\in\set{0, 1, 01}$. For example, $H_{0, (\mathfrak{p},\mathfrak{q})}^{(k)}$ is exactly $H_{v, (\mathfrak{p},\mathfrak{q})}^{(k)}$. For each $n \in \N$, by the primitivity of $s$, there is some $n_k>0$ so that $v_{-n}...v_{n}$ is a word of $s^{n_k}(0)$, $s^{n_k}(1)$, and $s^{n_k}(01)$. Observe that $n_k$ can be chosen to be increasing so that as $k \to \infty$, $H_{x, (\mathfrak{p},\mathfrak{q})}^{(n_k)}$ converges strongly to $H_{v, (\mathfrak{p},\mathfrak{q})}$. Thus we have
\begin{align*}
\bigcup_{l \in \N} \Int \left(\bigcap_{k \geq l} \rho \left(H_{x, (\mathfrak{p},\mathfrak{q})}^{(n_k)}\right)\right) \subseteq (\Sigma_{v, (\mathfrak{p},\mathfrak{q})})^c
\end{align*}
and $\rho\left(H_{x, (\mathfrak{p},\mathfrak{q})}^{(n_k)}\right)$ is exactly $Esc_{v, (\mathfrak{p},\mathfrak{q})}^{(n_k)}(x)$ (here $\rho(\cdot)$ denotes the resolvent set of $\cdot$). This gives us our desired inclusion.
\end{proof}

\begin{lem}\label{lem:main-2}
Let $v$ be a primitive invertible substitution sequence  and let $(\mathfrak{p},\mathfrak{q})\neq (1,0)$. Then $B_{v,(\mathfrak{p},\mathfrak{q})}\subseteq A_{v,(\mathfrak{p},\mathfrak{q})}$
\end{lem}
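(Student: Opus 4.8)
The plan is to exploit the fact that, for every $E$, the long transfer matrix $\hat{T}^{(k)}_{v,(\mathfrak{p},\mathfrak{q})}(E)$ is a product of the unimodular matrices $T^{(n)}_{v,(\mathfrak{p},\mathfrak{q})}(E)\in\SL(2,\R)$ from \eqref{eq:transfer-mat}, and hence is itself unimodular. First I would record the elementary consequence that any $M\in\SL(2,\R)$ has operator norm $\norm{M}\geq 1$: the product of its two singular values equals $\abs{\det M}=1$, so the larger singular value (which is $\norm{M}$) is at least $1$. In particular $\norm{\hat{T}^{(k)}_{v,(\mathfrak{p},\mathfrak{q})}(E)}\geq 1$ for every $k$ and every $E$, whence $\log\norm{\hat{T}^{(k)}_{v,(\mathfrak{p},\mathfrak{q})}(E)}\geq 0$.

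Now fix $E\in B_{v,(\mathfrak{p},\mathfrak{q})}$. By the definition of $B_{v,(\mathfrak{p},\mathfrak{q})}$ in \eqref{eq:dyn-spect}, the sequence $\set{\norm{\hat{T}^{(k)}_{v,(\mathfrak{p},\mathfrak{q})}(E)}}_{k}$ is bounded, so put $C\eqdef\sup_k\norm{\hat{T}^{(k)}_{v,(\mathfrak{p},\mathfrak{q})}(E)}<\infty$; by the previous paragraph $C\geq 1$. Combining the lower bound with this upper bound yields the sandwich $0\leq \frac{1}{k}\log\norm{\hat{T}^{(k)}_{v,(\mathfrak{p},\mathfrak{q})}(E)}\leq \frac{\log C}{k}$ for all $k$, and the right-hand side tends to $0$ as $k\to\infty$. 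By the theorem of Hof invoked at the start of the proof of Lemma \ref{lem:main}, the limit defining the Lyapunov exponent exists for this $E$, so it must equal $0$; that is, $E\in A_{v,(\mathfrak{p},\mathfrak{q})}$. As $E$ was arbitrary, $B_{v,(\mathfrak{p},\mathfrak{q})}\subseteq A_{v,(\mathfrak{p},\mathfrak{q})}$, as claimed.

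This is the easy half of the squeeze establishing Lemma \ref{lem:main}: together with $\Sigma_{v,(\mathfrak{p},\mathfrak{q})}\subseteq B_{v,(\mathfrak{p},\mathfrak{q})}$ from Lemma \ref{lem:main-1} and the Kotani-theoretic inclusion $A_{v,(\mathfrak{p},\mathfrak{q})}\subseteq\Sigma_{v,(\mathfrak{p},\mathfrak{q})}$, it closes the chain $\Sigma\subseteq B\subseteq A\subseteq\Sigma$. I anticipate no genuine obstacle here; the only two points requiring a word of care are the unimodularity bound (this is exactly where the off-diagonal weights $p_{k,n+1}$ cancel in the determinant, as already noted after Lemma \ref{lem:main}) and the harmlessness of the $1/k$ normalization as opposed to normalization by the word length $\abs{s^k(0)}$. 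The latter is a non-issue precisely because $\log\norm{\hat{T}^{(k)}_{v,(\mathfrak{p},\mathfrak{q})}(E)}$ is trapped between $0$ and $\log C$, so that dividing by any quantity diverging in $k$ forces the quotient to $0$.
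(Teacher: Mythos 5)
Your squeeze argument is correct for what it proves, but it proves the wrong statement, because it misreads the definition of $B_{v,(\mathfrak{p},\mathfrak{q})}$. In \eqref{eq:dyn-spect} the symbol $\abs{\cdot}$ applied to a transfer matrix is this paper's (admittedly sloppy) shorthand for the absolute value of its \emph{trace}, not the matrix norm: in the S\"ut\H{o} discussion the set $B_\infty$ is rewritten as $\set{E\in\R: \frac{1}{2}\abs{\hat{M}_{F_k}(E)}_{k\in\N}\text{ is bounded}}$ and identified with boundedness of the trace-map orbit, where $x_k(E) = \frac{1}{2}\abs{\Tr\hat{M}_{F_k}(E)}$; likewise, the paper deduces \eqref{eq:dyn-to-real} directly from Lemma \ref{lem:main}, writes $\sigma_k = B^{(k)}_{v,(\mathfrak{p},\mathfrak{q})}$ via \eqref{eq:k-spect}, and asserts that points outside the ``dynamical spectrum'' $B_{v,(\mathfrak{p},\mathfrak{q})}$ escape to infinity in every coordinate. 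All of this only makes sense if $B_{v,(\mathfrak{p},\mathfrak{q})}$ is the set of energies whose half-traces (equivalently, whose trace-map orbit) stay bounded. With that hypothesis, your proof collapses at the first step: bounded traces do not give bounded norms by any soft argument. For instance, $\left(\begin{smallmatrix}\cos\theta & -s^2\sin\theta\\ s^{-2}\sin\theta & \cos\theta\end{smallmatrix}\right)\in\SL(2,\R)$ has trace $2\cos\theta$ but norm at least $s^2\abs{\sin\theta}$. The entire content of Lemma \ref{lem:main-2} is the passage from boundedness of $\abs{\Tr\hat{T}^{(k)}_{v,(\mathfrak{p},\mathfrak{q})}(E)}$ at all substitution levels to subexponential growth of $\norm{\hat{T}^{(k)}_{v,(\mathfrak{p},\mathfrak{q})}(E)}$, and this is exactly where Oseledec's theorem and Floquet-theoretic considerations enter; the paper's proof is a pointer to Lemma 4.23 of \cite{Damanik2000}, where that argument is carried out. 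Your proposal never engages with this step.

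Note also that the gap cannot be repaired by simply declaring $B_{v,(\mathfrak{p},\mathfrak{q})}$ to be the bounded-norm set. Under that reading Lemma \ref{lem:main-2} is indeed trivial, but then the chain $\Sigma_{v,(\mathfrak{p},\mathfrak{q})}\subseteq B_{v,(\mathfrak{p},\mathfrak{q})}\subseteq A_{v,(\mathfrak{p},\mathfrak{q})}\subseteq \Sigma_{v,(\mathfrak{p},\mathfrak{q})}$ in the proof of Lemma \ref{lem:main} breaks at its first link: Lemma \ref{lem:main-1} only controls the trace-escape sets $Esc^{(k)}_{v,(\mathfrak{p},\mathfrak{q})}(\cdot)$, so to conclude $\Sigma_{v,(\mathfrak{p},\mathfrak{q})}\subseteq B_{v,(\mathfrak{p},\mathfrak{q})}$ (norm version) one would need precisely the S\"ut\H{o}--Damanik implication ``bounded trace orbit $\Rightarrow$ bounded (or at least subexponentially growing) norms'' that you omitted, and \eqref{eq:dyn-to-real} would no longer follow from Lemma \ref{lem:main}. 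In other words, the difficulty is displaced, not removed. The pieces of your write-up that survive under any reading are the elementary ones: $\det T^{(n)}_{v,(\mathfrak{p},\mathfrak{q})}(E)=1$, hence $\norm{\hat{T}^{(k)}_{v,(\mathfrak{p},\mathfrak{q})}(E)}\geq 1$, and boundedness of $\log\norm{\hat{T}^{(k)}_{v,(\mathfrak{p},\mathfrak{q})}(E)}$ forces the $1/k$-normalized limit to vanish. These are correct, but they are not where the lemma lives.
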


\begin{proof}
This is the direct analogue of Lemma 4.23 from \cite{Damanik2000}. As the proof relies on Osceledec's theorem and Floquet theory-based considerations, the exact proof with $\hat{T}^{(k)}_{v, (\mathfrak{p},\mathfrak{q})}(E)$ in place of $M_E(S^k(0))$ (in the notation of \cite{Damanik2000}) gives us our desired result.
\end{proof}

\begin{proof}[Proof of Lemma \ref{lem:main}]
Lemmas \ref{lem:main-1} and \ref{lem:main-2}, together with Kotani theory from our comments at the beginning of this section, give us the following set inclusions
$$\Sigma_{v, (\mathfrak{p},\mathfrak{q})} \subseteq B_{v,(\mathfrak{p},\mathfrak{q})}\subseteq A_{v,(\mathfrak{p},\mathfrak{q})} \subseteq \Sigma_{v, (\mathfrak{p},\mathfrak{q})}$$
from which we conclude that $\Sigma_{v, (\mathfrak{p},\mathfrak{q})} = B_{v,(\mathfrak{p},\mathfrak{q})}$.
\end{proof}

\subsubsection{Proof of Theorem \ref{thm:mei-yessen-1}}
Let us begin by recalling some properties of substitution sequences. Let $w$ be a word of $v$ of length $n$ and let $... < m_{-1} < 0 \leq m_0 < m_1<...$ be the indices corresponding to occurrences of $w$, i.e. $v_{m_k}...v_{m_k+n-1}=w$.  We say that $v$ is \textit{recurrent} if $m_k \to \pm \infty$ as $k \to \pm\infty$ for every word $w$ of $v$. We say that $v$ is \textit{linearly recurrent} if there is a constant $C<\infty$ such that for every word $w$ of $v$, the gaps $m_{k+1}-{m_k}$ are bounded by $C|w|$. It is well known that if $v$ is a primitive invertible substitution sequence, then $v$ is linearly recurrent and the system $(\Omega_v, T)$ is strictly ergodic, i.e. minimal and uniquely ergodic. Recall these definitions from Section \ref{sec:intro}. For a proof of this based on linear recurrence, see for example \cite{Damanik2005}.
\begin{lem}\label{lem:indep-omega}
For every primitive invertible substitution sequence $v$ and $(\mathfrak{p},\mathfrak{q})\neq (1,0)$, there exists $\Sigma_{v, (\mathfrak{p},\mathfrak{q})}\subset\R$ such that for every $\omega\in\Omega_v$, $\Sigma_{\omega, (\mathfrak{p},\mathfrak{q})} = \Sigma_{v, (\mathfrak{p},\mathfrak{q})}$.
\end{lem}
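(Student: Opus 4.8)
The plan is to exploit the covariance of the family $\{H_{\omega,(\mathfrak{p},\mathfrak{q})}\}_{\omega\in\Omega_v}$ under the shift together with the minimality of $(\Omega_v,T)$ recorded above; unique ergodicity is not needed for this particular statement. First I would record the covariance relation. Let $U:\ell^2(\Z)\to\ell^2(\Z)$ be the unitary shift $(U\phi)_n = \phi_{n+1}$. A direct computation from \eqref{eq:full-jacobi} shows that $U H_{\omega,(\mathfrak{p},\mathfrak{q})} U^{-1} = H_{T\omega,(\mathfrak{p},\mathfrak{q})}$, so $H_{T\omega,(\mathfrak{p},\mathfrak{q})}$ and $H_{\omega,(\mathfrak{p},\mathfrak{q})}$ are unitarily equivalent. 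Consequently $\Sigma_{\omega,(\mathfrak{p},\mathfrak{q})} = \Sigma_{T\omega,(\mathfrak{p},\mathfrak{q})}$, i.e. the spectrum is constant along each $T$-orbit in $\Omega_v$.

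Next I would establish continuity of the assignment $\omega\mapsto H_{\omega,(\mathfrak{p},\mathfrak{q})}$ in the strong operator topology. Because the matrix entries in \eqref{eq:full-jacobi} depend only on finitely many coordinates of $\omega$, if $\omega^{(j)}\to\omega$ in the product topology of $\set{0,1}^\Z$ then $H_{\omega^{(j)},(\mathfrak{p},\mathfrak{q})}\phi\to H_{\omega,(\mathfrak{p},\mathfrak{q})}\phi$ for every finitely supported $\phi$; since the operators are uniformly bounded, strong convergence on all of $\ell^2(\Z)$ follows. For uniformly bounded self-adjoint operators, strong convergence implies strong resolvent convergence, and I would then invoke the standard semicontinuity of the spectrum under strong resolvent convergence: if $E\in\sigma(H_{\omega,(\mathfrak{p},\mathfrak{q})})$ and $H_{\omega^{(j)},(\mathfrak{p},\mathfrak{q})}\to H_{\omega,(\mathfrak{p},\mathfrak{q})}$ in the strong resolvent sense, then there exist $E_j\in\sigma(H_{\omega^{(j)},(\mathfrak{p},\mathfrak{q})})$ with $E_j\to E$.

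Combining these two ingredients via minimality yields both inclusions. Fix any reference point $\omega_0\in\Omega_v$ and set $\Sigma_{v,(\mathfrak{p},\mathfrak{q})}\eqdef\Sigma_{\omega_0,(\mathfrak{p},\mathfrak{q})}$. Let $\omega\in\Omega_v$ be arbitrary. By minimality the forward orbit of $\omega_0$ is dense in $\Omega_v$, so there is a sequence $n_i\uparrow\infty$ with $T^{n_i}\omega_0\to\omega$; the covariance step gives $\Sigma_{T^{n_i}\omega_0,(\mathfrak{p},\mathfrak{q})}=\Sigma_{v,(\mathfrak{p},\mathfrak{q})}$, and the semicontinuity step forces any $E\in\Sigma_{\omega,(\mathfrak{p},\mathfrak{q})}$ to be a limit of points of the closed set $\Sigma_{v,(\mathfrak{p},\mathfrak{q})}$, whence $\Sigma_{\omega,(\mathfrak{p},\mathfrak{q})}\subseteq\Sigma_{v,(\mathfrak{p},\mathfrak{q})}$. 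Applying minimality in the other direction, namely density of the orbit of $\omega$ so that $T^{m_j}\omega\to\omega_0$ for some $m_j\uparrow\infty$, the same argument gives the reverse inclusion $\Sigma_{v,(\mathfrak{p},\mathfrak{q})}\subseteq\Sigma_{\omega,(\mathfrak{p},\mathfrak{q})}$, and equality follows.

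The main obstacle, such as it is, is to avoid appealing to norm resolvent convergence: strong resolvent convergence only yields lower semicontinuity of spectra (the spectrum cannot suddenly collapse), which by itself gives just one of the two inclusions. The point I would stress is that minimality supplies approximating orbits converging to $\omega$ from $\omega_0$ \emph{and} to $\omega_0$ from $\omega$, so running the one-sided semicontinuity argument twice recovers both inclusions with no norm-continuity input. This is the standard strict-ergodicity argument for constancy of the spectrum; the only model-specific verifications are the covariance identity and the locality of \eqref{eq:full-jacobi} that yields strong continuity.
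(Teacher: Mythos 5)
Your proposal is correct and follows essentially the same route as the paper's proof: unitary equivalence of $H_{\omega,(\mathfrak{p},\mathfrak{q})}$ and $H_{T\omega,(\mathfrak{p},\mathfrak{q})}$ under the shift, strong convergence of operators along orbits supplied by minimality of $(\Omega_v,T)$, and the one-sided semicontinuity of spectra under strong (resolvent) convergence, applied twice with the roles of the two points exchanged to get both inclusions. Your write-up merely makes explicit the covariance computation, the locality argument for strong continuity, and the observation that only one inclusion comes from semicontinuity, all of which the paper leaves implicit.
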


The result of Lemma \ref{lem:indep-omega} is well known in the Schr\"odinger case, see for example Theorem 3.2 in \cite{Damanik 2005}. We follow along the same lines here.

\begin{proof}
Let $v_1, v_2 \in \Omega_v$. By the minimality of $(T, \Omega_v)$, there exists a sequence $n_k$ such that $T^{n_k}(v_2) \to v_1$ as $k \to \infty$. Note that $H_{T^{n_k}(v_2), (\mathfrak{p}, \mathfrak{q})}$ is unitarily equivalent to $H_{v_2, (\mathfrak{p}, \mathfrak{q})}$ for every $k$, thus $\Sigma_{T^{n_k}(v_2), (\mathfrak{p},\mathfrak{q})}=\Sigma_{v_2, (\mathfrak{p},\mathfrak{q})}$. Since $H_{T^{n_k}(v_2), (\mathfrak{p}, \mathfrak{q})}$ converges strongly to $H_{v_1, (\mathfrak{p}, \mathfrak{q})}$, we have the following set inclusions
$$\Sigma_{v_1, (\mathfrak{p}, \mathfrak{q})} \subseteq \bigcap_{l \in \N} \overline{\bigcup_{k \geq l} \Sigma_{T^{n_k}(v_2), (\mathfrak{p},\mathfrak{q})}} = \Sigma_{v_2, (\mathfrak{p},\mathfrak{q})}.$$
Hence, for any $v_1, v_2 \in \Omega_v$, $\Sigma_{v_1, (\mathfrak{p},\mathfrak{q})} \subseteq \Sigma_{v_2, (\mathfrak{p},\mathfrak{q})}$. By symmetry, the inclusion holds the other way too, so that for every $\omega\in\Omega_v$, $\Sigma_{\omega, (\mathfrak{p},\mathfrak{q})} = \Sigma_{v, (\mathfrak{p},\mathfrak{q})}$.
\end{proof}

\begin{lem}\label{lem:sc-spectrum}
$\Sigma_{v, (\mathfrak{p},\mathfrak{q})}$ is purely singular continuous.
\end{lem}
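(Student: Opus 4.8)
The plan is to establish pure singular continuity by separately ruling out an absolutely continuous component and a point component; once both are excluded, the spectral decomposition of each $H_{\omega,(\mathfrak{p},\mathfrak{q})}$ leaves only a singular continuous part, which is then supported on the $\omega$-independent set $\Sigma_{v,(\mathfrak{p},\mathfrak{q})}$ of Lemma \ref{lem:indep-omega}.

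First I would dispose of the absolutely continuous part using Kotani theory, reusing the setup from the proof of Lemma \ref{lem:main}. There we introduced $A_{v,(\mathfrak{p},\mathfrak{q})} = \set{E\in\R : \lim_{k\to\infty}\frac{1}{k}\log\norm{\hat T^{(k)}_{v, (\mathfrak{p},\mathfrak{q})}(E)} = 0}$, the set where the Lyapunov exponent vanishes, and recorded (via Kotani theory) that it has zero Lebesgue measure. By the Ishii--Kotani--Pastur theorem for ergodic Jacobi matrices (\cite{Kotani1989}, in the Jacobi form of \cite{Remling2011}), the absolutely continuous spectrum of $H_{\omega,(\mathfrak{p},\mathfrak{q})}$ equals the essential closure (with respect to Lebesgue measure) of the set where the Lyapunov exponent vanishes, that is, of $A_{v,(\mathfrak{p},\mathfrak{q})}$. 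Since $A_{v,(\mathfrak{p},\mathfrak{q})}$ is a Lebesgue null set, its essential closure is empty, and hence the absolutely continuous spectrum is empty for every $\omega\in\Omega_v$.

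It then remains to show that $H_{\omega,(\mathfrak{p},\mathfrak{q})}$ has no eigenvalues, which I would carry out via the Gordon argument. The combinatorial input is that a primitive invertible (equivalently, Sturmian) substitution sequence is linearly recurrent, and in fact contains arbitrarily long cubes; more precisely, for every $\omega\in\Omega_v$ one finds lengths $q_k\uparrow\infty$ such that the letters of $\omega$ in a window of length roughly $3q_k$, appropriately positioned about the origin, consist of three consecutive identical blocks. Fixing such a structure and an energy $E$, let $A_k$ be the one-block transfer matrix $\hat T$ over a block of length $q_k$; the reduction to the unimodular matrices $T^{(n)}_{v,(\mathfrak{p},\mathfrak{q})}$ and vectors $\Theta_n = (\theta_n, p_{k,n}\theta_{n-1})^{\mathrm{T}}$ carried out in \eqref{eq:transfer-mat}--\eqref{eq:eigen-transfer} is precisely what lets the Schr\"odinger-style argument pass to the Jacobi setting. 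Applying the Cayley--Hamilton identity $A_k^2 - (\Tr A_k)A_k + I = 0$ to the three-block structure yields a lower bound of the form $\max\set{\norm{\Theta_{2q_k}}, \norm{\Theta_{q_k}}} \geq c\norm{\Theta_0}$ with $c$ independent of $k$ (together with a symmetric statement on the other side of the origin), so no nontrivial solution of $H_{\omega,(\mathfrak{p},\mathfrak{q})}\theta = E\theta$ can lie in $\ell^2(\Z)$. Thus $E$ is not an eigenvalue, and since $E$ was arbitrary the point spectrum is empty. Combining the two steps, the spectral measures have neither an absolutely continuous nor a pure point component, hence are purely singular continuous.

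The main obstacle is the combinatorial step in the Gordon argument: one must verify that \emph{every} primitive invertible two-letter substitution sequence, not merely the Fibonacci one of \cite{Suto1987, Yessen2011a}, admits the required cube (or sufficiently strong power) structure uniformly over the hull and with the positional control relative to the origin that the three-block lemma demands. This is exactly where invertibility is used in an essential way, through the identification with Sturmian sequences whose rotation number has an eventually periodic continued-fraction expansion and the attendant repetition properties; once this combinatorial fact is in hand, the analytic portion of the Gordon lemma is routine and model-independent thanks to the unimodularity of $\hat T^{(k)}_{v,(\mathfrak{p},\mathfrak{q})}$.
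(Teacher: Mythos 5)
Your overall architecture matches the paper's: empty point spectrum via a Gordon-type argument run on the unimodular matrices $T^{(n)}_{v,(\mathfrak{p},\mathfrak{q})}$, and empty absolutely continuous spectrum via Kotani theory. The a.c.\ half is correct and is essentially the paper's own route: the paper deduces it from the zero Lebesgue measure of the spectrum (equivalently of $A_{v,(\mathfrak{p},\mathfrak{q})}$), which is the same Kotani-theoretic fact you invoke through the Ishii--Pastur--Kotani essential closure.

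The gap is in the combinatorial half of the Gordon step. You assume that every $\omega\in\Omega_v$ exhibits arbitrarily long \emph{cubes} anchored at the origin --- three consecutive identical blocks of length $q_k$ covering $n=0$ --- and you run the three-block Gordon lemma on them. Cubes do occur somewhere in every Sturmian word, but the Gordon lemma needs them positioned relative to the origin for a sequence $q_k\uparrow\infty$, uniformly over the hull, and this positional requirement is exactly what fails in general. What the combinatorics of primitive invertible (Sturmian) sequences actually delivers --- and what \cite[Proposition 5.1]{DamanikLenz2003b}, which the paper's proof invokes, provides --- is a \emph{square} structure: two consecutive identical blocks near the origin, possibly followed by only a proper prefix of a third. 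The estimate $\max\set{\norm{\Theta_{-q_k}},\norm{\Theta_{q_k}},\norm{\Theta_{2q_k}}}\geq c\norm{\Theta_0}$ is then recovered from the \emph{two-block} variant of the Gordon lemma, which requires in addition a uniform bound on $\abs{\Tr A_k}$ along the relevant subsequence. That trace bound is not combinatorial: it comes from the dynamical characterization $\Sigma_{v,(\mathfrak{p},\mathfrak{q})}=B_{v,(\mathfrak{p},\mathfrak{q})}$ of Lemma \ref{lem:main}, i.e.\ from the boundedness of the trace-map orbit of $l_{(\mathfrak{p},\mathfrak{q})}(E)$ for $E$ in the spectrum. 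So the step you flag as ``the main obstacle'' is not resolved by a stronger repetition property of the sequences; it is closed by feeding the bounded-trace information back into the Gordon lemma, exactly as in \cite{Suto1987} and \cite{DamanikLenz2003b}. You have that input available from Lemma \ref{lem:main} but never use it.
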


\begin{proof}
The combinatorial considerations in Proposition 5.1 \cite{DamanikLenz2003b} and an adaptation of a Gordon-type argument to the Jacobi operator prove empty point spectrum. Empty a.c. spectrum follows from the zero Lebesgue measure of the spectrum (see the next Lemma).
\end{proof}

\begin{lem}\label{lem:zero-meas-cantor}
$\Sigma_{v, (\mathfrak{p},\mathfrak{q})}$ is a Cantor set of zero Lebesgue measure.
\end{lem}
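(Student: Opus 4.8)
The plan is to verify the four defining properties of a Cantor subset of $\R$ for $\Sigma_{v,(\mathfrak{p},\mathfrak{q})}$ — nonemptiness, compactness, total disconnectedness, and perfectness — drawing on the spectral-theoretic input already assembled rather than on a transversality analysis of $l_{(\mathfrak{p},\mathfrak{q})}$ (which is delicate and unnecessary for a purely topological conclusion). Nonemptiness and compactness are immediate: $H_{v,(\mathfrak{p},\mathfrak{q})}$ is a bounded self-adjoint operator on an infinite-dimensional Hilbert space, so its spectrum is a nonempty compact subset of $\R$. The substance lies in the remaining two properties.

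For zero Lebesgue measure — which also yields total disconnectedness — I would invoke the chain of inclusions established in the proof of Lemma \ref{lem:main}, namely $\Sigma_{v,(\mathfrak{p},\mathfrak{q})} \subseteq B_{v,(\mathfrak{p},\mathfrak{q})} \subseteq A_{v,(\mathfrak{p},\mathfrak{q})} \subseteq \Sigma_{v,(\mathfrak{p},\mathfrak{q})}$, so that in fact $\Sigma_{v,(\mathfrak{p},\mathfrak{q})} = A_{v,(\mathfrak{p},\mathfrak{q})}$, the set of energies at which the Lyapunov exponent vanishes. By Kotani theory (in the Jacobi form), $A_{v,(\mathfrak{p},\mathfrak{q})}$ has zero Lebesgue measure, hence so does $\Sigma_{v,(\mathfrak{p},\mathfrak{q})}$. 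A closed subset of $\R$ of zero Lebesgue measure has empty interior, so it is nowhere dense; and since the only connected subsets of $\R$ are intervals (which have positive measure unless degenerate), every connected component of $\Sigma_{v,(\mathfrak{p},\mathfrak{q})}$ is a single point, giving total disconnectedness.

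It remains to show $\Sigma_{v,(\mathfrak{p},\mathfrak{q})}$ is perfect, i.e. has no isolated points, and this is the step requiring the most care. Here I would argue by contradiction: if $E_0$ were isolated in the spectrum, then for small $\epsilon$ the spectral projection $\chi_{(E_0-\epsilon, E_0+\epsilon)}(H_{\omega,(\mathfrak{p},\mathfrak{q})})$ coincides with $\chi_{\{E_0\}}(H_{\omega,(\mathfrak{p},\mathfrak{q})})$ and is nonzero, exhibiting $E_0$ as an eigenvalue of $H_{\omega,(\mathfrak{p},\mathfrak{q})}$. Since by Lemma \ref{lem:indep-omega} the spectrum is the common set $\Sigma_{v,(\mathfrak{p},\mathfrak{q})}$ for every $\omega\in\Omega_v$, this would hold for every such $\omega$, contradicting the absence of point spectrum furnished by the Gordon-type argument together with the combinatorial (repetitivity) properties of primitive invertible substitution sequences, exactly as in the proof of Lemma \ref{lem:sc-spectrum}. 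Crucially, that argument uses only the repetition structure of $v$ and is independent of the present lemma, so there is no circularity.

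The main obstacle is thus not any single hard estimate but the careful bookkeeping of dependencies: the zero-measure and no-eigenvalue inputs must each be traced back to sources (Kotani theory and the Gordon argument, respectively) that do not themselves presuppose the Cantor structure. Once perfectness, total disconnectedness, compactness, and nonemptiness are in hand, $\Sigma_{v,(\mathfrak{p},\mathfrak{q})}$ is a Cantor set by the standard topological characterization. I note that one could alternatively deduce the Cantor structure from the dynamical picture, realizing $\Sigma_{v,(\mathfrak{p},\mathfrak{q})}$ via \eqref{eq:dyn-to-real} as the set of $E$ for which $l_{(\mathfrak{p},\mathfrak{q})}(E)$ meets the center-stable manifolds in $\mathcal{M}$, which intersect a transversal curve in a Cantor set; but this would require controlling transversality of $l_{(\mathfrak{p},\mathfrak{q})}$, and the spectral route above sidesteps that difficulty entirely.
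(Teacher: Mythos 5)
Your proposal is correct and follows essentially the same route as the paper: compactness is immediate, zero measure comes from the identification $\Sigma_{v,(\mathfrak{p},\mathfrak{q})}=A_{v,(\mathfrak{p},\mathfrak{q})}$ established in the proof of Lemma~\ref{lem:main} together with Kotani theory (whence nowhere density), and perfectness follows because an isolated point of the spectrum of a self-adjoint operator would be an eigenvalue, which the Gordon-type argument of Lemma~\ref{lem:sc-spectrum} excludes. Your explicit check that only the point-spectrum half of Lemma~\ref{lem:sc-spectrum} is used (so no circularity with its absolutely-continuous part, which cites this lemma) is a worthwhile clarification the paper leaves implicit.
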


\begin{proof}
The spectrum is clearly compact. As a result of the proofs in Section \ref{sec:lem:main}, $\Sigma_{v, (\mathfrak{p},\mathfrak{q})}=A_{v, (\mathfrak{p},\mathfrak{q})}$, which has measure zero. Hence the spectrum is nowhere dense. Finally, any isolated point of $\Sigma_{v, (\mathfrak{p},\mathfrak{q})}$ would necessarily be an eigenvalue of $H_{v, (\mathfrak{p},\mathfrak{q})}$, which is precluded by Lemma ~\ref{lem:sc-spectrum}.
\end{proof}


\subsubsection{Proof of Theorem \ref{thm:mei-yessen-2a}}

In proving this result we shall follow the proof of Theorem 2.19 in \cite{Damanik2013a}. Our case requires some elaboration, since we are working with general trace maps, while the analogous result in \cite{Damanik2013a} was proved for the Fibonacci trace map only.

 By Lemma \ref{lem:main}, the dynamical spectrum coincides with the operator spectrum. On the other hand, we also have $\sigma_k = B_{v, (\mathfrak{p},\mathfrak{q})}^{(k)}$. Thus it is enough to show that $B_{v, (\mathfrak{p},\mathfrak{q})}^{(k)}\rightarrow B_{v, (\mathfrak{p},\mathfrak{q})}$ in the Hausdorff metric as $k\rightarrow\infty$.

 Let $\epsilon > 0$ arbitrary. Let $C_1, \dots, C_m$ be open sets of radius not larger than $\epsilon$ covering the compact set $B_{v, (\mathfrak{p},\mathfrak{q})}$; let us also assume that each $C_i$ intersects $B_{v, (\mathfrak{p},\mathfrak{q})}$, and the sets $C_i$ are pairwise disjoint (which can be insured by the Cantor structure of $B_{v, (\mathfrak{p},\mathfrak{q})}$). It is enough to show that there exists $N\in\N$ (depending only on $\epsilon$) such that for all $n\geq N$, $\sigma_n\cap C_i^c = \emptyset$ and $\sigma_n\cap C_i\neq\emptyset$ for all $i\in\set{1,\dots,m}$.

 Observe that since every point on the curve of initial conditions that is not contained in the dynamical spectrum escapes to infinity in every coordinate, certainly every point in $l_{(\mathfrak{p},\mathfrak{q})}\bigcap\left(\cup_{i}C_i^c\right)$ escapes to infinity in every coordinate. In particular, using the fact that the set of type-\textbf{B} points is closed, we know that there exists $N_0\in\N$ such that for all $n\geq N_0$, $\sigma_n\cap C_i^c = \emptyset$ for every $i$ (see \cite[Section 4]{Roberts1996} for more details).

 From \cite[Corollary 4.1]{Roberts1996} we know that the region
 \begin{align*}
 \mathcal{R}\eqdef\set{(x,y,z)\in\R^3: \abs{x},\abs{y},\abs{z} > 1}
 \end{align*}
 does not contain any periodic points. The following is known from \cite{Mei2013} (in the case of the Fibonacci trace map, this had been established earlier; see, for example, \cite{Damanik2009} and references therein).
 \it
 \begin{itemize}
 \setlength{\itemsep}{2pt}
 \item The fixed singularity $(1,1,1)$ on $S_0$ bifurcates into a curve of periodic points of period two; each of these points is a hyperbolic saddle. For all $V$ sufficiently close to zero, this curve of periodic points intersects the surfaces $S_V$ transversally.

 \item The three-cycle consisting of the other three singularities bifurcates into a six cycle (i.e. each of the three singularities bifurcates into a curve of period-six periodic points). These curves of periodic points have the same properties as the curve of period-two periodic points from above.

 \item The stable manifold of each of the eight periodic points from above (two period-two ones and six period-six ones) forms a dense sublamination of the stable lamination on $S_V$ (again, with $V > 0$ sufficiently close to zero).
 \end{itemize}
 \rm
 On the other hand, the above results can be pushed to higher values of $V$ by repeating the argument in the proof of \cite[Proposition 4.10]{Yessen2011} (in particular, transversality of the curves of periodic points with the surfaces $S_V$ for \textit{all} $V > 0$ holds as long as the periodic points are hyperbolic saddles, and the latter follows from \cite{Cantat2009}).

 Now, we know that these curves of periodic points do not intersect the region $\mathcal{R}$. By the density of the sublamination formed by the stable manifolds of the eight periodic points, there must exist $N_1 \in \N$ such that for all $n\geq N_1$, $\sigma_n$ intersects each of $C_i$. For slightly more technical details, see the proof of \cite[Theorem 2.19]{Damanik2013a}.

\subsubsection{Proof of Theorem \ref{thm:mei-yessen-3}}

To begin, we need the following technical lemma.

\begin{lem}\label{lem:qinter}
The center-stable manifold that contains the curve of period-two periodic points passing through the singularity $(1,1,1)$ is tangent to the Cayley cubic $S_0$, and the tangency is quadratic.
\end{lem}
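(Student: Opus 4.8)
The plan is to reduce the whole statement to the order of vanishing of the Fricke--Vogt invariant $I$ along the curve of period-two periodic points that emanates from $(1,1,1)$. The starting observation is that $(1,1,1)$ is a critical point of $I$: since $\nabla I = (2x-2yz,\,2y-2xz,\,2z-2xy)$ we get $\nabla I(1,1,1)=0$, which is exactly the analytic expression of the fact that $S_0$ has a Morse-type singularity there. The second-order data is recorded by the Hessian
\[
\mathrm{Hess}\,I(1,1,1)=2\begin{pmatrix} 1 & -1 & -1\\ -1 & 1 & -1\\ -1 & -1 & 1\end{pmatrix},
\]
which is non-degenerate, with eigenvalue $-2$ along $(1,1,1)$ and eigenvalue $4$ (double) on the plane $x+y+z=0$; its null cone $\{w:w^{\mathrm T}\,\mathrm{Hess}\,I(1,1,1)\,w=0\}$ is precisely the tangent cone of $S_0$ at $(1,1,1)$.

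Next I would exploit the defining feature of the center-stable manifold. Let $c(t)$ be the smooth, regular curve of period-two periodic points with $c(0)=(1,1,1)$, and recall that the relevant $W^{cs}$ is the union over $V>0$ of the stable leaves $L_V\subset S_V$ through the points of this orbit; its closure meets $S_0$ in the stable manifold $W_1^s$ of the singularity (the $V=0$ leaf). Parameterize $W^{cs}$ by $(t,u)\mapsto\Phi(t,u)$ with $\Phi(t,0)=c(t)$ and $u$ running along the leaves. The crucial point is that each leaf lies in a single surface $S_V$, so $I$ is constant on leaves and therefore $I(\Phi(t,u))=I(c(t))=:V(t)$ is independent of $u$. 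Thus the problem collapses to the scalar function $V(t)=I\circ c(t)$: we have $V(0)=0$, and $V'(0)=\nabla I(1,1,1)\cdot c'(0)=0$ because $(1,1,1)$ is critical, while $V''(0)=c'(0)^{\mathrm T}\,\mathrm{Hess}\,I(1,1,1)\,c'(0)$.

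The hard part, and the step I expect to be the main obstacle, is to show $V''(0)\neq0$, i.e.\ that $c'(0)$ is not a null vector of the Hessian (equivalently, that the periodic-point curve is not tangent to the tangent cone of $S_0$). I would get this from the bifurcation picture recalled in the proof of Theorem \ref{thm:mei-yessen-2a}: the two points of the period-two orbit sit at parameters $\pm t(V)$ on $c$ and coalesce at the fixed singularity as $V\downarrow0$, so $V(t)$ has a genuine minimum rather than an inflection at $t=0$, forcing $V''(0)>0$. In the Fibonacci model this is immediate, since the period-two locus is $c(x)=(x,\tfrac{x}{2x-1},x)$ and $I\circ c(x)=5(x-1)^2+O((x-1)^3)$, with $c'(0)=(1,-1,1)$ not null. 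For a general primitive invertible $s$ one reads off $c'(0)$ from the linearization of $T_{s,0}$ at the singularity (via the Anosov semiconjugacy $F$ from \eqref{eq:factor}) and checks the same non-degeneracy, or verifies $V''(0)\neq0$ directly from the explicit trace map $T_s$.

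With $V(t)=\kappa t^2+O(t^3)$, $\kappa\neq0$, in hand the geometric conclusion follows. For $u\neq0$ the point $\Phi(0,u)$ is a smooth point of $S_0$ where $\nabla I\neq0$ is a regular conormal; since $\partial_u I(\Phi)\equiv0$ and $\partial_t I(\Phi)|_{t=0}=V'(0)=0$, the plane $T_{\Phi(0,u)}W^{cs}$ is annihilated by $\nabla I$ and hence equals $T_{\Phi(0,u)}S_0$, so $W^{cs}$ and $S_0$ are tangent all along $W_1^s$. Using $I$ as a local defining function for $S_0$, the restriction $I|_{W^{cs}}=V(t)$ vanishes to order exactly two in the direction transverse to $W_1^s$, which is precisely the assertion that the tangency is quadratic. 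Finally, at the apex $(1,1,1)$, where $S_0$ is singular, I would obtain tangency to the tangent cone by letting $u\to0$ in the identity $T_{\Phi(0,u)}W^{cs}=T_{\Phi(0,u)}S_0$; these planes converge to a tangent plane of the cone, and the estimate $V(t)=\kappa t^2+O(t^3)$ again records the quadratic order of contact there.
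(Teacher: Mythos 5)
Your reduction is essentially the paper's own proof. Both arguments exploit that $\nabla I(1,1,1)=0$, so that $I$ restricted to the curve of period-two periodic points has vanishing first derivative at the singularity and second derivative equal to the Hessian quadratic form evaluated on the tangent vector --- this is exactly \eqref{eq:arb1}--\eqref{eq:arb2} --- and both pass from the one-dimensional curve to the full center-stable manifold via the foliation of $W^{cs}$ by the leaves $W^{cs}\cap S_V$, on which $I$ is constant (the paper phrases this as projecting a transversal curve $\gamma$ onto $\rho$ along that foliation; you phrase it as $I\circ\Phi$ depending only on the leaf coordinate $t$; these are the same device, both resting on smoothness of the foliation).

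The one place where your write-up falls short is the non-degeneracy $V''(0)\neq 0$, which you correctly single out as the crux. The bifurcation argument you offer does not close it: the coalescence of the two points of the period-two orbit at $t=0$ shows only that $V$ has a minimum there, hence $V''(0)\geq 0$ (compare $V(t)=t^4$). Your explicit computation settles the Fibonacci case (and is correct: $I\circ c=5(x-1)^2+O((x-1)^3)$ with $c'=(1,-1,1)$ outside the null cone), but for a general primitive invertible substitution the needed statement is precisely that the tangent direction of the periodic-point curve at $P_1$ lies in the exterior of the null cone of the Hessian, and this is the external input the paper imports from \cite{Mei2013} rather than rederiving; your suggestion to ``read off $c'(0)$ from the linearization and check non-degeneracy'' is a reasonable program but is not carried out. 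With that input granted, the remainder of your argument --- tangency of the tangent planes along $W_1^s$ away from the apex, and the quadratic order of contact read off from $I|_{W^{cs}}=\kappa t^2+O(t^3)$ --- is correct and matches the paper.
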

\begin{rem}
We remark that the result of Lemma \ref{lem:qinter} is a very important technical ingredient in a number of works (e.g. \cite{Yessen2011a,Yessen2012a,Yessen2011,Damanik2013a,Damanik2010a}). This result is somewhat evident and has been used implicitly in the past. We include it here for completeness and clarity of exposition.
\end{rem}
\begin{proof}[Proof of Lemma \ref{lem:qinter}]
Let us introduce the new coordinates by shifting the fixed point $(1,1,1)$ to the origin, and parameterize the curve of period-two periodic points passing through the origin in these new coordinates as a regular curve $\rho: J\rightarrow U$, with $U$ a neighborhood of the origin, and $J\subset \R$ an open interval of finite length with $0\in J$. Assume that $\rho(0) = (1,1,1)$. In particular, $\rho'(0)\neq 0$ (see \cite{Mei2013} for more details). For $\alpha\in\set{x,y,z}$, denote the $\alpha$ coordinate of $\rho$ by $\rho_\alpha$. Observe that
\begin{align}\label{eq:arb1}
\frac{d}{dt}I\circ\rho(t) = 2((\rho_x - \rho_y\rho_z)\rho'_x + (\rho_y - \rho_x\rho_z)\rho'_y + (\rho_z - \rho_y\rho_x)\rho'_z)(t),
\end{align}
so that $\frac{d}{dt}I\circ\rho(0) = 0$. On the other hand, if we denote $a = \rho_x'$, $b = \rho_y'$ and $c = \rho_z'$, then we have
\begin{align}\label{eq:arb2}
\frac{d^2}{dt^2}I\circ\rho(0) = 2(a^2 + b^2 + c^2 - 2ab - 2ac - 2bc)(0).
\end{align}
From \cite{Mei2013} we know that the tangent space of $\rho$ (in a sufficiently small neighborhood of $P_1$) lies in the exterior of the null cone of the quadratic form \eqref{eq:arb2}. In particular, this gives $\frac{d^2}{dt^2}I\circ\rho(0)\neq 0$.

Let us denote by $W^{ss}$ the so-called \textit{strong-stable manifold} of the point $P_1 = (1,1,1)$ in $\mathbb{S}_0$ (the center part of the Cayley cubic), which is a smooth (in fact analytic) curve the points of which converge exponentially fast to $P_1$ under iteration of the trace map. In fact, this manifold is formed by the tangential intersection of the center-stable manifold that contains the curve of period-two periodic points with the surface $\mathbb{S}_0$. Let $F$ be a fundamental domain along $W^{ss}$ (that is, $F$ is an interval along $W^{ss}$ with endpoints $\alpha$ and $\beta$, such that $\alpha$ is mapped to $\beta$ under the trace map) in a small neighborhood of $P_1$, and let $\gamma$ be a twice differentiable curve lying in the center-stable manifold and crossing $F$ transversally. It is known that the center-stable manifold intersects transversally the surfaces $S_{V > 0}$ and does not intersect any $S_{V < 0}$ (see \cite[Proposition 4.9, 4.]{Yessen2011} for a more general result). The center-stable manifold in question, let us call it $W^{cs}$, is smoothly foliated by $\set{W^{cs}\cap S_V}_{V\geq 0}$ (see \cite{Pugh1997}), and since $\gamma$ is transversal to $F$, it follows that $\gamma$ is also transversal to $W^{cs}\cap S_V$ for all $V > 0$ sufficiently small, and hence also to the surfaces $S_{V > 0}$ near $F$. Now $\gamma$ can be parameterized similarly to $\rho$ by projecting $\gamma$ onto an arc of $\rho$ along the smooth foliation $\set{W^{cs}(\rho)\cap S_{V}}_{V > 0}$, and the claim follows from equations \eqref{eq:arb1} and \eqref{eq:arb2}.
\end{proof}

Let us now record, for ease of reference, the following result as a proposition.

\begin{prop}\label{prop:transversality}
For a given substitution sequence $v$, there exists $\delta > 0$, such that for all $(\mathfrak{p},\mathfrak{q})$ satisfying $0 < \norm{(\mathfrak{p},\mathfrak{q}) - (1,0)} < \delta$, the curve of initial conditions $l_{v, (\mathfrak{p},\mathfrak{q})}$ intersects the center-stable manifolds transversally.
\end{prop}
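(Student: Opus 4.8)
The plan is to perturb off the Schr\"odinger slice $\mathfrak{p} = 1$, where transversality is already available, and to use the quadratic tangency of Lemma \ref{lem:qinter} to control the genuinely off-diagonal directions near the edges of the spectrum. First I would record that on the slice $\mathfrak{p} = 1$ (and, symmetrically, $\mathfrak{q} = 0$), equation \eqref{eq:Ici} gives $I\circ l_{(1,\mathfrak{q})}\equiv \mathfrak{q}^2/4$, independent of $E$, so $l_{(1,\mathfrak{q})}$ lies on the single surface $S_{\mathfrak{q}^2/4}$. There, crossing a center-stable manifold $W^{cs}$ amounts to crossing its leaf $W^{cs}\cap S_V$ of the stable lamination $\mathcal{L}_V^s$, and a one-line tangent-space computation upgrades transversality inside $S_V$ to transversality inside $\mathcal{M}$: since $W^{cs}$ meets $S_V$ transversally we may write $T_xW^{cs} = T_x(\text{leaf})\oplus \R w$ with $w\notin T_xS_V$, while $l'\in T_xS_V$; hence $l'\notin T_x(\text{leaf})$ forces $\mathrm{span}(l') + T_xW^{cs} = T_xS_V + \R w = T_x\mathcal{M}$. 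The transversality of the Schr\"odinger line to $\mathcal{L}_V^s$ for all small $V > 0$ is the Damanik--Gorodetski input behind Theorem \ref{thm:schro-dynam-general} \cite{Damanik2009}; what I actually need is a \emph{uniform} lower bound on the crossing angle, which I would extract from compactness of the relevant energy window (the norm of $H_{v,(\mathfrak{p},\mathfrak{q})}$ is bounded near $(1,0)$, so the spectrum, hence every type-\textbf{B} point on $l$, sits in a fixed compact $E$-interval) together with continuity of the tangent planes of the closed center-stable lamination.

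For general $(\mathfrak{p},\mathfrak{q})$ near $(1,0)$, equation \eqref{eq:fv-de} gives $\partial_E(I\circ l) = \mathfrak{q}(\mathfrak{p}^2-1)/4\mathfrak{p}^2 \neq 0$, so $l_{(\mathfrak{p},\mathfrak{q})}$ now meets every $S_V$ once and tilts out of a single surface. However, the slope is $O(\norm{(\mathfrak{p},\mathfrak{q}) - (1,0)}^2)$ and the energy window is fixed and compact, so every intersection with a center-stable manifold lies in $S_V$ with $V\in(0,\epsilon)$, where $\epsilon\to 0$ as $(\mathfrak{p},\mathfrak{q})\to (1,0)$. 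Since the center-stable lamination and its tangent planes do not depend on $(\mathfrak{p},\mathfrak{q})$, while $l'_{(\mathfrak{p},\mathfrak{q})}(E)$ depends analytically on $(\mathfrak{p},\mathfrak{q},E)$, the uniform transversality established on the slice is an open condition and should persist in a neighborhood of $(1,0)$ by continuity of the curve in the $C^1$ topology, provided the estimate remains uniform down to $V = 0$.

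The main obstacle is precisely the behavior at the edge of the spectrum. At $(1,0)$ one computes $l_{(1,0)}(-2) = (1,-1,-1) = P_2$ and $l_{(1,0)}(2) = (1,1,1) = P_1$, the endpoints of the free-Laplacian spectrum; thus near the edges the intersection points accumulate at the singularities of $S_0$, where the relevant center-stable manifold is tangent to $S_0$. There the comparison ``$l'$ transversal to $S_V$'' versus ``$W^{cs}$ nearly tangent to $S_V$'' degenerates as $V\to 0$, and the naive openness argument breaks down. This is exactly where Lemma \ref{lem:qinter} enters: the \emph{quadratic} (rather than higher-order) tangency pins down the rate at which $T_xW^{cs}$ approaches $T_xS_0$, so that the uniformly transversal-to-$S_0$ component of $l'_{(\mathfrak{p},\mathfrak{q})}$ supplied by \eqref{eq:fv-de} still dominates and keeps $l'\notin T_xW^{cs}$. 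Making this quantitative in a neighborhood of the four singularities --- and checking that no type-\textbf{B} point is pushed out to large $V$ under the perturbation --- is the crux of the argument; away from the singularities the transversality is the stable, nondegenerate one inherited from the Schr\"odinger slice, and the proposition follows by combining the two regimes.
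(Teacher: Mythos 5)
Your overall architecture is reasonable and you have correctly located both the danger zone (the intersection points accumulating at the singularities of $S_0$, where the relevant center-stable manifolds are tangent to the Cayley cubic) and the key input (the quadratic tangency of Lemma \ref{lem:qinter}). But the decisive step is missing, and you say so yourself: ``making this quantitative in a neighborhood of the four singularities \dots is the crux of the argument.'' That crux is the entire content of the proposition, and the mechanism you propose for it does not obviously close. The transversal-to-$S_0$ component of $l'_{(\mathfrak{p},\mathfrak{q})}$ supplied by \eqref{eq:fv-de} has size $\abs{\mathfrak{q}(\mathfrak{p}^2-1)}/4\mathfrak{p}^2 = O(\norm{(\mathfrak{p},\mathfrak{q})-(1,0)}^2)$, vanishes identically on the two slices $\mathfrak{p}=1$ and $\mathfrak{q}=0$, and can be arbitrarily small compared to $\norm{(\mathfrak{p},\mathfrak{q})-(1,0)}$ for parameters near a slice. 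Whether it ``dominates'' the near-tangency of $T_xW^{cs}$ to $T_xS_V$ depends on how small $V$ is at the actual intersection points (the relevant comparison is roughly against $\sqrt{V}$, by the quadratic tangency, further distorted by the fact that $\nabla I$ itself degenerates at the Morse singularity), and you do not control where on the parabolic sheet $W^{cs}$ the curve lands. In the complementary regime you must fall back on in-surface transversality inherited from the slice, but that inheritance is exactly what degenerates near the singularities; the two regimes are not shown to glue, and your perturbation step cannot supply a uniform $\delta$ because the perturbation needed to reach a general $(\mathfrak{p},\mathfrak{q})$ from the nearest slice point is of the same order as that slice point's distance from the degenerate parameter $(1,0)$.

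For comparison, the paper does not argue perturbatively off the slices at all: it imports the invariant cone-field construction of \cite{Yessen2011a}, building a cone field around the tangent bundle of the curve of initial conditions that is invariant under the trace map and uniformly disjoint from the tangent distribution of the center-stable manifolds, with Lemma \ref{lem:qinter} feeding into the construction of the cones near the singularities; the only new point is that invariance of the cone field lets one pass between the simplified curve of \cite{Yessen2011a} and the curve $l_{(\mathfrak{p},\mathfrak{q})}$ of \eqref{eq:ci} via $f^{\pm 2}$. That dynamical propagation is what delivers uniformity in both $E$ and $(\mathfrak{p},\mathfrak{q})$ simultaneously, and it is the ingredient your static openness-plus-dominance argument lacks. (Two smaller points: on the slice $\mathfrak{p}=1$ the base-case transversality for a general primitive invertible substitution is the content of \cite{Mei2013}, not \cite{Damanik2009}, which treats only the Fibonacci case; and your first-paragraph reduction of transversality in $\mathcal{M}$ to leaf-transversality in $S_V$ is correct but silently uses that $W^{cs}$ is transversal to $S_V$ at the intersection point, which again degenerates as $V\to 0$ near the tangency locus.)
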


The proof of the above proposition can be given by following verbatim the arguments in the proof of \cite{Yessen2011a}, keeping in mind Lemma \ref{lem:qinter} from above. Indeed, the only needed ingredient is the quadratic tangency with $\mathbb{S}_0$ of the center-stable manifolds that contain the curves of periodic points passing through the singularities (notice that in \cite{Yessen2011a} and here the curve of initial conditions is the same, although in \cite{Yessen2011a} the expression for the curve was simplified by taking the inverse image of the curve twice under the Fibonacci trace map; this allowed to avoid notational and some technical complications, but is not at all necessary. One only needs to notice that the construction of the invariant cones that contain the tangent bundle of the curve of initial conditions carries over to the present case verbatim. Surely, in \cite{Yessen2011a} it was shown that the invariant cones contain the tangent vectors to the simplified version of the curve of initial conditions, but by invariance of the cones, the tangent vectors of the twice forward image of the simplified curve---that is, the original curve---must also fall into the invariant cone field.).

Now the conclusion of the theorem follows by transversality much like in the Schr\"odinger case; in particular, the arguments from \cite[Theorem 1.5]{Damanik2010a} carry over verbatim.

\subsubsection{Proof of Theorem \ref{thm:mei-yessen-4}}

Continuity and analyticity statements of the theorem can be proved by applying the arguments from the proof of \cite[Theorem 2.3, iii and iv]{Yessen2011a} without change. Thus the only statement of Theorem \ref{thm:mei-yessen-4} that requires proof here is \eqref{eq:y1}.

It is known that for $\alpha$ in the spectrum $\Sigma_{v,(\mathfrak{p},\mathfrak{q})}$ of $H_{v,(\mathfrak{p},\mathfrak{q})}$, (see \cite[Section 6]{Cantat2009}) we have
\begin{align}\label{eq:loc-dim}
\lhdim(\Sigma_{v,(\mathfrak{p},\mathfrak{q})}, \alpha) = \frac{1}{2}\hdim(\Omega_V),
\end{align}
where $\Omega_V$ is the nonwandering set for the trace map $T_v$ on the surface $V = I(l_{v, (\mathfrak{p},\mathfrak{q})}(\alpha))$ (see the proof of Theorem 2.3 in \cite{Yessen2011a}). On the other hand, from \cite{Damanik2010a} we know that there exists $V_0 > 0$ such thatfor all $V \in (0, V_0)$, there exist constants $C_1, C_2 > 0$ such that the following bound holds
\begin{align*}
1 - C_1V\leq \frac{1}{2}\hdim(\Omega_V)\leq 1 - C_2V.
\end{align*}
On the other hand, we have
\begin{align*}
\hdim{\Sigma_{v,(\mathfrak{p},\mathfrak{q})}} = \max_{\alpha\in\Sigma_{v,(\mathfrak{p},\mathfrak{q})}}\set{\lhdim(\Sigma_{v,(\mathfrak{p},\mathfrak{q})}, \alpha)}\leq 1.
\end{align*}
Computation of the value of the Fricke-Vogt invariant along $l_{v,(\mathfrak{p},\mathfrak{q}})$ is given in equation \eqref{eq:Ici} above, from which it is evident that for all $(\mathfrak{p},\mathfrak{q})$ sufficiently close to $(1,0)$, for all $\alpha\in \Sigma_{v,(\mathfrak{p},\mathfrak{q})}$, $I(l_{v,(\mathfrak{p},\mathfrak{q})}) < V_0$ (certainly $\Sigma_{v, (\mathfrak{p},\mathfrak{q})}$ is compact, and its bounds depend continuously on $(\mathfrak{p},\mathfrak{q})$, approaching $\pm 2$ as $(\mathfrak{p},\mathfrak{q})$ approaches $(1,0)$). More precisely, there exists $\delta > 0$ such that for all $(\mathfrak{p},\mathfrak{q})$ that lie within $\delta$ of $(1,0)$, we have $I(l_{v,(\mathfrak{p},\mathfrak{q})}(\alpha)) < V_0$ for all $\alpha\in\Sigma_{v,(\mathfrak{p},\mathfrak{q})}$. It then follows that for all such pairs $(\mathfrak{p},\mathfrak{q})$ and $\alpha\in\Sigma_{v, (\mathfrak{p},\mathfrak{q})}$,
\begin{align*}
1 - C_1I(l_{v,(\mathfrak{p},\mathfrak{q})}(\alpha)) \leq \lhdim(\Sigma_{v,(\mathfrak{p},\mathfrak{q})}, \alpha),
\end{align*}
leading to
\begin{align*}
1 - C_1\max_{\alpha\in\Sigma_{v, (\mathfrak{p},\mathfrak{q})}}\set{I(l_{v,(\mathfrak{p},\mathfrak{q})})(\alpha)} \leq \hdim(\Sigma_{v, (\mathfrak{p},\mathfrak{q})}).
\end{align*}
On the other hand, it is clear from equation \eqref{eq:Ici} that $I$ depends smoothly on $(\mathfrak{p},\mathfrak{q})$ in a neighborhood of $(1,0)$, so that there exists $D > 0$ such that for all pairs $(\mathfrak{p},\mathfrak{q})$ within $\delta$ of $(1,0)$, and for all $\alpha\in\Sigma_{v,(\mathfrak{p},\mathfrak{q})}$,
\begin{align*}
\max_{\alpha\in\Sigma_{v, (\mathfrak{p},\mathfrak{q})}}\set{I(l_{v,(\mathfrak{p},\mathfrak{q})})(\alpha)} \leq D\norm{(\mathfrak{p},\mathfrak{q}) - (1,0)},
\end{align*}
leading to
\begin{align*}
1 - C_1D\norm{(\mathfrak{p},\mathfrak{q}) - (1,0)}\leq 1 - C_1\max_{\alpha\in\Sigma_{v, (\mathfrak{p},\mathfrak{q})}}\set{I(l_{v,(\mathfrak{p},\mathfrak{q})})(\alpha)}.
\end{align*}
This completes the proof.

\subsubsection{Proof of Theorem \ref{thm:mei-yessen-5}}

Let us consider the gap that opens at a point $p$ on $l_{v, (1, 0)}\cap\mathbb{S}_0$ as soon as $(\mathfrak{p},\mathfrak{q})$ is turned on (i.e. shifts away from the point $(1,0)$). Let us call this gap $U_p$ and its boundary points $l_p(t)$ and $r_p(t)$ (which of course depend on the perturbation parameter $t$ -- the same $t$ that $\mathfrak{p}$ and $\mathfrak{q}$ depend on). In fact, the gaps open at countably many points along $l_{v, (1,0)}$, which are precisely the intersection points of $l_{v, (1,0)}$ with the strong-stable manifold, $W^{ss}(P_1)$ of $P_1$.

If $\mathcal{U}$ is a small neighborhood in $\R^3$ of the point $p$, then the intersection of the center-stable manifold that contains the curve of period-two periodic points passing through $P_1$, call it $W^{cs}$, with $\mathcal{U}$ gives countably many smooth connected injectively immersed two-dimensional submanifolds of $\R^3$ (each containing a point of $l_{v, (0,1)}\cap W^{ss}(P_1)\cap\mathcal{U}$, and only one for each such point), and $l_p$ and $r_p$ are the points of intersection of $l_{v, (\mathfrak{p},\mathfrak{q})}$ with the connected component that contains the point $p$; let us call this connected component $W_p$.

Since the center-stable manifolds are smooth, the perturbation $\alpha$ is smooth, and $l_{v,\alpha(t)}$ intersects the center-stable manifolds transversally for all $t$ sufficiently close to zero, it is clear that the gap boundary points, $l_p(t)$ and $r_p(t)$, depend smoothly on $t$.

Now let $\Gamma$ be a plane through the point $p$ orthogonal (at $p$) to $\mathbb{S}_0$ and intersecting $W_p$ transversally. Define a map $\Psi(x): \mathcal{U}\rightarrow\Gamma$ as follows.
\it
\begin{itemize}
\setlength{\itemsep}{2pt}
\item At each point $i$ of the curve of the intersection $\Gamma\cap\mathcal{U}\cap\mathbb{S}_0$, pick a plane $\Pi_i$ orthogonal to $\mathbb{S}_0$ at $i$. Then there exists $\delta > 0$ such that for every $V\in(-\delta,\delta)$, for each $i$, $\Pi_i\cap \mathcal{U}\cap S_V$ is a smooth curve, say $C_{i, V}$. Now for each $x\in\mathcal{U}$, there exist $i$ and $V$ such that $x\in C_{i,V}$. Let $\Psi(x)$ denote the projection of $x$ along $C_{i,V}$ onto $\Gamma$.
\end{itemize}
\rm
Observe that $\Psi$ is a smooth projection.

From now on, by abuse of notation, let us denote by $\Gamma$ the part of the plane $\Gamma$ that lies in $\mathcal{U}$, i.e. $\Gamma\cap\mathcal{U}$. For the convenience of avoiding some technical complications, but not at all out of necessity, let us apply a diffeomorphism $\Phi: \Gamma\rightarrow\Gamma$ that rectifies the curves $\Gamma\cap S_V$, $V\in (-\delta,\delta)$, and introduce the natural coordinates: the $x$ direction is along the curve $\Gamma\cap S_0$, and the $y$ direction is perpendicular to the $x$ direction and is given by the value of the Fricke-Vogt invariant, $V$ (i.e so that the curve $\Gamma\cap S_V$, after it has been rectified, crosses the $y$ axis at the point $y = V$). Position the origin on the $x$-axis at the point $p$. Denote the $x$ axis by $\mathbf{x}$ and the $y$ axis by $\mathbf{y}$. From now on when we write $\Gamma$, we assume it to come with this coordinate system (after the rectification $\Phi$ has been applied).

Notice that the curve $W^{cs}\cap \Gamma$ (the intersection is taken first and $\Phi$ is applied next) can be viewed as a graph of a function $\mathfrak{s}:\mathbf{x}\rightarrow\mathbf{y}$, and this function is quadratic near $p$ (i.e $\mathfrak{s}'(p) = 0$ and $\mathfrak{s}''(p) > 0$). In particular, it follows that near $p$, the graph of $\mathfrak{s}$ intersects the level curves $\set{y = c}$, $c$ near zero, in two points, transversally.

Denote by $\tilde{l}_t$ the segment along the curve $\Phi\circ\Psi(l_{v, \alpha(t)})$ that is bounded between the two points $\Phi\circ\Psi(l_p(t))$ and $\Phi\circ\Psi(r_p(t))$. This curve may be viewed as the graph of a function $\mathfrak{u}_t: \mathbf{x}\rightarrow\mathbf{y}$, and it is evident from \eqref{eq:fv-de} that $\abs{\mathfrak{u}'_t}\rightarrow 0$ as $t\rightarrow 0$. It follows that there exist $C_1(t), C_2(t) > 0$, $C_1(t), C_2(t)\rightarrow 1$ as $t\rightarrow 0$, such that the following holds.
\it
\begin{itemize}
\setlength{\itemsep}{2pt}
\item Denote by $L(t)$ the length of the curve $\tilde{l}_t$. Denote by $d_l(t)$ (respectively $d_r(t)$) the distance between the point $\Phi\circ\Psi(l_p(t))$ and the other point on the graph of $\mathfrak{s}$ at the same $y$-level as $l_p(t)$ (respectively, the distance between the point $\Phi\circ\Psi(r_p(t))$ and the other point on the graph of $\mathfrak{s}$ at the same $y$-level as $r_p(t)$). Then
\begin{align*}
C_1(t)d_l(t) \leq L(t) \leq C_2(t)d_r(t).
\end{align*}
(we assume, without loss of generality, that $d_l(t) \leq d_r(t)$).
\end{itemize}
\rm

It follows that in order to establish the desired limit, it is of course enough to prove that
\begin{align*}
\lim_{t\rightarrow 0}\frac{d_{l,r}(t)}{\norm{\alpha(t) - (1,0)}}\hspace{2mm}\text{ exists and belongs to }\hspace{2mm}(0,\infty).
\end{align*}
This result (rather, its slight reformulation) is available as \cite[Lemma 3.2]{Damanik2010} for the Fibonacci substitution, and is extended in \cite[Lemma 6.2]{Mei2013} for to the general case.

\subsubsection{Proof of Theorem \ref{thm:mei-yessen-7}}

Every trace map that we consider here (i.e. one corresponding to a primitive invertible two-letter substitution) can be written as a finite composition of the maps $t_a$ where for $a\in\N$,
\begin{align*}
t_a = U^a\circ P,
\end{align*}
with
\begin{align*}
U(x,y,z) = (2xz - y, x, z)\hspace{2mm}\text{ and }\hspace{2mm}P(x,y,z) = (x, z, y).
\end{align*}
In fact, for every $a\in\N$, $t_a$ is also a trace map corresponding to a substitution realized as a circle rotation by the irrational number $\alpha \in (0, 1)$ with the constant sequence $\set{a}$ forming its continued fraction expansion. If $\alpha \in (0,1)$ is an irrational number with a periodic continued fraction expansion of the form $[0;a_1\dots, a_n, a_1, \dots, a_n,\dots]$, and $s$ is the substitution that is realized as the circle rotation by $\alpha$, then $T_s$ is given by $t_{a_n}\circ t_{a_{n-1}}\circ\cdots\circ t_{a_1}$ (see \cite{Mei2013}).

In general, as remarked in the introduction, the substitutions that we deal with here can be realized as circle rotations by an irrational angle with \textit{eventually periodic} continued fraction expansion. However, we can follow \cite{Mei2013} to reduce the problem to a \textit{periodic} case. Indeed, suppose that for an $\alpha\in(0, 1)$, its continued fraction expansion is given by $[0;b_1, \dots, b_k, a]$, where $a$ is a periodic sequence with period $a_1, \dots, a_n$. Let us denote the corresponding substitution on two letters by $s$. Denote by $T_b$ the map $t_{b_k}\circ\cdots\circ t_{b_1}$ and by $T_a$ the map $t_{a_n}\circ\cdots\circ t_{a_1}$. For a letter $\star$, assuming, as in the introduction, that $s(\star)$ begins with $\star$, for any $m\in\N$, the trace of the transfer matrices over $s^m(\star)$ sites can be computed by applying $T_a^m\circ T_b$ to the initial conditions (the initial three traces, as above). Now if $l$ denotes the curve of initial conditions, as above, then we can consider the new curve $T_b(l)$ and run the trace map $T_a$ on this curve. For further detail, see \cite{Mei2013}. Note also that the discussion in this paragraph is more general than is needed for our situation, in the sense that for a primitive and invertible substitution, the length of the initial sequence $b_1,\dots,b_k$ is always at most $1$ (see \cite{Mei2013} and references therein).

A direct computation, which we omit here (but see \cite{Mei2013}) confirms that for every $a\in\N$, $t_a(P_1) = P_1$ and the spectrum of $Dt_a(P_1)$ is $\set{\lambda, 1, \mu}$ with $\abs{\lambda} < 1 <\abs{\mu}$. One of the ingredients that we need for this proof is that the eigenvector corresponding to $\lambda$ be transversal to the plane $\set{z = 1}$. After showing this, we shall show how the proof of the analogous theorem in the case of the Fibonacci substitution from \cite{Yessen2011a} can be used essentially without modification.

Let us define the matrices
\begin{align*}
M_a \eqdef \begin{pmatrix}
	a	&	1\\
	1	&	0\\
	\end{pmatrix}
\hspace{2mm}
\text{ with }a\in\N.
\end{align*}
Clearly $\det(M_a) = -1$, while its eigenvalues are real and not equal to $\pm 1$. In other words, $M_a$ defines an Anosov automorphism on the two-torus $\mathbb{T}^2$. It is also not difficult to see that for any $n > 1$, and $a_1, \dots, a_n\in\N$, the product
\begin{align*}
M_{a_1}M_{a_2}\cdots M_{a_n}
\end{align*}
also defines an Anosov automorphism on $\mathbb{T}^2$ (indeed, it is easy to see that the trace of the product is strictly larger than $1$, while the determinant is $\pm 1$ and each matrix in the product is symmetric, forcing the eigenvalues to be real and not equal to $\pm 1$, which in turn implies that one eigenvalue is strictly larger than $1$ in absolute value, while the other is strictly smaller than $1$ in absolute value). Now by generalizing the argument from (Lemma 3.15 in \cite{Mei2013}) (a computation which we omit here), it isn't difficult to see
\begin{align*}
F\circ M_{a_1}M_{a_2}\cdots M_{a_n} = T_{a_1}\circ T_{a_2}\circ\cdots\circ T_{a_n}|_{\mathbb{S}_0},
\end{align*}
with $F$ being the factor map from \eqref{eq:factor}. Let us denote the product of $M_{a_i}$ as above by $M$ and the composition of $T_{a_i}$ by $T$. If we let $U$ denote an arbitrarily small neighborhood of the singularities $P_1,\dots, P_4$, then we see that the stable vector field on $\mathbb{T}^2\setminus F^{-1}(U)$ for $M$ is mapped onto the stable vector field for $T$ on $\mathbb{S}_0\setminus U$ by the differential of $F$, $DF$ (by the \textit{stable vector field} we mean the vector field of the (normalized) eigenvectors corresponding to the smaller eigenvalue; i.e. the contracting vector field). Indeed, the differential vanishes at the preimages of the singularities, but it is still ``sufficiently conformal'' to preserve the uniform transversality of the stable and the unstable vector fields (see \cite[Lemma 3.1]{Damanik2009} for the formal statement).

Now notice that $F(0,0) = P_1$, and in the neighborhood of $(0,0)\in\mathbb{T}^2$, $DF$ is given by
\begin{align*}
DF_{(\theta,\phi)} = \begin{pmatrix}
	-2\pi\sin2\pi(\theta + \phi)	&	-2\pi\sin 2\pi(\theta + \phi)\\
	-2\pi\sin2\pi\theta				&	0\\
	0								&	-2\pi\sin2\pi\phi\\
\end{pmatrix}.
\end{align*}
It isn't difficult to see that the matrix $M$ does not have any eigenvectors $(u, v)$ with $v = 0$. Now let $(u,v)$ be an eigenvector of $M$ at the point $(0, 0)$ in the stable direction. Let us also normalize the vector so that $v = 1$. For $\epsilon > 0$ small, $(\epsilon u, \epsilon)$ is a point on the stable manifold of the point $(0,0)$ in a small (of order $\epsilon$) neighborhood of $(0,0)$. This point is mapped by $F$ onto a point on the strong-stable manifold of $P_1$ in $\mathbb{S}_0$, and the vector $(u, 1)$ is mapped by $DF_{(\epsilon u, \epsilon)}$ to a vector tangent to the strong-stable manifold of $P_1$ at the point $F(\epsilon u, \epsilon)$ (which lies in a small, of order $\epsilon$, neighborhood of $P_1$). We have
\begin{align*}
\begin{pmatrix}
	v_x\\
	v_y\\
	v_z\\
\end{pmatrix}\eqdef DF_{(\epsilon u, \epsilon)}(u, 1) =
\begin{pmatrix}
	-(2\pi u + 2\pi)\sin2\pi(\epsilon u + \epsilon)\\
	-2\pi u\sin2\pi\epsilon u\\
	-2\pi \sin2\pi\epsilon\\
\end{pmatrix}.
\end{align*}
Now we see that
\begin{align}\label{eq:lim}
\lim_{\epsilon\rightarrow 0}\frac{v_x^2 + v_y^2}{v_z^2} \in (0, \infty).
\end{align}
This shows that in a neighborhood of $P_1$, the line that is spanned by the vector tangent to the strong-stable manifold of $P_1$ on $\mathbb{S}_0$ is uniformly (in $\epsilon$, for all $\epsilon > 0$ sufficiently small) transversal to the plane $\set{z = 1}$. Thus transversality of the strong-stable manifold of $P_1$, $W^{ss}(P_1)$, with the plane $\set{z=1}$ also holds at the point $P_1$. Now one could apply the arguments from \cite[Theorem 2.3]{Yessen2011a} with slight modifications, as the geometric setup is now essentially the same, except for the curve of initial conditions. However, we can take the current setup closer to that of \cite{Yessen2011a} as follows.

Let $f(x,y,z) \eqdef (2xy - z, x, y)$, the Fibonacci trace map. Let us change the coordinates in $\R^3$ by applying $f^{-2}$ (the two-fold composition of the inverse of $f$). In these new coordinates, the curve of initial conditions $l_{v,(\mathfrak{p},\mathfrak{q})}$ maps to the line
\begin{align*}
\left(\frac{E - \mathfrak{q}}{2}, \frac{E}{2\mathfrak{p}}, \frac{1 + \mathfrak{p}^2}{2\mathfrak{p}}\right).
\end{align*}
Furthermore, in the new coordinates the transversality at $f^{-2}(P_1) = P_1$ of the $f^{-2}(W^{ss}(P_1))$ with the plane $\set{z = 1}$ holds, which can be easily checked by computing $Df^{-2}$ and then applying it to the vector $(v_x, v_y, v_z)^T$ and noting that the resulting vector has the $z$ component bounded away from zero. Now the proof is finished by applying the proof of \cite[Theorem 2.3]{Yessen2011a} without modification.

\subsubsection{Proof of Theorem \ref{thm:mei-yessen-8}}

The proof here can be given by repeating the proof of \cite[Theorem 2.5]{Yessen2011a} without modification. Indeed, one only needs to know that the intersection of the curve of initial conditions with the center-stable manifolds is transversal for all $(\mathfrak{p},\mathfrak{q})$ in the regions described in the statement of Theorem \ref{thm:mei-yessen-8}, which can be insured just as in \cite{Yessen2011a}.

\subsubsection{Proof of Theorem \ref{thm:mei-yessen-9}}

The proof of \cite[Theorem 2.6]{Yessen2011a} applies here without modification (after a coordinates change via $f^{-2}$ just like in the proof of Theorem \ref{thm:mei-yessen-7}).

\subsubsection{Proof of Theorem \ref{thm:p-zero}}

Let us fix a substitution sequence $v$ and drop it from notation, as above. Observe that for any fixed $\mathfrak{q}$ and any $\mathfrak{p}$ with $\abs{\mathfrak{p}}\in(0,1)$, for any $V > 0$, $I\circ l_{(\mathfrak{p},\mathfrak{q})}(E) \leq V$ implies (by equation \eqref{eq:Ici})
\begin{align*}
E\geq \frac{4V\mathfrak{p}^2 - \mathfrak{q}^2 - (\mathfrak{p}^2-1)^2}{\mathfrak{p^2}-1}\underset{\mathfrak{p}\rightarrow 0}{\longrightarrow} \mathfrak{q}^2 + 1,
\end{align*}
and if $\mathfrak{p}\ll \min\set{1, \mathfrak{q}^2}$, such $E$ cannot belong to the spectrum, since the spectrum lies in $\left[-\norm{H_{(\mathfrak{p},\mathfrak{q})}}, \norm{H_{(\mathfrak{p},\mathfrak{q})}}\right]$, and $\norm{H_{(\mathfrak{p},\mathfrak{q})}} < \mathfrak{q}^2+1$. Hence there exists $V: (-1, 1)\rightarrow (0, \infty)$ satisfying $V(\mathfrak{p})\rightarrow \infty$ as $\mathfrak{p}\rightarrow 0$, such that for all $\mathfrak{p}$ sufficiently close to $0$, $I\circ l_{(\mathfrak{p}, \mathfrak{q})}(\Sigma_{(\mathfrak{p},\mathfrak{q})})\subset [V(\mathfrak{p}), \infty)$. Now as a consequence of Theorem \ref{thm:degt}, we obtain the desired result.

\section*{Acknowledgement}

It is a pleasure for us to thank Anton Gorodetski and David Damanik for many helpful discussions and suggestions. We would also like to thank Yuki Takahashi for his questions that led to Theorems \ref{thm:p-zero} and \ref{thm:p-zero2}. W.Y. would also like to thank Jean Bellissard for helpful discussions.

\bibliographystyle{amsplain}
\bibliography{bibliography}

\end{document}